%************************************************************;
%                                                            ;
%  TITLE: Reactive-Transport State Awareness:~Using          ;
%         Machine Learning to Construct Reduced-Order        ;
%         Models and Predict Mixing States                   ;
%                                                            ;
%  NAME                                                      ;
%    Driver_ML_Mixing.tex                                    ;
%                                                            ;
%  WRITTEN BY                                                ;
%    Maruti Kumar Mudunuru                                   ;
%                                                            ;
%************************************************************;
\documentclass[11pt,reqno]{amsproc}
\linespread{1.1}
\usepackage[colorinlistoftodos,shadow]{todonotes}
\usepackage[T1]{fontenc}
\usepackage{amsmath,amscd,amssymb}
\usepackage{cite}
\usepackage{color}
\usepackage{graphicx}
\usepackage{psfrag,epsfig}
\usepackage{multirow}
\usepackage{rotating}
\usepackage[letterpaper,margin=0.75in]{geometry}
\usepackage{subfigure}
\usepackage{enumerate}
\usepackage{comment}
\usepackage{lineno}
\usepackage{algorithmic}
\usepackage{algorithm}
\usepackage[small]{caption}
\usepackage{bbold}
\usepackage[debug=false, colorlinks=true, pdfstartview=FitV, linkcolor=blue, citecolor=blue, urlcolor=blue]{hyperref}
\newtheorem{theorem}{Theorem}[section]

\numberwithin{equation}{section}
%-----------------------------;
%  Chemical reaction package  ;
%-----------------------------;
\usepackage{chemarr}
\usepackage[version=3]{mhchem}
%
%-----------------------------------------;
%  additional packages to extend storage  ;
%-----------------------------------------;
%\usepackage{morefloats}
%
% \newlength{\drop}
% \definecolor{amethyst}{rgb}{0.6, 0.4, 0.8}
% \definecolor{burgundy}{rgb}{0.5, 0.0, 0.13}
%
%*********************************************;
%  Renew command for the subfigure numbering  ;
%*********************************************;
% \renewcommand{\thesubfigure}{\thefigure.\arabic{subfigure}}
%
%*********************************************;
%  New command for lineno and align to work   ;
%*********************************************;
\newcommand*\patchAmsMathEnvironmentForLineno[1]{%
  \expandafter\let\csname old#1\expandafter\endcsname\csname #1\endcsname
  \expandafter\let\csname oldend#1\expandafter\endcsname\csname end#1\endcsname
  \renewenvironment{#1}%
     {\linenomath\csname old#1\endcsname}%
     {\csname oldend#1\endcsname\endlinenomath}}% 
\newcommand*\patchBothAmsMathEnvironmentsForLineno[1]{%
  \patchAmsMathEnvironmentForLineno{#1}%
  \patchAmsMathEnvironmentForLineno{#1*}}%
\AtBeginDocument{%
\patchBothAmsMathEnvironmentsForLineno{equation}%
\patchBothAmsMathEnvironmentsForLineno{align}%
\patchBothAmsMathEnvironmentsForLineno{flalign}%
\patchBothAmsMathEnvironmentsForLineno{alignat}%
\patchBothAmsMathEnvironmentsForLineno{gather}%
\patchBothAmsMathEnvironmentsForLineno{multline}%
}

%===========================;
%  Title of the manuscript  ;
%===========================;
\title[Learning the State of Reactive-Mixing]{Physics-Informed Machine Learning Models for Predicting the Progress of Reactive-Mixing}
%
%\title[Reactive-Transport State Awareness]{Reactive-Transport State Awareness:~Using Machine Learning to Construct Reduced-Order Models and Predict Mixing States}
%
%===========;
%  Authors  ;
%===========;
\author[M.~K.~Mudunuru and S. Karra]{M.~K.~Mudunuru$^{*}$ and S.~Karra \\
{\small Computational Earth Science Group (EES-16), Earth and Environmental Sciences Division, Los Alamos National Laboratory, Los Alamos, NM 87545, USA.} \\
}
\thanks{$^*$Corresponding author, \texttt{maruti@lanl.gov}}
\date{\today}
%
%========================================================;
%  Start of the documenting process and including files  ;
%========================================================;
\begin{document}
\maketitle
%
% \tableofcontents
%
% \clearpage
%
%\linenumbers
%
%=========================;
%  Abstract and keywords  ;
%=========================;
\section*{ABSTRACT}
This paper presents a physics-informed machine learning (ML) framework to construct reduced-order models (ROMs) for reactive-transport quantities of interest (QoIs) based on high-fidelity numerical simulations. 
QoIs include species decay, product yield, and degree of mixing. 
The ROMs for QoIs are applied to quantify and understand how the chemical species evolve over time. 
First, high-resolution datasets for constructing ROMs are generated by solving anisotropic reaction-diffusion equations using a non-negative finite element formulation for different input parameters. 
Non-negative finite element formulation ensures that the species concentration is non-negative (which is needed for computing QoIs) on coarse computational grids even under high anisotropy. 
The reactive-mixing model input parameters are a time-scale associated with flipping of velocity, a spatial-scale controlling small/large vortex structures of velocity, a perturbation parameter of the vortex-based velocity, anisotropic dispersion strength/contrast, and molecular diffusion.
Second, random forests, F-test, and mutual information criterion are used to evaluate the importance of model inputs/features with respect to QoIs.
We observed that anisotropic dispersion strength/contrast is the most important feature and time-scale associated with flipping of velocity is the least important feature. 
Third, Support Vector Machines (SVM) and Support Vector Regression (SVR) are used to construct ROMs based on the model inputs. 
Then, SVR-ROMs are used to predict scaling of QoIs. 
Qualitatively, SVR-ROMs are able to describe the trends observed in the scaling law associated with QoIs. 
Moreover, it is observed that SVR-ROMs accuracy (which is the $R^2$-score) in predicting scaling QoIs is greater than 0.9. 
Meaning that, we have a good quantitative prediction of reaction-diffusion system state using SVR-ROMs. 
Fourth, the scaling law's exponent dependence on model inputs/features are evaluated using $k$-means clustering.
Based on clustering analysis, we infer that incomplete mixing occurs at high anisotropic contrast and the system is well-mixed for low anisotropic contrast. 
Finally, in terms of the computational cost, the proposed SVM-ROMs and SVR-ROMs are $\mathcal{O}(10^7)$ times faster than running a high-fidelity numerical simulation for evaluating QoIs. 
This makes the proposed ML-ROMs attractive for reactive-transport sensing and real-time monitoring applications as they are significantly faster yet provide reasonably good predictions.
\newline
\newline
\textbf{KEYWORDS:}~Reduced-order modeling, 
machine learning, 
reaction-diffusion equations,  
anisotropic dispersion, 
non-negativity, 
finite element method, 
incomplete mixing, 
scaling laws.
%
%
%======================;
%  Input all sections  ;
%======================;  
%****************************************;
%                                        ;
%  NAME                                  ;
%    S1_Intro.tex                        ;
%                                        ;
%  WRITTEN BY                            ;
%    Maruti Kumar Mudunuru               ;
%                                        ;
%****************************************;

%=========================;
%  Section: INTRODUCTION  ;
%=========================;
\section{INTRODUCTION}
\label{Sec:S1_ROM_Intro}
The efficiency of many hydrogeological applications such as reactive-transport, contaminant remediation, and chemical degradation vastly depends on the macroscopic mixing occurring in porous media \cite{Willingham2008,gelhar1993stochastic,fetter1999contaminant,vengosh2014critical,mudunuru2012framework}. 
In the case of remediation activities, enhancement and control of the macroscopic mixing is fundamentally  influenced by the structure of flow field, which is impacted by pumping/extraction activities, heterogeneity, dispersion, and anisotropy of the flow medium \cite{dentz2011mixing,chiogna2014helicity,neupauer2014chaotic,2015_Mudunuru_etal_ASME,cirpka2015transverse,ye2018effect}. 
However, the relative importance of these hydrogeological parameters to understand mixing process is not well studied. 
This is partially because to understand and quantify mixing, one needs to perform multiple model runs of high-fidelity numerical simulations for various subsurface model inputs. 

Typically, reactive-transport models are computationally expensive as high-fidelity numerical simulations of these models take hours to complete on several thousands of processors.
The associated degrees of freedom (e.g., species concentrations at the mesh nodes or cell centers) to be solved at each time-step is approximately between $\mathcal{O}(10^6)$ to $\mathcal{O}(10^9)$ \cite{chang2017large}.
As a result, such computationally-intensive models may not be feasible for real-time calculations of certain 
important Quantities of Interests (QoIs) such as decay of chemical species, product yield, and degree of mixing.
Moreover, traditional numerical formulations (e.g., standard Galerkin, two-point finite volume, etc.) for diffusion-type equations can produce unphysical solutions for concentration of chemical species \cite{Ciarlet_Raviart_CMAME_1973_v2_p17,Liska_Shashkov_CiCP_2008_v3_p852,droniou2014finite}.
The numerical solution quality is worse when anisotropy dominates \cite{2015_Mudunuru_etal_ASME}.
To overcome this problem, different types of numerical techniques were proposed in literature \cite{Liska_Shashkov_CiCP_2008_v3_p852, castillo2013mimetic,2013_Nakshatrala_Mudunuru_Valocchi_JCP_v253_p278_p307,da2014mimetic,droniou2014finite,mudunuru2016enforcing,
mudunuru2017mesh,chang2017large,shashkov2018conservative}.
Herein, we employ a novel non-negative finite element method (FEM) \cite{2013_Nakshatrala_Mudunuru_Valocchi_JCP_v253_p278_p307,chang2017large} to produce high-fidelity non-negative numerical solutions. 
This non-negative FEM provides physically meaningful concentrations for reaction-diffusion equations even under high anisotropic constrast. 
Hence, the QoIs derived from species concentration obtained using non-negative FEM are always physically meaningful.

There is a pressing need to develop computationally efficient and reasonably accurate models for the QoIs, which take 
seconds to minutes to compute on a laptop or on a edge computing device (e.g., Raspberry Pi, Google's edge TPU, etc.) attached to a sensor. 
An attractive way to construct computationally efficient models (e.g., reduced-order models) is through machine learning (ML). 
Reduced-Order Models (ROMs) (aka surrogate models) are similar to analytical solutions or lookup tables \cite{schilders2008model,koziel2013surrogate,quarteroni2014reduced,keiper2018reduced}.
However, the methods to construct ROMs are different compared to the approaches to develop analytical solutions. 
In general, ROMs are constructed from either high-fidelity numerical simulations or experimental data or through a combination of both.
Various ML techniques (e.g., ensemble methods, kernel-based methods, shallow and deep neural networks, etc.) exist in literature \cite{salah2018machine,brunton2019data,zhu2019physics,wang2019reduced,tripathy2018deep,wang2018deep} to construct ROMs.
These approaches can substantially improve our capabilities to develop fast, accurate, and robust ROMs to predict contaminant fate and transport under natural conditions and during remediation activities.
In this paper, we use support vector machines to develop ROMs and the reason to use them  is explained below.

Support vector machines (SVMs) and support vector regression (SVRs) are a set of kernel-based supervised ML methods used for classification and regression \cite{evgeniou1999support,cristianini2000introduction,scholkopf2001learning}. 
SVMs are used for categorical variable prediction while SVRs are used for continuous variable prediction. 
Given a known relationship/classification, SVMs help to find/identify the class that the data belongs. 
In a similar fashion, given a set of continuous higher-dimensional inputs and outputs, SVR tries to find the best relation that maps these inputs (e.g., features) to outputs (e.g., QoIs) using kernels.
A main advantage of SVMs and SVRs is that the decision functions use only a subset of training dataset/points called support vectors (SV), which makes them memory efficient. 
Such a succinct representation of the decision functions in terms of support vectors makes SVMs and SVRs models effective even in high dimensional spaces and also in cases where the number of dimensions is greater than number of samples. 
Other advantage of SVMs and SVRs is their versatility in employing different kernel functions. 
For example, common kernels (e.g., linear, polynomial, radial basis, sigmoidal, etc.) and custom/user-defined 
kernels can be specified in the decision function. 
In this paper (e.g., see the paragraph at end of Sec.~\ref{Sec:S2_ROM_GE}), we choose the kernel based on the underlying physics of the problem. 
SVMs and SVRs have a nice mathematical formalism from which they are derived, making them more attractive. 
In a nutshell, the SVM and SVR decision functions are the result of solving constrained optimization problem that has a unique minimizer. 
A disadvantage of SVMs and SVRs is that if the number of features is much greater than the number of samples then over-fitting occurs.
To avoid over-fitting, adding a penalty term/regularization term in the optimization problem is crucial.
In results section, we investigate the importance of various parameters (e.g, penality, kernel coefficients, tolerance, etc.) in classifying and predicting the state of reactive-mixing.

In our previous work (by Vesselinov et. al. \cite{vesselinov2019unsupervised}), we have utilized a structure-preserving unsupervised ML approach to discover hidden features in the high-fidelity reactive-mixing simulation data.
This approach provided physical insights on reactive-mixing progress under low and highly anisotropic conditions.
It also provided information on processes (e.g., longitudinal dispersion, transverse dispersion, molecular dispersion) that contribute to the formation of product at early and later stages of time.
However, the previous work did not estimate the relative importance of various input parameters in predicting key QoIs and efficiently emulate the high-fidelity model outputs.
This is an important problem that we are attempting to tackle in this paper.
There is no existing methodology or closed-form mathematical solutions that can predict how the variation of reactive-mixing input parameters (e.g., anisotropic contrast, molecular diffusion) impacts the decay, production, and mixing of chemical species.
Our goal is two fold:~(1) quantify the influences of the input parameters on species mixing, decay of reactants, and formation of the reaction product, and (2) efficiently and accurately predict the state of reactive-mixing at different stages of time through reduced-order or surrogate modeling. 

The main contribution of this study is to develop physics-informed machine learning models to predict the state and progress of reactive-mixing. 
The models for reactive-mixing QoIs are built on high-fidelity numerical simulations that respect the underlying physics.
An advantage of the proposed ML method is that it is approximately $\mathcal{O} \left(10^7 \right)$ times faster than running a high-fidelity reactive-mixing simulation.
Moreover, for QoI calculations, ROMs provide another added benefit, which is data compression.
The compression ratio is approximately $\mathcal{O} \left(10^{-4} \right)$ (as the ROMs can be thought as a compressed version of post-processed high-fidelity numerical simulation data).
With minimal training data (e.g., 5\%), the accuracy of our SVM-ROMs and SVR-ROMs on unseen data is greater than 90\%.
Due to the increase in computational savings with accurate predictions and high data compression, our ML methodology is ideal for usage in comprehensive uncertainty quantification studies which require 1000s of forward model runs.

The paper is organized as follows:~Sec.~\ref{Sec:S2_ROM_GE} describes the governing equations for a fast irreversible bimolecular reaction-diffusion system. 
It provides details on the non-negative numerical formulation for computing concentration of chemical species.
High-fidelity reactive-mixing simulation datasets are generated by performing a series of model runs using a parallel non-negative FEM solver with different model input parameters representing uncertainties in the underlying reaction-diffusion processes.
This dataset is used for constructing and testing ROMs using the ML framework described in Sec.~\ref{Sec:S3_ROM_Framework}.
We also present estimates and inequalities on the QoIs, which inform us that the species decay, mix, and produce in an exponential fashion.
Moreover, these inequalities on QoIs inform the ML framework that a suitable kernel to model QoIs needs to be an exponential function.
Sec.~\ref{Sec:S4_ROM_Results} provides results on the predictive capabilities of SVM-ROMs and SVR-ROMs on the mixing state of the reaction-diffusion system.
It quantifies the important features that influence reactive-mixing.
Computational efficiency, data compression, and accuracy of SVM-ROMs and SVR-ROMs towards predicting the state of reactive-mixing at different stages of time are discussed as well. 
Finally, conclusions are drawn in Sec.~\ref{Sec:S5_ROM_Conclusions}.
%%
%****************************************;
%                                        ;
%  NAME                                  ;
%    S2_GE.tex                           ;
%                                        ;
%  WRITTEN BY                            ;
%    Maruti Kumar Mudunuru               ;
%                                        ;
%****************************************;

%===================================================;
%  Section: DIFFUSION EQUATIONS AND A NON-NEGATIVE  ; 
%           NUMERICAL FORMULATION                   ;
%===================================================;
\section{DIFFUSION EQUATIONS AND A NON-NEGATIVE NUMERICAL FORMULATION}
\label{Sec:S2_ROM_GE}
Let $\Omega \subset \mathbb{R}^{nd}$ be an open bounded domain, where $nd$ denotes the number of spatial dimensions. 
The boundary is denoted by $\partial \Omega$, which is assumed to be piecewise smooth. 
Let $\overline{\Omega}$ be the set closure of $\Omega$ and let a spatial point $\boldsymbol{x} \in \overline{\Omega}$, the divergence and gradient operators with respect to $\boldsymbol{x}$ are denoted by $\mathrm{div}[\bullet]$ and $\mathrm{grad}[\bullet]$, respectively. 
Let $\boldsymbol{n}(\boldsymbol{x})$ be the unit outward normal to $\partial \Omega$. 
Let $t \in \, ]0, \mathcal{I}[$ denote the time, where $\mathcal{I}$ is the length of time of interest.
$]\bullet[$ denotes an open interval (i.e., the end points are not included) and $[\bullet]$ denotes a closed interval\footnote{
Note that the governing equations are posed on $\Omega \times ]0, \mathcal{I}[$ and the initial condition is posed on the $\overline{\Omega}$.}.
%\satish{I just wanted to clarify; you'd used $]\cdot[$ while here you use $[\cdot]$; just wanted to understand why they are different}

Consider the fast bimolecular reaction where species $A$ and species $B$ react irreversibly to give the product $C$ according to:
%-----------------------------------------;
%  Equation: Stoichiometric relationship  ;
%-----------------------------------------;
\begin{align}
  \label{Eqn:Bimolecular_fast_reaction}
    n_{A} \, A \, + \, n_{B} \, B \longrightarrow n_{C} \, C
\end{align}
where $n_A$, $n_B$, and $n_C$ are (positive) stoichiometric coefficients.

The governing equations for fast bimolecular reaction Eq.~\eqref{Eqn:Bimolecular_fast_reaction} without any non-reactive volumetric source/sink are given by: 
%--------------------------------------------------------;
%  Equation: Diffusion-reaction equations of A, B and C  ;
%--------------------------------------------------------;
\begin{subequations}
  \label{Eqn:DRs_for_A_B_C}
  \begin{align}
    \label{Eqn:DRs_for_A}
    &\frac{\partial c_A}{\partial t} - \mathrm{div}[\boldsymbol{D}
    (\boldsymbol{x},t) \, \mathrm{grad}[c_A]] =  - 
    n_{\small{A}} \, k_{AB} c_A c_B \quad \mathrm{in} \; \Omega \times ]0, 
    \mathcal{I}[ \\ 
    \label{Eqn:DRs_for_B} 
    &\frac{\partial c_B}{\partial t} - \mathrm{div}[\boldsymbol{D}
    (\boldsymbol{x},t) \, \mathrm{grad}[c_B]] =  - 
    n_{\small{B}} \, k_{AB} c_A c_B \quad \mathrm{in} \; \Omega \times ]0, 
    \mathcal{I}[ \\
    \label{Eqn:DRs_for_C} 
    &\frac{\partial c_C}{\partial t} - \mathrm{div}[\boldsymbol{D}
    (\boldsymbol{x},t) \, \mathrm{grad}[c_C]] = +
    n_{\small{C}} \, k_{AB} c_A c_B \quad \mathrm{in} \; \Omega \times ]0, 
    \mathcal{I}[ \\
    \label{Eqn:DRs_for_Dirchlet}
    &c_i(\boldsymbol{x},t) = c^{\mathrm{p}}_i(\boldsymbol{x},t) \quad 
    \mathrm{on} \; \Gamma^{\mathrm{D}}_{i} \times ]0, 
    \mathcal{I}[ \quad (i = A, \, B, \, C) \\
    \label{Eqn:DRs_for_Neumann}
    & \left(-\boldsymbol{D} (\boldsymbol{x},t) \, \mathrm{grad}
    [c_i] \right) \bullet \boldsymbol{n}(\boldsymbol{x}) = h^
    {\mathrm{p}}_i(\boldsymbol{x},t) \quad \mathrm{on} \; 
    \Gamma^{\mathrm{N}}_{i} \times ]0, \mathcal{I}[ 
    \quad (i = A, \, B, \, C) \\
    \label{Eqn:DRs_for_IC}
    &c_i(\boldsymbol{x},t=0) = c^{0}_i(\boldsymbol{x}) \quad 
    \mathrm{in} \; \overline{\Omega} \quad (i = A, \, B, \, C)
  \end{align}
\end{subequations}
where $c_i$ is the molar concentration of $i$-th chemical species, $\boldsymbol{D}(\boldsymbol{x},t)$ is the anisotropic dispersion tensor, and $k_{AB}$ is the bilinear reaction rate coefficient.
$c^{\mathrm{p}}_i(\boldsymbol{x},t)$ and $h^{\mathrm{p}}_i(\boldsymbol{x},t)$ are the prescribed molar concentration and flux on Dirichlet and Neumann boundaries $\Gamma^{\mathrm{D}}_{i}$ and $\Gamma^{\mathrm{N}}_{i}$, respectively. 
Additionally, $c^{0}_i(\boldsymbol{x})$ is the initial concentration of $i$-th chemical species.

Using the non-negative linear transformations \cite{2013_Nakshatrala_Mudunuru_Valocchi_JCP_v253_p278_p307},
%------------------------------------;
%  Equation: Definitions of F and G  ;
%------------------------------------;
\begin{subequations}
  \label{Eqn:Definitions_of_F_G}
  \begin{align}
    \label{Eqn:Definitions_of_F}
    c_F &:= c_A + \left( \frac{n_A}{n_C} \right) c_C \\ 
    \label{Eqn:Definitions_of_G}
    c_G &:= c_B + \left( \frac{n_B}{n_C} \right) c_C 
  \end{align}
\end{subequations}
Eqs.~\eqref{Eqn:DRs_for_A}--\eqref{Eqn:DRs_for_IC} can be transformed to 
%-----------------------------------------;
%   Equations: Governing equations for F  ;
%-----------------------------------------;
\begin{subequations}
  \begin{align}
    \label{Eqn:Diffusion_for_F}
    &\frac{\partial c_F}{\partial t} - \mathrm{div}[\boldsymbol{D}
    (\boldsymbol{x},t) \, \mathrm{grad}[c_F]] = 0
    \quad \mathrm{in} \; \Omega \times ]0, \mathcal{I}[ \\
    \label{Eqn:Diffusion_for_Dirchlet_F}
    &c_F(\boldsymbol{x},t) = c_F^{\mathrm{p}}(\boldsymbol{x},t) := 
    c^{\mathrm{p}}_A(\boldsymbol{x},t) + \left( \frac{n_A}{n_C} 
    \right) c^{\mathrm{p}}_C(\boldsymbol{x},t) \quad \mathrm{on} 
    \; \Gamma^{\mathrm{D}} \times ]0, \mathcal{I}[ \\
    \label{Eqn:Diffusion_for_Neumann_F}
    & \left(-\boldsymbol{D} (\boldsymbol{x},t) \, \mathrm{grad}[c_F]
    \right) \bullet \boldsymbol{n}(\boldsymbol{x}) =  h^{\mathrm{p}}
    _F(\boldsymbol{x},t) := h^{\mathrm{p}}_A(\boldsymbol{x},t) + 
    \left( \frac{n_A}{n_C} \right) h^{\mathrm{p}}_C(\boldsymbol{x},
    t) \quad \mathrm{on} \; \Gamma^{\mathrm{N}} \times ]0, 
    \mathcal{I}[ \\
    \label{Eqn:Diffusion_for_IC_F}
    &c_F(\boldsymbol{x},t=0) = c^{0}_F(\boldsymbol{x}) := 
    c^{0}_A(\boldsymbol{x}) + \left( \frac{n_A}{n_C} \right) 
    c^{0}_C(\boldsymbol{x}) \quad \mathrm{in} \; \overline{\Omega}
  \end{align}
\end{subequations} 
and 
%-----------------------------------------;
%   Equations: Governing equations for G  ;
%-----------------------------------------;    
\begin{subequations}
  \begin{align}
    \label{Eqn:Diffusion_for_G} 
    &\frac{\partial c_G}{\partial t} - \mathrm{div}[\boldsymbol{D}
    (\boldsymbol{x},t) \, \mathrm{grad}[c_G]] = 0
    \quad \mathrm{in} \; \Omega \times ]0, \mathcal{I}[ \\
    \label{Eqn:Diffusion_for_Dirchlet_G}
    &c_G(\boldsymbol{x},t) = c^{\mathrm{p}}_G(\boldsymbol{x},t) := 
    c^{\mathrm{p}}_B (\boldsymbol{x},t) + \left( \frac{n_B}{n_C} 
    \right) c^{\mathrm{p}}_C (\boldsymbol{x},t) \quad \mathrm{on} 
    \; \Gamma^{\mathrm{D}} \times ]0, \mathcal{I}[ \\
    \label{Eqn:Diffusion_for_Neumann_G}
    & \left(-\boldsymbol{D} (\boldsymbol{x},t) \, \mathrm{grad}[c_G] 
    \right) \bullet \boldsymbol{n}(\boldsymbol{x}) = h^{\mathrm{p}}_
    G(\boldsymbol{x},t) := h^{\mathrm{p}}_B(\boldsymbol{x},t) + 
    \left( \frac{n_B}{n_C} \right) h^{\mathrm{p}}_C(\boldsymbol{x},
    t) \quad \mathrm{on} \; \Gamma^{\mathrm{N}} \times ]0, 
    \mathcal{I}[ \\
    \label{Eqn:Diffusion_for_IC_G}
    &c_G(\boldsymbol{x},t=0) = c^{0}_G(\boldsymbol{x}) := c^{0}_B
    (\boldsymbol{x}) + \left( \frac{n_B}{n_C} \right) c^{0}_
    C(\boldsymbol{x}) \quad \mathrm{in} \; \overline{\Omega} 
  \end{align}
\end{subequations}

Since we are considering fast bimolecular reactions, we will assume that the species $A$ and $B$ cannot co-exist at the same location $\boldsymbol{x}$ at any instant of time.
Based on this assumption, once quantities $c_F(\boldsymbol{x},t)$ and $c_G(\boldsymbol{x},t)$ are calculated, one can then evaluate $c_A(\boldsymbol{x},t)$, $c_B(\boldsymbol{x},t)$, and $c_C(\boldsymbol{x},t)$ values, via:
%------------------------------;
%  Equation: Fast A, B, and C  ;
%------------------------------;
\begin{subequations}
  \label{Eqn:Fast_A_B_C}
  \begin{align}
    \label{Eqn:Fast_A}
    &c_A(\boldsymbol{x},t) = \mathrm{max} \left[c_F(\boldsymbol{x},t) 
      - \left(\frac{n_A}{n_B}\right) c_G(\boldsymbol{x},t), \, 0 
      \right] \\
    \label{Eqn:Fast_B}
    &c_B(\boldsymbol{x},t) = \left( \frac{n_B}{n_A} \right) \; 
    \mathrm{max}\left[- c_F(\boldsymbol{x},t) + 
      \left(\frac{n_A}{n_B} \right) c_G(\boldsymbol{x},t), \, 0 \right] \\
    \label{Eqn:Fast_C}
    &c_C(\boldsymbol{x},t) = \left( \frac{n_C}{n_A} \right) \; 
    \left(c_F(\boldsymbol{x},t) - c_A(\boldsymbol{x},t) \right)
  \end{align}
\end{subequations}

%=====================================================;
%  Subsection: Non-negative single-field formulation  ;
%=====================================================;
\subsection{Non-negative single-field formulation}
\label{SubSec:S2_ROM_NNSF}
Let the total time interval $[0, \mathcal{I}]$ be discretized into $N$ non-overlapping sub-intervals: 
%--------------------------------;
%  Equation: Time sub-intervals  ;
%--------------------------------;
\begin{align}
  [0, \mathcal{I}] = \bigcup_{n = 1}^{N} [t_{n-1}, t_{n}]
\end{align}
where $t_0 = 0$ and $t_N = \mathcal{I}$. 
Assuming a uniform time step $\Delta t$, we define:
%------------------------------------------------------------;
%  Equation: Notation for discrete version of concentration  ;
%------------------------------------------------------------;
\begin{align}
  c^{(n)}_F(\boldsymbol{x}) &:= c_F(\boldsymbol{x},t_n)
\end{align}

First we will discretize with respect to time, using the backward Euler scheme, and so Eqs.~\eqref{Eqn:Diffusion_for_F}--\eqref{Eqn:Diffusion_for_IC_F} lead to:
%------------------------------------------------------;
%   Equations: Governing equations for F at t_{n + 1}  ;
%------------------------------------------------------;
\begin{subequations}
  \begin{align}
    \label{Eqn:Decay_Diffusion_for_F}
    &\left(\frac{1}{\Delta t} \right) c^{(n + 1)}_F(\boldsymbol{x})  
    - \mathrm{div} \left[\boldsymbol{D}(\boldsymbol{x}) \, \mathrm{grad}
    \left[c^{(n + 1)}_F(\boldsymbol{x}) \right] \right] = 
    f_F(\boldsymbol{x},t_{n + 1}) + \left( \frac{1}{\Delta t} \right)
    c^{(n)}_F(\boldsymbol{x}) \quad \mathrm{in} \; \Omega \\
    \label{Eqn:Decay_Diffusion_for_Dirchlet_F}
    &c^{(n + 1)}_F(\boldsymbol{x}) = c_F^{\mathrm{p}}(\boldsymbol{x},
    t_{n + 1}) := c^{\mathrm{p}}_A(\boldsymbol{x},t_{n + 1}) + 
    \left( \frac{n_A}{n_C} \right) c^{\mathrm{p}}_C(\boldsymbol{x},t_{n + 1}) 
    \quad \mathrm{on} \; \Gamma^{\mathrm{D}} \\
    \label{Eqn:Decay_Diffusion_for_Neumann_F}
    &\boldsymbol{n}(\boldsymbol{x}) \bullet \boldsymbol{D} (\boldsymbol{x}) \, 
    \mathrm{grad} \left[c^{(n + 1)}_F(\boldsymbol{x}) \right] =  
    h^{\mathrm{p}}_F(\boldsymbol{x},t_{n + 1}) := 
    h^{\mathrm{p}}_A(\boldsymbol{x},t_{n + 1}) + \left( \frac{n_A}{n_C} 
    \right) h^{\mathrm{p}}_C(\boldsymbol{x},t_{n + 1}) \quad \mathrm{on} 
    \; \Gamma^{\mathrm{N}} \\
    \label{Eqn:Decay_Diffusion_for_IC_F}
    &c_F(\boldsymbol{x},t_0) = c^{0}_F(\boldsymbol{x}) := 
    c^{0}_A(\boldsymbol{x}) + \left( \frac{n_A}{n_C} \right) 
    c^{0}_C(\boldsymbol{x}) \quad \mathrm{in} \; \overline{\Omega}
  \end{align}
\end{subequations}

Note that one can extent the proposed framework to other time-stepping schemes by following the non-negative procedures given in Reference \cite{nakshatrala2016numerical}.
The standard finite element Galerkin formulation for Eqs.~\eqref{Eqn:Decay_Diffusion_for_F}--\eqref{Eqn:Decay_Diffusion_for_IC_F} is given as follows:~Find $c^{(n + 1)}_F(\boldsymbol{x}) \in \mathcal{C}^t_F$ with 
%-----------------------------;
%  Equation: Function spaces  ;
%-----------------------------;
\begin{align}
  \mathcal{C}^t_F &:= \left\{c_F(\bullet ,t) \in H^{1}(\Omega) \; \big| \; 
  c_F(\boldsymbol{x},t) = c^{\mathrm{p}}_F(\boldsymbol{x},t) \; \mathrm{on} \; 
  \Gamma^{\mathrm{D}}\right\}
\end{align}
such that we have
%----------------------------------------------------------------;
%  Equation: Transient single field formulation for invariant F  ;
%----------------------------------------------------------------;
\begin{align}
  \label{Eqn:Transient_single_field_formulation_diffusion}
  \mathcal{B}^{t} \left(w;c^{(n + 1)}_F \right) = L^{t}_{F}(w) 
  \quad \forall w(\boldsymbol{x}) \in \mathcal{W}
\end{align}
where the bilinear form and linear functional are, respectively, defined as 
%-----------------------------------------------------------------;
%  Equation: Bilinear form and linear functional for invariant F  ;
%-----------------------------------------------------------------;
\begin{subequations}
  \begin{align}
    \mathcal{B}^{t} \left(w;c^{(n + 1)}_F \right) &:= 
    \frac{1}{\Delta t} \int \limits_{\Omega} w(\boldsymbol{x}) \; c^{(n + 1)}_F(\boldsymbol{x}) \; 
    \mathrm{d} \Omega + \int \limits_{\Omega} \mathrm{grad}[w(\boldsymbol{x})] 
    \bullet \boldsymbol{D}(\boldsymbol{x}) \; \mathrm{grad}[c^{(n + 1)}_F(\boldsymbol{x})] 
    \; \mathrm{d} \Omega \\
    L^{t}_{F}(w) &:= \frac{1}{\Delta t}\int \limits_{\Omega}  w(\boldsymbol{x}) \; 
    c^{(n)}_F(\boldsymbol{x}) \; \mathrm{d} \Omega + 
    \int \limits_{\Gamma^{\mathrm{N}}} w(\boldsymbol{x}) \; 
    h^{\mathrm{p}}_F(\boldsymbol{x},t_{n + 1}) \; \mathrm{d} \Gamma 
  \end{align}
\end{subequations}
and
\begin{align}
  \mathcal{W} &:= \left\{w(\boldsymbol{x}) \in H^{1}(\Omega) \; \big| \; 
  w(\boldsymbol{x}) = 0 \; \mathrm{on} \; 
  \Gamma^{\mathrm{D}}\right\}
\end{align}
A similar formulation can be written for the invariant $c_G$.
 
The Galerkin formulation Eq.~\eqref{Eqn:Transient_single_field_formulation_diffusion} can be re-written as a minimization problem as follows:
%-----------------------------------------------------------------------;
%  Equation: Minimization problem for anisotropic diffusion with decay  ;
%-----------------------------------------------------------------------;
\begin{align}
  \label{Eqn:Minimzation_Problem_Statement_DiffusionDecay}
  \mathop{\mathrm{minimize}}_{c^{(n + 1)}_F(\boldsymbol{x}) \in 
  \mathcal{C}^t_F} & \quad \frac{1}{2} \mathcal{B}^{t} 
  \left(c^{(n + 1)}_F; c^{(n + 1)}_F \right) - L^t_{F} 
  \left(c^{(n + 1)}_F \right)
\end{align}
However, it has been shown that the Galerkin formulation leads to negative concentrations for product $C$ 
\cite{2013_Nakshatrala_Mudunuru_Valocchi_JCP_v253_p278_p307}, which is unphysical. 
Instead, we will solve the minimization problem Eq.~\eqref{Eqn:Minimzation_Problem_Statement_DiffusionDecay} with the constraint that $c_F$ is non-negative, i.e.,
\begin{align}
    c_F \geq 0
\end{align}
A similar minimization problem with non-negative constraint for $c_G$ can be written as follows:
%-----------------------------------------------------------------------;
%  Equation: Minimization problem for anisotropic diffusion with decay  ;
%-----------------------------------------------------------------------;
\begin{subequations}
  \begin{align}
    \label{Eqn:Minimzation_Problem_Statement_DiffusionDecay_G}
    \mathop{\mathrm{minimize}}_{c^{(n + 1)}_G(\boldsymbol{x}) \in 
    \mathcal{C}^t_G} & \quad \frac{1}{2} \mathcal{B}^{t} 
    \left(c^{(n + 1)}_G; c^{(n + 1)}_G \right) - L^t_{G} 
    \left(c^{(n + 1)}_G \right) \\
    \mbox{subject to} &\quad c_G \geq 0
  \end{align}
\end{subequations}

Upon discretizing Eqs.~\eqref{Eqn:Minimzation_Problem_Statement_DiffusionDecay}--\eqref{Eqn:Minimzation_Problem_Statement_DiffusionDecay_G} using lower-order finite elements, one arrives at:
%-----------------------------------------------------------------------------;
%  Equation: Non-negative formulation for invariant F (Diffusion with decay)  ;
%-----------------------------------------------------------------------------;
\begin{subequations}
  \label{Eqn:NonNegative_Solver_Invariant_F_DiffusionDecay}
    \begin{align}
      \mathop{\mbox{minimize}}_{\boldsymbol{c}^{(n + 1)}_F \in 
      \mathbb{R}^{ndofs}} & \quad \frac{1}{2}  \left \langle 
      \boldsymbol{c}^{(n + 1)}_F; {\boldsymbol{K}} 
      \boldsymbol{c}^{(n + 1)}_F \right \rangle - 
      \frac{1}{\Delta t}  \left \langle 
      \boldsymbol{c}^{(n + 1)}_F; \boldsymbol{c}^{(n)}_F
      \right \rangle \\
      \mbox{subject to} & \quad \boldsymbol{0}
      \preceq \boldsymbol{c}^{(n + 1)}_F  \\
      \mathop{\mbox{minimize}}_{\boldsymbol{c}^{(n + 1)}_G \in 
      \mathbb{R}^{ndofs}} & \quad \frac{1}{2}  \left \langle 
      \boldsymbol{c}^{(n + 1)}_G; {\boldsymbol{K}} 
      \boldsymbol{c}^{(n + 1)}_G \right \rangle - 
      \frac{1}{\Delta t}  \left \langle 
      \boldsymbol{c}^{(n + 1)}_G; \boldsymbol{c}^{(n)}_G
      \right \rangle \\
      \mbox{subject to} & \quad \boldsymbol{0}
      \preceq \boldsymbol{c}^{(n + 1)}_G 
    \end{align}
\end{subequations}
where $\boldsymbol{c}^{(n + 1)}_F$, $\boldsymbol{c}^{(n + 1)}_G$ are the nodal concentration vectors for the invariant $c_F$ and $c_G$ at time level $t_{n + 1}$, respectively.
The coefficient matrix ${\boldsymbol{K}}$ is positive definite and hence a unique global minimizer exists.

%=====================================;
%  Subsection: Reaction-tank problem  ;
%=====================================;
\subsection{Reaction tank problem}
\label{SubSec:Reaction_Tank_Problem}
Figure \ref{Fig:ROM_BVPs} provides a pictorial description of the initial boundary value problem. 
The computational domain is a square with $L = 1$. 
On the sides of the domain, zero flux boundary conditions are enforced. 
For all the chemical species, the non-reactive volumetric source $f_i(\boldsymbol{x}, t)$ is equal to zero. 
Initially, species $A$ and species $B$ are segregated. 
Species $A$ is placed in the left half of the domain while species $B$ is placed in the right half. 
The stoichiometric coefficients are taken as $n_A = 1$, $n_B = 1$, and $n_C = 1$. 
The total time of interest is taken as $\mathcal{I} = 1$. 
The dispersion tensor is taken from the subsurface literature \cite{Pinder_Celia} and is given as follows:
%-------------------------------------------;
%  Equation: Anisotropic dispersion tensor  ;
%-------------------------------------------;
\begin{align}
  \label{Eqn:Aniso_Diff_Tensor_Lit}
  \boldsymbol{D}_{\mathrm{subsurface}}
  (\boldsymbol{x}) = D_{m} \boldsymbol{I} + 
  \alpha_{T} \|\boldsymbol{v}\| \boldsymbol{I} + 
  \frac{\alpha_L - \alpha_T}{\|\boldsymbol{v}\|} 
  \boldsymbol{v} \otimes \boldsymbol{v}
\end{align}
where $D_m$ is the molecular diffusivity, $\alpha_L$ is the longitudinal diffusivity, $\alpha_T$ is the transverse diffusivity, $\boldsymbol{I}$ is the identity tensor, $\otimes$ is the tensor product, $\boldsymbol{v}$ is the velocity vector field, and $\| \bullet \|$ is the Frobenius norm. 
We have neglected advection and the model velocity field is used to define the diffusivity tensor through the following stream function \cite{2002_Adrover_etal_CCE_v26_p125_p139,2009_Tsang_PRE_v80_p026305,mudunuru2016enforcing}:
%---------------------------------;
%  Equation: Streamline function  ;
%---------------------------------;
\begin{align}
  \label{Eqn:Div_Free_Stream_Function}
  \psi(\boldsymbol{x},t) = 
  \begin{cases}
    \frac{1}{2 \pi \kappa_f} \left( \sin(2 \pi \kappa_f x) 
    - \sin(2 \pi \kappa_f y) + v_0 \cos(2 \pi \kappa_f y)
    \right) &\quad \mathrm{if} \; \nu T \leq t < \left( \nu 
    + \frac{1}{2} \right) T  \\
    \frac{1}{2 \pi \kappa_f} \left( \sin(2 \pi \kappa_f x) 
    - \sin(2 \pi \kappa_f y) - v_0 \cos(2 \pi \kappa_f y)
    \right) &\quad \mathrm{if} \; \left( \nu + \frac{1}{2} 
    \right) T \leq t < \left( \nu + 1 \right) T
  \end{cases}
\end{align}
where $\nu = 0, 1, 2, 3, \cdots$ is an integer. $\kappa_fL$ and $T$ are characteristic scales of the flow field. 
Using Eq.~\eqref{Eqn:Div_Free_Stream_Function}, the divergence free velocity field components are given as follows:
%-------------------------------------------;
%  Equation: Perturbed velocity flow field  ;
%-------------------------------------------;
\begin{align}
  \label{Eqn:Vel_x}
  \mathrm{v}_{x}(\boldsymbol{x},t) = -\frac{\partial 
  \psi}{\partial \mathrm{y}} = 
  \begin{cases}
    \cos(2 \pi \kappa_f y) + v_o \sin(2 \pi \kappa_f y) 
    &\quad \mathrm{if} \; \nu T \leq t < \left( \nu + 
    \frac{1}{2} \right) T  \\
    \cos(2 \pi \kappa_f y) &\quad \mathrm{if} \; 
    \left( \nu + \frac{1}{2} \right) T \leq t < 
    \left( \nu + 1 \right) T
  \end{cases}
\end{align}
\begin{align}
  \label{Eqn:Vel_y}
  \mathrm{v}_{y}(\boldsymbol{x},t) = +\frac{\partial 
  \psi}{\partial \mathrm{y}} = 
  \begin{cases}
    \cos(2 \pi \kappa_f x) &\quad \mathrm{if} \; 
    \nu T \leq t < \left( \nu + \frac{1}{2} \right) T \\
    \cos(2 \pi \kappa_f x) + v_o \sin(2 \pi \kappa_f x) 
    &\quad \mathrm{if} \; \left( \nu + \frac{1}{2} \right) 
    T \leq t < \left( \nu + 1 \right) T
  \end{cases}
\end{align}

In Eqs.~\eqref{Eqn:Vel_x}--\eqref{Eqn:Vel_y}, the input parameter $T$ controls the flipping of the velocity field from clockwise to anti-clockwise direction.
$v_0$ is the perturbation parameter of the underlying vortex-based flow field.
Higher values of $v_0$ make the vortices skewed (ellipses) while lower values of $v_0$ correspond to circular vortex structures in the velocity field.
$\frac{\alpha_L}{\alpha_T}$ controls the anisotropic dispersion contrast.
Lower values of $\frac{\alpha_L}{\alpha_T}$ means low anisotropy and higher values of $\frac{\alpha_L}{\alpha_T}$ means high anisotropy.
$\kappa_fL$ corresponds to small-scale/large-scale vortex structures in the flow field.
Figure \ref{Fig:Large_Small_Scales_VelField} shows the contours and streamlines of the velocity field given by Eqs.~\eqref{Eqn:Vel_x}--\eqref{Eqn:Vel_y} for different values of $v_0$ and $\kappa_fL$. 
Small-scale and large-scale vortices are observed when $\kappa_fL$ is high and low, respectively.
One can also observe that by varying the perturbation parameter $v_0$, the location of these vortices is not highly altered.

%=============================================;
%  Subsection: Quantities of Interest (QoIs)  ;
%=============================================;
\subsection{Upper and Lower Bounds on Quantities of Interests (QoIs)}
\label{SubSec:QoIs}
For reactive-transport applications, the following are the quantities of interests:
\begin{itemize}
  \item Species decay/production, which can be analyzed by calculating average of concentration `$\mathfrak{c}_i$' and average of square of concentration `$\mathfrak{c}_i$'.  
    These quantities are given as follows:
    %---------------------------------------------------------------------;
    %  Equation: Avg. of concentration and Avg. of. sq. of concentration  ;
    %---------------------------------------------------------------------;
    \begin{align}
      \label{Eqn:Avg_Conc}
      \mathfrak{c}_i &:= \frac{\langle c_i(t) \rangle}{
      \mathrm{max}\left[\langle c_i(t) \rangle \right]}, 
      \quad \mathrm{where} \; \left \langle c_i(t) \right 
      \rangle = \int \limits_{\Omega} c_i(\mathbf{x},t) \, 
      \mathrm{d} \Omega \\
      \label{Eqn:Avg_Sq_Conc}
      \mathbb{c}_i &:= \frac{\langle c^2_i \rangle}{
      \mathrm{max}\left[\langle c^2_i \rangle \right]}, 
      \quad \mathrm{where} \; \left \langle c^2_i (t) 
      \right \rangle = \int \limits_{\Omega} c^2_i(
      \mathbf{x},t) \, \mathrm{d} \Omega
    \end{align}
  \item Degree of mixing, which can be analyzed by calculating the variance of concentration `$\sigma^2_i$' is given as follows:
    %------------------------------;
    %  Equation: Degree of mixing  ;
    %------------------------------;
    \begin{align}
      \label{Eqn:Degree_of_Mixing}
      \sigma^2_i := \frac{\langle c^2_i \rangle - \langle 
      c_i \rangle^2}{\mathrm{max} \left[\langle c^2_i \rangle 
      - \langle c_i \rangle^2 \right]}
    \end{align}
\end{itemize}
Note that the values for $\mathfrak{c}_i$, $\mathbb{c}_i$, and $\sigma^2_i$ are between 0 and 1 $\forall \quad i = A, B, C, F, G$.
To this end, let $H^{m}(\Omega)$ denote the standard Sobolev space for a given non-negative integer $m$ \cite{Evans_PDE}. 
The associated standard inner product and norm are denoted by $(\bullet,\bullet)_m$ and $\| \bullet \|_m$, respectively. 
Note that similar analysis can be performed for other reactive volumetric sources \cite{pao2012nonlinear}.
We also assume that $\Omega$ is simply connected.
Meaning that, there are no holes in the domain.
This assumption is needed for Poincar\'{e}-Friedrichs inequality to be valid \cite{Bochev_Gunzburger}.

%=======================================================;
%  Theorem-1: Estimates on invariant-F and invariant-G  ;
%=======================================================;
\begin{theorem}[\texttt{Estimates on QoIs for invariant-F and invariant-G}]
  \label{Thm:Estimates_F_and_G}
  If $c_i(\mathbf{x},t) = 0$ on $\partial \Omega \, \times \, ]0, \mathcal{I}[$, $\Omega$ is simply connected, and
  $\left( \boldsymbol{D}(\boldsymbol{x},t) \, \mathrm{grad} [c_i(\mathbf{x},t)] \right) 
  \bullet \boldsymbol{n}(\boldsymbol{x})$ is bounded above and below by constants 
  $h_{i,lb}$ and $h_{i,ub}$, then
  %%
  %--------------------------------------------------------;
  %  Equation: Theorem-1 bounds on Avg of Conc of F and G  ;
  %--------------------------------------------------------;
  \begin{align}
    \label{Eqn:Thm_AvgCFG_Bound}
    \frac{h_{i,lb} \, \mathrm{meas}(\partial \Omega)}{\mathrm{max}\left[\langle c_i(t) 
    \rangle \right]} \leq \frac{d\mathfrak{c}_i}{dt} \leq 
    \frac{h_{i,ub} \, \mathrm{meas}(\partial \Omega)}{\mathrm{max}\left[\langle c_i(t) 
    \rangle \right]} \quad \forall i = F, G
  \end{align}
  and
  %------------------------------------------------------------;
  %  Equation: Theorem-1 bounds on Avg of Sq. Conc of F and G  ;
  %------------------------------------------------------------;
  \begin{align}
    \label{Eqn:Thm_AvgSqCFG_Bound}
    -\frac{2\lambda_{max}}{\mathrm{max}\left[\langle c^2_i \rangle \right]} 
    \| \mathrm{grad}[c_i] \|^2_0 \leq \frac{d\mathbb{c}_i}{dt} 
    \leq -\frac{2\lambda_{min}}{C^2_{pf}} \mathbb{c}_i \quad \forall i = F, G
  \end{align}
  where $\lambda_{\mathrm{min}}$ and $\lambda_{\mathrm{max}}$ are 
  the minimum and maximum eigenvalue of $\boldsymbol{D}(\boldsymbol{x},t)$ 
  in $\overline{\Omega} \, \times \, [0, \mathcal{I}]$. $C_{pf}$ is the 
  Poincar\'{e}-Friedrichs inequality constant.
\end{theorem}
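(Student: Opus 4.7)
The plan is to derive the two estimates directly from the pure-diffusion equations satisfied by the invariants, namely Eqs.~\eqref{Eqn:Diffusion_for_F} and \eqref{Eqn:Diffusion_for_G}, which carry no reactive source. I would test these equations against $1$ and against $c_i$ respectively, and then control the resulting boundary and gradient terms using the hypothesized pointwise flux bounds, the uniform spectral bounds on $\boldsymbol{D}(\boldsymbol{x},t)$, and the Poincar\'e--Friedrichs inequality afforded by the zero-trace assumption.

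For Eq.~\eqref{Eqn:Thm_AvgCFG_Bound}, I would differentiate under the spatial integral, write $\tfrac{d}{dt}\langle c_i(t)\rangle = \int_\Omega \partial_t c_i\,\mathrm{d}\Omega$, substitute the diffusion PDE, and apply the divergence theorem to arrive at
\begin{align}
\frac{d}{dt}\langle c_i(t)\rangle \;=\; \int_{\partial\Omega}\bigl(\boldsymbol{D}(\boldsymbol{x},t)\,\mathrm{grad}[c_i]\bigr)\bullet\boldsymbol{n}(\boldsymbol{x})\,\mathrm{d}\Gamma.
\end{align}
The pointwise hypothesis $h_{i,lb}\leq (\boldsymbol{D}\,\mathrm{grad}[c_i])\bullet\boldsymbol{n}\leq h_{i,ub}$ immediately yields $h_{i,lb}\,\mathrm{meas}(\partial\Omega)\leq d\langle c_i\rangle/dt\leq h_{i,ub}\,\mathrm{meas}(\partial\Omega)$, and dividing through by the positive normalization constant $\mathrm{max}[\langle c_i(t)\rangle]$ produces Eq.~\eqref{Eqn:Thm_AvgCFG_Bound}.

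For Eq.~\eqref{Eqn:Thm_AvgSqCFG_Bound}, I would use $\partial_t(c_i^2) = 2 c_i \partial_t c_i$, substitute the diffusion PDE, and integrate by parts in space. Because $c_i$ vanishes on $\partial\Omega$ by hypothesis, the boundary contribution drops and one obtains
\begin{align}
\frac{d}{dt}\langle c_i^2(t)\rangle \;=\; -2\int_\Omega \mathrm{grad}[c_i]\bullet\boldsymbol{D}(\boldsymbol{x},t)\,\mathrm{grad}[c_i]\,\mathrm{d}\Omega.
\end{align}
The lower estimate then follows by bounding the quadratic form above by $\lambda_{\max}\,|\mathrm{grad}[c_i]|^2$ and dividing by $\mathrm{max}[\langle c_i^2\rangle]$. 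For the upper estimate, I would bound the quadratic form below by $\lambda_{\min}\,|\mathrm{grad}[c_i]|^2$, then invoke the Poincar\'e--Friedrichs inequality $\|c_i\|_0^2 \leq C_{pf}^2 \,\|\mathrm{grad}[c_i]\|_0^2$ (valid here since $\Omega$ is simply connected with the zero Dirichlet trace) to replace $\|\mathrm{grad}[c_i]\|_0^2$ by $\langle c_i^2\rangle/C_{pf}^2$, and finally normalize by $\mathrm{max}[\langle c_i^2\rangle]$ to recover $d\mathbb{c}_i/dt \leq -(2\lambda_{\min}/C_{pf}^2)\,\mathbb{c}_i$.

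The mechanical steps (divergence theorem, eigenvalue bracketing of the diffusive quadratic form, and keeping the correct inequality direction after normalization) are straightforward; the step I would treat with most care is the justification of differentiation under the spatial integral together with the integration by parts. This requires $c_i$ to lie in the standard weak-solution class $L^2(0,\mathcal{I};H^1(\Omega))$ with $\partial_t c_i \in L^2(0,\mathcal{I};H^{-1}(\Omega))$ and a vanishing trace on $\partial\Omega$, which is consistent with Eqs.~\eqref{Eqn:Diffusion_for_F}--\eqref{Eqn:Diffusion_for_IC_F} and \eqref{Eqn:Diffusion_for_G}--\eqref{Eqn:Diffusion_for_IC_G} once the boundary and initial data are checked to be compatible with the zero-trace hypothesis invoked in the statement.
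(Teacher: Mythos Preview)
Your proposal is correct and follows essentially the same route as the paper: integrate the source-free diffusion equation for $c_F$ (resp.\ $c_G$) against $1$ and against $c_i$, apply the divergence theorem, drop the boundary term via the zero-trace hypothesis, bracket the diffusive quadratic form with $\lambda_{\min}$ and $\lambda_{\max}$, invoke Poincar\'e--Friedrichs, and then normalize. Your added remark on the regularity needed to justify differentiation under the integral and integration by parts is more careful than the paper, which simply performs these manipulations formally.
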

\begin{proof}
  From Eqs.~\ref{Eqn:Diffusion_for_F}--\ref{Eqn:Diffusion_for_IC_F}, 
  as the volumetric source/sink term is equal to zero, by employing 
  divergence theorem, we have 
  \begin{align}
    \label{Eqn:Thm_AvgCFG_Bound_1}
    \frac{d}{dt} \int \limits_{\Omega} c_F \, \mathrm{d} \Omega = 
    \int \limits_{\partial \Omega} \boldsymbol{D} \, \mathrm{grad}[c_F]  
    \bullet \boldsymbol{n} \, \mathrm{d} \Gamma \implies h_{F,lb} \, 
    \mathrm{meas}(\partial \Omega) \leq  \frac{d}{dt} \int \limits_{\Omega} 
    c_F \, \mathrm{d} \Omega \leq h_{F,ub} \, \mathrm{meas}(\partial 
    \Omega)
  \end{align}
  
  Dividing Eq.~\ref{Eqn:Thm_AvgCFG_Bound_1} with $\mathrm{max}\left[\langle c_F(t) 
  \rangle \right]$, we get the desired inequality given by Eq.~\ref{Eqn:Thm_AvgCFG_Bound}.
  This completes the first part of the proof.
  For the second part of the proof, multiplying Eq.~\ref{Eqn:Diffusion_for_F} with $c_F$, 
  using the divergence theorem, and the identity that $\mathrm{div} \left[c_F \boldsymbol{D} 
  \mathrm{grad}[c_F] \right] = \boldsymbol{D} \mathrm{grad}[c_F] \bullet\mathrm{grad}[c_F] 
  + c_F \mathrm{div}\left[\boldsymbol{D} \mathrm{grad}[c_F] \right]$, we obtain the following 
  relationship
  \begin{align}
    \label{Eqn:Thm_AvgCFG_Bound_2}
    \frac{1}{2} \frac{d}{dt} \int \limits_{\Omega} c^2_F \, \mathrm{d} \Omega = 
    \int \limits_{\partial \Omega} c_F \boldsymbol{D} \, \mathrm{grad}[c_F]  
    \bullet \boldsymbol{n} \, \mathrm{d} \Gamma - \int \limits_{\Omega} 
    \boldsymbol{D} \, \mathrm{grad}[c_F] \bullet \mathrm{grad}[c_F] \, 
    \mathrm{d} \Omega
  \end{align}
  
  As we have assummed $c_F(\mathbf{x},t) = 0$ on $\partial \Omega \, 
  \times \, ]0, \mathcal{I}[$, $\displaystyle \int \limits_{\partial 
  \Omega} c_F \boldsymbol{D} \, \mathrm{grad}[c_F]  \bullet \boldsymbol{n} 
  \, \mathrm{d} \Gamma = 0$. Appealing to spectral theorem for symmetric 
  and positive definite matrices \cite{Golub}, we have $\lambda_{min} \| \mathrm{grad}[c_F] 
  \|_0 \leq \| \boldsymbol{D}^{1/2} \mathrm{grad}[c_F] \|_0 \leq \lambda_{max} 
  \| \mathrm{grad}[c_F] \|_0$. Using these relationships, Eq.~\ref{Eqn:Thm_AvgCFG_Bound_2} 
  reduces to 
  \begin{align}
    \label{Eqn:Thm_AvgCFG_Bound_3}
    -2\lambda_{max} \| \mathrm{grad}[c_F] \|^2_0 \leq \frac{d}{dt} 
    \int \limits_{\Omega} c^2_F \, \mathrm{d} \Omega \leq -2\lambda_{min} 
    \| \mathrm{grad}[c_F] \|^2_0
  \end{align}
  
  The Poincar\'{e}-Friedrichs inequality \cite{Bochev_Gunzburger} states 
  that:~there exists a constant $C_{pf} > 0$ such that we have $\| u \|_0 
  \leq C_{pf} \| \mathrm{grad}[u] \|_0 \quad \forall \, u \in H^{1}_0(\Omega)$.
  Invoking this inequality on Eq.~\ref{Eqn:Thm_AvgCFG_Bound_3} gives the 
  following result
  \begin{align}
    \label{Eqn:Thm_AvgCFG_Bound_4}
    -2\lambda_{max} \| \mathrm{grad}[c_F] \|^2_0 \leq \frac{d}{dt} 
    \int \limits_{\Omega} c^2_F \, \mathrm{d} \Omega \leq -2\lambda_{min} 
    \| \mathrm{grad}[c_F] \|^2_0 \leq -\frac{2 \lambda_{min}}{C^2_{pf}} 
    \| c_F \|^2_0
  \end{align}
  
  Dividing Eq.~\ref{Eqn:Thm_AvgCFG_Bound_4} with $\mathrm{max}\left[\langle c^2_i \rangle \right]$,
  we get Eq.~\ref{Eqn:Thm_AvgSqCFG_Bound}. 
  This completes the second part of the proof.
  Repeating the same process for $c_G$, we get inequalities given by 
  Eqs.~\ref{Eqn:Thm_AvgCFG_Bound}--\ref{Eqn:Thm_AvgSqCFG_Bound} for invariant-$G$.
\end{proof}

%==============================================================;
%  Theorem-2: Estimates on rate of change of $\mathfrak{c}_i$  ;
%==============================================================;
\begin{theorem}[\texttt{Estimates on average of concentration $\mathfrak{c}_i$}]
  \label{Thm:Estimates_AvgConc_ABC}
  Based on the assumptions outlined in Thm.~\ref{Thm:Estimates_F_and_G}, 
  the quantity $\frac{d\mathfrak{c}_i}{dt}$ is bounded as follows
  %------------------------------------------------------------;
  %  Equation: Theorem-2 bounds on Avg of Conc of A, B, and C  ;
  %------------------------------------------------------------;
  \begin{align}
    \label{Eqn:Thm_AvgAB_Bound}
    &\frac{1}{\mathrm{max}\left[\langle c_i(t) \rangle \right]} 
    \left(h_{i,lb} \, \mathrm{meas}(\partial \Omega)- n_i k_{AB} \, 
    \mathrm{meas}(\Omega) \, \mathrm{max}\left[c_F^{\mathrm{max}}, 
    c_G^{\mathrm{max}} \right]^2 \right) \leq \nonumber \\
    &\frac{h_{i,lb} \, \mathrm{meas}(\partial \Omega)}{\mathrm{max}
    \left[\langle c_i(t) \rangle \right]} - \left( n_i k_{AB} \, 
    \mathrm{max}\left[c_F^{\mathrm{max}}, c_G^{\mathrm{max}} \right] 
    \right) \mathfrak{c}_i 
    \leq \frac{d\mathfrak{c}_i}{dt} \leq 
    \frac{h_{i,ub} \, \mathrm{meas}(\partial \Omega) }{\mathrm{max} 
    \left[\langle c_i(t) \rangle \right]} \quad \forall i = A, B \\
    \label{Eqn:Thm_AvgC_Bound}
    &\frac{1}{\mathrm{max}\left[\langle c_i(t) \rangle \right]} 
    \left(h_{i,lb} \, \mathrm{meas}(\partial \Omega)- k_{AB} \, 
    \mathrm{meas}(\Omega) \, \mathrm{max}\left[c_F^{\mathrm{max}}, 
    c_G^{\mathrm{max}} \right] \, \left(n_A c_G^{\mathrm{max}} + 
    n_B c_F^{\mathrm{max}} \right) \right) \leq \nonumber \\
    &\frac{h_{i,lb} \, \mathrm{meas}(\partial \Omega)}{\mathrm{max}
    \left[\langle c_i(t) \rangle \right]} - k_{AB} \left(n_A 
    c_G^{\mathrm{max}} + n_B c_F^{\mathrm{max}} \right) \mathfrak{c}_i 
    \leq \frac{d\mathfrak{c}_i}{dt} \leq \nonumber \\
    &\frac{h_{i,ub} \, \mathrm{meas}(\partial \Omega) + n_C k_{AB} \, \mathrm{meas}
    (\Omega) \, c_F^{\mathrm{max}} c_G^{\mathrm{max}}}{\mathrm{max} 
    \left[\langle c_i(t) \rangle \right]} + \frac{n_A n_B k_{AB}}{n_C}
    \mathrm{max}\left[c_F^{\mathrm{max}}, c_G^{\mathrm{max}} \right] 
    \mathfrak{c}_i \leq \nonumber \\
    &\frac{h_{i,ub} \, \mathrm{meas}(\partial \Omega) + n_C k_{AB} \, \mathrm{meas}
    (\Omega) \, c_F^{\mathrm{max}} c_G^{\mathrm{max}} + \frac{n_A n_B k_{AB}}{n_C}
    \, \mathrm{meas}(\Omega) \, \mathrm{max}\left[c_F^{\mathrm{max}}, 
    c_G^{\mathrm{max}} \right]^2}{\mathrm{max}\left[\langle c_i(t) \rangle \right]} \quad \forall i = C
  \end{align} 
\end{theorem}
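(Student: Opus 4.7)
The plan is to mirror the proof of Thm.~\ref{Thm:Estimates_F_and_G} for each species $i\in\{A,B,C\}$, the essential new ingredient being that the reactive volumetric term $\mp n_i k_{AB} c_A c_B$ is now nonzero. First I would integrate Eqs.~\eqref{Eqn:DRs_for_A}--\eqref{Eqn:DRs_for_C} over $\Omega$, apply the divergence theorem to the dispersive flux, and use the prescribed bounds $h_{i,lb}\le(-\boldsymbol{D}\,\mathrm{grad}[c_i])\bullet\boldsymbol{n}\le h_{i,ub}$ exactly as in Thm.~\ref{Thm:Estimates_F_and_G} to produce the boundary contribution $h_{i,lb/ub}\,\mathrm{meas}(\partial\Omega)$. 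The reactive integral $\int_\Omega c_A c_B\,\mathrm{d}\Omega$ then has to be bounded in two directions, and all remaining work consists of choosing the right pointwise estimates for $c_A c_B$.

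For $i\in\{A,B\}$ the reactive term enters with a minus sign and is non-positive, so dropping it immediately yields the upper bound $\mathrm{d}\mathfrak{c}_i/\mathrm{d}t\le h_{i,ub}\,\mathrm{meas}(\partial\Omega)/\mathrm{max}[\langle c_i(t)\rangle]$. For the lower bound, the key observation is that the fast-reaction transformations \eqref{Eqn:Fast_A}--\eqref{Eqn:Fast_B} give $c_A\le c_F^{\mathrm{max}}$ and $c_B\le c_G^{\mathrm{max}}$, hence pointwise $c_A c_B\le \max[c_F^{\mathrm{max}},c_G^{\mathrm{max}}]\,c_i$; integrating in space and dividing by $\mathrm{max}[\langle c_i(t)\rangle]$ produces the middle form in \eqref{Eqn:Thm_AvgAB_Bound} involving $\mathfrak{c}_i$. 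The outer (cruder) lower bound in \eqref{Eqn:Thm_AvgAB_Bound} then follows either from the pointwise estimate $c_A c_B\le \max[c_F^{\mathrm{max}},c_G^{\mathrm{max}}]^2$ or, equivalently, from the trivial estimate $\mathfrak{c}_i\le 1$ applied to the middle expression.

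The case $i=C$ is the main obstacle because the reactive contribution is now a source ($+n_C k_{AB} c_A c_B$) and must be controlled from above. For the first part of the upper bound I would again use $c_A\le c_F$ and $c_B\le c_G$ to obtain $\int_\Omega c_A c_B\,\mathrm{d}\Omega \le \mathrm{meas}(\Omega)\,c_F^{\mathrm{max}}c_G^{\mathrm{max}}$, yielding the term $n_C k_{AB}\,\mathrm{meas}(\Omega)\,c_F^{\mathrm{max}}c_G^{\mathrm{max}}/\mathrm{max}[\langle c_C\rangle]$. To recover the extra $\tfrac{n_A n_B k_{AB}}{n_C}\max[c_F^{\mathrm{max}},c_G^{\mathrm{max}}]\,\mathfrak{c}_C$ contribution I would invert the change of variables \eqref{Eqn:Definitions_of_F_G} to write $c_A=c_F-(n_A/n_C)c_C$ and $c_B=c_G-(n_B/n_C)c_C$, expand $c_A c_B$, drop the two non-positive cross terms, and use the pointwise bound $c_C\le (n_C/n_A)\max[c_F^{\mathrm{max}},c_G^{\mathrm{max}}]$ to turn $\int_\Omega c_C^{2}\,\mathrm{d}\Omega$ into a bound linear in $\mathfrak{c}_C$. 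The outermost upper bound is then just $\mathfrak{c}_C\le 1$. For the lower bound the stated inequality is strictly weaker than the trivial $\int_\Omega c_A c_B\,\mathrm{d}\Omega\ge 0$, and follows from combining the non-negativity of the reactive source with the nonnegative correction $k_{AB}(n_A c_G^{\mathrm{max}}+n_B c_F^{\mathrm{max}})\,\mathfrak{c}_C\ge 0$, together with the same outer estimate $\mathfrak{c}_C\le 1$ used in the upper bound.

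The only genuinely technical step is choosing the pointwise estimates on $c_A c_B$ that reproduce the exact prefactors $n_A$, $n_B$, $n_C$ appearing in the stated chain of inequalities; once those are in place, the rest is bookkeeping: integrate in space, apply the divergence theorem to the flux, divide through by the normalizing constant $\mathrm{max}[\langle c_i(t)\rangle]$, and invoke $\mathfrak{c}_i\le 1$ to pass from the middle (tighter, $\mathfrak{c}_i$-dependent) forms to the outer ($\mathfrak{c}_i$-free) forms.
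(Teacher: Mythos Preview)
Your proposal is correct and follows essentially the same route as the paper: integrate the species PDEs over $\Omega$, convert the dispersive term to a boundary integral via the divergence theorem, and then bound the reactive integral $\int_\Omega c_A c_B\,\mathrm{d}\Omega$ pointwise using the maximum-principle estimates on $c_F,c_G$. Two small points of departure are worth flagging. First, the paper asserts and uses $0\le c_C\le\max[c_F^{\mathrm{max}},c_G^{\mathrm{max}}]$ directly (no $n_C/n_A$ factor), so that $\|c_C\|_0^2\le\max[c_F^{\mathrm{max}},c_G^{\mathrm{max}}]\langle c_C\rangle$ reproduces the stated prefactor $\tfrac{n_A n_B k_{AB}}{n_C}$ exactly; your bound $c_C\le(n_C/n_A)\max[\,\cdot\,]$ is also valid but yields a different constant. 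Second, for the lower bound on $\mathrm{d}\mathfrak{c}_C/\mathrm{d}t$ the paper does not invoke $\int_\Omega c_A c_B\,\mathrm{d}\Omega\ge0$; instead it expands $(c_A,c_B)_0$ via \eqref{Eqn:Definitions_of_F_G}, drops the nonnegative terms $(c_F,c_G)_0$ and $\tfrac{n_A n_B}{n_C^2}\|c_C\|_0^2$, and bounds the remaining cross terms $-\tfrac{1}{n_C}(n_B c_F+n_A c_G,c_C)_0$ pointwise. Your observation that the resulting lower bound is strictly weaker than the trivial nonnegativity of the reactive source is correct and yields a cleaner derivation of the same chain of inequalities. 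Finally, note that passing from the $\mathfrak{c}_i$-dependent middle forms to the $\mathfrak{c}_i$-free outer forms requires $\langle c_i(t)\rangle\le\mathrm{meas}(\Omega)\max[c_F^{\mathrm{max}},c_G^{\mathrm{max}}]$, not merely $\mathfrak{c}_i\le1$; your first alternative (the direct pointwise estimate on $c_A c_B$) is the one the paper actually uses.
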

\begin{proof}
  Integrating Eq.~\ref{Eqn:DRs_for_A} and Eq.~\ref{Eqn:DRs_for_C}, and employing divergence theorem, we have
  \begin{align}
    \label{Eqn:Thm:Estimates_AvgConc_ABC_1}
    &\frac{d}{dt} \int \limits_{\Omega} c_A \, \mathrm{d} \Omega = 
    \int \limits_{\partial \Omega} \boldsymbol{D} \, \mathrm{grad}[c_A]  
    \bullet \boldsymbol{n} \, \mathrm{d} \Gamma - \int \limits_{\Omega} 
    n_A k_{AB} c_A c_B \, \mathrm{d} \Omega \nonumber \\
    &\implies h_{A,lb} \, \mathrm{meas}(\partial \Omega) - \int \limits_{\Omega} 
    n_A k_{AB} c_A c_B \, \mathrm{d} \Omega \leq  \frac{d}{dt} \int \limits_{\Omega} 
    c_A \, \mathrm{d} \Omega \leq h_{A,ub} \, \mathrm{meas}(\partial 
    \Omega) - \int \limits_{\Omega} n_A k_{AB} c_A c_B \, \mathrm{d} \Omega\\
    \label{Eqn:Thm:Estimates_AvgConc_ABC_2}
    &\frac{d}{dt} \int \limits_{\Omega} c_C \, \mathrm{d} \Omega = 
    \int \limits_{\partial \Omega} \boldsymbol{D} \, \mathrm{grad}[c_C]  
    \bullet \boldsymbol{n} \, \mathrm{d} \Gamma + \int \limits_{\Omega} 
    n_C k_{AB} c_A c_B \, \mathrm{d} \Omega \nonumber \\
    &\implies h_{C,lb} \, \mathrm{meas}(\partial \Omega) + \int \limits_{\Omega} 
    n_C k_{AB} c_A c_B \, \mathrm{d} \Omega \leq  \frac{d}{dt} \int \limits_{\Omega} 
    c_C \, \mathrm{d} \Omega \leq h_{C,ub} \, \mathrm{meas}(\partial 
    \Omega) + \int \limits_{\Omega} n_C k_{AB} c_A c_B \, \mathrm{d} \Omega
  \end{align}
  
  As $0 \leq \mathrm{min}\left[c_F^{\mathrm{min}}, c_G^{\mathrm{min}} \right] 
  \leq c_i \leq \mathrm{max}\left[c_F^{\mathrm{max}}, c_G^{\mathrm{max}} 
  \right] \quad \forall i = A, B, C$, from Eq.~\ref{Eqn:Thm:Estimates_AvgConc_ABC_1}, 
  we have the following relationship
  \begin{align}
    \label{Eqn:Thm:Estimates_AvgConc_ABC_3}
    &h_{A,lb} \, \mathrm{meas}(\partial \Omega)- n_A k_{AB} \, 
    \mathrm{meas}(\Omega) \mathrm{max}\left[c_F^{\mathrm{max}}, 
    c_G^{\mathrm{max}} \right]^2 \leq         
    h_{A,lb} \, \mathrm{meas}(\partial \Omega) - n_A k_{AB} \mathrm{max}
    \left[c_F^{\mathrm{max}}, c_G^{\mathrm{max}} \right] \int \limits_{\Omega} 
    c_A \, \mathrm{d} \Omega \leq \nonumber \\
    &\frac{d}{dt} \int \limits_{\Omega} 
    c_A \, \mathrm{d} \Omega \leq h_{A,ub} \, \mathrm{meas}(\partial 
    \Omega) - n_A k_{AB} \mathrm{min}\left[c_F^{\mathrm{min}}, 
    c_G^{\mathrm{min}} \right] \int \limits_{\Omega} c_A \, 
    \mathrm{d} \Omega \leq \nonumber \\
    &h_{A,ub} \, \mathrm{meas}(\partial \Omega)- n_A k_{AB} \, 
    \mathrm{meas}(\Omega) \mathrm{min}\left[c_F^{\mathrm{min}}, 
    c_G^{\mathrm{min}} \right]^2 \leq h_{A,ub} \, \mathrm{meas}(\partial 
    \Omega)
  \end{align}
  
  Dividing Eq.~\ref{Eqn:Thm:Estimates_AvgConc_ABC_3} with $\mathrm{max}\left[\langle c_A(t) 
  \rangle \right]$, we get the desired inequality given by Eq.~\ref{Eqn:Thm_AvgAB_Bound} 
  for $c_A$. Repeating the same process for species $B$, we get the inequality given by 
  Eq.~\ref{Eqn:Thm_AvgAB_Bound} for $c_B$. This completes the first part of the proof.
  
  For the second part of the proof, from Eqs.~\ref{Eqn:Definitions_of_F}--\ref{Eqn:Definitions_of_G}, 
  we have the following inequalities on $\left(c_A, c_B \right)_0$
  \begin{align}
    \label{Eqn:Thm:Estimates_AvgConc_ABC_4}
    &\left(c_A, c_B \right)_0 = \left(c_F- \frac{n_A}{n_C} c_C, c_G - 
    \frac{n_B}{n_C} c_C \right)_0 = \nonumber \\
    &\left(c_F, c_G \right)_0 - \frac{n_B}{n_C}\left(c_F, c_C \right)_0 
    - \frac{n_A}{n_C}\left(c_G, c_C \right)_0 
    + \frac{n_A n_B}{n^2_C}\left(c_C, c_C \right)_0 \\
    \label{Eqn:Thm:Estimates_AvgConc_ABC_5}
    & - \left(\frac{n_B}{n_C}c_F + \frac{n_A}{n_C}c_G , c_C \right)_0 
    \leq \left(c_A, c_B \right)_0 \leq \left(c_F, c_G \right)_0
    + \frac{n_A n_B}{n^2_C}\left(c_C, c_C \right)_0 = 
    \left(c_F, c_G \right)_0 + \frac{n_A n_B}{n^2_C} \| c_C \|^2_0
  \end{align}
  
  Invoking Cauchy-Schwarz inequality on $(c_F, c_G)_0$, $\| c_F \|_0$, $\| c_G \|_0$, 
  and $\| c_C \|^2_0$, we have following inequalities
  \begin{align}
    \label{Eqn:Thm:Estimates_AvgConc_ABC_6}
    &\left(c_F, c_G \right)_0 \leq \| c_F \|_0 \| c_G \|_0 \\
    \label{Eqn:Thm:Estimates_AvgConc_ABC_7}
    &0 \leq c_i^{\mathrm{min}} \, \sqrt{\mathrm{meas}(\Omega)} \leq \| c_i \|_0 \leq 
    c_i^{\mathrm{max}} \, \sqrt{\mathrm{meas}(\Omega)} \quad \forall i = F, G \\
    \label{Eqn:Thm:Estimates_AvgConc_ABC_8}
    &0 \leq \mathrm{meas}(\Omega) \, \mathrm{min}\left[c_F^{\mathrm{min}}, 
    c_G^{\mathrm{min}} \right]^2 \leq \mathrm{min}\left[c_F^{\mathrm{min}}, 
    c_G^{\mathrm{min}} \right] \left \langle c_C(t) \right \rangle \leq 
    \| c_C \|^2_0 \leq \nonumber \\
    &\mathrm{max}\left[c_F^{\mathrm{max}}, c_G^{\mathrm{max}} \right] \left 
    \langle c_C(t) \right \rangle \leq \mathrm{meas}(\Omega) \, \mathrm{max}
    \left[c_F^{\mathrm{max}}, c_G^{\mathrm{max}} \right]^2
  \end{align}
  
   From Eq.~\ref{Eqn:Thm:Estimates_AvgConc_ABC_5} and 
   Eqs.~\ref{Eqn:Thm:Estimates_AvgConc_ABC_6}--\ref{Eqn:Thm:Estimates_AvgConc_ABC_8}, we have 
   \begin{align}
     \label{Eqn:Thm:Estimates_AvgConc_ABC_9}
     & - \frac{1}{n_C} \, \mathrm{meas}(\Omega) \, \mathrm{max} 
     \left[c_F^{\mathrm{max}}, c_G^{\mathrm{max}} \right] \, 
     \left(n_A c_G^{\mathrm{max}} + n_B c_F^{\mathrm{max}} \right) 
     \leq - \frac{1}{n_C} \left(n_A c_G^{\mathrm{max}} + n_B 
     c_F^{\mathrm{max}} \right) \left \langle c_C(t) \right 
     \rangle \leq \nonumber \\
     &\left(c_A, c_B \right)_0 \leq \mathrm{meas}(\Omega) \, 
     c_F^{\mathrm{max}} c_G^{\mathrm{max}} + \frac{n_A n_B}{n^2_C}
     \mathrm{max}\left[c_F^{\mathrm{max}}, c_G^{\mathrm{max}} \right] 
     \left \langle c_C(t) \right \rangle \leq \nonumber \\
     & \mathrm{meas}
     (\Omega) \, c_F^{\mathrm{max}} c_G^{\mathrm{max}} + \frac{n_A n_B}{n^2_C}
     \, \mathrm{meas}(\Omega) \, \mathrm{max}\left[c_F^{\mathrm{max}}, 
     c_G^{\mathrm{max}} \right]^2
   \end{align}
   
   From Eq.~\ref{Eqn:Thm:Estimates_AvgConc_ABC_2} and Eq.~\ref{Eqn:Thm:Estimates_AvgConc_ABC_9}, 
   we have
   \begin{align}
     \label{Eqn:Thm:Estimates_AvgConc_ABC_10}
     & h_{C,lb} \, \mathrm{meas}(\partial \Omega) - k_{AB} \, 
     \mathrm{meas}(\Omega) \, \mathrm{max} \left[c_F^{\mathrm{max}}, 
     c_G^{\mathrm{max}} \right] \, \left(n_A c_G^{\mathrm{max}} 
     + n_B c_F^{\mathrm{max}} \right) \leq \nonumber \\
     &h_{C,lb} \, \mathrm{meas}(\partial \Omega) - k_{AB} 
     \left(n_A c_G^{\mathrm{max}} + n_B c_F^{\mathrm{max}} 
     \right) \left \langle c_C(t) \right \rangle \leq \nonumber \\
     & \frac{d}{dt} \int \limits_{\Omega} c_C \, \mathrm{d} \Omega 
     \leq h_{C,ub} \, \mathrm{meas}(\partial \Omega) + 
     n_C k_{AB} \, \mathrm{meas}(\Omega) \, c_F^{\mathrm{max}} 
     c_G^{\mathrm{max}} + \frac{n_A n_B k_{AB}}{n_C} \, \mathrm{max}
     \left[c_F^{\mathrm{max}}, c_G^{\mathrm{max}} \right] 
     \left \langle c_C(t) \right \rangle \leq \nonumber \\
     & h_{C,ub} \, \mathrm{meas}(\partial \Omega) + 
     n_C k_{AB} \, \mathrm{meas}(\Omega) \, c_F^{\mathrm{max}} 
     c_G^{\mathrm{max}} + \frac{n_A n_B k_{AB}}{n_C} \, \mathrm{meas}
     (\Omega) \, \mathrm{max}\left[c_F^{\mathrm{max}}, 
     c_G^{\mathrm{max}} \right]^2
   \end{align}
   
   Dividing Eq.~\ref{Eqn:Thm:Estimates_AvgConc_ABC_10} with $\mathrm{max}\left[\langle c_C(t) 
  \rangle \right]$, we get the desired inequality given by Eq.~\ref{Eqn:Thm_AvgC_Bound} 
  for $c_C$.
\end{proof}

%%============================================================;
%  Theorem-3: Estimates on rate of change of $\mathbb{c}_i$  ;
%============================================================;
\begin{theorem}[\texttt{Estimates on average of square of concentration $\mathbb{c}_i$}]
  \label{Thm:Estimates_AvgSqConc_ABC}
  Based on the assumptions outlined in Thm.~\ref{Thm:Estimates_F_and_G}, 
  if $c_i(\mathbf{x},t) = 0$ on $\partial \Omega \, \times \, ]0, \mathcal{I}[$,
  then the quantity $\frac{d\mathbb{c}_i}{dt}$ is bounded as follows
  %----------------------------------------------------------------;
  %  Equation: Theorem-3 bounds on Avg of Sq. Conc of A, B, and C  ;
  %----------------------------------------------------------------;
  \begin{align}
    \label{Eqn:Thm_AvgSqAB_Bound}
    &-\frac{2}{\mathrm{max}\left[\langle c^2_i \rangle \right]} 
    \left(\lambda_{max} \| \mathrm{grad}[c_i] \|^2_0 + n_i k_{AB} \, 
    \mathrm{meas}(\Omega) \, \mathrm{max}\left[c_F^{\mathrm{max}}, 
    c_G^{\mathrm{max}} \right]^3 \right) \leq \frac{d\mathbb{c}_i}{dt} 
    \leq \nonumber \\
    &-\frac{2\lambda_{min}}{C^2_{pf}} \mathbb{c}_i \leq
    -\frac{2\lambda_{min}}{\mathrm{max}\left[\langle c^2_i \rangle \right] \, C^2_{pf}} \, \mathrm{meas}(\Omega) \, 
    \mathrm{min}\left[c_F^{\mathrm{min}}, c_G^{\mathrm{min}} 
    \right]^2 \leq 0 \quad \forall i = A, B \\
    \label{Eqn:Thm_AvgSqC_Bound}
    &-\frac{2 \lambda_{max}}{\mathrm{max}\left[\langle c^2_i 
    \rangle \right]} \| \mathrm{grad}[c_i] \|^2_0 \leq 
    \frac{d\mathbb{c}_i}{dt} \leq -\frac{2\lambda_{min}}{C^2_{pf}} 
    \mathbb{c}_i + \frac{2 n_C k_{AB} \, \mathrm{meas}(\Omega) \, 
    \mathrm{max}\left[c_F^{\mathrm{max}}, c_G^{\mathrm{max}} 
    \right]^3 }{\mathrm{max}\left[\langle c^2_i \rangle \right]} 
    \leq \nonumber \\
    &-\frac{2\lambda_{min}}{\mathrm{max}\left[\langle c^2_i \rangle \right] C^2_{pf}} \, \mathrm{meas}(\Omega) \, 
    \mathrm{min}\left[c_F^{\mathrm{min}}, c_G^{\mathrm{min}} \right]^2 + 
    \frac{2 n_C k_{AB} \, \mathrm{meas}(\Omega) \, \mathrm{max} 
    \left[c_F^{\mathrm{max}}, c_G^{\mathrm{max}} \right]^3 }{\mathrm{max} 
    \left[\langle c^2_i \rangle \right]} \quad \forall i = C
  \end{align}
\end{theorem}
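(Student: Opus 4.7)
The plan is to mirror the argument used for Thm.~\ref{Thm:Estimates_F_and_G}, now applied to the full reactive equations Eqs.~\eqref{Eqn:DRs_for_A}--\eqref{Eqn:DRs_for_C}. Since $c_F = c_G = 0$ on $\partial\Omega$ forces $c_A$, $c_B$, $c_C$ to vanish on $\partial\Omega$ via Eqs.~\eqref{Eqn:Fast_A}--\eqref{Eqn:Fast_C}, the Dirichlet hypothesis of the theorem is consistent with the reactive dynamics, and every boundary integral that appears below drops out.

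First, I would multiply Eq.~\eqref{Eqn:DRs_for_A} by $c_A$, integrate over $\Omega$, and invoke the same divergence-theorem identity already exploited in Thm.~\ref{Thm:Estimates_F_and_G} to obtain
\begin{align*}
\frac{1}{2}\frac{d}{dt}\int_{\Omega} c_A^2\,\mathrm{d}\Omega = -\int_{\Omega} \boldsymbol{D}\,\mathrm{grad}[c_A]\bullet\mathrm{grad}[c_A]\,\mathrm{d}\Omega - n_A k_{AB}\int_{\Omega} c_A^2 c_B\,\mathrm{d}\Omega,
\end{align*}
together with a structurally identical identity for $c_B$ and, with a flipped-sign reactive term $+n_C k_{AB}\int_{\Omega} c_A c_B c_C\,\mathrm{d}\Omega$, for $c_C$. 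This is the direct analogue of Eq.~\eqref{Eqn:Thm_AvgCFG_Bound_2} but now carrying a trilinear reactive contribution.

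Second, I would bound every term on the right. The spectral theorem for $\boldsymbol{D}$ sandwiches the dispersive integral between $\lambda_{\mathrm{min}}\|\mathrm{grad}[c_i]\|_0^2$ and $\lambda_{\mathrm{max}}\|\mathrm{grad}[c_i]\|_0^2$, and the Poincar\'e--Friedrichs inequality converts the lower envelope into $(\lambda_{\mathrm{min}}/C_{pf}^2)\|c_i\|_0^2$ in the upper-bound direction, exactly as in the preceding theorem. For the reactive integrals, the pointwise envelope $0 \leq c_i \leq \mathrm{max}[c_F^{\mathrm{max}}, c_G^{\mathrm{max}}]$ (valid for $i = A, B, C$ by Eqs.~\eqref{Eqn:Fast_A}--\eqref{Eqn:Fast_C}) yields $\int_{\Omega} c_A^2 c_B\,\mathrm{d}\Omega \leq \mathrm{meas}(\Omega)\,\mathrm{max}[c_F^{\mathrm{max}}, c_G^{\mathrm{max}}]^3$, with the analogous estimate on the $c_A c_B c_C$ integral. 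Combining these with the correct sign, depending on whether the reactive term acts as a sink ($i = A, B$) or as a source ($i = C$), delivers the middle expressions of Eqs.~\eqref{Eqn:Thm_AvgSqAB_Bound}--\eqref{Eqn:Thm_AvgSqC_Bound}; the terminal inequalities then follow from $\|c_i\|_0^2 \geq \mathrm{meas}(\Omega)\,\mathrm{min}[c_F^{\mathrm{min}}, c_G^{\mathrm{min}}]^2$. Normalising by $\mathrm{max}[\langle c_i^2\rangle]$ finally rescales $\|c_i\|_0^2$ into $\mathbb{c}_i$ and completes each estimate.

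The main obstacle is sign bookkeeping for the $c_C$ case: because the reactive term is a source rather than a sink, the pure exponential-decay upper bound of Thm.~\ref{Thm:Estimates_F_and_G} acquires a positive additive perturbation $2 n_C k_{AB}\,\mathrm{meas}(\Omega)\,\mathrm{max}[c_F^{\mathrm{max}}, c_G^{\mathrm{max}}]^3 / \mathrm{max}[\langle c_C^2\rangle]$ that propagates into both the middle and the final right-hand bounds of Eq.~\eqref{Eqn:Thm_AvgSqC_Bound}. Carefully pairing $\lambda_{\mathrm{min}}$ with the upper bound, $\lambda_{\mathrm{max}}$ with the lower bound, and invoking Poincar\'e--Friedrichs only in the direction that preserves the inequality sense, is where most attention is needed; with these in place the remainder is a direct replay of the argument used in Thm.~\ref{Thm:Estimates_F_and_G}.
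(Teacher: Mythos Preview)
Your proposal is correct and follows essentially the same route as the paper: multiply each reactive equation by the corresponding concentration, integrate by parts with vanishing boundary terms, bound the dispersive quadratic form via the spectral theorem and Poincar\'e--Friedrichs, control the trilinear reactive integrals with the pointwise envelope $0\le c_i\le\max[c_F^{\max},c_G^{\max}]$, and normalise. The only addition you make beyond the paper is the preliminary consistency remark that $c_F=c_G=0$ on $\partial\Omega$ forces $c_A,c_B,c_C=0$ there; otherwise the arguments coincide step for step.
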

\begin{proof}
  Herein, we follow a similar proof procedure as outlined 
  in Thm.~\ref{Thm:Estimates_F_and_G} (where we derived Eq.~\ref{Eqn:Thm_AvgSqCFG_Bound}). 
  Multiplying Eq.~\ref{Eqn:DRs_for_A} with $c_A$, 
  using the divergence theorem, using $c_A(\mathbf{x},t) = 0$ 
  on $\partial \Omega \, \times \, ]0, \mathcal{I}[$, and the 
  identity that $\mathrm{div} \left[c_A \boldsymbol{D} \mathrm{grad}[c_A] 
  \right] = \boldsymbol{D} \mathrm{grad}[c_A] \bullet\mathrm{grad}[c_A] 
  + c_A \mathrm{div}\left[\boldsymbol{D} \mathrm{grad}[c_A] \right]$, 
  we obtain the following relationship

  \begin{align}
    \label{Eqn:Thm:Estimates_AvgSqConc_ABC_1}
    \frac{1}{2} \frac{d}{dt} \int \limits_{\Omega} c^2_A \, \mathrm{d} 
    \Omega = - \int \limits_{\Omega} \boldsymbol{D} \, \mathrm{grad}[c_A] 
    \bullet \mathrm{grad}[c_A] \, \mathrm{d} \Omega - n_A k_{AB} \int 
    \limits_{\Omega} c^2_A c_B \, \mathrm{d} \Omega
  \end{align}
  
  From the spectral theorem, the Poincar\'{e}-Friedrichs inequality, and the 
  relationship that $0 \leq \mathrm{min}\left[c_F^{\mathrm{min}}, 
  c_G^{\mathrm{min}} \right] \leq c_i \leq \mathrm{max}\left[c_F^{\mathrm{max}}, 
  c_G^{\mathrm{max}} \right] \quad \forall i = A, B, C$, we have
  
  \begin{align}
    \label{Eqn:Thm:Estimates_AvgSqConc_ABC_2}
    & - \lambda_{max} \| \mathrm{grad}[c_A] \|^2_0 - n_A k_{AB} \, 
    \mathrm{meas}(\Omega) \, \mathrm{max}\left[c_F^{\mathrm{max}}, 
    c_G^{\mathrm{max}} \right]^3 \leq \frac{1}{2} \frac{d}{dt} 
    \int \limits_{\Omega} c^2_A \, \mathrm{d} \Omega \leq 
    - \lambda_{min} \| \mathrm{grad}[c_A] \|^2_0\nonumber \\
    & -\frac{\lambda_{min}}{C^2_{pf}} \| c_A \|^2_0 \leq
    -\frac{\lambda_{min}}{C^2_{pf}} \, \mathrm{meas}(\Omega) \, 
    \mathrm{min}\left[c_F^{\mathrm{min}}, c_G^{\mathrm{min}} 
    \right]^2 \leq 0 
  \end{align}
  
  Dividing Eq.~\ref{Eqn:Thm:Estimates_AvgSqConc_ABC_2} with 
  $\mathrm{max}\left[\langle c^2_A \rangle \right]$,
  we get Eq.~\ref{Eqn:Thm_AvgSqAB_Bound}. Repeating the same
  process for $c_B$, we get Eq.~\ref{Eqn:Thm_AvgSqAB_Bound}.
  This completes the first part of the proof.
  
  For the second part of the proof, we have 
  \begin{align}
    \label{Eqn:Thm:Estimates_AvgSqConc_ABC_3}
    &\frac{1}{2} \frac{d}{dt} \int \limits_{\Omega} c^2_C \, \mathrm{d} 
    \Omega = - \int \limits_{\Omega} \boldsymbol{D} \, \mathrm{grad}[c_C] 
    \bullet \mathrm{grad}[c_C] \, \mathrm{d} \Omega + n_C k_{AB} \int 
    \limits_{\Omega} c_A c_B c_C \, \mathrm{d} \Omega
  \end{align}
  
  Invoking the spectral theorem and the Poincar\'{e}-Friedrichs inequality, we have
  \begin{align}
    \label{Eqn:Thm:Estimates_AvgSqConc_ABC_4}
    & - \lambda_{max} \| \mathrm{grad}[c_C] \|^2_0 \leq 
    \frac{1}{2} \frac{d}{dt} \int \limits_{\Omega} c^2_C \, \mathrm{d} 
    \Omega \leq - \lambda_{min} \| \mathrm{grad}[c_C] \|^2_0 + n_C k_{AB} 
    \int \limits_{\Omega} c_A c_B c_C \, \mathrm{d} \Omega \leq \nonumber \\
    &-\frac{\lambda_{min}}{C^2_{pf}} \| c_C \|^2_0 + n_C k_{AB} \, 
    \mathrm{meas}(\Omega) \, \mathrm{max}\left[c_F^{\mathrm{max}}, 
    c_G^{\mathrm{max}} \right]^3 \leq \nonumber \\
    &-\frac{\lambda_{min}}{C^2_{pf}} \, \mathrm{meas}(\Omega) \, 
    \mathrm{min}\left[c_F^{\mathrm{min}}, c_G^{\mathrm{min}} \right]^2 + 
    n_C k_{AB} \, \mathrm{meas}(\Omega) \, \mathrm{max} 
    \left[c_F^{\mathrm{max}}, c_G^{\mathrm{max}} \right]^3 \leq \nonumber \\
    &n_C k_{AB} \, \mathrm{meas}(\Omega) \, \mathrm{max} 
    \left[c_F^{\mathrm{max}}, c_G^{\mathrm{max}} \right]^3
  \end{align}
  
  Dividing Eq.~\ref{Eqn:Thm:Estimates_AvgSqConc_ABC_4} with 
  $\mathrm{max}\left[\langle c^2_C \rangle \right]$,
  we get Eq.~\ref{Eqn:Thm_AvgSqC_Bound}, which completes the proof.  
\end{proof}

%==========================================================;
%  Theorem-4: Estimates on rate of change of $\sigma^2_i$  ;
%==========================================================;
\begin{theorem}[\texttt{Estimates on degree of mixing $\sigma^2_i$}]
  \label{Thm:Estimates_DoM_ABC}
  There exist constants $\mathcal{A}_i$, $\mathcal{B}_i$, and $\mathcal{M}_i$ such that $0 \leq \sigma^2_i \leq \mathcal{A}_i + \mathcal{M}_i \, e^{-\mathcal{B}_it} \leq 1 \quad \forall i = A, B, C, F, G$.
\end{theorem}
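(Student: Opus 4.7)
The plan is to integrate the derivative estimates on $\mathbb{c}_i$ proved in Thm~\ref{Thm:Estimates_F_and_G} and Thm~\ref{Thm:Estimates_AvgSqConc_ABC} via Gr\"{o}nwall's inequality, and then pass from $\mathbb{c}_i$ to $\sigma^2_i$ by discarding the non-negative squared-mean term $\langle c_i \rangle^2$.

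First, I would observe that $v_i := \langle c_i^2 \rangle - \langle c_i \rangle^2$ satisfies $0 \leq v_i \leq \langle c_i^2 \rangle$:~the lower bound is Cauchy--Schwarz in $L^2(\Omega)$ applied to $c_i$ and the constant function $1$ (using $\mathrm{meas}(\Omega)=1$ from the reaction-tank geometry of Sec.~\ref{SubSec:Reaction_Tank_Problem}), and the upper bound is trivial. Dividing by $\mathrm{max}(v_i)$ yields
\[
0 \leq \sigma^2_i(t) \leq \frac{\mathrm{max}[\langle c_i^2 \rangle]}{\mathrm{max}(v_i)}\,\mathbb{c}_i(t),
\]
so it suffices to bound $\mathbb{c}_i$ above by an exponential envelope.

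Second, for $i \in \{A, B, F, G\}$ the estimate $d\mathbb{c}_i/dt \leq -2\lambda_{\min}\,C_{pf}^{-2}\,\mathbb{c}_i$ from Thm~\ref{Thm:Estimates_F_and_G} (for $i = F, G$) and Thm~\ref{Thm:Estimates_AvgSqConc_ABC} (for $i = A, B$) integrates by Gr\"{o}nwall to $\mathbb{c}_i(t) \leq \mathbb{c}_i(0)\,e^{-2\lambda_{\min}t/C_{pf}^2}$, which produces an envelope of the form $\mathcal{M}_i e^{-\mathcal{B}_i t}$ with $\mathcal{A}_i = 0$. For the product $C$, Thm~\ref{Thm:Estimates_AvgSqConc_ABC} carries an additional positive reactive forcing, giving a linear-ODE inequality $d\mathbb{c}_C/dt \leq -\alpha\,\mathbb{c}_C + \beta$ with explicit $\alpha = 2\lambda_{\min}/C_{pf}^2$ and $\beta$ equal to the reactive source term collected on the right of Eq.~\ref{Eqn:Thm_AvgSqC_Bound}. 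The standard solution $\mathbb{c}_C(t) \leq \beta/\alpha + (\mathbb{c}_C(0)-\beta/\alpha)\,e^{-\alpha t}$ yields a strictly positive $\mathcal{A}_C$, reflecting the fact that the product species only partially mixes asymptotically.

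Combining these two steps and absorbing the constant prefactor $\mathrm{max}[\langle c_i^2 \rangle]/\mathrm{max}(v_i)$ into the leading coefficients gives the required envelope $\sigma^2_i \leq \mathcal{A}_i + \mathcal{M}_i e^{-\mathcal{B}_i t}$. The main obstacle I anticipate is the rightmost inequality $\mathcal{A}_i + \mathcal{M}_i e^{-\mathcal{B}_i t} \leq 1$, because the Gr\"{o}nwall constants coming out of the calculation are not a priori bounded by $1$. The cleanest remedy is to exploit the trivial bound $\sigma^2_i \leq 1$ built into the max-normalization of Eq.~\ref{Eqn:Degree_of_Mixing}, replacing the Gr\"{o}nwall envelope by its pointwise minimum with $1$ and renaming constants accordingly; for species $C$ one must additionally check $\mathcal{A}_C \leq 1$, which reduces to a bookkeeping comparison of the reactive source amplitude $n_C k_{AB}\,\mathrm{max}[c_F^{\max}, c_G^{\max}]^3$ against the diffusive decay rate $\lambda_{\min}/C_{pf}^2$ and the normalization $\mathrm{max}[\langle c_C^2 \rangle]$. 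This last compatibility check is unambiguous once the reaction-tank parameters of Sec.~\ref{SubSec:Reaction_Tank_Problem} are fixed, and is the only delicate point in the argument.
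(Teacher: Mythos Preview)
Your approach is essentially the same as the paper's: bound $\sigma^2_i$ above by (a constant multiple of) $\mathbb{c}_i$, then integrate the differential inequalities on $\mathbb{c}_i$ from Theorems~\ref{Thm:Estimates_F_and_G} and~\ref{Thm:Estimates_AvgSqConc_ABC} to obtain an exponential envelope. The paper simply asserts $\sigma^2_i \leq \mathbb{c}_i$ without the prefactor $\max[\langle c_i^2\rangle]/\max(v_i)$ you carry, and its ``integration'' introduces an extraneous constant $\mathbb{A}_i$ where your Gr\"onwall argument correctly gives $\mathcal{A}_i=0$ for $i=A,B,F,G$; in both respects your treatment is tidier.

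On the rightmost inequality $\mathcal{A}_i + \mathcal{M}_i e^{-\mathcal{B}_i t}\leq 1$, you have correctly located the weak point. The paper disposes of it by declaring that ``the Poincar\'e--Friedrichs inequality constant $C_{pf}$ can be chosen'' to enforce the bound, which is of course not a free parameter. Your proposed remedy of clipping against the trivial bound $\sigma^2_i\leq 1$ is more honest, but note that a pointwise minimum $\min\{1,\mathcal{A}_i+\mathcal{M}_i e^{-\mathcal{B}_i t}\}$ is not itself of the form $\mathcal{A}'_i+\mathcal{M}'_i e^{-\mathcal{B}'_i t}$, so ``renaming constants accordingly'' does not literally work. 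Neither argument fully closes this gap; it appears to be a looseness in the theorem statement rather than in your reasoning.
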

\begin{proof}
  From Eq.~\ref{Eqn:Degree_of_Mixing}, we have $0 \leq \sigma^2_i \leq 
  \frac{\langle c^2_i \rangle}{\mathrm{max} \left[\langle c^2_i \rangle \right]}$.
  That is, $0 \leq \sigma^2_i \leq \mathbb{c}_i$. From Eqs.~\ref{Eqn:Thm_AvgSqCFG_Bound}, 
  \ref{Eqn:Thm_AvgSqAB_Bound}, and \ref{Eqn:Thm_AvgSqC_Bound}, we have the following 
  inequalities
  \begin{align}
    \label{Eqn:Decay_DoM_ABFG}
    &\frac{d\mathbb{c}_i}{dt} \leq -\frac{2\lambda_{min}}{C^2_{pf}} 
    \mathbb{c}_i \quad \forall i = A, B, F, G \\
    \label{Eqn:Decay_DoM_C}
    &\frac{d\mathbb{c}_C}{dt} \leq -\frac{2\lambda_{min}}{C^2_{pf}} 
    \mathbb{c}_C + \frac{2 n_C k_{AB} \, \mathrm{meas}(\Omega) \, 
    \mathrm{max}\left[c_F^{\mathrm{max}}, c_G^{\mathrm{max}} 
    \right]^3 }{\mathrm{max}\left[\langle c^2_C \rangle \right]}
  \end{align}
  On integrating Eq.~\ref{Eqn:Decay_DoM_ABFG} and Eq.~\ref{Eqn:Decay_DoM_C}, we get the following relationships
  \begin{align}
    \label{Eqn:DoM_Eq_ABFG}
    &\mathbb{c}_i \leq \frac{1}{\mathcal{B}_i} \left(\mathbb{A}_i +
    e^{-\mathcal{B}_it} \right) \quad \forall i = A, B, F, G \\
    \label{Eqn:DoM_Eq_C}
    &\mathbb{c}_C \leq \frac{1}{\mathcal{B}_i} \left(\frac{2 n_C k_{AB} 
    \, \mathrm{meas}(\Omega) \, \mathrm{max}\left[c_F^{\mathrm{max}}, 
    c_G^{\mathrm{max}} \right]^3}{\mathrm{max}\left[\langle c^2_C 
    \rangle \right]} \pm \mathbb{A}_i \pm e^{-\mathcal{B}_it} \right)
  \end{align}
  where $\mathbb{A}_i$ is the integration constant, $\mathcal{B}_i = \frac{2 
  \lambda_{min}}{C^2_{pf}}$, and $\mathcal{M}_i = \pm \frac{1}{\mathcal{B}_i}$.
  For $i = A, B, F, G$, we have $\mathcal{A}_i = \frac{\mathbb{A}_i}{\mathcal{B}_i}$ and for $i = C$, we have $\mathcal{A}_i = \frac{2 n_C k_{AB} \, \mathrm{meas}(\Omega) \, 
  \mathrm{max}\left[c_F^{\mathrm{max}}, c_G^{\mathrm{max}} \right]^3 \pm 
  \mathbb{A}_i}{\mathrm{max}\left[\langle c^2_C \rangle \right] \, \mathcal{B}_i}$ 
  such that $\sigma^2_C$ is non-negative. 
  From Eq.~\ref{Eqn:DoM_Eq_ABFG} and Eq.~\ref{Eqn:DoM_Eq_C}, it is clear that 
  $0 \leq \sigma^2_i \leq \mathcal{A}_i + \frac{1}{\mathcal{B}_i} \, e^{-\mathcal{B}_it} 
  \quad \forall i = A, B, C, F, G$. 
  The Poincar\'{e}-Friedrichs inequality constant $C_{pf}$ can be choosen such that 
  $0 \leq \sigma^2_i \leq 1$, which completes the proof.
\end{proof}

From Thm.~\ref{Thm:Estimates_F_and_G}--\ref{Thm:Estimates_DoM_ABC}, most of the times, we can expect that species $A$ and $B$ decay and mix in an exponential manner with respect to time.
Also, we can expect that species $C$ is produced and mixed in an exponential fashion (which forms the basis for choosing SVM and SVR kernels).
Note Thm.~\ref{Thm:Estimates_F_and_G}--\ref{Thm:Estimates_DoM_ABC} provide lower and upper bounds on the species decay and mixing rates.
For the above QoIs, we would like to construct reduced-order models (ROMs) using machine learning (ML). 
In the next section, we shall provide a ML framework to construct ROMs for QoIs given by Eqs.~\eqref{Eqn:Avg_Conc}--\eqref{Eqn:Degree_of_Mixing}.
%%%
%****************************************;
%                                        ;
%  NAME                                  ;
%    S3_ML_Framework.tex                 ;
%                                        ;
%  WRITTEN BY                            ;
%    Maruti Kumar Mudunuru               ;
%                                        ;
%****************************************;

%=======================================;
%  Section: MACHINE LEARNING FRAMEWORK  ;
%=======================================;
\section{MACHINE LEARNING FRAMEWORK}
\label{Sec:S3_ROM_Framework}
In this section, we shall present a ML framework to preprocess data, perform feature analysis, identify important features, and construct ML-ROMs.

%===================================================;
%  Subsection: Pre-processing and feature analysis  ;
%===================================================;
\subsection{Pre-processing and feature analysis}
\label{SubSec:QoIs_FeatureAnalysis}
The features that correspond to QoIs include $T$, $\log[\frac{\alpha_L}{\alpha_T}]$, $\kappa_fL$, $\log[v_o]$, and $D_m$.
Pre-processing and standardizing the features are needed for kernel-based ML methods (as SVMs and SVRs are sensitive to feature transformations).
This is because SVMs and SVRs are based on distance-metrics or similarities (e.g. in form of scalar product) between training samples.
Feature scaling is performed to normalize data, which ensures that priority is not given to a particular feature during ML-ROM construction.
There are various types of feature scaling for ML.
Examples include standardization (or $Z$-score normalization) with zero mean and unit variance, Min-Max scaling or scaling features to a given range, quantile transforms, power transforms, and feature binarization.
As inputs (e.g., $T$, $\alpha_L$, $\alpha_T$, $\kappa_fL$, $v_o$, and $D_m$) to the high-fidelity reactive-mixing model are non-negative, we use a Min-Max feature scaling.
This feature scaling estimator scales and translates each feature individually such that it is in the range of $[0,1]$. 
There are various advantages of this feature scaling for our problem.
The main advantage is that Min-Max feature scaling provides easier and physically meaningful interpretation relating inputs to non-negative QoIs.
Another advantage is that having features in this bounded range will result in smaller standard deviations.
This can suppress the effect of outliers.
That is, Min-Max feature scaling is robust to very small standard deviations of features, which makes the ROM development and predictions sensible.

To understand the role of each input parameter or feature, we perform feature importance using F-test, mutual information (MI) criteria, and Random Forests (RF) methods. 
F-test is an univariate statistical test, which has a F-distribution under the null hypothesis.
The main advantage is that it provides the significance of each feature in improving the ML-ROM.
But there are some drawbacks of using F-Test for selecting features. 
F-Test only captures linear relationships between features and QoIs. 
In F-test, a highly correlated feature is given higher score and less correlated features are given lower score.
Note that correlation is highly misleading because it does not capture strong non-linear relationships.
As a result, we use MI criteria and RF methods to identify features that are related to QoIs in non-linear fashion.
Mutual Information between a feature and QoI is a non-negative value that measures the dependence of a feature to QoI.
MI value is equal to zero if and only if a feature and QoI are independent.
If MI value is non-zero and corresponds to a high value, it mean there is a higher dependency between that feature and QoI.
MI methods can capture any kind of statistical dependency.
However, they are computationally intensive and require more training samples for accurate estimation of a non-linear relationship, as these methods are non-parametric.
RF is an ensemble learning method that combines the predictions of several decision trees in order to enhance generalizability and robustness over a single tree-based model.
They are one of the successful ML methods because they provide good predictions, have low overfitting, and are easy to interpret. 
Decision Trees keep the most important features near the root of the tree.
Hence, constructing a decision tree involves calculating the best predictive feature.
By averaging the prediction estimates of several decision trees, RF method reduces the variance of such an estimate and uses it for calculating feature importance.
As a result, RF method provides the relative importance of each feature with respect to the predictability of QoI.

For the sake of illustration purposes, we have divided the mixing state of the system in to four different classes.
Each class provides information on the degree of mixing.
\texttt{Class-1} corresponds to well-mixed/strong mixed system. 
For this class, $\sigma^2_i$ is between 0.0 to 0.25.
As time progress the system tends to mix well and the value for degree of mixing tends to 0.0.
\texttt{Class-2} represents moderate mixing. 
For this class, $\sigma^2_i$ is between 0.25 to 0.5.
\texttt{Class-3} represents weak mixing. 
For this class, $\sigma^2_i$ is between 0.5 to 0.75.
These two classes provide details on the mixing state at moderate times or when anisotropy is dominant.
\texttt{Class-4} corresponds to confined/ultra-weak mixing. 
For this class, $\sigma^2_i$ is between 0.75 to 1.0.
This state corresponds to system that is segregated, which happens during the initial times.
Note that, we can define more classes (or states of mixing) if needed.
Then, the classification ROMs have to be re-trained to predict the new classes. 
In addition to classifying the state of mixing, we construct surrogate models for a continuous value of degree of mixing.
In the next subsection, we provide details on the construction of ML-ROMs for both classification and regression.

%========================================================;
%  Subsection: Mathematical formulation for SVM and SVR  ;
%========================================================;
\subsection{Mathematical formulation for SVR-ROMs and SVM-ROMs}
\label{SubSec:S3_SVM_SVR_ML}
Given training inputs (e.g., data vectors or samples) $\boldsymbol{x}_i \in \mathbb{R}^{p}$ where $i = 1,2, \cdots, n$ and a target vector $\boldsymbol{y} \in \mathbb{R}^n$, then $\varepsilon$-insensitive SVR is formulated as minimization of the following functional (primal problem) \cite{scholkopf2001learning}:
%%
%-----------------------------;
%  Equation: SVR formulation  ;
%-----------------------------;
\begin{align}
  \label{Eqn:Math_SVR}
  \mathop{\mathrm{minimize}}_{\boldsymbol{w}, b, \boldsymbol{\zeta}, 
  \boldsymbol{\zeta^{*}} } & \quad \frac{1}{2} \langle \boldsymbol{w}; 
  \boldsymbol{w} \rangle + \mathcal{P} \sum_{i=1}^{\mathrm{n}} \left( 
  \zeta_i + \zeta^{*}_i \right) \\
  \label{Eqn:Math_SVR_Constraints}
  \mbox{subject to} & \quad 
  \begin{cases}
    y_i - \boldsymbol{w} \bullet \boldsymbol{\phi}(\boldsymbol{x}_i) - b 
    \leq \varepsilon + \zeta_i \\
    \boldsymbol{w} \bullet \boldsymbol{\phi}(\boldsymbol{x}_i) + b - y_i
    \leq \varepsilon + \zeta^{*}_i \\
    \zeta_i, \zeta^{*}_i \geq 0, \quad \forall i = 1,2, \cdots, n
  \end{cases}
\end{align}
where $\boldsymbol{w}$ is the weight vector or normal vector to the separating hyperplane. 
$\mathcal{P}$ is the regularization parameter to prevent over-fitting. 
It is a positive number that controls the penalty imposed on training data that lie outside the $\varepsilon$-band. 
Moreover, $\mathcal{P}$ determines the trade-off between the SVR-ROM complexity and the degree to which deviations larger than $\varepsilon$ are tolerated in the constrained optimization problem. 
The parameter $\varepsilon$ controls the width of the $\varepsilon$-insensitive zone. 
The value of $\varepsilon$ can affect the number of support vectors used to construct the regression decision function. 
Bigger $\varepsilon$-value means that fewer support vectors are selected. 
On the other hand, bigger $\varepsilon$-values results in `more flat' estimates, meaning that, predictions of SVR-ROM are less sensitive to perturbations in the model inputs. 
$b$ is the bias term. 
$\zeta_i$ and $\zeta^{*}_i$ are non-negative slack variables to measure the deviation of training data outside $\varepsilon$-insensitive band. 
$y_i$ is the $i$-th element of the target vector $\boldsymbol{y}$. 
$\phi(\boldsymbol{x}_i)$ is non-linear mapping of training data $\boldsymbol{x}_i$ into a higher dimensional feature space such that one can perform a linear regression on the transformed data points $\phi(\boldsymbol{x}_i)$. 

Figure \ref{Fig:Pic_Support_Vectors} provides a pictorial description of maximum margin hyperplane, support vectors, and non-linear mapping function from input space to feature space. 
The top left and top right figure shows an example of clustered data and infinitely-possible hyperplanes separating the data.
Among the infinitely-possible scenarios, we choose a hyperplane which maximizes the margin (as shown in the middle figure). 
This results in a classifier or a regression model that has a better generalization with lower empirical risk (which is defined as the average loss of an estimator for a finite set of data drawn from a bigger dataset). 
The training of maximum-margin hyperplane classifier is achieved by solving a linearly constrained optimization problem. 
The middle figure also shows the support vectors. 
These are data points that lie close to the decision hyperplane/surface. 
They have direct bearing on the optimal location of the surface. 
Note that the SVM-ROMs and SVR-ROMs depend only on these support vectors and all other points are ignored. 
The bottom left figure shows an example of data that is not linearly separable. 
So we map the data on to a higher dimensional space and then to use a linear classifier in the higher dimensional space, which is shown in bottom right figure. 
This is the general idea behind the non-linear SVMs and SVRs. 
That is, we map the original feature space to certain higher-dimensional feature space via a non-linear transformation $\Phi:\boldsymbol{x} \rightarrow \phi(\boldsymbol{x})$. 
This transformation ensures that the underlying data is linearly separable and is performed in ways that the resultant classifier/regression generalize well.

The above formulation given by Eqs.~\ref{Eqn:Math_SVR}--\ref{Eqn:Math_SVR_Constraints} can be transformed into the dual optimization problem.
Its solution $\mathbb{y}_{\mathrm{{\tiny svr}}}$ (SVR-ROMs), which is also called as $\varepsilon$-insensitive SVR decision function, is given as follows:
%-----------------------------------;
%  Equation: SVR decision function  ;
%-----------------------------------;
\begin{align}
  \label{Eqn:Math_SVR_Decision_Function}
  \mathbb{y}_{\mathrm{{\tiny svr}}}(\boldsymbol{x}) 
  = \sum_{j=1}^{\mathrm{n_{{\tiny Supp}}}} \left(\alpha_j - 
  \alpha_j^* \right) \mathcal{K}(\boldsymbol{x}, 
  \mathbb{x}_j) + b,  \quad \mathrm{such \, \, that} 
  \; \; 0 \leq \alpha_j, \alpha_j^* \leq \mathcal{P}, \; \; 
  \forall j = 1,2, \cdots , \mathrm{n_{{\tiny Supp}}}
\end{align}
where $\alpha_j, \alpha_j^*$ are the Lagrange multipliers enforcing the $\varepsilon$-insensitive loss function, $\mathrm{n_{{\tiny Supp}}}$ is the number of support vectors, $\mathbb{x}_j$ are the support vectors, and $\mathcal{K}(\boldsymbol{x}, \mathbb{x}_j)$ is the kernel function. 
This function computes the inner product between mapped vectors in feature space $\phi(\boldsymbol{x})$.
In this paper, we use Radial Basis Function (RBF) kernel for both SVRs and SVMs.
This is because the QoIs decrease or increase in an exponential fashion (e.g., see Thm.~\ref{Thm:Estimates_DoM_ABC}). 
The RBF kernel uses Gaussian form given by:
%-----------------------------------;
%  Equation: SVR decision function  ;
%-----------------------------------;
\begin{align}
  \label{Eqn:RBF_Kernel}
  \mathcal{K}(\boldsymbol{x}, \mathbb{x}_j) = 
  \exp\left(-\gamma \| \boldsymbol{x} - \mathbb{x}_j 
  \|^2 \right)
\end{align}
where $\gamma$ is a user defined parameter, which corresponds to the smoothness of the decision surface. 
The output of the RBF kernel depends upon the Euclidean distance between the test vector $\boldsymbol{x}$ and the support vector $\mathbb{x}_j$, respectively. 
$\|\boldsymbol{x} - \mathbb{x}_j \|$ is a similarity measure, which tell us how far is a given input/feature from a support vector.

Similar to SVR-ROMs, the constrained optimization (primal) problem for a two-class SVM-ROMs is as follows:
Given a training data $\boldsymbol{x}_i \in \mathbb{R}^{p}$ where $i = 1,2, \cdots, n$ and labeled data $\boldsymbol{y} \in \{+1, -1 \}^n$, then 
%-----------------------------;
%  Equation: SVM formulation  ;
%-----------------------------;
\begin{align}
  \label{Eqn:Math_SVM}
  \mathop{\mathrm{minimize}}_{\boldsymbol{w}, b, \boldsymbol{\zeta}} 
  & \quad \frac{1}{2} \langle \boldsymbol{w}; \boldsymbol{w} \rangle + 
  \mathcal{P} \sum_{i=1}^{\mathrm{n}} \zeta_i \\
  \label{Eqn:Math_SVM_Constraints}
  \mbox{subject to} & \quad 
  \begin{cases}
    y_i \left(\boldsymbol{w} \bullet \boldsymbol{\phi}(\boldsymbol{x}_i) + b \right)
    \geq 1 - \zeta_i \\
    \zeta_i \geq 0, \quad \forall i = 1,2, \cdots, n
  \end{cases}
\end{align}
The solution to Eqs.~\ref{Eqn:Math_SVM}--\ref{Eqn:Math_SVM_Constraints} is $\mathbb{y}_{\mathrm{{\tiny svm}}}$ (SVR-ROMs), which is given as follows:
%-----------------------------------;
%  Equation: SVM decision function  ;
%-----------------------------------;
\begin{align}
  \label{Eqn:Math_SVM_Decision_Function}
  \mathbb{y}_{\mathrm{{\tiny svm}}}(\boldsymbol{x}) 
  = \displaystyle \mathrm{sgn} \left[ \sum_{j=1}^
  {\mathrm{n_{{\tiny Supp}}}} y_j \alpha_j \mathcal{K}
  (\boldsymbol{x}, \mathbb{x}_j) + b \right] \quad 
  \mathrm{such \, \, that} \; \; 0 \leq \alpha_j \leq 
  \mathcal{P}, \; \; \forall j = 1,2, \cdots , 
  \mathrm{n_{{\tiny Supp}}}
\end{align}

However, as we have a multi-class classification problem (where the degree of mixing is classified into four different classes), we use one-vs-one (OvO) strategy.
Using this, the multi-class classification problem is transformed into multiple binary classification problems.
Additional parameters and constraints are added to the optimization problem given by Eqs.~\ref{Eqn:Math_SVM}--\ref{Eqn:Math_SVM_Constraints} to handle the separation of the different classes.
For a 4-way multi-class problem, we train 6 binary classifiers.
Each classifier receives data related to a pair of classes from the original training data set.
Then, it will learn to distinguish these pair of classes. 
During prediction, all the 6 binary classifiers are applied to an unseen test sample.
The class that gets the highest number of `+1' predictions gets predicted by the combined classifier \cite{scholkopf2001learning}.

In this paper, the SVM-ROMs and SVR-ROMs are constructed and implemented using the ML algorithms provided in the \textsf{scikit-learn} python software package \cite{2011_scikit_learn}. 
Internally, \textsf{scikit-learn} uses \textsf{libsvm} and \textsf{liblinear} libraries to handle all the optimization calculations \cite{2011_Chang_Lin_LibSVM,2008_Fan_etal_Liblinear}. 
However, it should be noted that the constrained optimization problem and the decision function given by Eqs.~\eqref{Eqn:Math_SVR}--\eqref{Eqn:Math_SVR_Constraints} and \eqref{Eqn:Math_SVR_Decision_Function} can produce negative values for QoIs (even though the RBF kernel is non-negative). 
This is because $\alpha_j$ can be less than $\alpha_j^*$. 
In such a case, we clip the negative values. 
On the other hand, in the dual problem, one can enforce the constraint that the weights/Lagrange multipliers `$\alpha_j \geq \alpha_j^*$' and the bias `$b \geq 0$' to ensure non-negativity for QoIs when constructing SVMs and SVRs using non-negative kernels. 
This involves changing the underlying algorithms in \textsf{libsvm} and \textsf{liblinear} software packages, which is beyond the scope of the current paper.

We note that ML methods based on SVMs and SVRs are powerful and attractive to construct ROMs.
This is because they need less training data and provide decent prediction capabilities.
Moreover, they can quickly generate a good ML-model, if training data size is low (e.g., see Tab.~\ref{Tab:SVM_SVR_ROMs_Accuracy_2} for time taken to construct SVM-ROMs and SVR-ROMs).
Another advantage of SVMs and SVRs methods is that they have a unique global minimizer \cite{2011_Chang_Lin_LibSVM,
2008_Fan_etal_Liblinear}.
Hence, robust solvers that exist to solve convex quadratic programming (CQP) can be applied to solve Eqs.~\ref{Eqn:Math_SVR}--\ref{Eqn:Math_SVR_Constraints} and Eqs.~\ref{Eqn:Math_SVM}--\ref{Eqn:Math_SVM_Constraints}.
However, these SVM and SVR methods are time consuming to solve if training data is very large.
Their computational time and corresponding storage requirements increase rapidly with the number of training samples.
In our case, the computational cost of \textsf{libsvm}-based CQP solver is typically between $\mathcal{O}(n_{features} \times n^2_{samples})$ and $\mathcal{O}(n_{features} \times n^3_{samples})$ \cite{2011_Chang_Lin_LibSVM,
2008_Fan_etal_Liblinear}.
Moreover, the kernel selection is challenging if the underlying physics is unknown.
Algorithm \ref{Algo:PIML_Transient_Bimolecular} summarizes the steps involved in the implementation of the proposed ML methodology for the transient fast irreversible bimolecular diffusive-reactive systems.

%------------------------------------------------------------------------------------------;
%  Algorithm: Overview of physics-informed machine learning framework for reactive-mixing  ;
%------------------------------------------------------------------------------------------;
\begin{algorithm}
  \caption{Overview of physics-informed machine learning framework for reactive-mixing}
  \label{Algo:PIML_Transient_Bimolecular}
  {\small
  \begin{algorithmic}[1]
    %-------------------------------------------------------;
    %  STEP-1: Input data for non-negative FEM simulations  ;
    %-------------------------------------------------------; 
    \STATE \textbf{Input for non-negative FEM simulations:}~Time step $\Delta t$; total time of interest $\mathcal{I}$; stoichiometric coefficients; initial and boundary conditions for the chemical species $A$, $B$, and $C$; finite element mesh parameters; model parameters for anisotropic reaction-diffusion equation $\frac{\alpha_L}{\alpha_T}$, $D_m$, $v_0$, $\kappa_fL$, and $T$.
    %
    %-------------------------------------------------------;
    %  STEP-2: Input data for non-negative FEM simulations  ;
    %-------------------------------------------------------;     
    \STATE \textbf{Input for feature analysis:}~Number of neighbors to use for MI estimation, the number of trees in the forest, criterion to measure the quality of a split for each node during the construction of the tree, minimum number of training samples required to split an internal node in a tree, minimum number of training samples required to be at a leaf node, and maximum number of features considered when looking for the best split in a tree construction.
    
    %-------------------------------------------------------;
    %  STEP-3: Input data for non-negative FEM simulations  ;
    %-------------------------------------------------------;     
    \STATE \textbf{Input for ML-ROMs construction:}~Penalty parameter to prevent over-fitting $\mathcal{P}$, RBF kernel coefficient $\gamma$, and tolerance for stopping criterion $\varepsilon$. 
    
    %---------------------------------------------;
    %  STEP-3: Calculate ICs and BCs for F and G  ;
    %---------------------------------------------;
    \STATE Calculate initial and boundary conditions for the non-negative invariants $c_F$ and $c_G$ using Eqs.~\eqref{Eqn:Definitions_of_F}--\eqref{Eqn:Definitions_of_G}.
    %
    %-------------------------------------------------------------;
    %  STEP-4: Calculate non-negative concentrations for F and G  ;
    %-------------------------------------------------------------;
    \STATE Solve for invariant concentrations $c_F$ and $c_G$ for all times.
    
    %--------------------------------------------------------;
    %  STEP-5: Calculate min-max concentrations for F and G  ;
    %--------------------------------------------------------;
    \STATE Calculate $c_F^{\mathrm{min}}$, $c_F^{\mathrm{max}}$, $c_G^{\mathrm{min}}$ and $c_G^{\mathrm{max}}$ based on discrete maximum principles and the non-negative constraint.
    
    %----------------------------------------------;
    %  STEP-6: Non-negative finite element method  ;
    %----------------------------------------------;
    \FOR{$n = 0, 1, \cdots, \left(\texttt{NumTimeSteps} - 1 \right)$}
    %
    %-------------------------------;
    %  STEP-6a: Obtain invariant-F  ;
    %-------------------------------;
    \STATE Call optimization-based diffusion with decay solver to obtain $\boldsymbol{c}_F^{(n+1)}$:    
    \begin{align*}
      \mathop{\mbox{minimize}}_{\boldsymbol{c}^{(n + 1)}_F \in 
        \mathbb{R}^{ndofs}} & \quad \frac{1}{2}  \left \langle 
      \boldsymbol{c}^{(n + 1)}_F; \boldsymbol{K} 
      \boldsymbol{c}^{(n + 1)}_F \right \rangle - \left \langle 
      \boldsymbol{c}^{(n + 1)}_F; \boldsymbol{f}^{(n + 1)}_F  
      \right \rangle - \frac{1}{\Delta t}  \left \langle 
      \boldsymbol{c}^{(n + 1)}_F; \boldsymbol{c}^{(n)}_F
      \right \rangle \\
      \mbox{subject to} & \quad c_F^{\mathrm{min}} \boldsymbol{1} \preceq 
      \boldsymbol{c}^{(n + 1)}_F \preceq c_F^{\mathrm{max}} \boldsymbol{1} 
    \end{align*}
    %-------------------------------;
    %  STEP-6b: Obtain invariant-G  ;
    %-------------------------------;
    \STATE Call optimization-based diffusion with decay solver to obtain $\boldsymbol{c}_G^{(n+1)}$:    
    \begin{align*}
      \mathop{\mbox{minimize}}_{\boldsymbol{c}^{(n + 1)}_G \in 
        \mathbb{R}^{ndofs}} & \quad \frac{1}{2}  \left \langle 
      \boldsymbol{c}^{(n + 1)}_G; \boldsymbol{K} 
      \boldsymbol{c}^{(n + 1)}_G \right \rangle - \left \langle 
      \boldsymbol{c}^{(n + 1)}_G; \boldsymbol{f}^{(n + 1)}_G  
      \right \rangle - \frac{1}{\Delta t}  \left \langle 
      \boldsymbol{c}^{(n + 1)}_G; \boldsymbol{c}^{(n)}_G
      \right \rangle \\
      \mbox{subject to} & \quad c_G^{\mathrm{min}} \boldsymbol{1} \preceq 
      \boldsymbol{c}^{(n + 1)}_G \preceq c_G^{\mathrm{max}} \boldsymbol{1} 
    \end{align*}
    \STATE where $\boldsymbol{K}$ is the symmetric positive definite coefficient matrix, $ndofs$ is the number of degrees of freedom, $\langle \bullet;\bullet \rangle$ represents the standard inner-product on Euclidean spaces, $\boldsymbol{1}$ denotes a vector of ones of size $ndofs \times 1$, and the symbol $\preceq$ represents the component-wise inequality for vectors. 
    \ENDFOR
    %   
    %-----------------------------------------------------------------;
    %  STEP-7: Calculate non-negative concentrations for A, B, and C  ;
    %-----------------------------------------------------------------;    
    \STATE Get non-negative solutions for $c_A$, $c_B$, and $c_C$ using Eqs.~\eqref{Eqn:Fast_A}--\eqref{Eqn:Fast_C}.
    %
    %------------------------------------------;
    %  STEP-8: Calculate QoIs for A, B, and C  ;
    %------------------------------------------;    
    \STATE From $c_A$, $c_B$, and $c_C$, calculate QoIs from Eqs.~\eqref{Eqn:Avg_Conc}--\eqref{Eqn:Degree_of_Mixing}.   
    %
    %--------------------------------------;
    %  STEP-9: Perform feature importance  ;
    %--------------------------------------;    
    \STATE Construct and preprocess the features for QoIs. 
    Scale the processed features to a range of $[0,1]$. 
    Features that correspond to QoIs include $T$, $\log[\frac{\alpha_L}{\alpha_T}]$, $\kappa_fL$, $\log[v_o]$, and $D_m$.    
    %
    %---------------------------------------;
    %  STEP-10: Perform feature importance  ;
    %---------------------------------------;    
    \STATE Perform the feature importance using F-test, MI-criteria, and Random Forests.
    % 
    %---------------------------------------------------------;
    %  STEP-11: Construct SVR-ROMs for QoIs for each species  ;
    %---------------------------------------------------------;    
    \STATE Construct the SVM-ROMs and SVR-ROMs for QoIs on training dataset by solving the following minimization problem:
      %-----------------------------;
      %  Equation: SVR formulation  ;
      %-----------------------------;
      \begin{align}
        \mathop{\mathrm{minimize}}_{\boldsymbol{w}, b, \boldsymbol{\zeta}, 
        \boldsymbol{\zeta^{*}} } & \quad \frac{1}{2} \langle \boldsymbol{w}; 
        \boldsymbol{w} \rangle + \mathcal{P} \sum_{i=1}^{\mathrm{n}} \left( 
        \zeta_i + \zeta^{*}_i \right) \\
        \mbox{subject to} & \quad 
        \begin{cases}
          y_i - \boldsymbol{w} \bullet \boldsymbol{\phi}(\boldsymbol{x}_i) - b 
          \leq \varepsilon + \zeta_i \\
          \boldsymbol{w} \bullet \boldsymbol{\phi}(\boldsymbol{x}_i) + b - y_i
          \leq \varepsilon + \zeta^{*}_i \\
          \zeta_i, \zeta^{*}_i \geq 0, \quad \forall i = 1,2, \cdots n
        \end{cases}
     \end{align}
    %
    %-------------------------------------;
    %  STEP-12: SVR-ROMs functional form  ;
    %-------------------------------------;    
    \STATE The explicit function form for SVM-ROMs and SVR-ROMs for each chemical species is given as follows:
      \begin{itemize}
        \item \textbf{SVM-ROMs:}~$\mathbb{y}_{\mathrm{{\tiny svm}}}(\boldsymbol{x}) 
          = \displaystyle \mathrm{sgn} \left[ \sum_{j=1}^{\mathrm{n_{{\tiny Supp}}}} y_j \alpha_j 
          \mathcal{K}(\boldsymbol{x}, \mathbb{x}_j) + b \right]$
        \item \textbf{SVR-ROMs:}~$\mathbb{y}_{\mathrm{{\tiny svr}}}(\boldsymbol{x}) 
          = \displaystyle \sum_{j=1}^{\mathrm{n_{{\tiny Supp}}}} \left(\alpha_j - 
          \alpha_j^* \right) \mathcal{K}(\boldsymbol{x}, 
          \mathbb{x}_j) + b$
        \item \textbf{SVM and SVR Kernels:}~$\mathcal{K}(\boldsymbol{x}, \mathbb{x}_j) = 
          \exp\left(-\gamma \| \boldsymbol{x} - \mathbb{x}_j 
          \|^2 \right)$
      \end{itemize}
      $\mathrm{such \, \, that} \; \; 0 \leq \alpha_j, \alpha_j^* \leq \mathcal{P}, \; \; \forall j = 1,2, \cdots \mathrm{n_{{\tiny Supp}}}$
  \end{algorithmic}
  }
\end{algorithm}
%%%
%***************************************;
%                                        ;
%  NAME                                  ;
%    S4_Results.tex                      ;
%                                        ;
%  WRITTEN BY                            ;
%    Maruti Kumar Mudunuru               ;
%                                        ;
%****************************************;

%====================;
%  Section: RESULTS  ;
%====================;
\section{RESULTS}
\label{Sec:S4_ROM_Results}
The proposed ML framework was applied to analyze 2315 numerical simulations for predicting the QoIs using Algorithm \ref{Algo:PIML_Transient_Bimolecular} outlined in Sec.~\ref{Sec:S3_ROM_Framework}.
Each high-fidelity numerical simulation is obtained for a different set of reaction-diffusion model input parameters.
The varied input parameters are:~longitudinal-to-transverse anisotropic dispersion ratio $\frac{\alpha_L}{\alpha_T}$, molecular diffusivity $D_m$, perturbation parameter of the underlying vortex-based velocity field $v_0$, and velocity field characteristic scales $\kappa_fL$ and $T$.
The values corresponding to the varied input parameters are:~$v_0 = \left[1, 10^{-1}, 10^{-2}, 10^{-3}, 10^{-4} \right]$, $\frac{\alpha_L}{\alpha_T} = \left[1, 10^{1}, 10^{2}, 10^{3}, 10^{4} \right]$, $D_m = \left[10^{-8}, 10^{-1}, 10^{-2}, 10^{-3} \right]$, $\kappa_fL = \left[2, 3, 4, 5 \right]$, and $T = \left[1 \times 10^{-4}, 2 \times 10^{-4}, 3 \times 10^{-4}, 4 \times 10^{-4}, 5 \times 10^{-4} \right]$.
In our finite element simulations, $\alpha_L = 1$ and $\alpha_T$ is varied accordingly.
The model domain is discretized by low-order structured triangular finite elements with 81 nodes on each side.
The corresponding triangular finite element mesh is of size $81 \times 81$.
Analysis is performed using a uniform time step equal to 0.001.
The end time for finite element simulation is equal to 1.0.
That is, the total number of time steps is equal to 1000.
Non-negative finite element method described in Sec.~\ref{Sec:S2_ROM_GE} ensures that the obtained concentrations are non-negative and satisfy the discrete maximum principle.

Examples obtained from non-negative FEM simulations and proposed ML framework are presented in Figs.~\ref{Fig:Contours_F_Difftimes}--\ref{Fig:SVR_Scaling_Law}.
Figs.~\ref{Fig:Contours_F_Difftimes}--\ref{Fig:Contours_G_Difftimes} show the concentration of invariant-$F$ and invariant-$G$ at times $t = 0.1, \, 0.5,$ and $1.0$ by varying the input parameter $\kappa_fL$. 
Other input parameters which are held constant are $\frac{\alpha_L}{\alpha_T} = 10^{4}$, $v_o = 10^{-1}$, $T = 1 \times 10^{-4}$, and $D_m = 10^{-3}$. 
$\frac{\alpha_L}{\alpha_T} = 10^{4}$ correspond to high anisotropy and $D_m = 10^{-3}$ correspond to low molecular diffusivity.
High anisotropic contrast means less mixing across the streamlines.
At $t = 1.0$ (which is the final time of the FEM simulation), we can observed that for higher values of $\kappa_fL$, invariant-$F$ and $G$ spreads across the entire domain.
For low values of $\kappa_fL$, for example when $\kappa_fL = 2$, there are certain regions in the domain where invariant-$F$ and $G$ doesn't diffuse. 
Fig.~\ref{Fig:Contours_F_Difftimes}(c,f) and Fig.~\ref{Fig:Contours_G_Difftimes}(c,f) show distinctive regions where $c_F$ and $c_G$ is equal to zero and one.
These regions are roughly located at the center of the vortex structures.
From these figures, one can conclude that to enhance species mixing for low molecular diffusivity and high anisotropic dispersion contrast, we need the input parameter $\kappa_fL$ to be high.
This means, we need the system to be chaotic (more vortices) to enhance mixing when anisotropy is high.
    
Figs.~\ref{Fig:Contours_A_Difftimes}--\ref{Fig:Contours_C_Difftimes} show the concentration of species $A$, $B$, and $C$ at times $t = 0.1, \, 0.5,$ and $1.0$. 
Note that these non-negative concentrations are derived from $c_F$ and $c_G$ through Eqs.~\eqref{Eqn:Fast_A}--\eqref{Eqn:Fast_C}.
From Fig.~\ref{Fig:Contours_A_Difftimes}, it is clear that species $A$ is not consumed in its entirety for $t \in [0,1]$. 
Moreover, at lower values of $\kappa_fL$, a considerable amount of species $A$ remains in the left half of the domain compared to higher values of $\kappa_fL$.
Similar to species $A$, from Fig.~\ref{Fig:Contours_B_Difftimes} it is clear that species $B$ is not consumed in its entirety for lower values of $\kappa_fL$ resulting in incomplete mixing.
This is because of low molecular diffusivity and very high anisotropic dispersion.
If there was no anisotropy, towards the end of simulation time, we will have uniform mixing/reaction along the longitudinal and transverse directions \cite{2009_Tsang_PRE_v80_p026305}. 
However, anisotropy hinders uniform mixing (which is what these figures summarize). 
As a result, towards the end of the FEM simulation time, we see incomplete/preferential mixing.
For higher $\kappa_fL$ values, we see more of species $B$ being consumed in the right half of the domain. 
From Fig.~\ref{Fig:Contours_C_Difftimes}, it is clear that for smaller values of $\kappa_fL$, less amount of product $C$ is formed.
As $\kappa_fL$ increases, we can see that the product formation increases and the number of vortex regions with zero concentration decrease (even under high anisotropy). 
This is because the velocity field contains a lot of small-scale vortices that are spread across the domain. 
These small-scale vortex structures enhance mixing even under low molecular diffusivity and high anisotropy.

Fig.~\ref{Fig:SpecA_Data} shows the average of concentration $\mathfrak{c}_i$, average of square of concentration $\mathbb{c}_i$, and degree of mixing $\sigma^2_i$ as a function of time for all three species $A, B,$ and $C$. 
Plots are shown for all the realizations, which are depicted in light blue, light magenta, and light green colors. 
The solid blue, magenta, and green lines are the average of all the realization at each time level, which are scaling QoIs. 
Note that all QoIs are normalized and lie between $[0,1]$.
From these figures, it is clear that $\ln[\mathfrak{c}_i] \propto t$, $\ln[\mathbb{c}_i] \propto t$, and $\ln[\sigma^2_i] \propto t$. 
This means that the QoIs decrease (for reactants) or increase (for product) with time in an exponential manner, approximately. 
Also, it can be observed from Fig.~\ref{Fig:SpecA_Data}(a,b,d,e) that for $t \in [0.2, 1.0]$, the average of concentration and average of square of concentration of reactants decrease at a much faster rate, resulting in high product yield. 
This is because at later time, the effect of molecular diffusion comes into play resulting in enhanced mixing.
A system is well-mixed if the degree of mixing is closer to 0.0 and segregated if the degree of mixing is closer to 1.0.
From Fig.~\ref{Fig:SpecA_Data}(g,h,i), it clear that the chaotic nature of the flow field and molecular diffusion make the reactive-diffusive system progress towards a uniformly mixed state.

We performed a simple exponential analysis on QoIs \cite{istratov1999exponential} to see the scaling exponent dependence on input parameters.
For example, Fig.~\ref{Fig:Clustering_Analysis} provides the relationship between scaling exponent and input parameters for degree of mixing for species $A$, $B$, and $C$.
The exponential coefficients are obtained through a naive fit by an exponential function.
Lower value for scaling exponent means the system has mixed well while higher value implies the system is segregated or mixed incompletely.
Then, $k$-means clustering is performed on these scaling exponents to identify the features that demarcate high and low mixing states. 
$k$-means clustering aims to partition a set of observations into $k$ clusters in which each observation belongs to the cluster with the nearest mean.
This helps the user to understand the natural grouping or structure in a dataset.
Elbow method \cite{2011_Everitt_etal} is used to identify the appropriate number of clusters (which is the $k$ in $k$-means clustering).
This method looks at the percentage of variance explained as a function of the number of clusters.
Percentage of variance explained is the ratio of the variance between the cluster groups to the total variance.
For our case, we found $k = 4$ to be optimal as it satisfies the elbow criterion \cite{2011_Everitt_etal}. 
Meaning that, beyond $k = 4$ the marginal gain in the percentage of variance is low.
Fig.~\ref{Fig:Clustering_Analysis} provides the clustering details for this value of $k$.
Each color represents a cluster.
Among the various sub-figures in Fig.~\ref{Fig:Clustering_Analysis}, scaling exponent vs. $\log[\frac{\alpha_L}{\alpha_T}]$ (see Fig.~\ref{Fig:Clustering_Analysis}(d,e,f)) for species $A$, $B$, and $C$ provide the most useful and interesting insight.
Other sub-figures do not provide much insight into the relationship between scaling exponent vs. remaining input parameters (which are $T$, $\kappa_fL$, $\log[v_o]$, and $D_m$).
A main inference from Fig.~\ref{Fig:Clustering_Analysis}(d,e,f) is that for lower values of $\log{[\frac{\alpha_L}{\alpha_T}]}$, the scaling exponent is low ($\approx -20$). 
Meaning that, enhanced mixing occurs when $\alpha_L \approx \alpha_T$, which is in accordance with the physics of reactive-mixing \cite{2009_Tsang_PRE_v80_p026305,2002_Adrover_etal_CCE_v26_p125_p139}.
Reaction-diffusion system tends to be more diffusive along streamlines if anisotropy is low.
As the ratio of $\frac{\alpha_L}{\alpha_T}$ increases, we see incomplete/preferential mixing due to high anisotropy.

Quantitatively, to have detailed grasp and contribution of each input parameter in understanding mixing process, we perform feature importance.
Three different types of feature selection methods are studied for consistency. 
These include F-test, MI criteria, and RF.
Entire simulation data is used as there is no QoI prediction involved in using F-test and MI criteria methods.
For RF method, 1620 simulations (which is approx. 70\%) are used for training and remaining 695 simulations are used for testing the feature importance.
All training samples are given equal weights during tree construction.
Samples are not bootstrap when building trees.
Out-of-bag samples are used to estimate the generalization accuracy (which is $R^2$-score) of feature selection using RF.
In our case, the generalized $R^2$-score for testing the RF model for feature importance is 0.88.
The inputs values for these feature selection methods are as follows:~Number of neighbors to use for MI estimation is equal to 3.
For RF, analysis is performed for different number of trees in the forest. 
These are equal to 5, 100, and 250. 
Gini criterion \cite{james2013introduction,zhou2012ensemble} is used to measure the quality of a split for each node during the construction of the tree.
Minimum number of training samples required to split an internal node in a tree is equal to 2.
Minimum number of training samples required to be at a leaf node is equal to 1.
Maximum number of features considered when looking for the best split in a tree construction is equal to 4.

Figs.~\ref{Fig:RF_DOM_Feature_Importance}--\ref{Fig:MIC_Feature_Importance} show the feature importances using F-test, MI criteria, and RF.
Fig.~\ref{Fig:RF_DOM_Feature_Importance} shows the use of forests of trees, which fit a number of randomized decision trees on different sub-samples of the data to evaluate the relative importance of model inputs. 
The color bars indicated the feature importances of the RF, along with their inter-tree variability.
To summarize, all the three methods consistently tell us that $\log{[\frac{\alpha_L}{\alpha_T}]}$ is the most important feature and $T$ being the least important.
After $\log{[\frac{\alpha_L}{\alpha_T}]}$, next important features are in the following order $D_m$, $\kappa_fL$, $\log{\left[v_0 \right]}$, and $T$.
F-test shows that $\kappa_fL$ and $\log{\left[v_0 \right]}$ are not that relevant (see Fig.~\ref{Fig:Ftest_Feature_Importance}).
From Fig.~\ref{Fig:MIC_Feature_Importance}, MI-criteria tells us that a steep decrease in the feature importance from $\log{[\frac{\alpha_L}{\alpha_T}]}$ to $D_m$ is observed, which is agreement with RF and F-test. 
However, there is a gradual decrease in feature importance from $D_m$ to $T$, which is in contrast with feature importances from RF and F-test.
To conclude, feature importance is in accordance with the physics of mixing as $\log{[\frac{\alpha_L}{\alpha_T}]}$ and $\kappa_fL$ mainly controls the spread of chemical species during initial stages. 
At later times, molecular diffusivity $D_m$ becomes the major mechanism to enhance mixing between reactants $A$ and $B$ resulting in higher product yield.

Based on the feature importance results, we compare the predictive capabilities of SVM-ROMs and SVR-ROMs constructed based on Algorithm \ref{Algo:PIML_Transient_Bimolecular}.
Two different set of ROMs are constructed for QoIs classification and regression.
First, set of ROMs are based all features and second set of ROMs are built on top three features (which are $\log{[\frac{\alpha_L}{\alpha_T}]}$, $D_m$, and $\kappa_fL$).
Analysis is performed for different values of SVM-ROM and SVR-ROM parameters and different sizes of training data.
Values assumed for ROM construction in Eqns.~\ref{Eqn:Math_SVR}--\ref{Eqn:RBF_Kernel} are as follows:~Penalty parameter $\mathcal{P} = [1, 10, 10^2, 10^3, 10^4]$, RBF kernel coefficients $\gamma = [0.1, 0.01, 0.001, 0.0001]$, and tolerance for stopping criterion $\varepsilon = [0.1, 0.01, 0.001, 0.0001]$.
For SVM-ROMs based on OVR, all the four classes that represent mixing states are given equal weights.
ROMs are trained on different sizes of simulation data. 
The total size of the input data is $2,315,000 \times 6$ and the total size of output data is $2,315,000 \times 1$.
Three different ROMs are constructed using 1\%, 5\%, and 30\% of input data, which correspond to approximately 23, 115, and 694 simulation data.
Tables.~\ref{Tab:SVM_SVR_ROMs_Accuracy_1}--\ref{Tab:SVM_SVR_ROMs_Accuracy_2} summarizes the ensemble average prediction accuracy of SVM-ROMs and SVR-ROMs on different sizes of training data and different ROM construction parameters.

Table.~\ref{Tab:SVM_SVR_ROMs_Accuracy_1} also shows the number of support vectors contained in the SVM and SVR-ROMs.
Support vectors tell us about the storage requirements of the model on a laptop.
Note that only a fraction of entire simulation data points are the support vectors.
These support vectors form the basis for the ROMs.
They are obtained by solving the minimization problem given by Eqns.~\ref{Eqn:Math_SVR}--\ref{Eqn:Math_SVR_Constraints}.
However, the storage requirements and model complexity of SVM and SVR-ROMs increase very quickly with the number of training samples.
In addition, the ROM construction time also increases rapidly with increase in size of training data.
This may be seen as a downside of SVMs and SVR. 
But from Table.~\ref{Tab:SVM_SVR_ROMs_Accuracy_2} it is evident that the classification and prediction accuracy of various ROMs outweighs the drawbacks.
For instance, less training data ($\leq 5\%$) is needed to get decent prediction capability ($R^2$-score greater than 0.85).
Moreover, the methodology to construct a ROM has firm physical and mathematical underpinnings.
It should be noted that these ROMs are constructed on a laptop. 
The time to construct ROMs can be accelerated by either using a GPU or multi-core HPC machines, which is beyond the scope of current paper.
The corresponding processor speed for the laptop on which ROMs are developed is 3.1 GHz, total number of cores utilized is 2, and maximum memory used in 16 GB.

The computational cost to run a high-fidelity numerical simulation and post-process the solution to obtain QoIs is approximately 26 mins on a single core and 18 mins on a eight core processor (Intel(R) Xeon(R) CPU E5-2695 v4 \@ 2.10GHz).
The data produced from each simulation is of size 2.2 GB.
For all realizations, the output from FEM simulations is approximately 5.1 TB. 
The resulting QoIs from post-processing the simulation data is of size 1.2 GB.
The maximum size to store SVM and SVR-ROMs is approximately 100 KB (which includes the support vectors).
For QoI calculations, this corresponds to a compression ratio close to $8.3 \times 10^{-5}$ (as these ROMs can be thought as a compressed version of post-processed FEM data).
In addition, the computational time to compute QoIs using SVM and SVR-ROMs in the $\mathcal{O}(10^{-3})$ seconds, which is another advantage of using these ROMs in addition to their predictive capability.
Hence, the ROMs provide QoIs approximately $10^{7}$ faster than running a high-fidelity numerical simulation.

Fig.~\ref{Fig:SVR_vs_Unseen_data} shows the prediction of SVR-ROMs on randomly selected test data.
The ground truth obtained from high-resolution numerical simulations are shown as markers/points.
The color band denotes the upper and lower prediction estimates obtained from SVR-ROM ensembles by varying $\mathcal{P}$, $\gamma$, and $\varepsilon$.
The solid line is the ensemble average of SVR-ROM predictions.
From this figure, it is clear that the SVR-ROMs consistently predict various QoIs with high accuracy.
The $R^2$-score for ROMs prediction on these unseen data is greater than 0.9.
Fig.~\ref{Fig:SVR_Scaling_Law} shows the prediction of SVR-ROMs for scaling QoIs, which is an ensemble average of QoIs across all realizations.
As mentioned previously, the ground truth is shown as markers.
SVR-ROM predictions are shown by solid line and color band, respectively.
From these figures, it clear that SVR-ROMs are able to predict the scaling law trends of QoIs for species $A$ and $B$, accurately. 
Quantitatively, SVR-ROMs deviate when predicting scaling QoIs for species $C$. 
However, qualitatively, SVR-ROMs are able to predict the trends observed in scaling QoIs for species $C$. 
To summarize, the proposed ML method to construct ROMs has high prediction accuracy, lower computational time for prediction with minimal loss of information, and substantially reduces the data storage requirements.

%------------------------------------;
%  Table-1: Input data and SV sizes  ;
%------------------------------------;
\begin{table}
  \centering
	\caption{Summary of the ROM construction using SVMs and SVRs using all features and top three features.
	\label{Tab:SVM_SVR_ROMs_Accuracy_1}}
	\resizebox{\textwidth}{!}{\begin{tabular}{|c|c|c|c|c|c|c|} \hline
	  \multirow{2}{*}{{\small Description}} & 
	  \multicolumn{2}{|c|}{{\small No. of simulations (\% of input data)}} & 
	  \multicolumn{2}{|c|}{{\small Size of samples}} & 
	  \multicolumn{2}{|c|}{{\small Support Vectors (SV)}} \\ 
	  \cline{2-7}
	  & Training ROMs & Testing ROMs & Training & Testing & No. of SV & \% of training data used as SV \\ \hline
	  {\small ROM-1} & {\small 23 (1\%)} & {\small 2292 (99\%)} & {\small 23,150} & {\small 2,291,850} & {\small 1158 points} & {\small 5\% of 23 simulations} \\ 
	  {\small ROM-2} & {\small 115 (5\%)} & {\small 2200 (95\%)} & {\small 115,750} & {\small 2,199,250} & {\small 48,136 points}  & {\small 42\% of 115 simulations} \\ 
	  {\small ROM-3} & {\small 694 (30\%)}  & {\small 1621 (70\%)} & {\small 694,500} & {\small 1,620,500} & {\small 212,212 points} & {\small 30\% of 694 simulations} \\ 
	  \hline
	\end{tabular}}
\end{table} 

%------------------------------------------------------------------------------------------------------;
%  Table-2: Performance metrics of SVM-ROMs and SVR-ROMs for classifying and predicting mixing states  ;
%------------------------------------------------------------------------------------------------------;
\begin{table}
  \centering
	\caption{Summary of performance of proposed ROMs for classifying and predicting mixing states using all features and top three features. 
    $F_1$-score for classification and $R^2$-score for regression.
	ROM construction time is also provided. 
	\label{Tab:SVM_SVR_ROMs_Accuracy_2}}
	 \resizebox{\textwidth}{!}{\begin{tabular}{|c|c|c|c|c|c|c|} \hline
	  \multirow{2}{*}{{\small Description}} & 
	  \multicolumn{2}{|c|}{{\small $F_1$/$R^2$-score based on all features}} &
	  \multicolumn{2}{|c|}{{\small $F_1$/$R^2$-score based on top three features}} &
	  \multicolumn{2}{|c|}{{\small ROM construction time}} \\
	  \cline{2-7}
	  & Classification & Regression & Classification & Regression & Classification & Regression \\ \hline
	  {\small ROM-1} & {\small 0.86} & {\small 0.85} & {\small 0.84} & {\small 0.84} & {\small 110 seconds} & {\small 120 seconds}\\ 
	  {\small ROM-2} & {\small 0.91} & {\small 0.90} & {\small 0.90} & {\small 0.88} & {\small 1.1 hours} & {\small 1.15 hours}\\ 
	  {\small ROM-3} & {\small 0.93} & {\small 0.91} & {\small 0.91} & {\small 0.90} & {\small 14.8 hours} & {\small 15.1 hours}\\ 
	  \hline
	\end{tabular}}
\end{table}
%%%
%****************************************;
%                                        ;
%  NAME                                  ;
%    S5_Conclusions.tex                  ;
%                                        ;
%  WRITTEN BY                            ;
%    Maruti Kumar Mudunuru               ;
%                                        ;
%****************************************;

%========================;
%  Section: CONCLUSIONS  ;
%========================;
\section{CONCLUSIONS}
\label{Sec:S5_ROM_Conclusions}
Identifying the important features that influence the progress of reactive-mixing and accurately predicting its state is important to reactive-transport applications.
This includes accounting for species decay and product formation over time.
In this paper, our results demonstrate the applicability of SVM-ROMs and SVR-ROMs to efficiently and accurately emulate the mixing state of the reaction-diffusion system under high anisotropy.
We have presented estimates (e.g., upper and lower bounds) on species decay, product formation, and degree of mixing.
From these estimates, we inferred that the species decay and mix in an exponential fashion with respect to time.
This inference formed the basis for choosing radial basis function kernels for developing SVM-ROMs and SVR-ROMs for classifying and predicting the progress of reactive-mixing.
The ROMs are constructed based on high-fidelity reactive-mixing model output data generated using a parallel non-negative finite element solver.
The numerical solution procedure ensures that all the computed QoIs are non-negative even under high anisotropy.
Feature importance is performed using F-test, mutual information criteria, and random forests methods to identify dominant features that contribute to the mixing process.
We identified that that anisotropic dispersion strength/contrast is the most important feature and time-scale associated with flipping of velocity is the least important feature.
The characteristic spatial-scale associated of the vortex-based velocity field is another important feature that influences reactive-mixing.
We observed that as the spatial-scale $\kappa_fL$ increases, the product formation increases even under high anisotropy. 
This was because the model velocity field consisted of a lot of small-scale vortices that are spread across the domain. 
These small-scale vortex structures enhanced mixing even under low molecular diffusivity and high anisotropy.
We also demonstrated the predictive capability of ROMs under limited data.
The ROMs showed that good prediction ($\geq 90\%$) can be obtained with less than 5\% of training data.
This is because the RBF kernel in SVM-ROMs and SVR-ROMs was able to capture the important aspects of the reactive-mixing.
Moreover, these ROMs were approximately $10^7$ times faster than running a high-fidelity numerical simulation to compute QoIs.
This makes our the physics-informed machine learning ROMs ideal for usage in comprehensive uncertainty quantification studies.
Our future work includes developing ROMs based on the recent advances in physics-informed/constrained deep learning methods \cite{raissi2018hidden1,wang2018variational,raissi2019physics,zhu2019physics,yang2019adversarial,geneva2019modeling}.

%===================;
%  Acknowledgments  ;
%===================;
\section*{ACKNOWLEDGMENTS}
This research was funded by the LANL Laboratory Directed Research and Development Early Career Award 20150693ECR.
MKM gratefully acknowledges the support of LANL Chick-Keller Postdoctoral Fellowship through Center for Space and Earth Sciences (CSES).
MKM thanks Prof. Kalyana Nakshatrala for his input and feedback on the paper.
LANL is operated by Triad National Security, LLC, for the National Nuclear Security Administration of U.S. Department of Energy (Contract No. 89233218CNA000001).
Additional information regarding the simulation datasets can be obtained from the corresponding author.

%======================;
%  Include references  ;
%======================; 
\bibliographystyle{unsrt}
\bibliography{Master_References/ML_mixing_references}

%====================;
%  Figures: Results  ;
%====================;
%****************************************;
%                                        ;
%  NAME                                  ;
%    SVR_Figures.tex                     ;
%                                        ;
%  WRITTEN BY                            ;
%    Maruti Kumar Mudunuru               ;
%                                        ;
%****************************************;

%---------------------------------------------------------------;
%  Figure-1: Species mixing due to perturbed vortex-based flow  ;
%---------------------------------------------------------------;
\begin{figure}
  \centering
  \subfigure[Problem description]
    {\includegraphics[width = 0.65\textwidth]
    {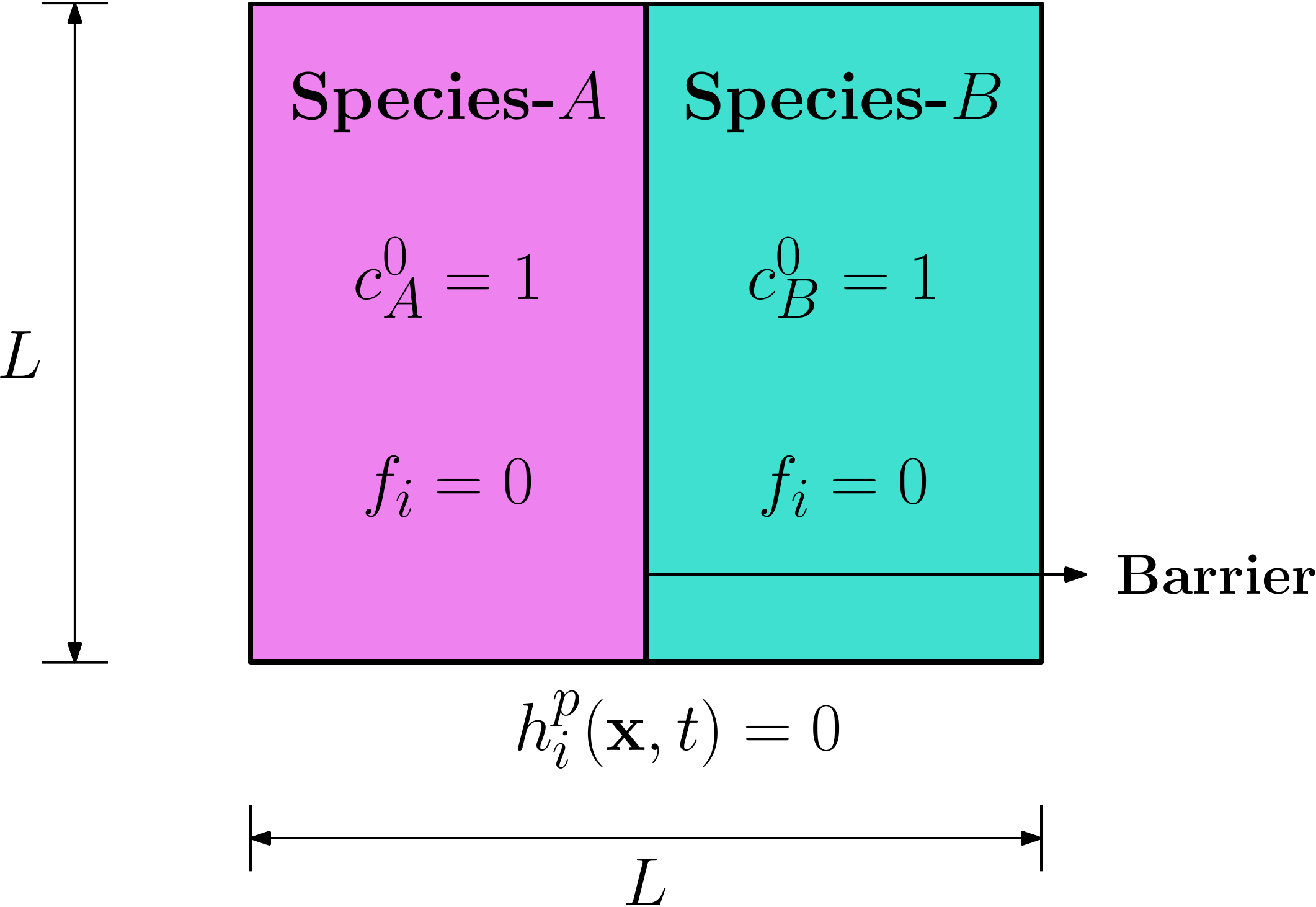}}
  \vspace{0.1in}
  \subfigure[Species $A$:~Initial condition]
    {\includegraphics[width = 0.35\textwidth]
    {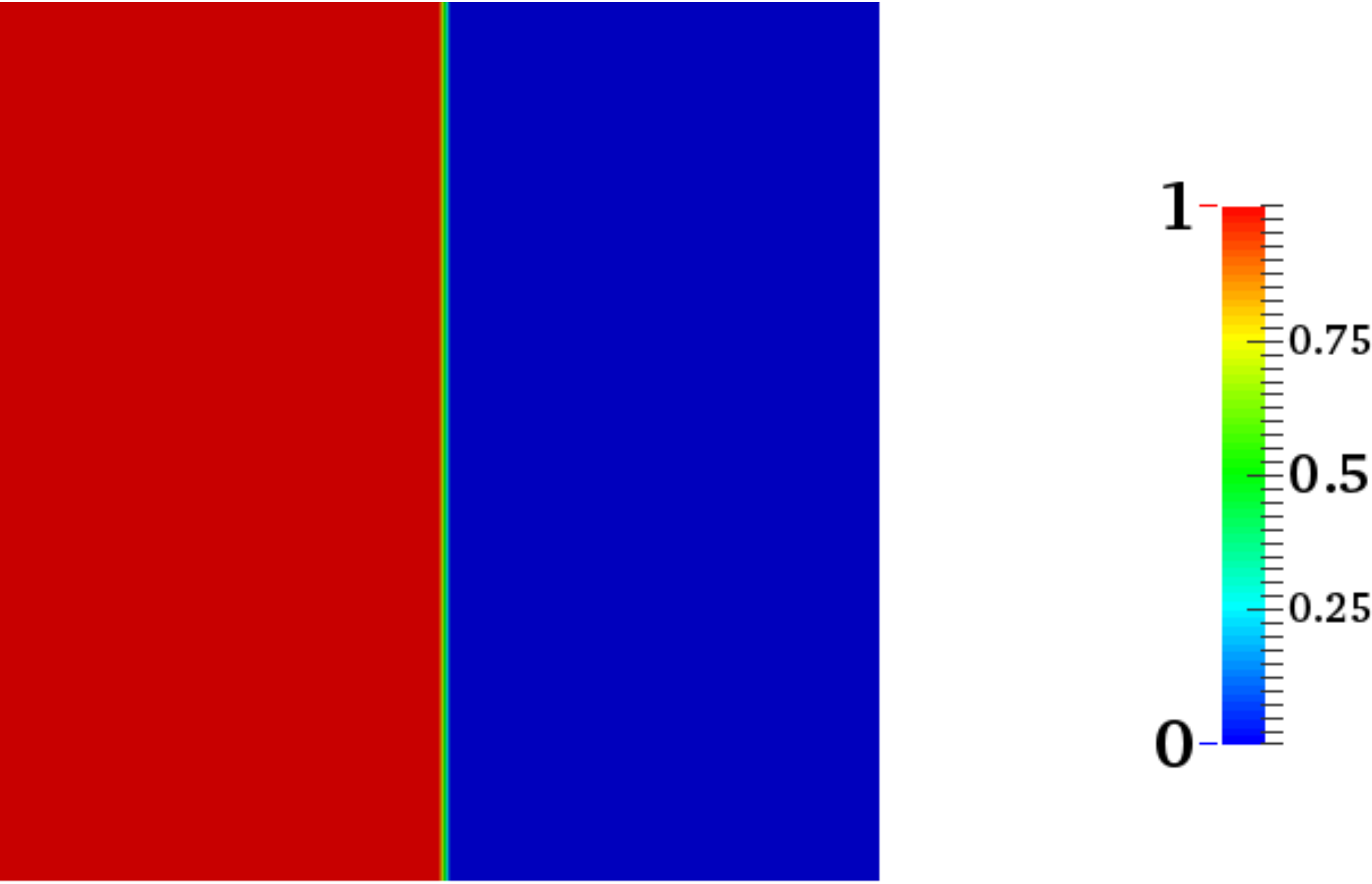}}
  \hspace{0.1in}
  \subfigure[Species $B$:~Initial condition]
    {\includegraphics[width = 0.35\textwidth]
    {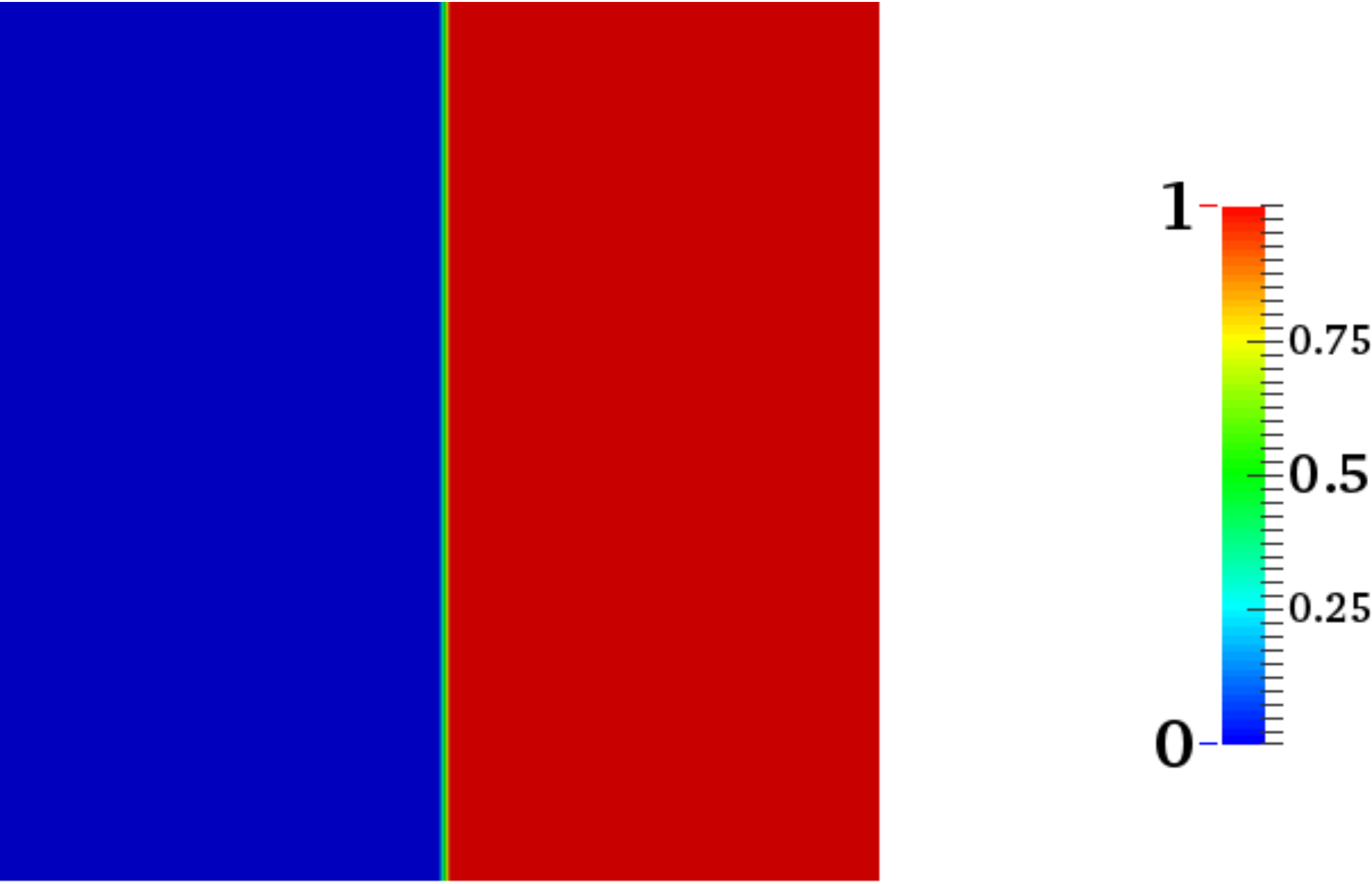}}
  \caption{\textsf{\textbf{Species mixing due to perturbed 
    vortex-based flow:}}~A pictorial description of the 
    initial boundary value problem. The length of the domain 
    `$L$' is assumed to be equal to 1. Zero flux boundary 
    conditions are enforced on all sides of the domain 
    ($h^{\mathrm{p}}_i(\boldsymbol{x},t) = 0.0$). Initially species $A$ 
    is on the left part of the domain while the species $B$ 
    is on the right part of the domain. They are initially 
    separated by a barrier and are allowed to mix after 
    time $t > 0$ according to the velocity field given 
    by Eqs.~\eqref{Eqn:Vel_x}--\eqref{Eqn:Vel_y}. 
    Their initial concentrations are both equal to 1, 
    respectively as shown in bottom two figures. The 
    volumetric sources for all the chemical species 
    are assumed to be equal to zero.
  \label{Fig:ROM_BVPs}}
\end{figure}

%----------------------------------------------------------------------;
%  Figure-2: Large-scale and small-scale structures of velocity field  ;
%----------------------------------------------------------------------;
\begin{figure}
  \centering
  \subfigure[$\nu T \leq t < \left( \nu + \frac{1}{2} \right) T$]
    {\includegraphics[width = 0.75\textwidth]
    {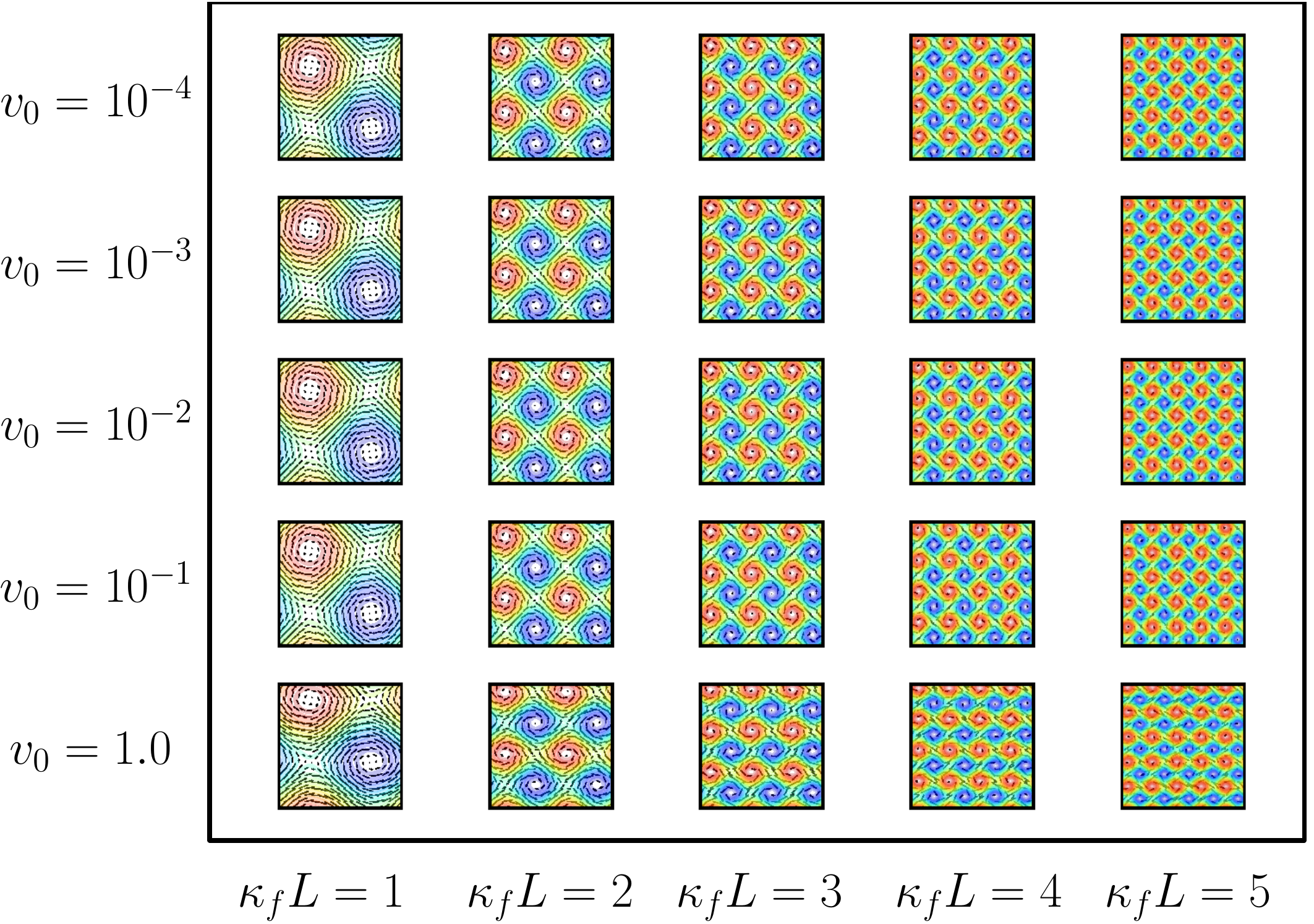}}
  \subfigure[$\left( \nu + \frac{1}{2} \right) T \leq t < \left( 
    \nu + 1 \right) T$]
    {\includegraphics[width = 0.75\textwidth]
    {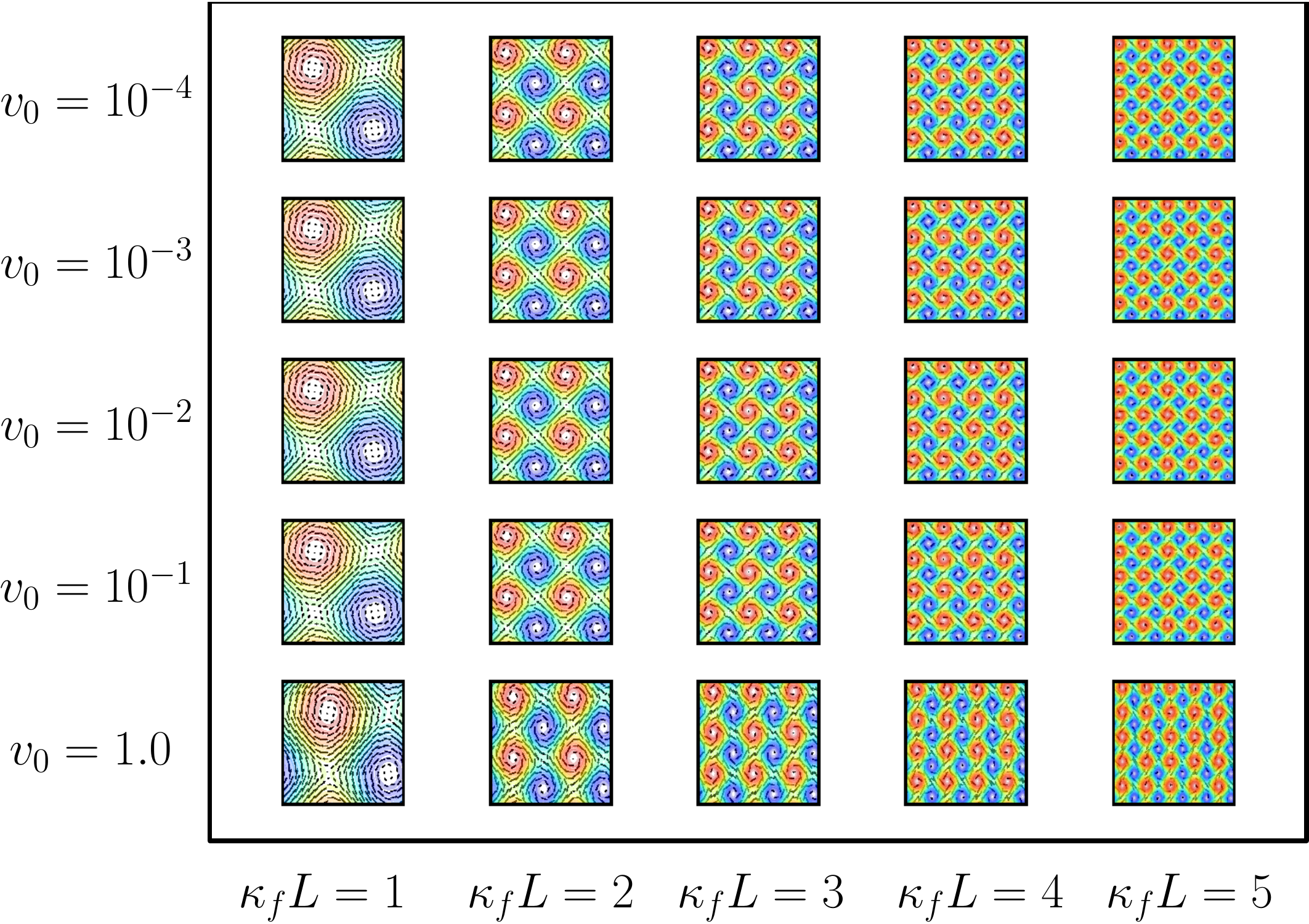}}
  \caption{\textsf{\textbf{Large-scale and small-scale structures 
    of velocity field:}}~The top and bottom figures show the 
    streamlines and associated velocity vectors for different 
    values of $v_0$ and $\kappa_fL$. For larger values of 
    $\kappa_fL$, we have more number of vortices. This means 
    we have a lot of small-scale structures. For low values 
    of $\kappa_fL$, we have less number of vortices.
  \label{Fig:Large_Small_Scales_VelField}}
\end{figure}

%--------------------------------------------------;
%  Figure-3: Pictorial description of SVM and SVR  ;
%--------------------------------------------------;
\begin{figure}
  \centering
  \includegraphics[width = 0.7\textwidth]
    {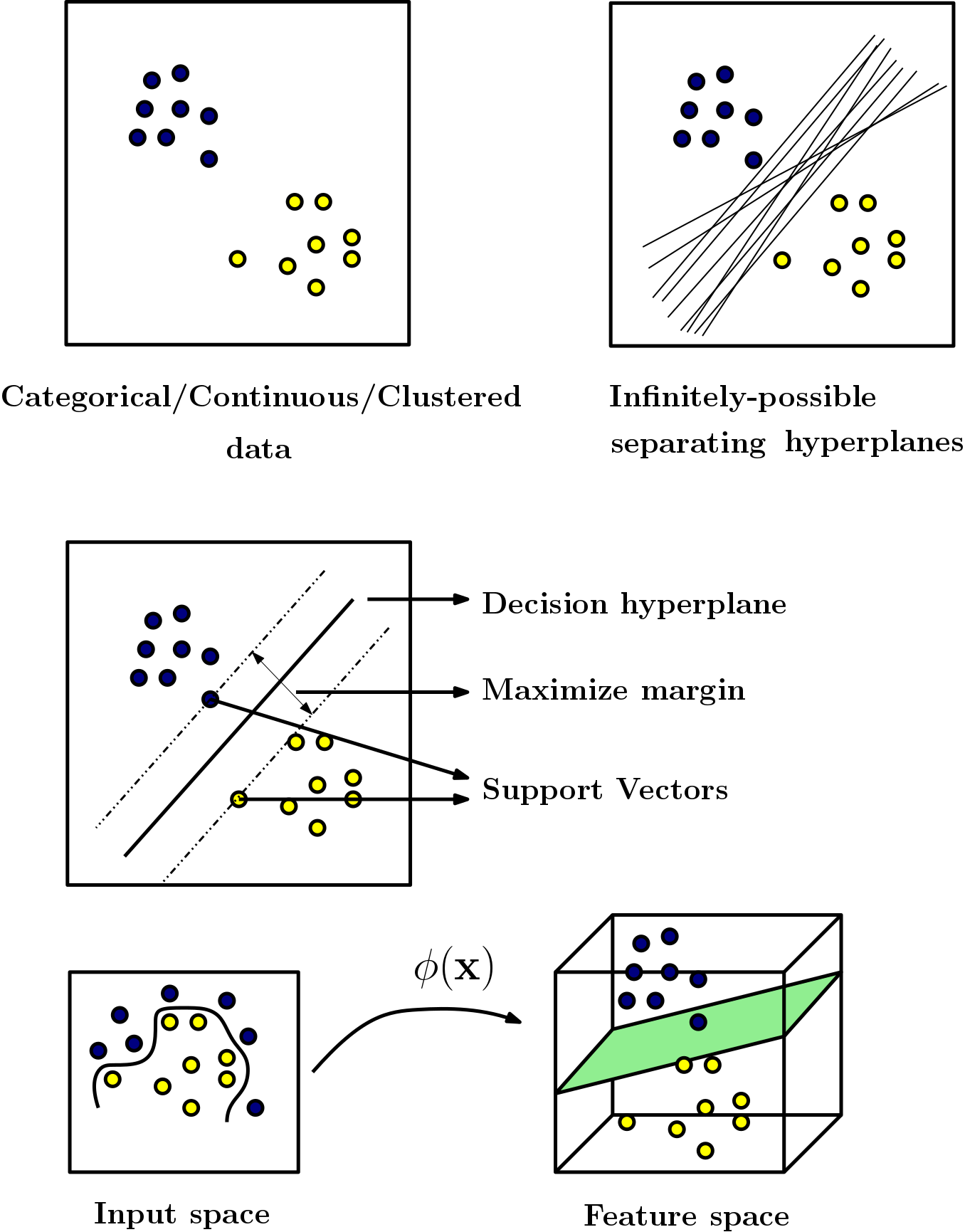}
  \caption{\textsf{\textbf{Pictorial description of Support 
    Vectors:}}~This figure provides a pictorial description 
    of how SVMs and SVRs work.
  \label{Fig:Pic_Support_Vectors}}
\end{figure}

%---------------------------------------;
%  Figure-4: Contours of Invariant-$F$  ;
%---------------------------------------;
\begin{figure}
  \centering
  \subfigure[$\kappa_fL = 2$ and $t = 0.1$]
    {\includegraphics[clip=true,width = 0.37\textwidth]
    {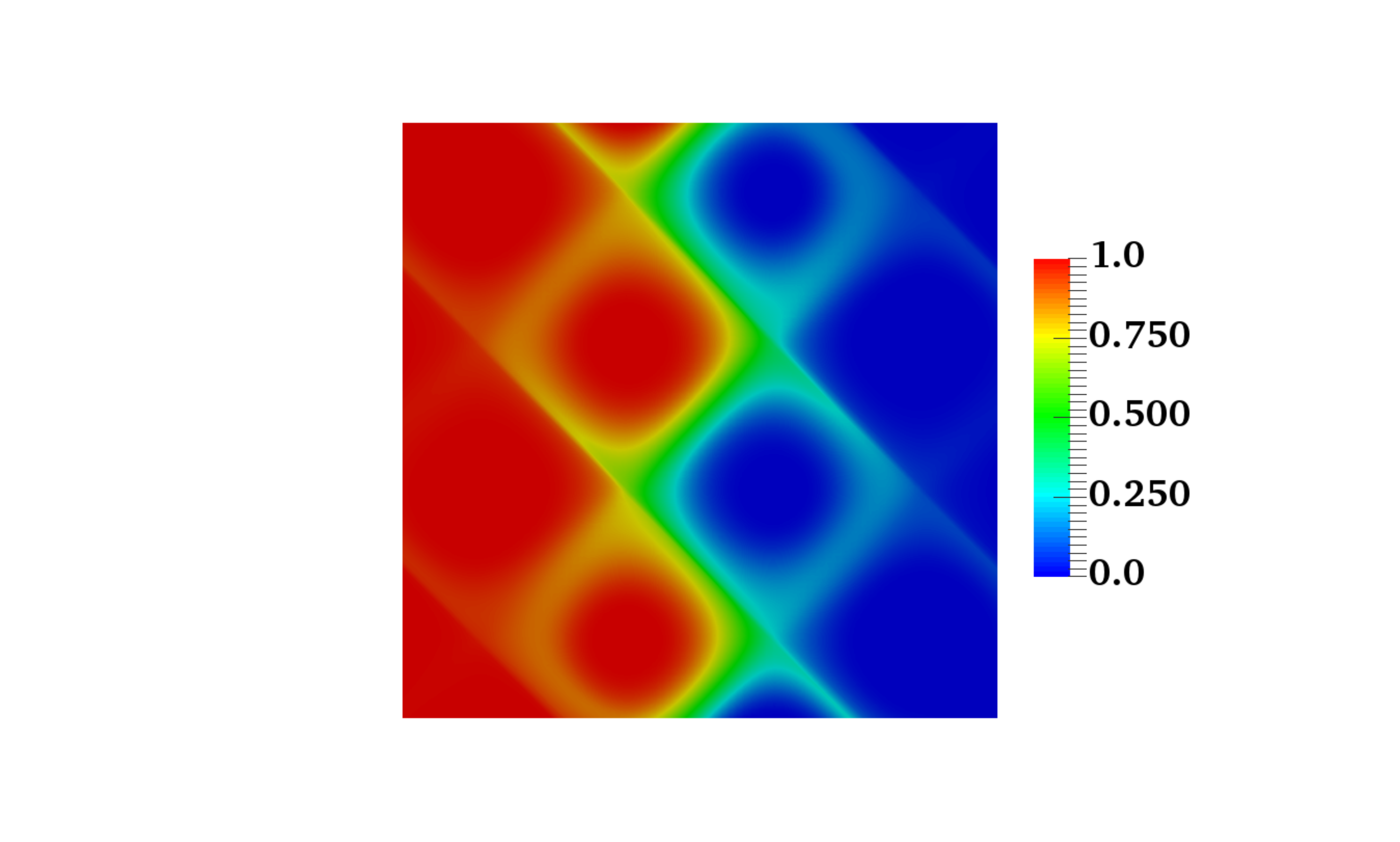}}
  \hspace{-0.5in}
  \subfigure[$\kappa_fL = 2$ and $t = 0.5$]
    {\includegraphics[clip=true,width = 0.37\textwidth]
    {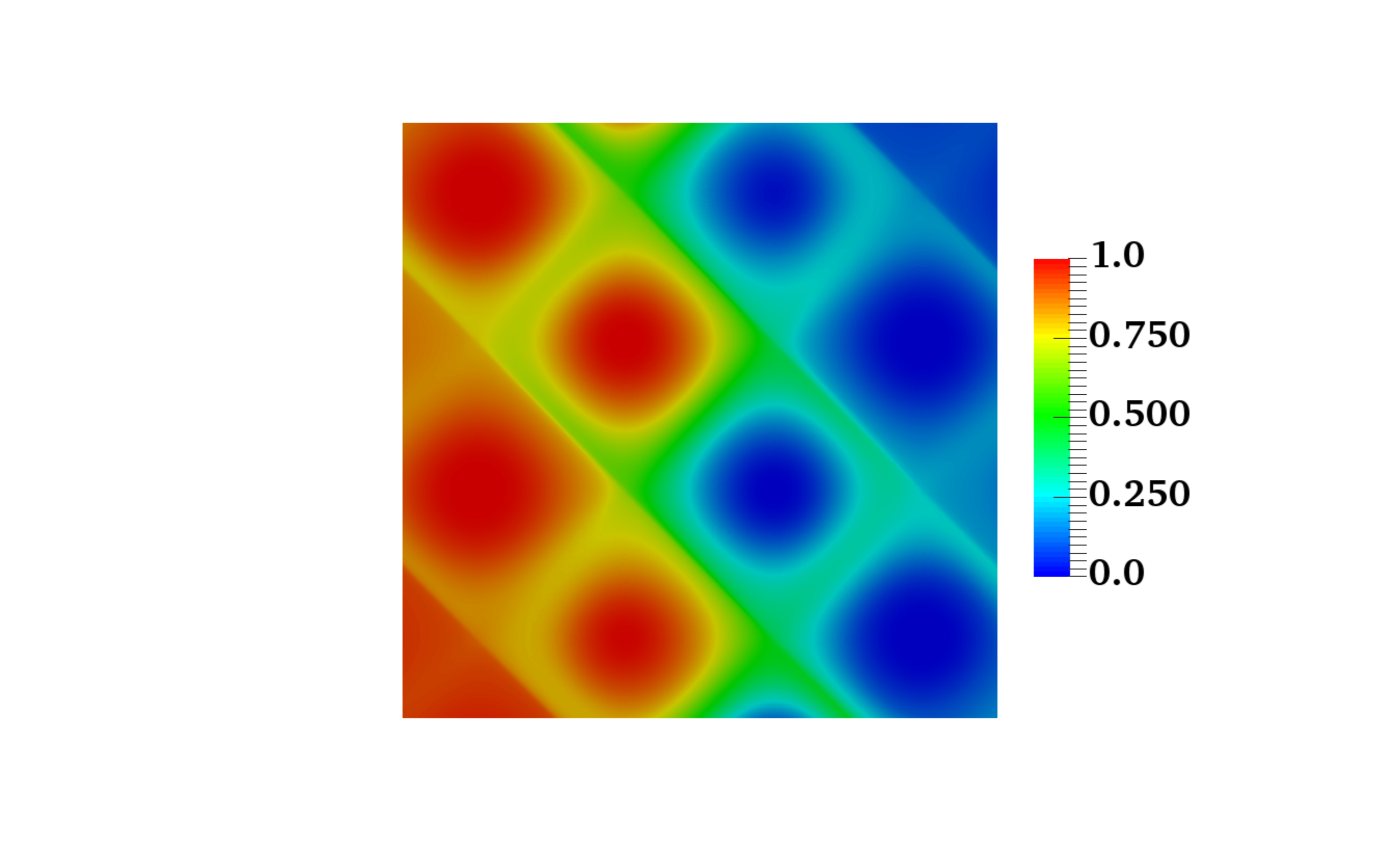}}
  \hspace{-0.5in}
  \subfigure[$\kappa_fL = 2$ and $t = 1.0$]
    {\includegraphics[clip=true,width = 0.37\textwidth]
    {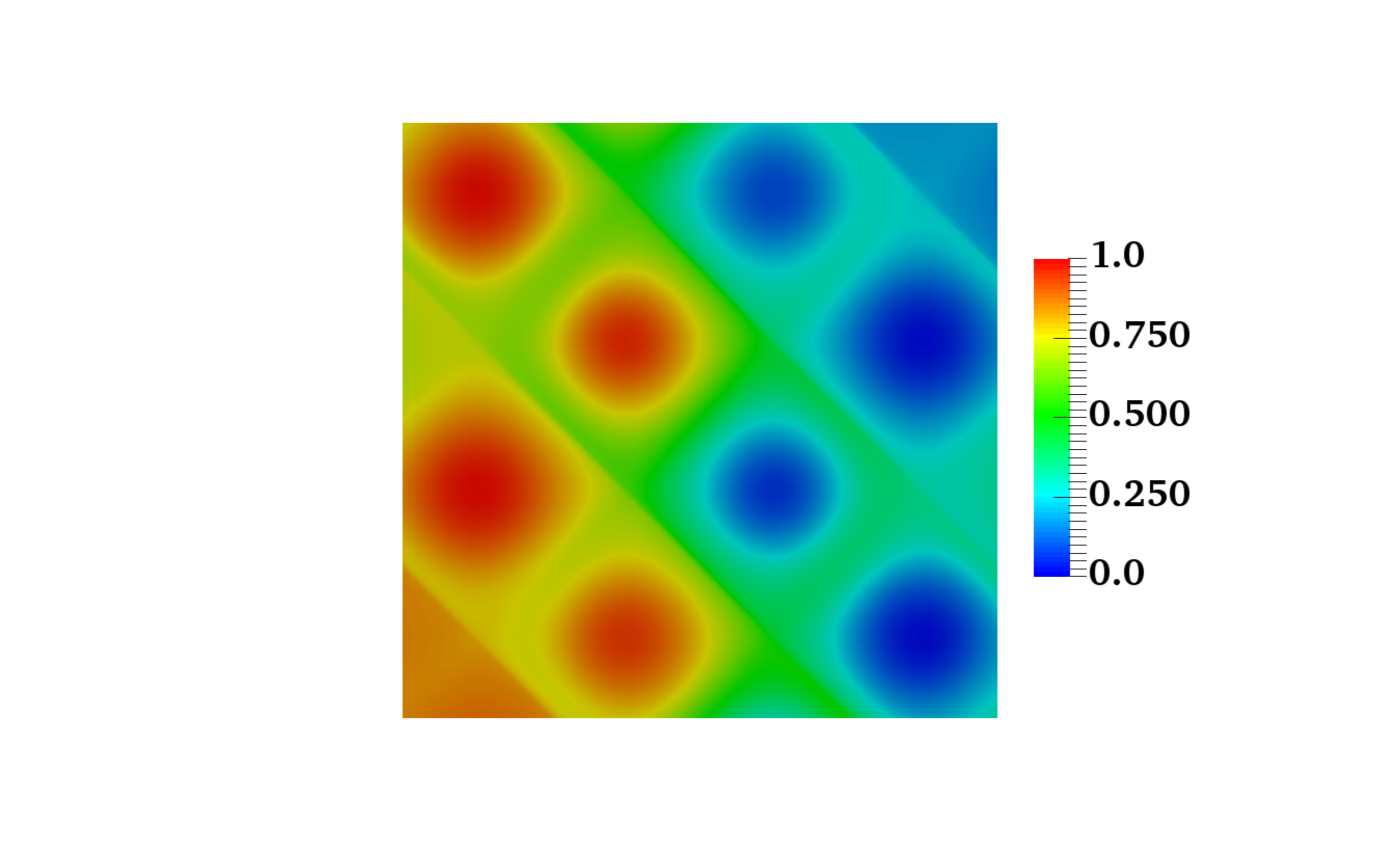}}
  \subfigure[$\kappa_fL = 3$ and $t = 0.1$]
    {\includegraphics[clip=true,width = 0.37\textwidth]
    {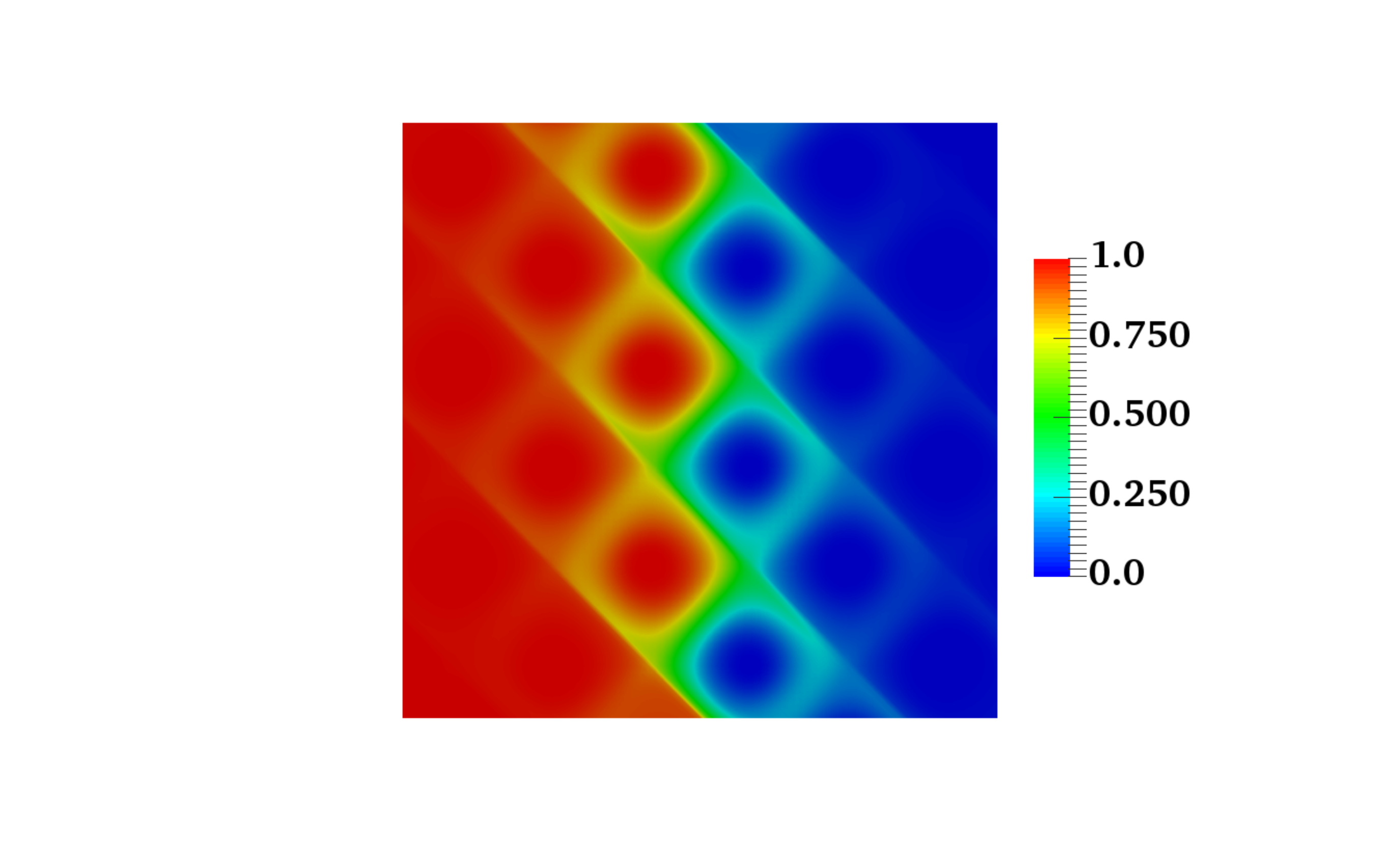}}
  \hspace{-0.5in}
  \subfigure[$\kappa_fL = 3$ and $t = 0.5$]
    {\includegraphics[clip=true,width = 0.37\textwidth]
    {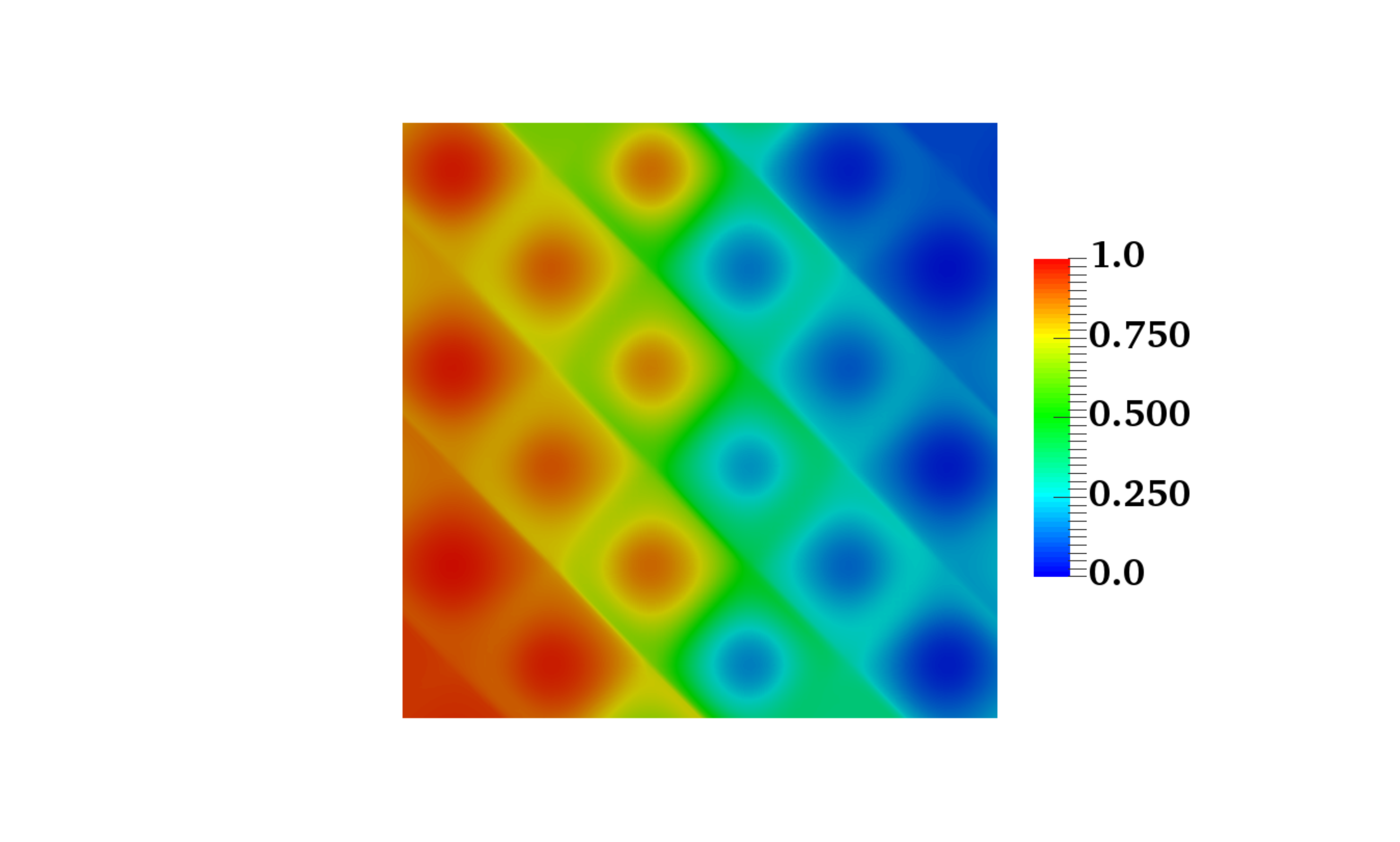}}
  \hspace{-0.5in}
  \subfigure[$\kappa_fL = 3$ and $t = 1.0$]
    {\includegraphics[clip=true,width = 0.37\textwidth]
    {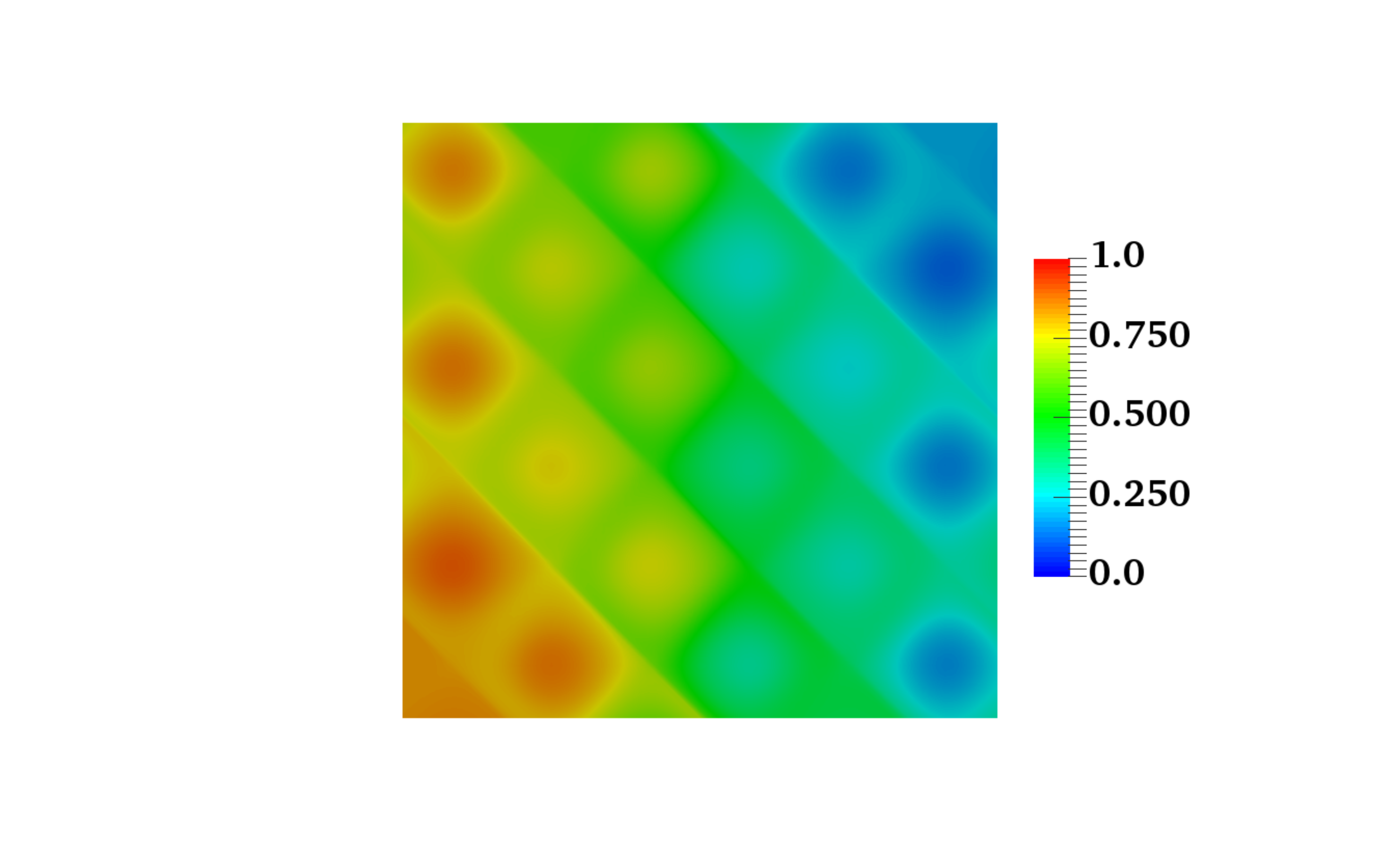}}
  \subfigure[$\kappa_fL = 4$ and $t = 0.1$]
    {\includegraphics[clip=true,width = 0.37\textwidth]
    {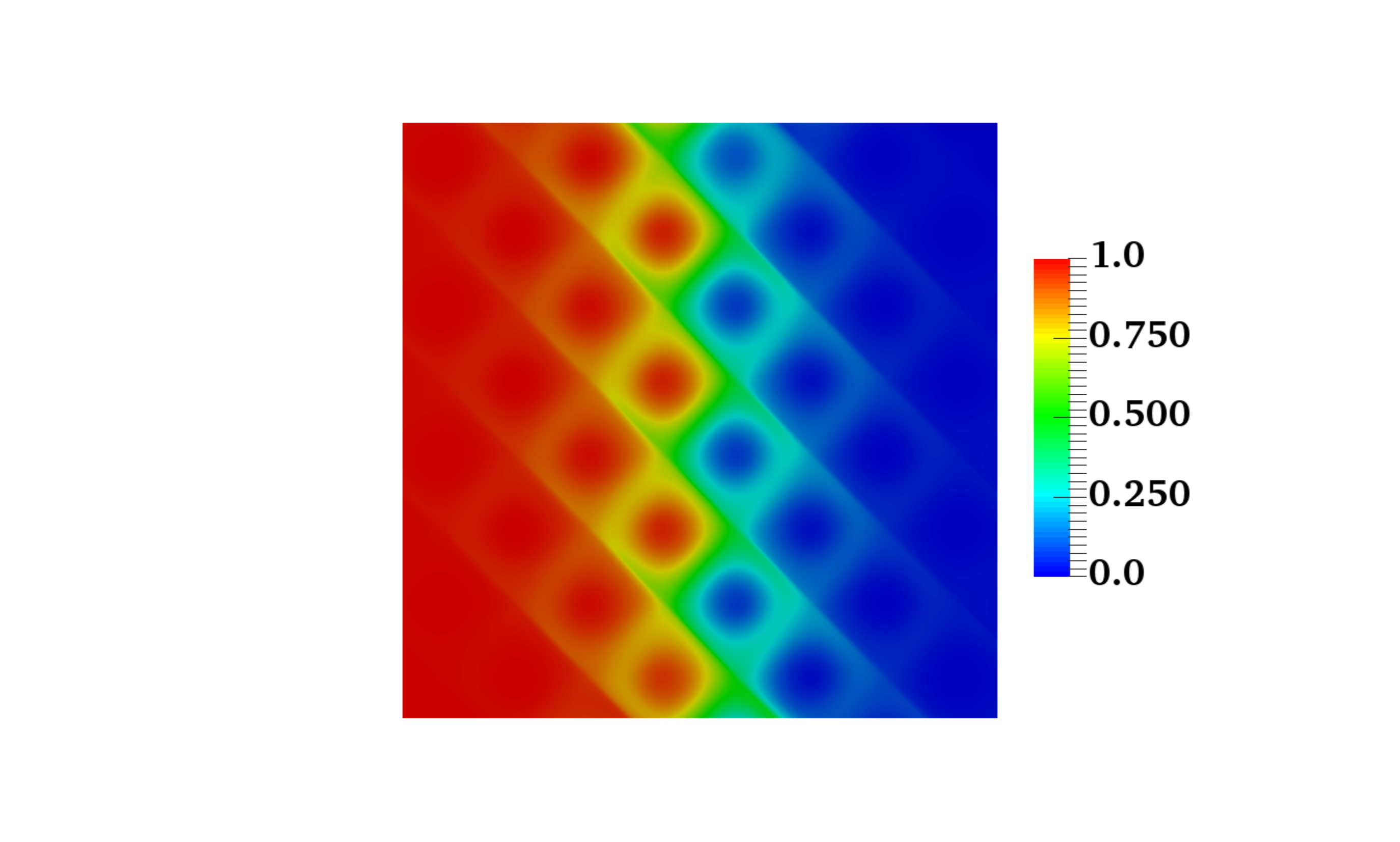}}
  \hspace{-0.5in}
  \subfigure[$\kappa_fL = 4$ and $t = 0.5$]
    {\includegraphics[clip=true,width = 0.37\textwidth]
    {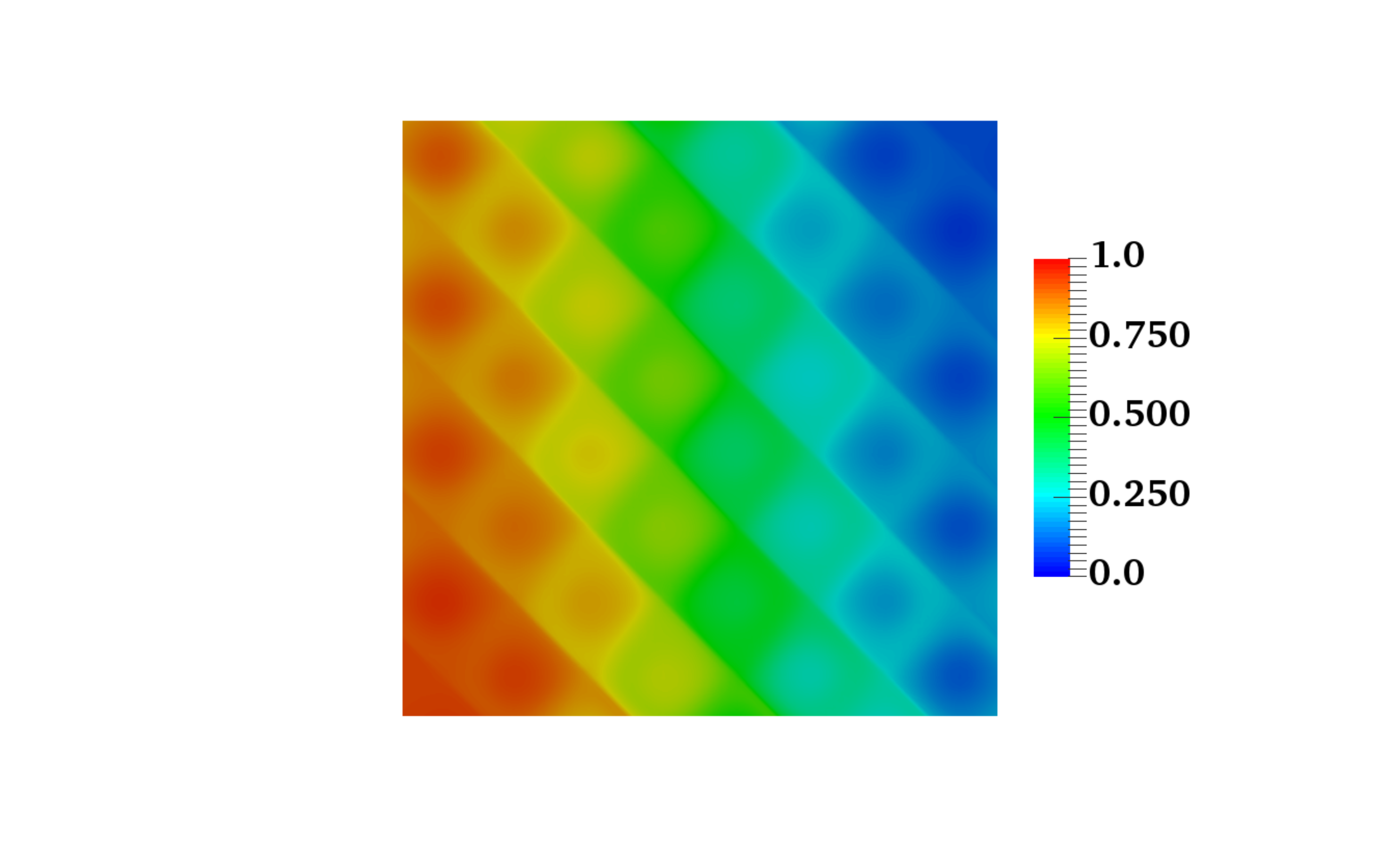}}
  \hspace{-0.5in}
  \subfigure[$\kappa_fL = 4$ and $t = 1.0$]
    {\includegraphics[clip=true,width = 0.37\textwidth]
    {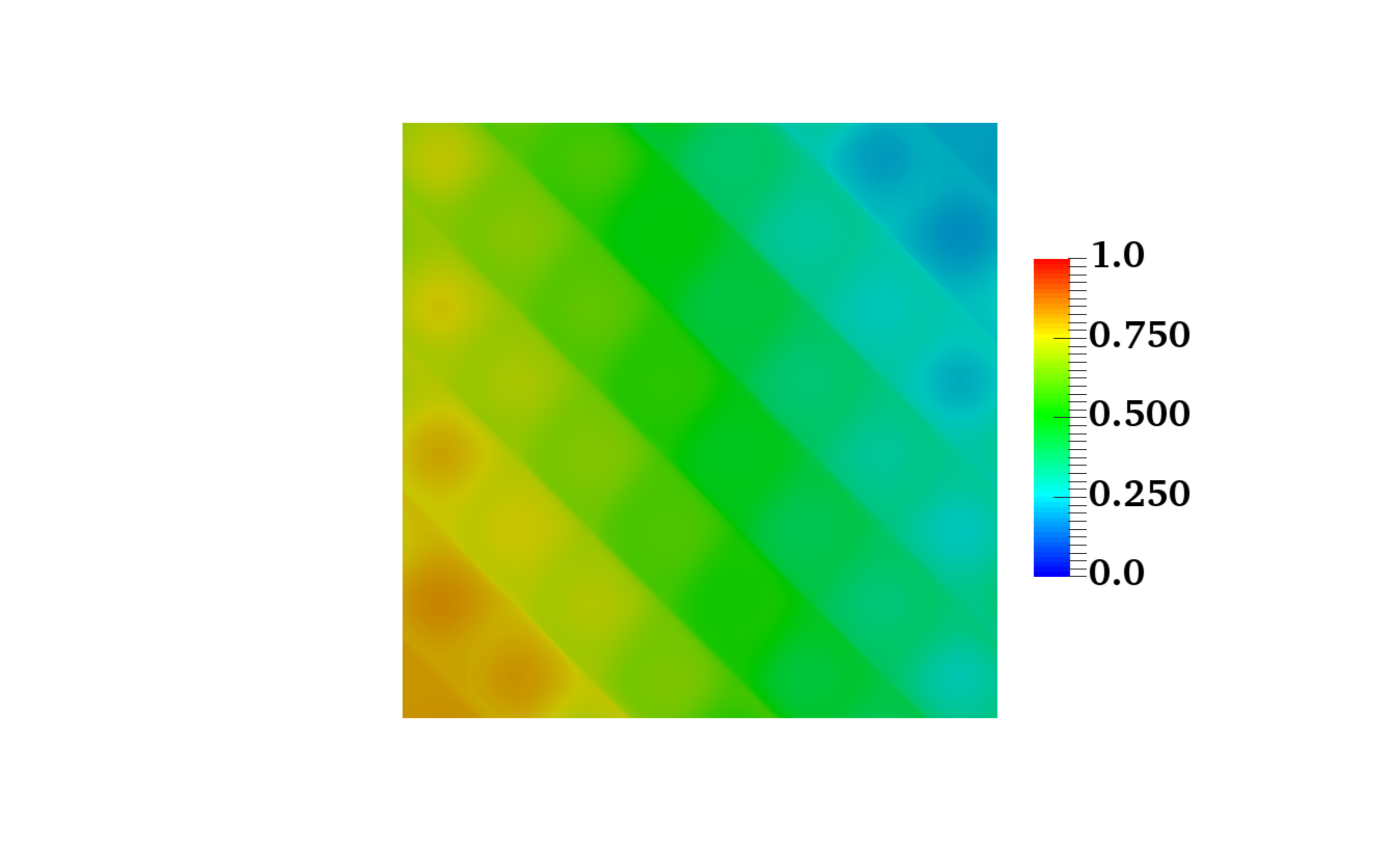}}
  \subfigure[$\kappa_fL = 5$ and $t = 0.1$]
    {\includegraphics[clip=true,width = 0.37\textwidth]
    {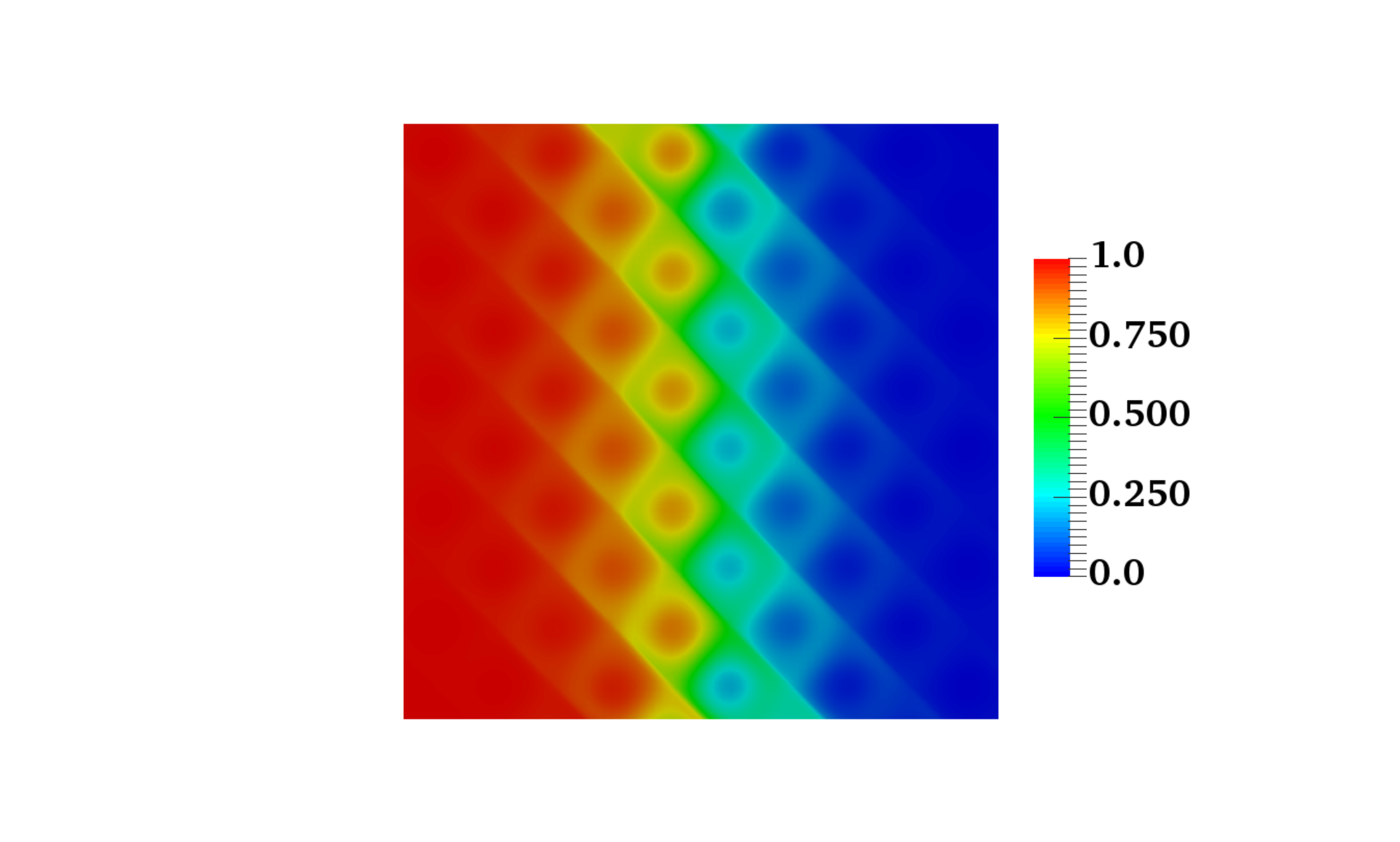}}
  \hspace{-0.5in}
  \subfigure[$\kappa_fL = 5$ and $t = 0.5$]
    {\includegraphics[clip=true,width = 0.37\textwidth]
    {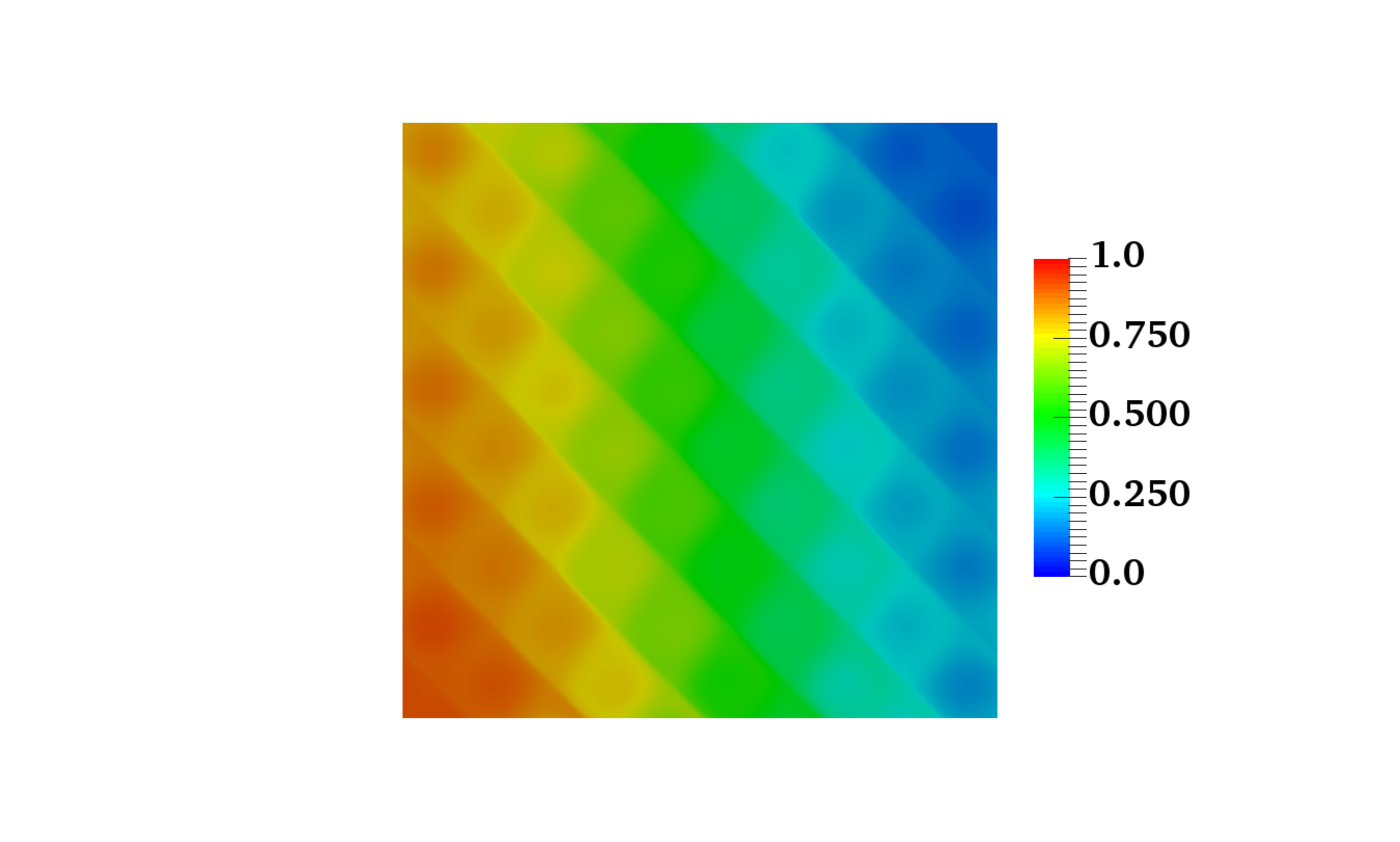}}
  \hspace{-0.5in}
  \subfigure[$\kappa_fL = 5$ and $t = 1.0$]
    {\includegraphics[clip=true,width = 0.37\textwidth]
    {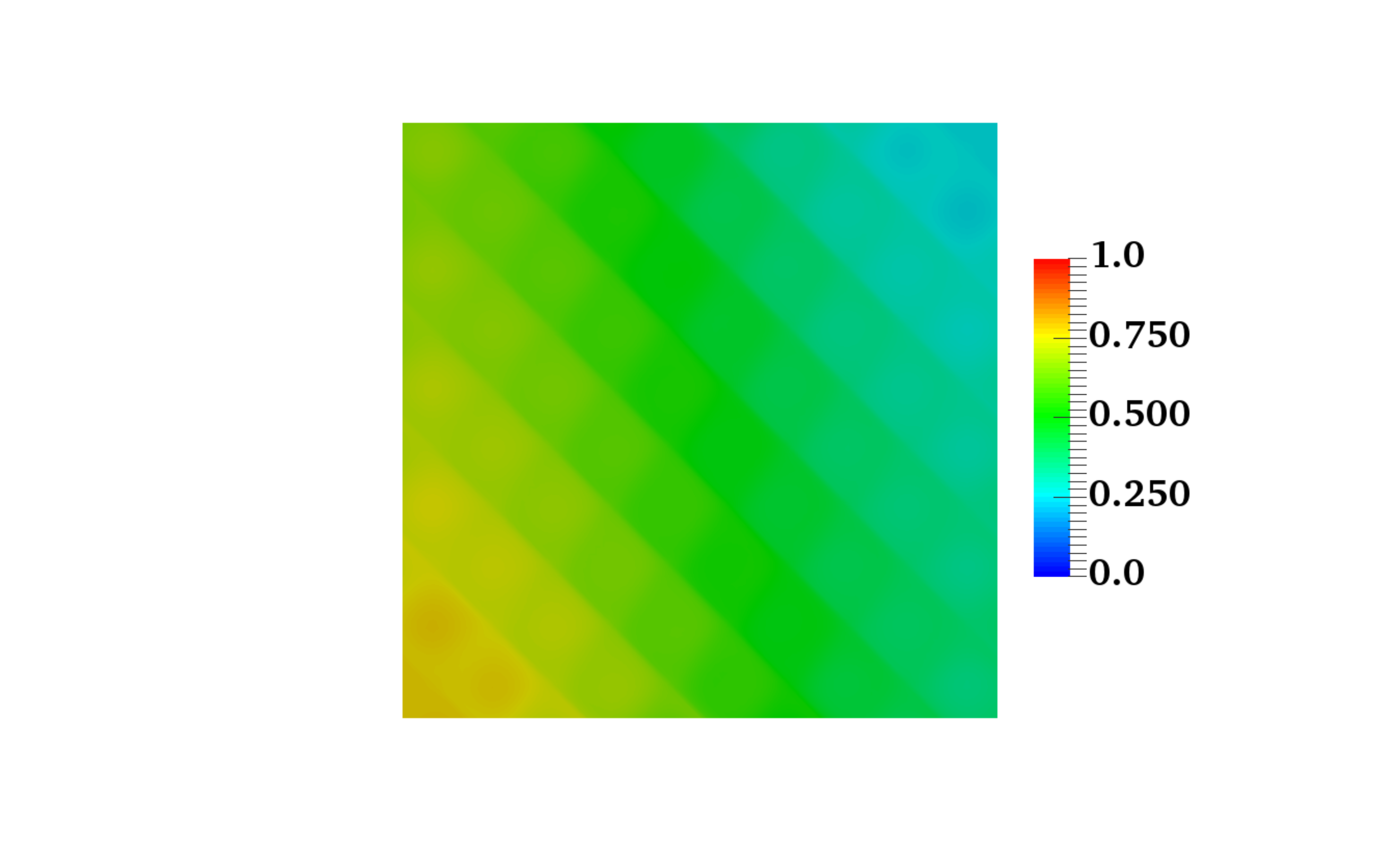}}
  \caption{\textsf{\textbf{Concentration contours of invariant-$F$:}}~These figures show the concentration of invariant-$F$ at times $t = 0.1, \, 0.5$, and $1.0$ for various values of $\kappa_fL$.
  \label{Fig:Contours_F_Difftimes}}
\end{figure}

%---------------------------------------;
%  Figure-5: Contours of Invariant-$G$  ;
%---------------------------------------;
\begin{figure}
  \centering
  \subfigure[$\kappa_fL = 2$ and $t = 0.1$]
    {\includegraphics[clip=true,width = 0.37\textwidth]
    {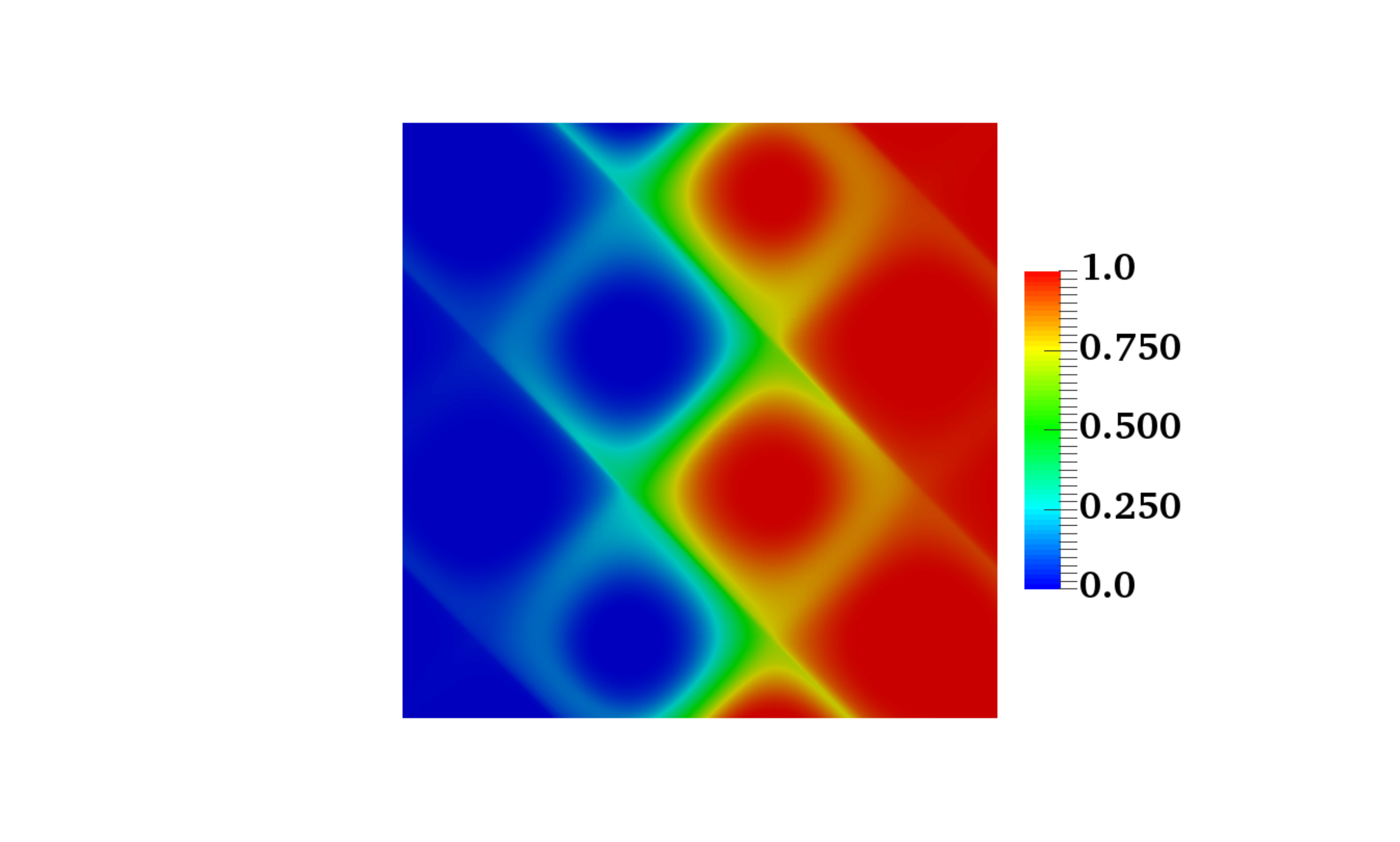}}
  \hspace{-0.5in}
  \subfigure[$\kappa_fL = 2$ and $t = 0.5$]
    {\includegraphics[clip=true,width = 0.37\textwidth]
    {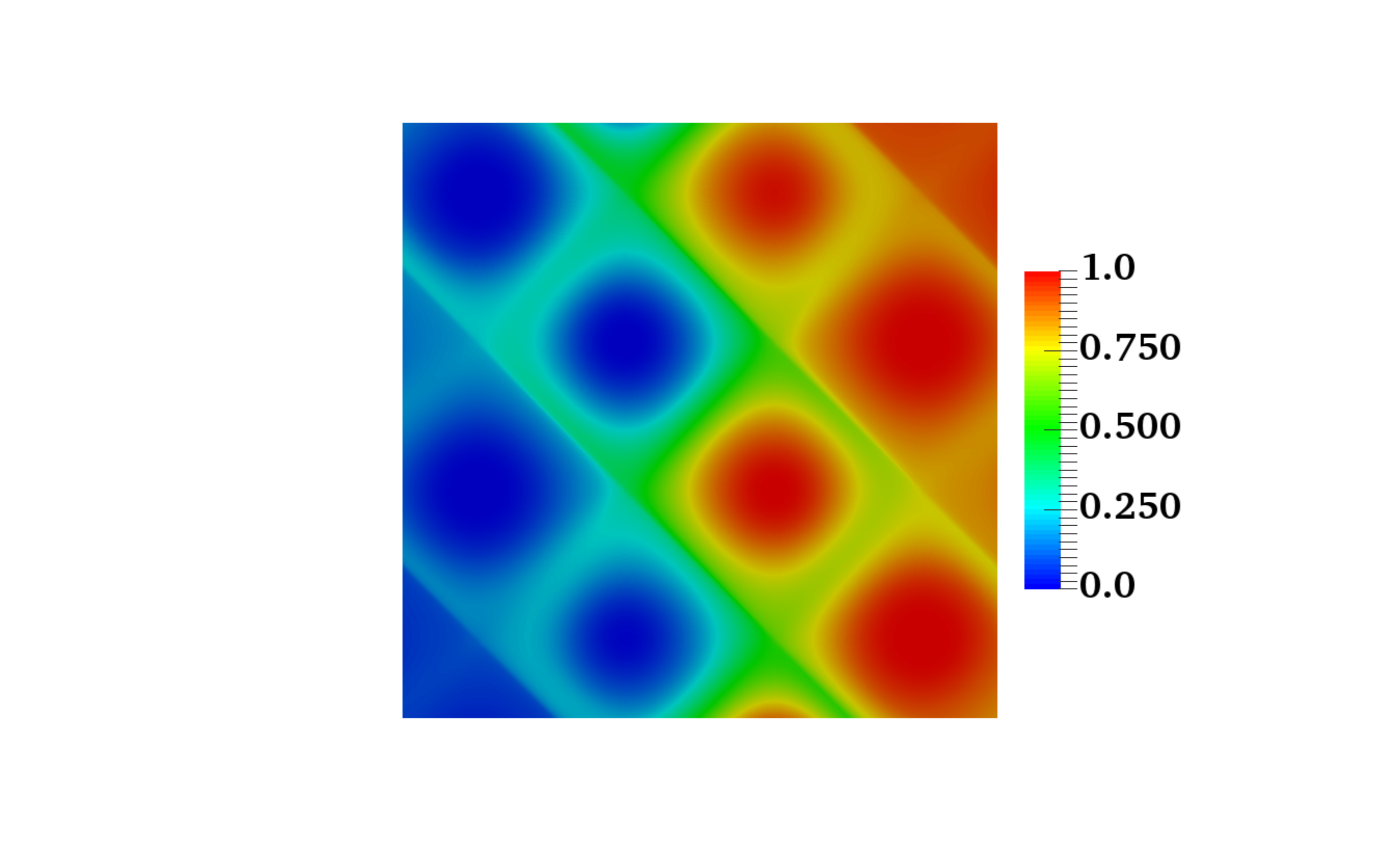}}
  \hspace{-0.5in}
  \subfigure[$\kappa_fL = 2$ and $t = 1.0$]
    {\includegraphics[clip=true,width = 0.37\textwidth]
    {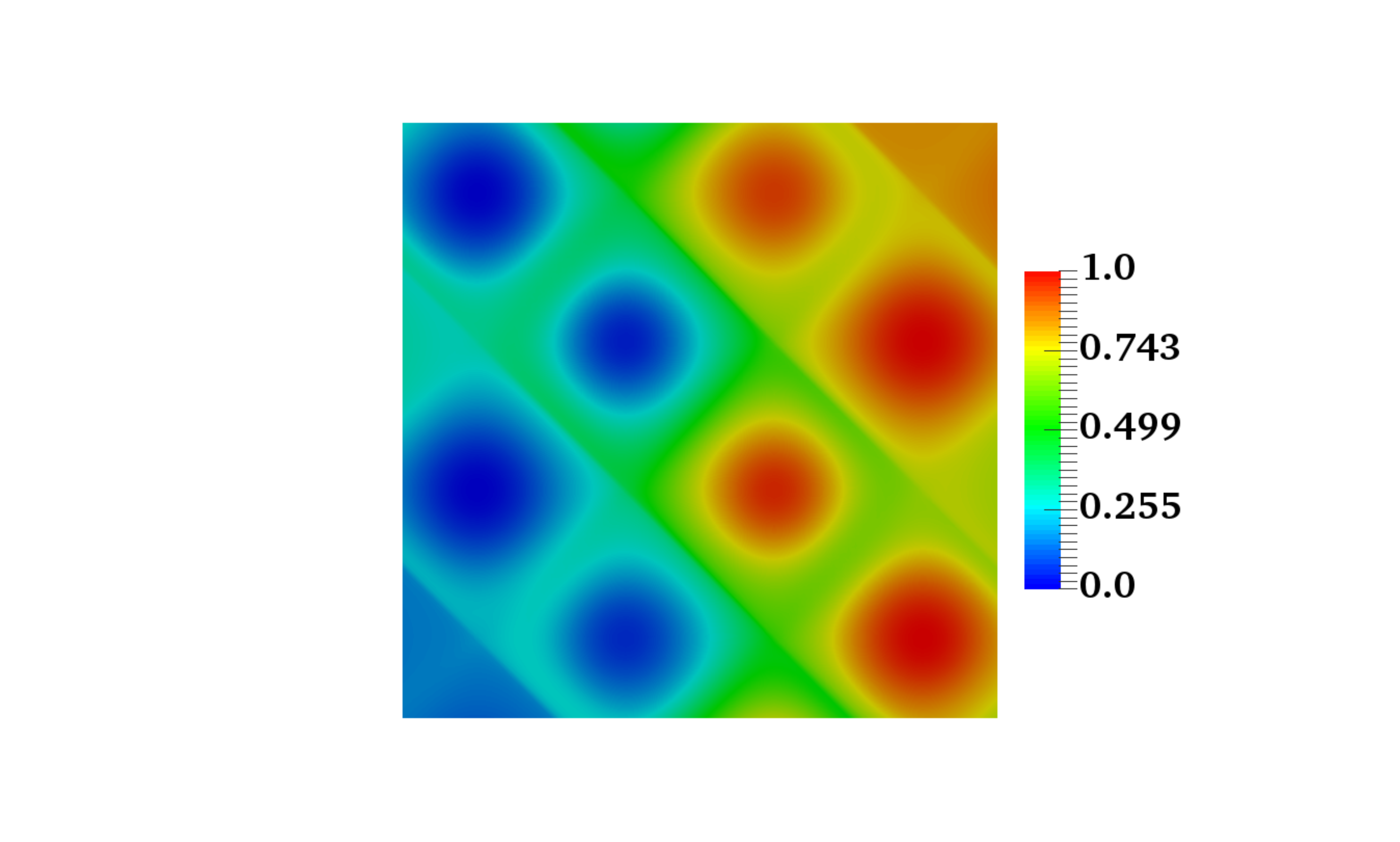}}
  \subfigure[$\kappa_fL = 3$ and $t = 0.1$]
    {\includegraphics[clip=true,width = 0.37\textwidth]
    {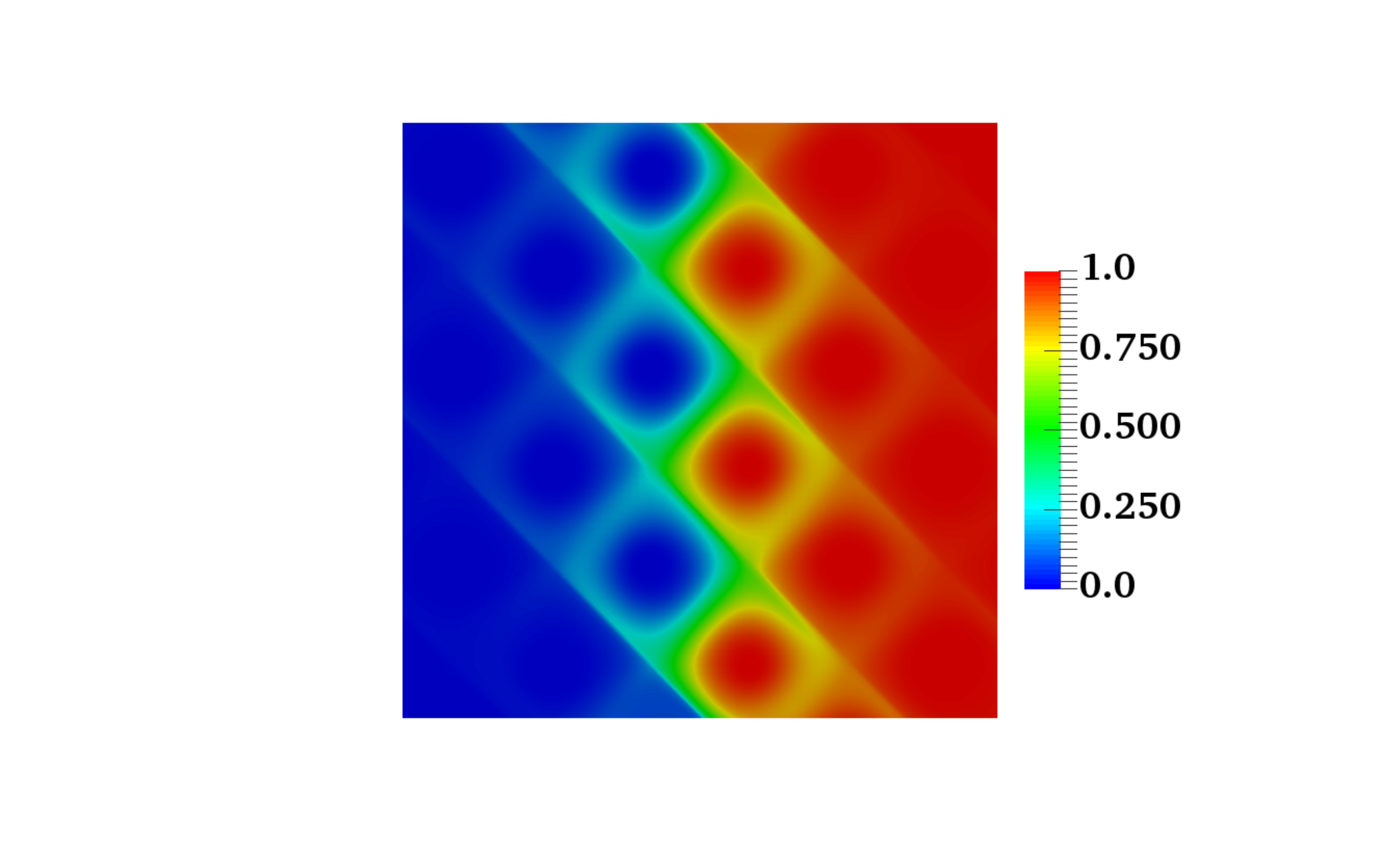}}
  \hspace{-0.5in}
  \subfigure[$\kappa_fL = 3$ and $t = 0.5$]
    {\includegraphics[clip=true,width = 0.37\textwidth]
    {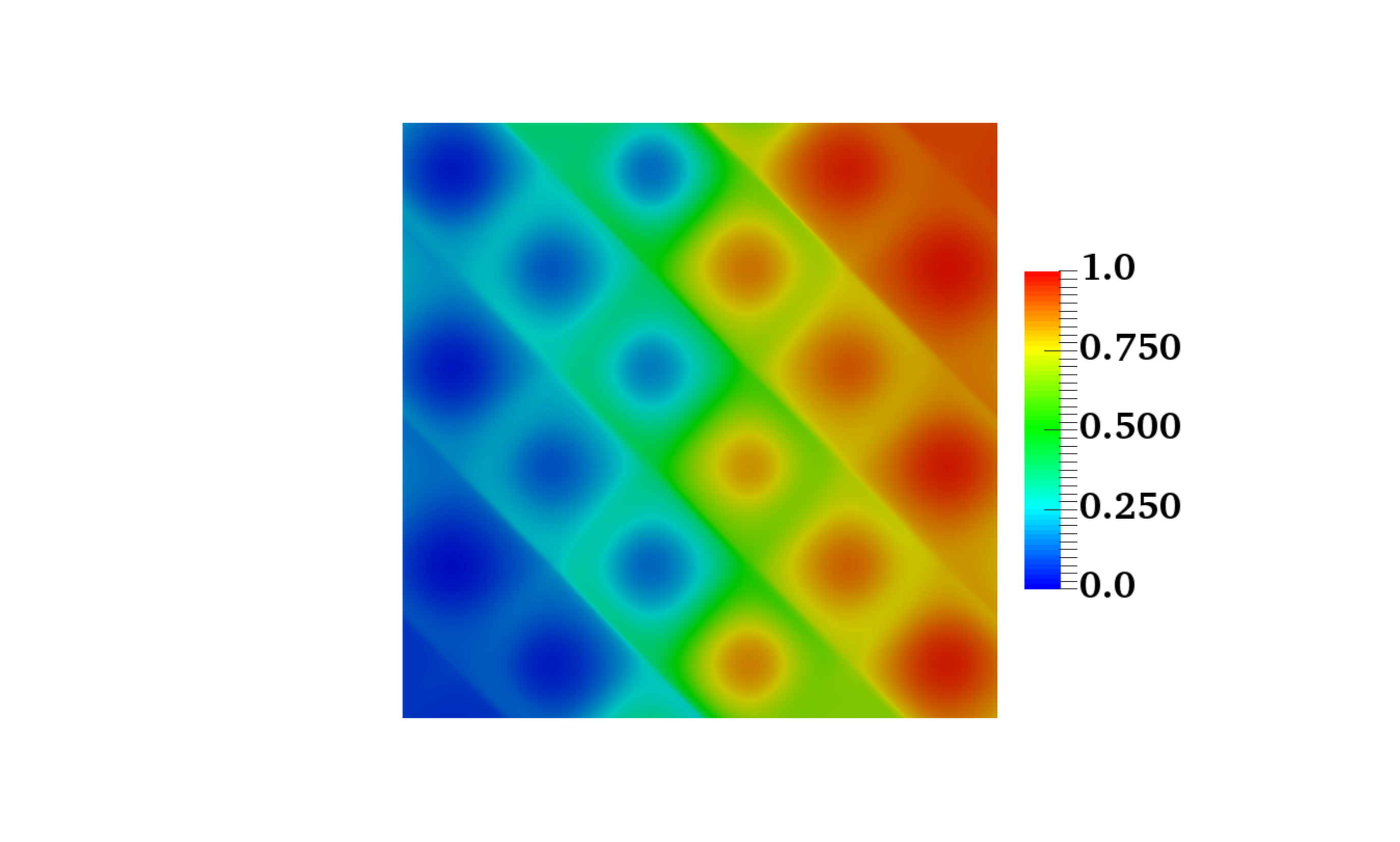}}
  \hspace{-0.5in}
  \subfigure[$\kappa_fL = 3$ and $t = 1.0$]
    {\includegraphics[clip=true,width = 0.37\textwidth]
    {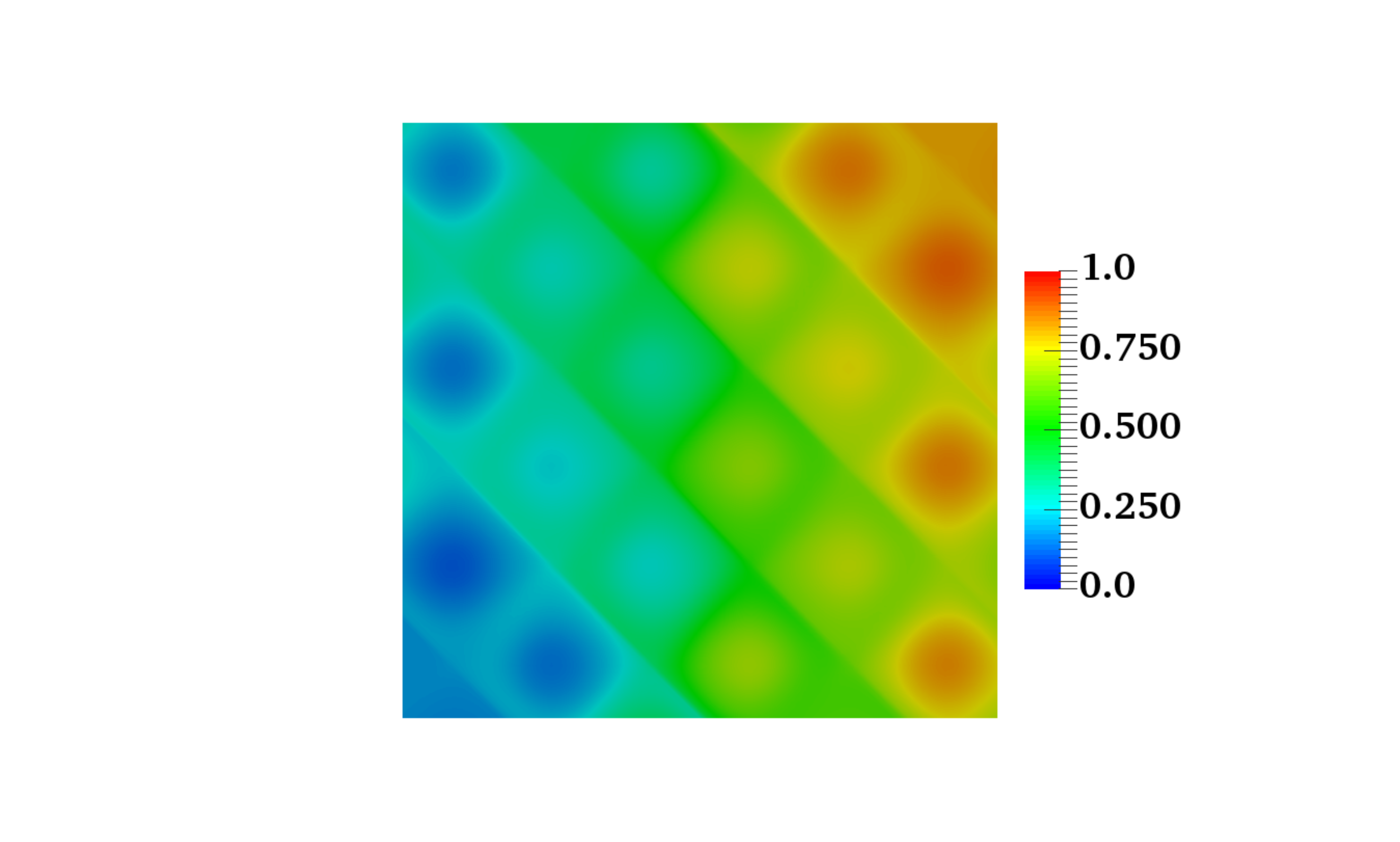}}
  \subfigure[$\kappa_fL = 4$ and $t = 0.1$]
    {\includegraphics[clip=true,width = 0.37\textwidth]
    {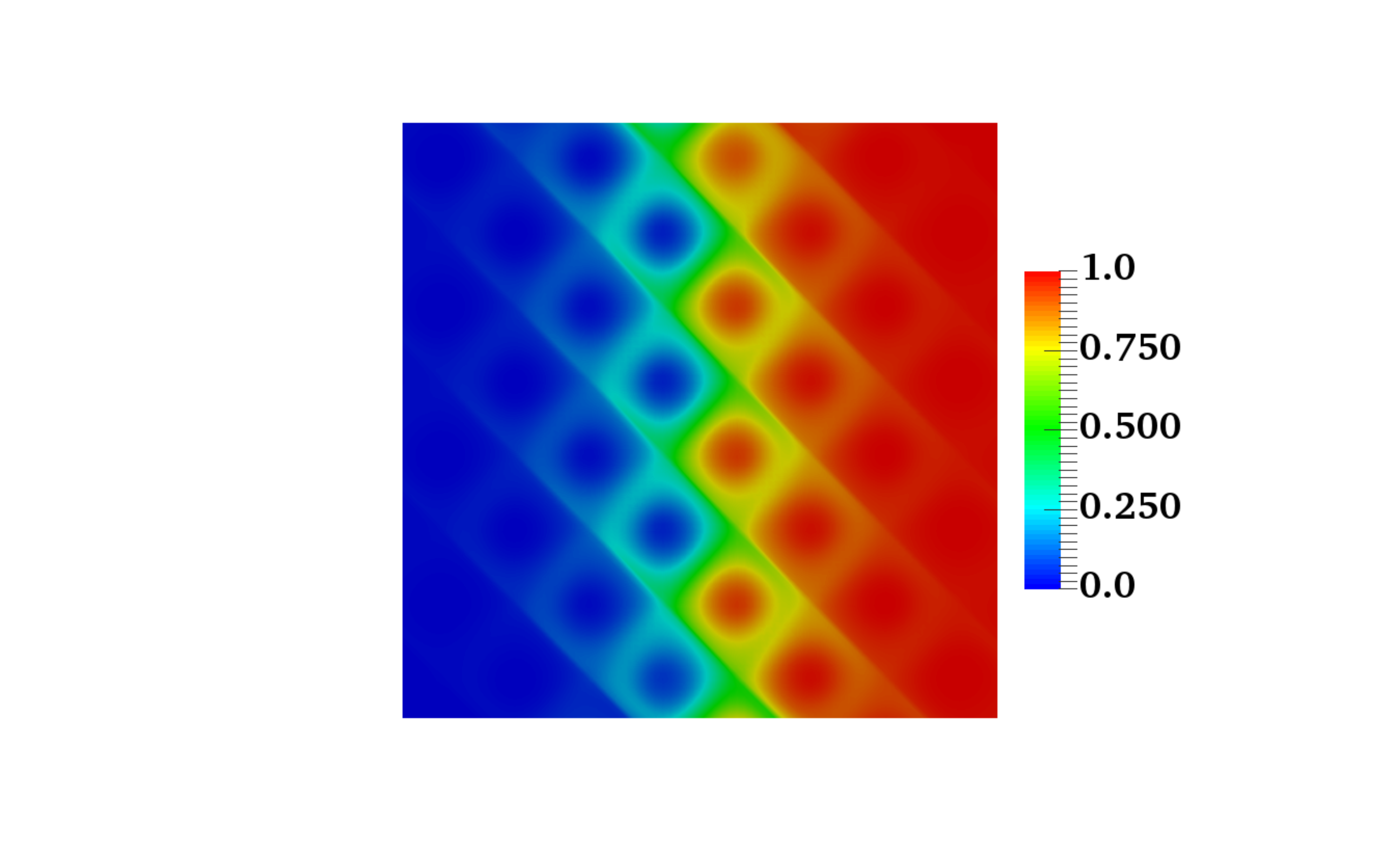}}
  \hspace{-0.5in}
  \subfigure[$\kappa_fL = 4$ and $t = 0.5$]
    {\includegraphics[clip=true,width = 0.37\textwidth]
    {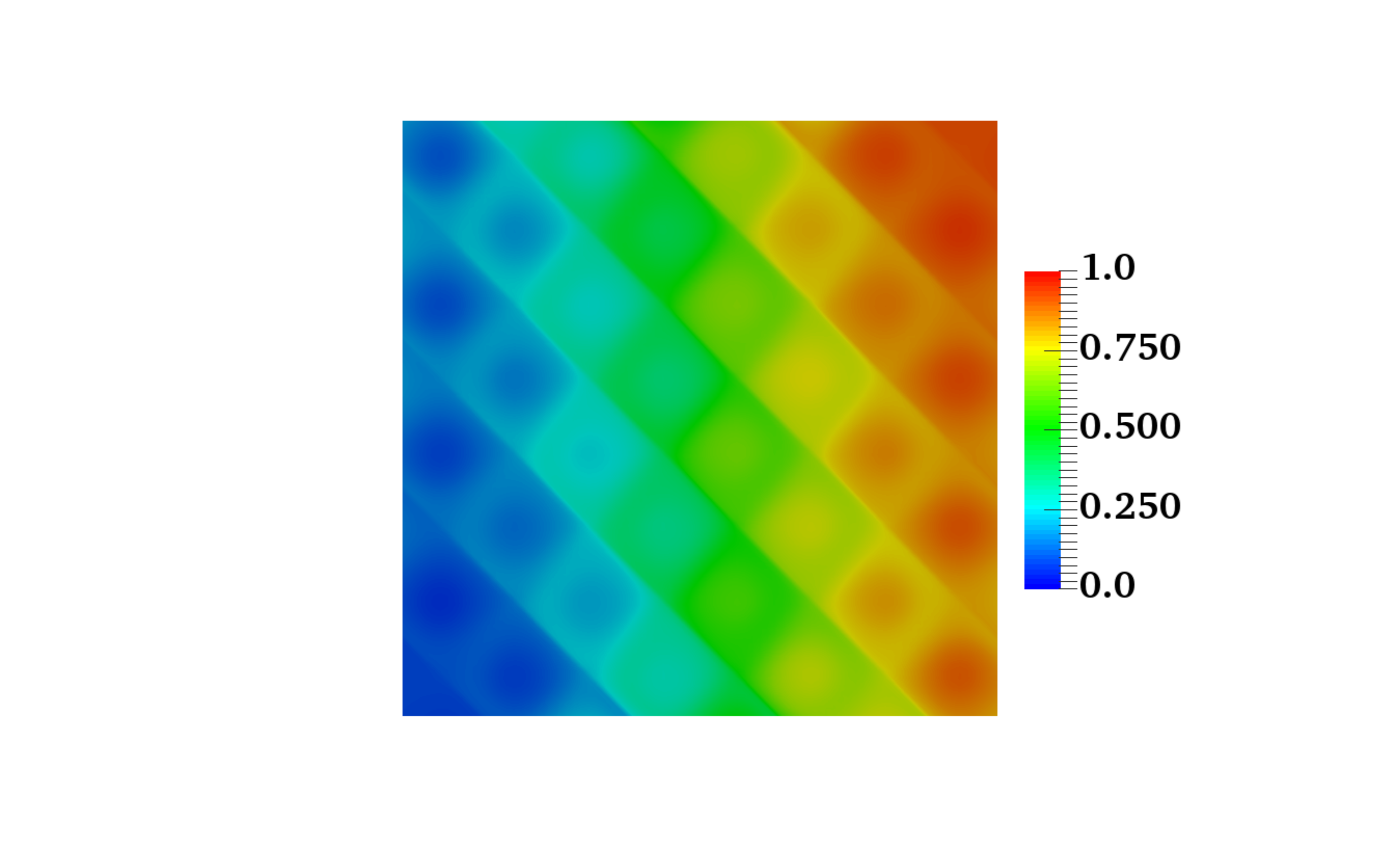}}
  \hspace{-0.5in}
  \subfigure[$\kappa_fL = 4$ and $t = 1.0$]
    {\includegraphics[clip=true,width = 0.37\textwidth]
    {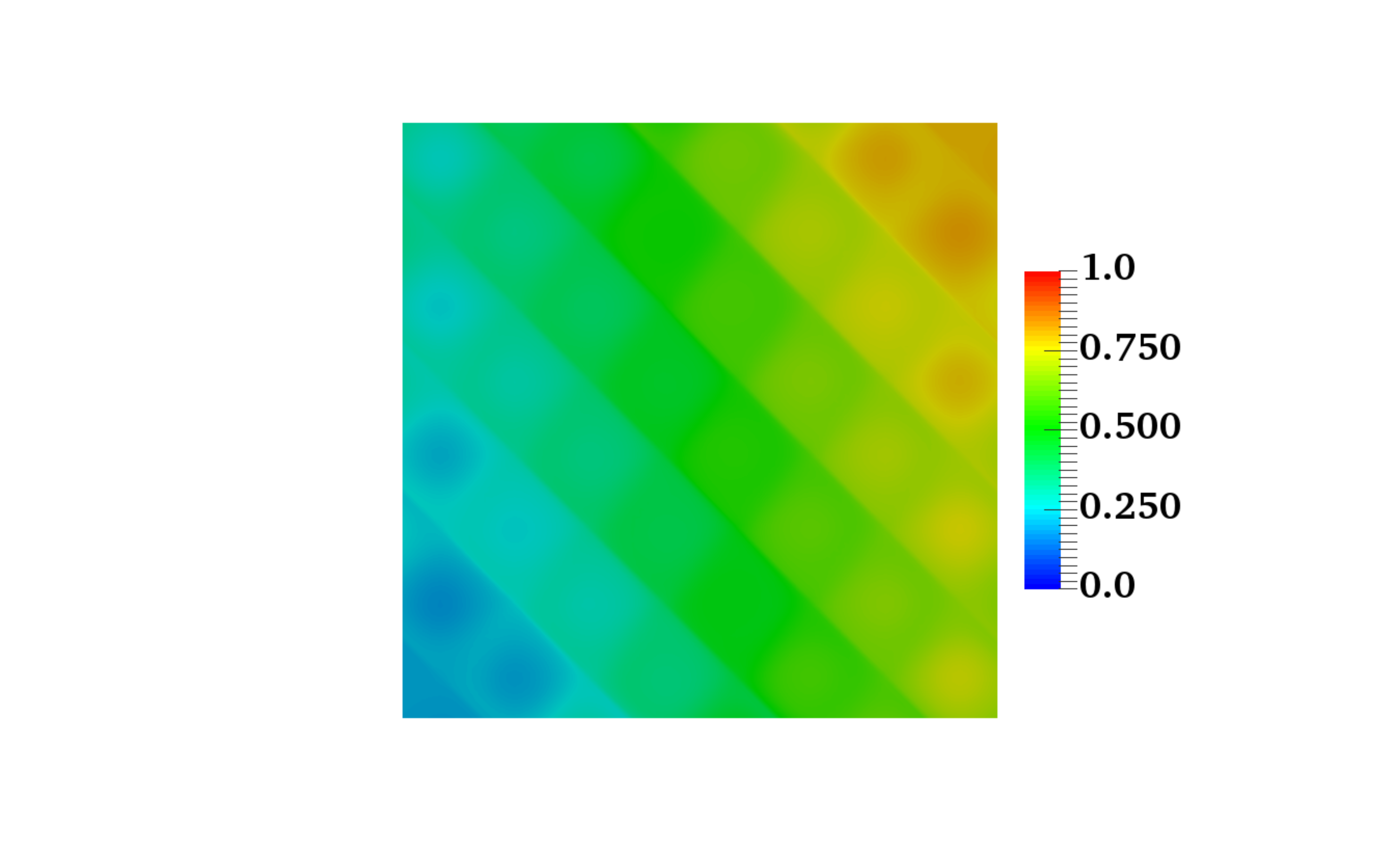}}
  \subfigure[$\kappa_fL = 5$ and $t = 0.1$]
    {\includegraphics[clip=true,width = 0.37\textwidth]
    {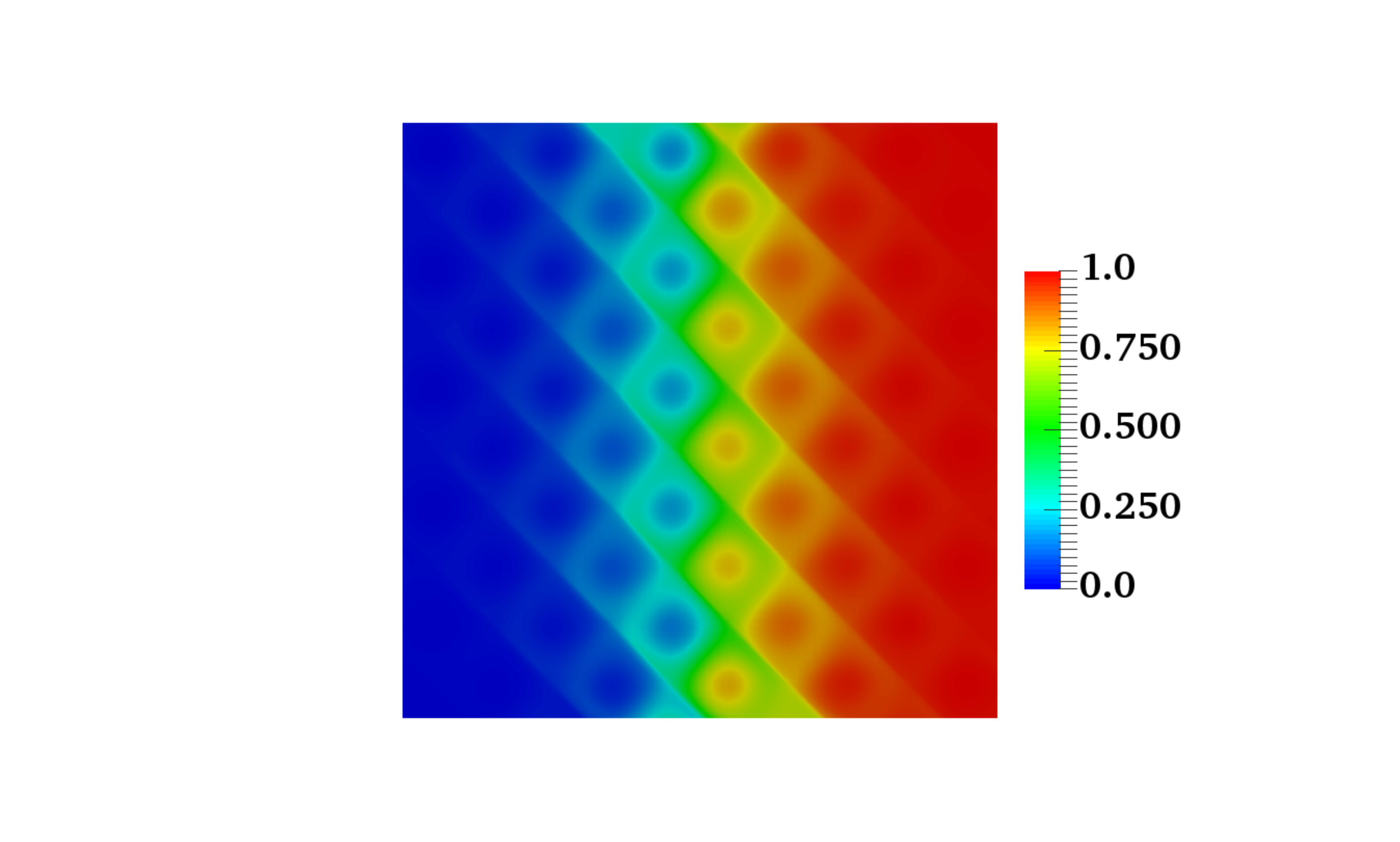}}
  \hspace{-0.5in}
  \subfigure[$\kappa_fL = 5$ and $t = 0.5$]
    {\includegraphics[clip=true,width = 0.37\textwidth]
    {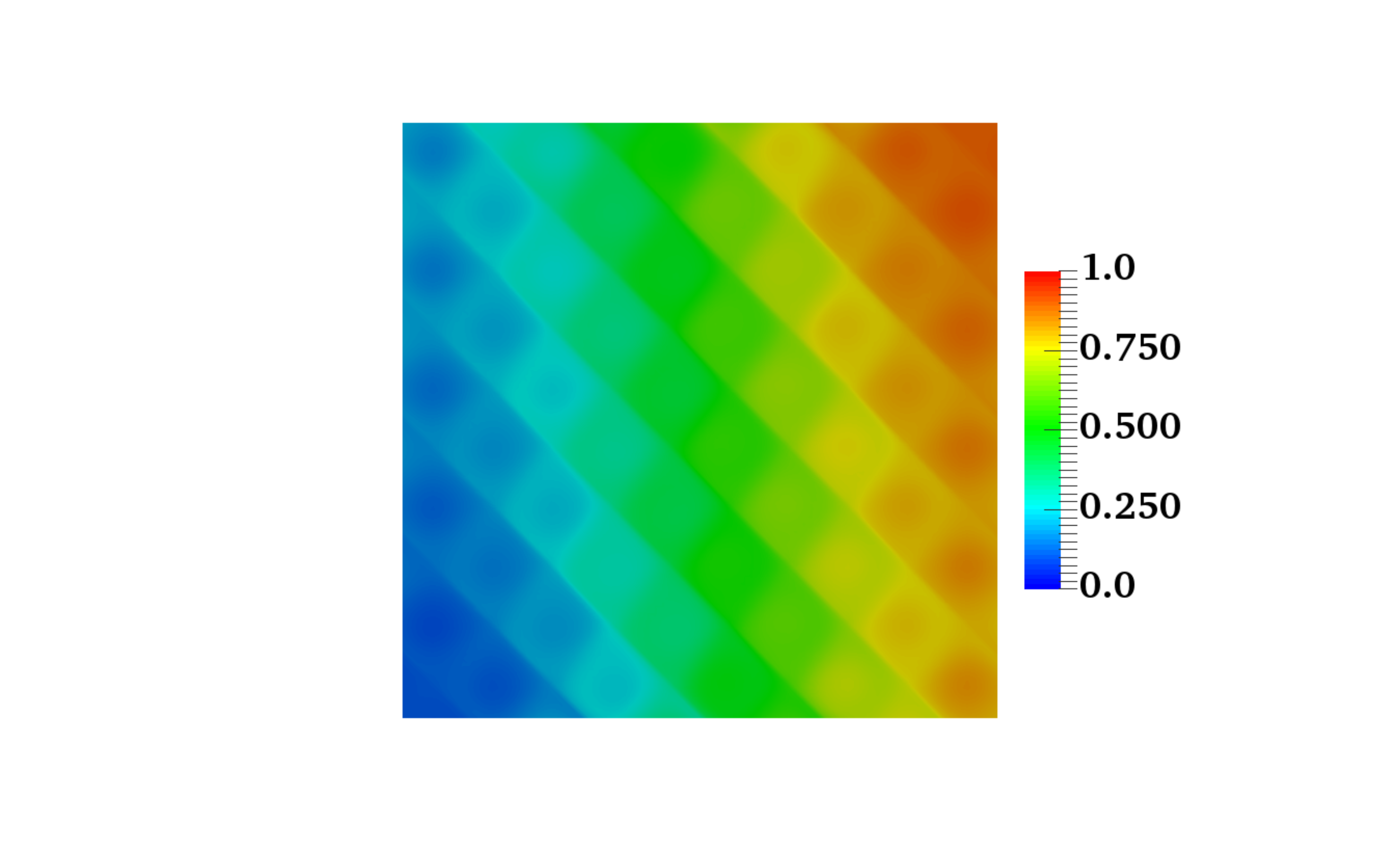}}
  \hspace{-0.5in}
  \subfigure[$\kappa_fL = 5$ and $t = 1.0$]
    {\includegraphics[clip=true,width = 0.37\textwidth]
    {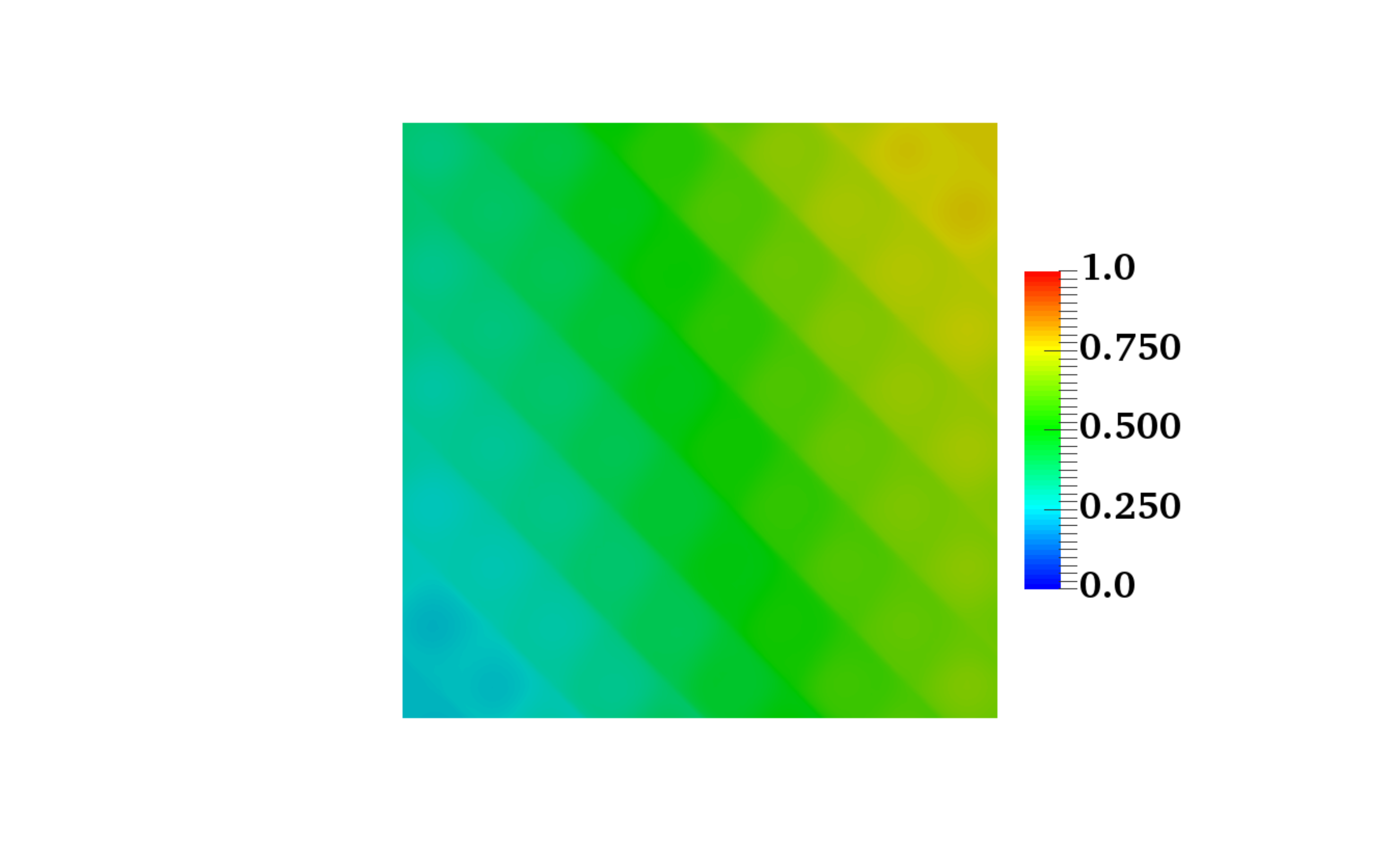}}
  \caption{\textsf{\textbf{Concentration contours of invariant-$G$:}}~These figures show the concentration of invariant-$G$ at times $t = 0.1, \, 0.5$, and $1.0$ for various values of $\kappa_fL$.
  \label{Fig:Contours_G_Difftimes}}
\end{figure}

%-------------------------------------;
%  Figure-6: Contours of Species-$A$  ;
%-------------------------------------;
\begin{figure}
  \centering
  \subfigure[$\kappa_fL = 2$ and $t = 0.1$]
    {\includegraphics[clip=true,width = 0.37\textwidth]
    {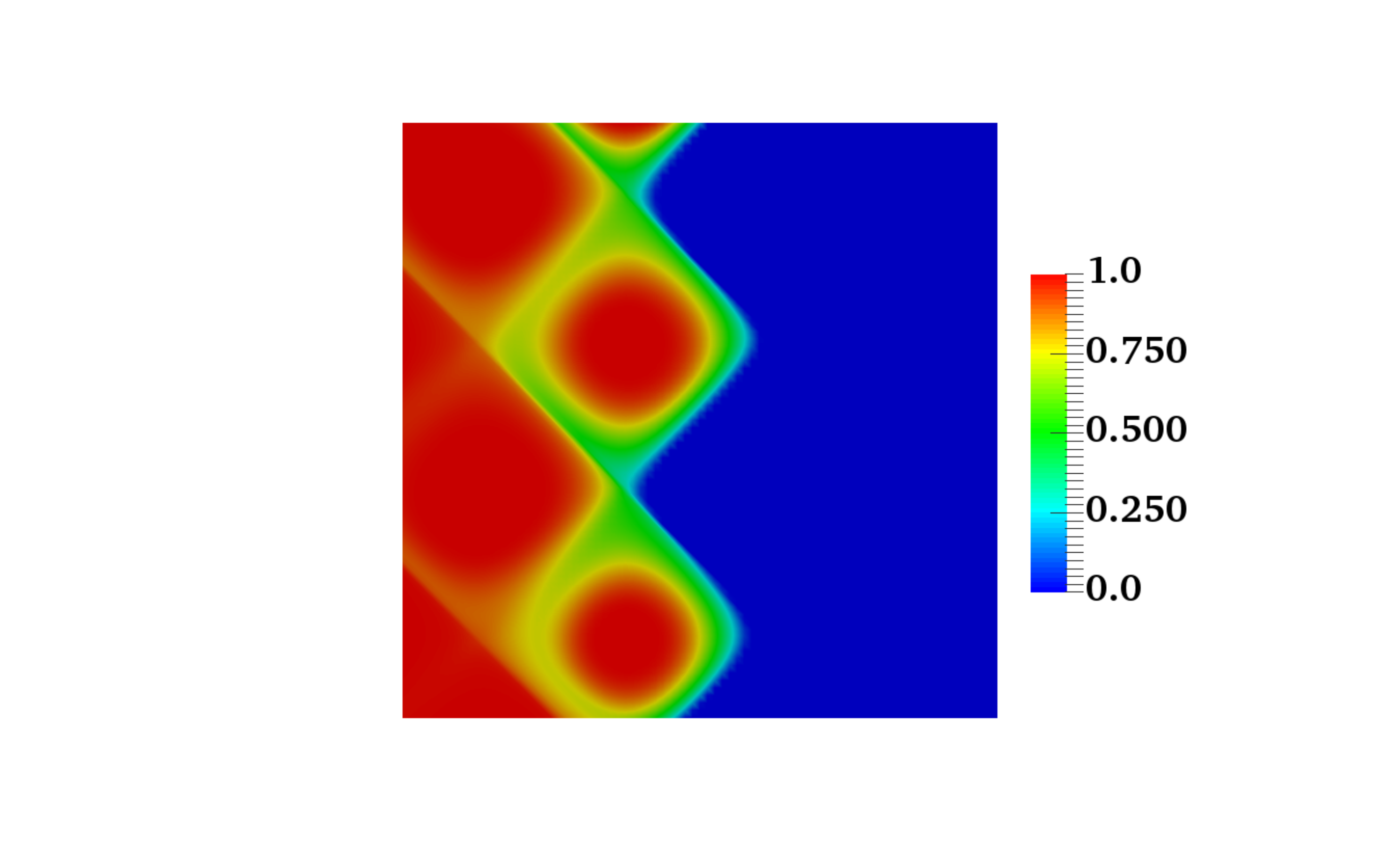}}
  \hspace{-0.5in}
  \subfigure[$\kappa_fL = 2$ and $t = 0.5$]
    {\includegraphics[clip=true,width = 0.37\textwidth]
    {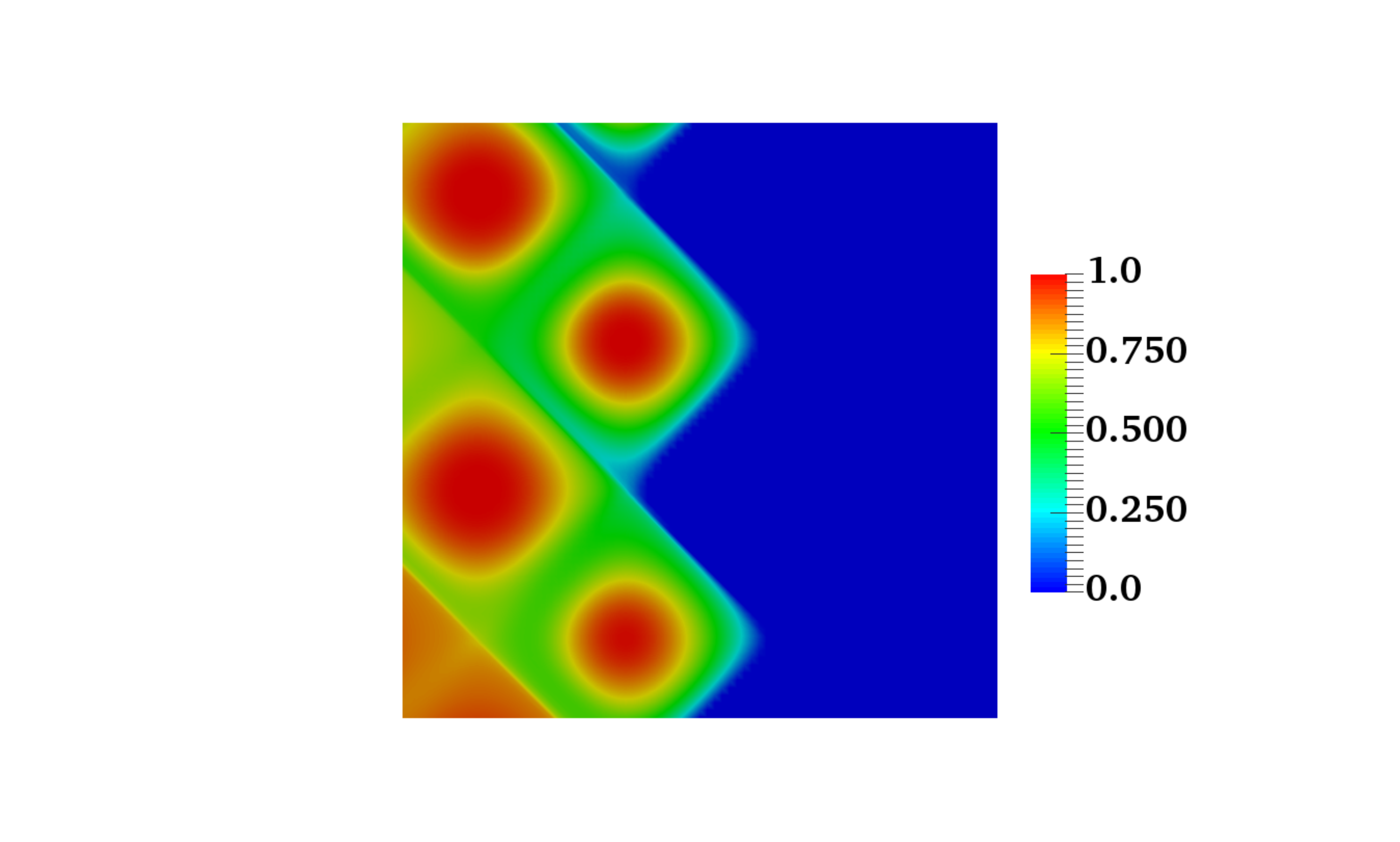}}
  \hspace{-0.5in}
  \subfigure[$\kappa_fL = 2$ and $t = 1.0$]
    {\includegraphics[clip=true,width = 0.37\textwidth]
    {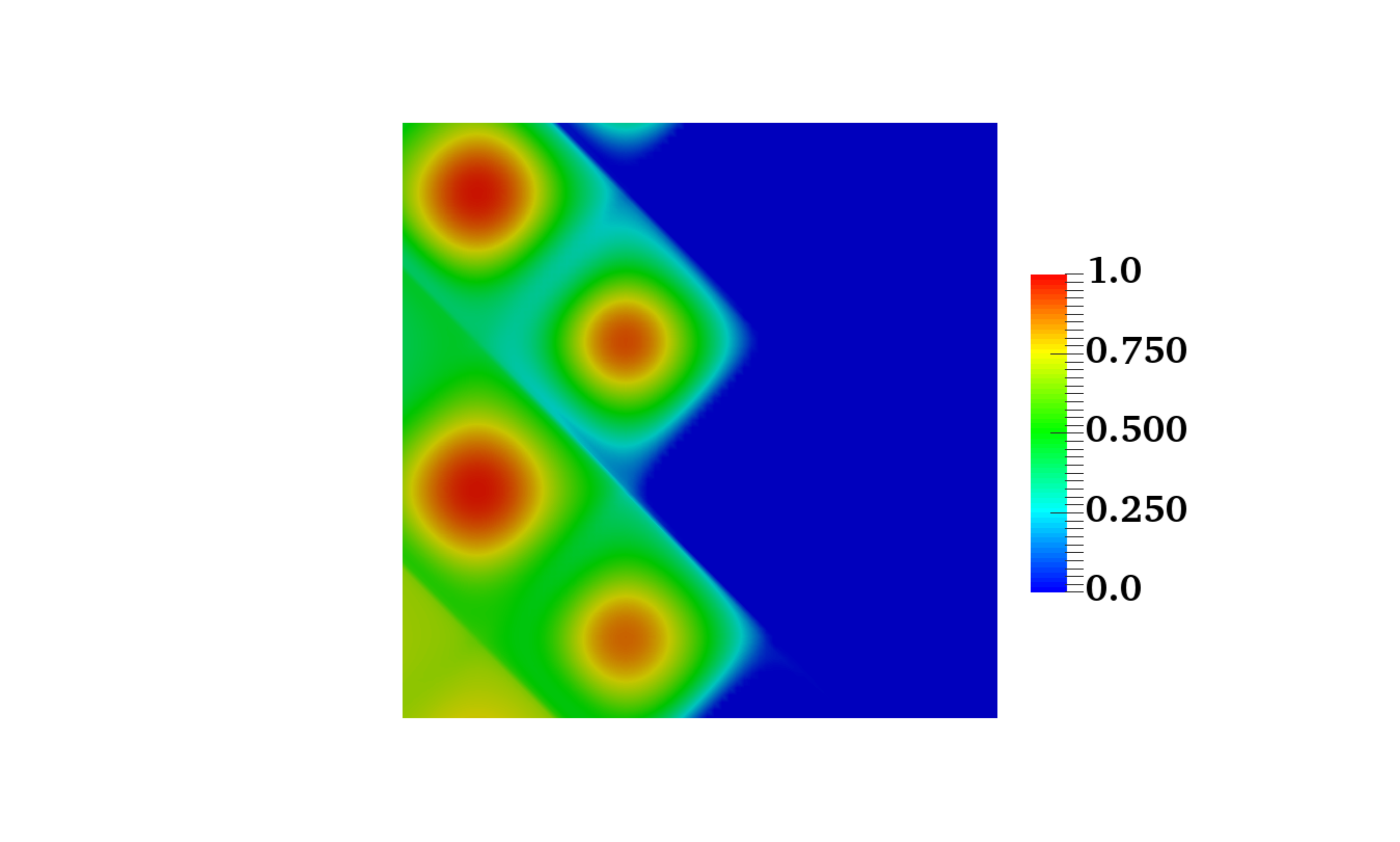}}
  \subfigure[$\kappa_fL = 3$ and $t = 0.1$]
    {\includegraphics[clip=true,width = 0.37\textwidth]
    {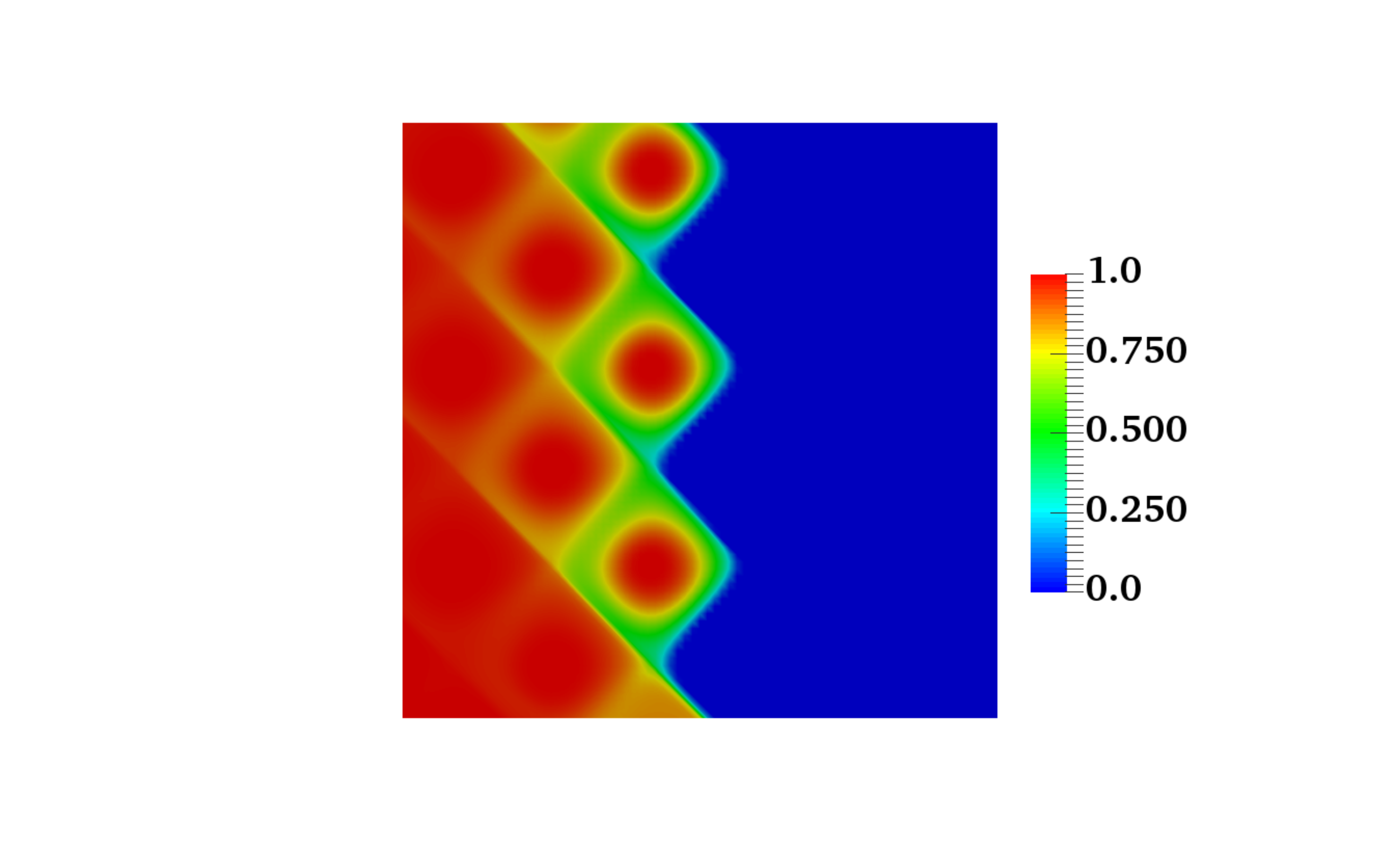}}
  \hspace{-0.5in}
  \subfigure[$\kappa_fL = 3$ and $t = 0.5$]
    {\includegraphics[clip=true,width = 0.37\textwidth]
    {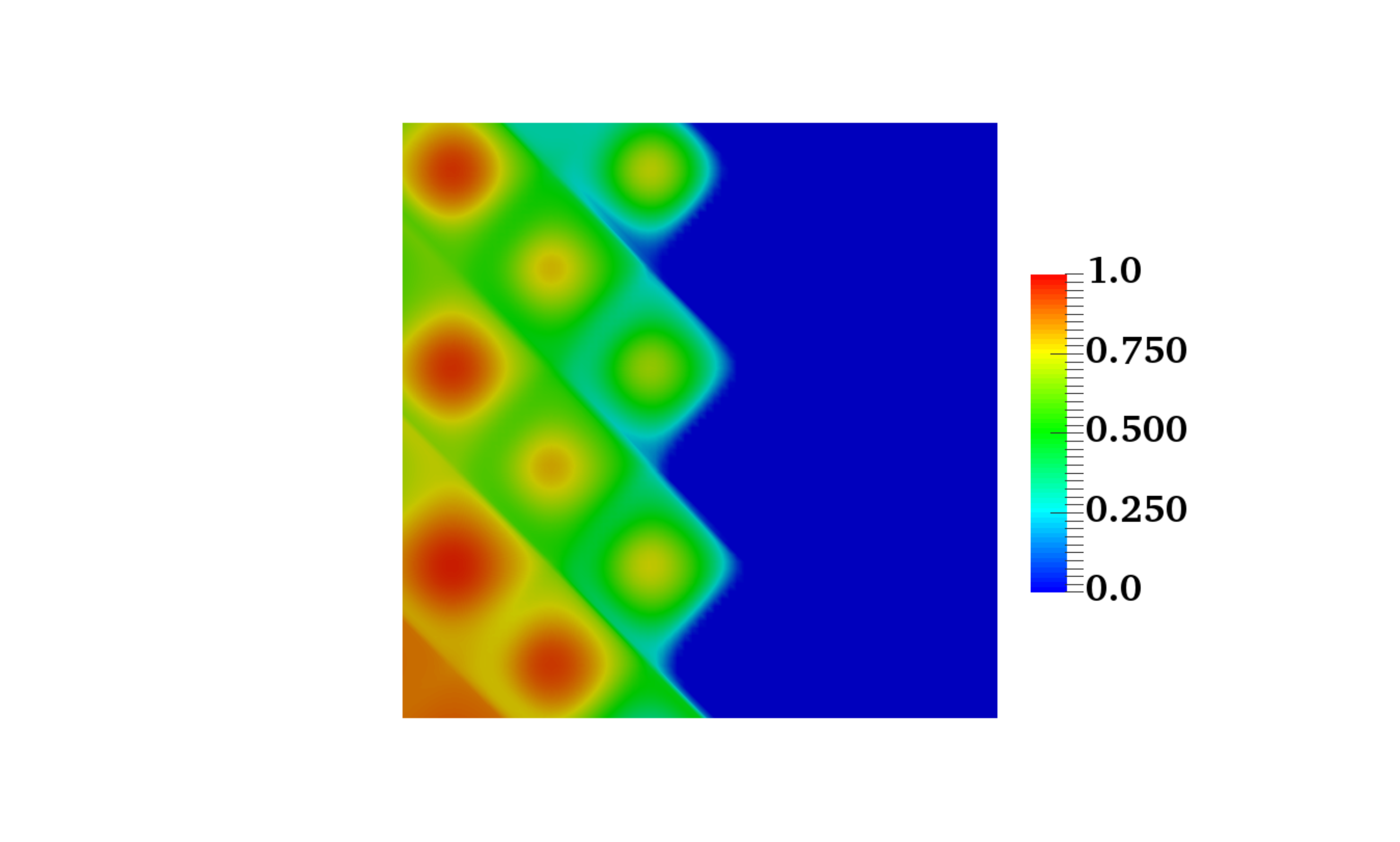}}
  \hspace{-0.5in}
  \subfigure[$\kappa_fL = 3$ and $t = 1.0$]
    {\includegraphics[clip=true,width = 0.37\textwidth]
    {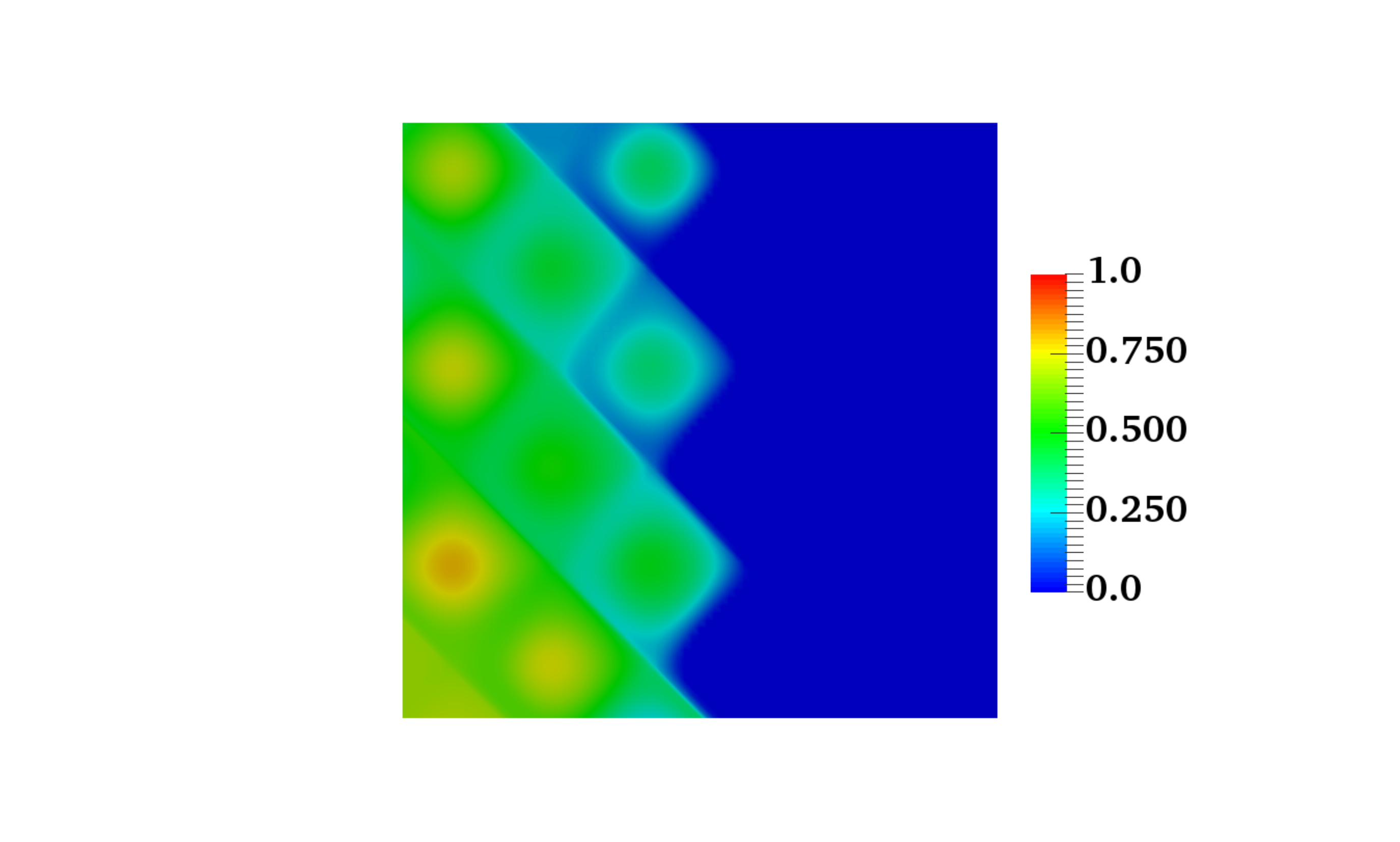}}
  \subfigure[$\kappa_fL = 4$ and $t = 0.1$]
    {\includegraphics[clip=true,width = 0.37\textwidth]
    {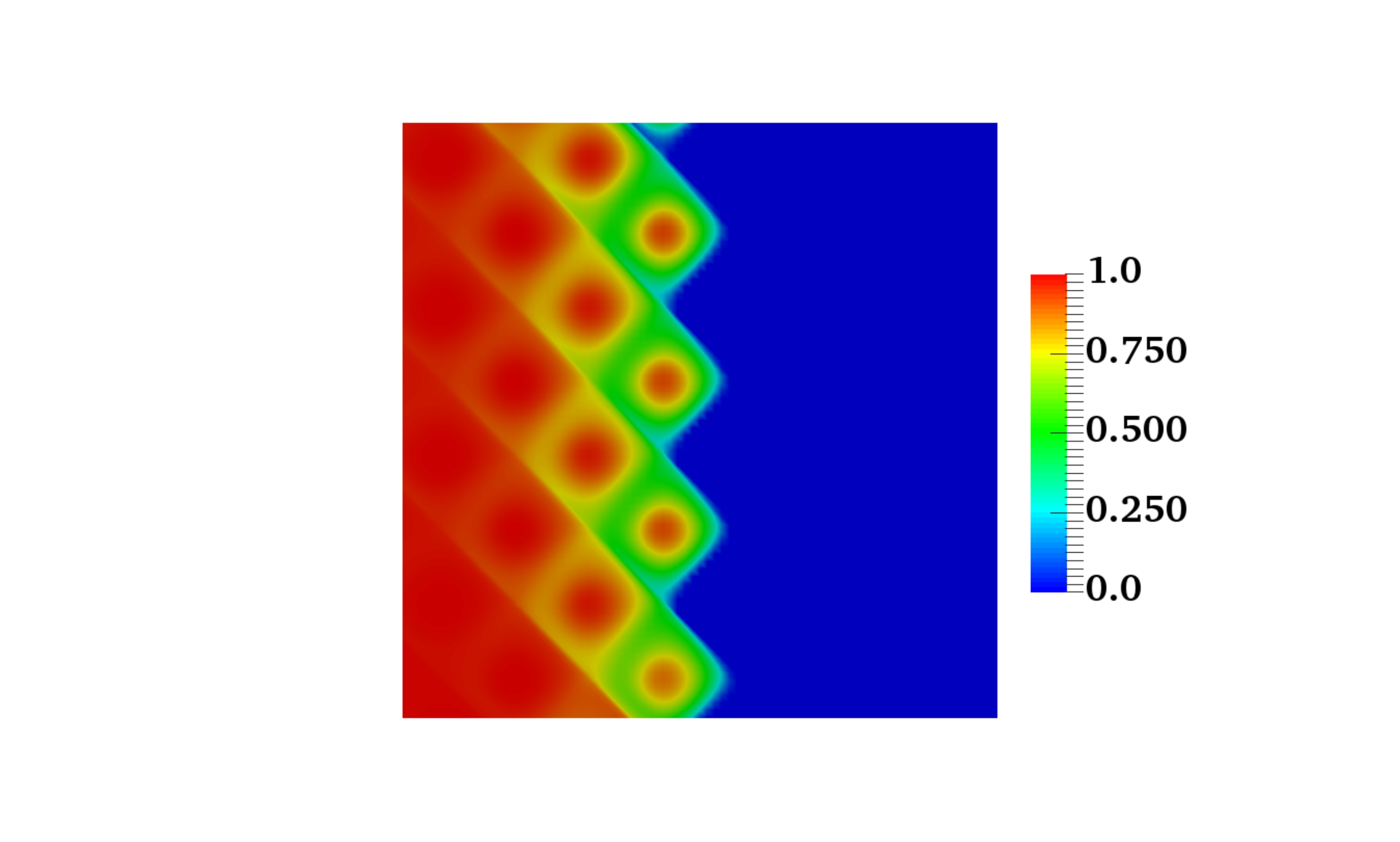}}
  \hspace{-0.5in}
  \subfigure[$\kappa_fL = 4$ and $t = 0.5$]
    {\includegraphics[clip=true,width = 0.37\textwidth]
    {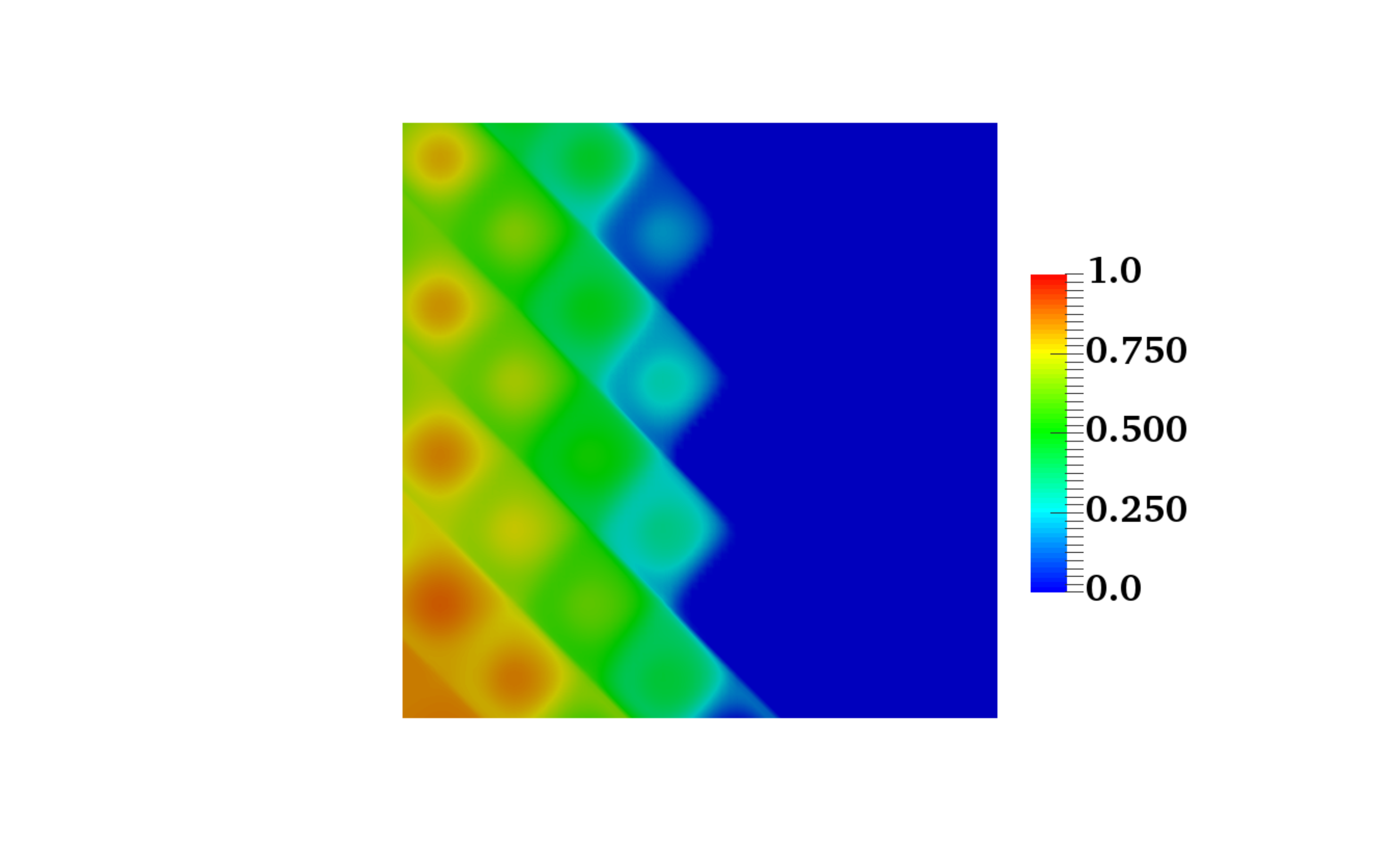}}
  \hspace{-0.5in}
  \subfigure[$\kappa_fL = 4$ and $t = 1.0$]
    {\includegraphics[clip=true,width = 0.37\textwidth]
    {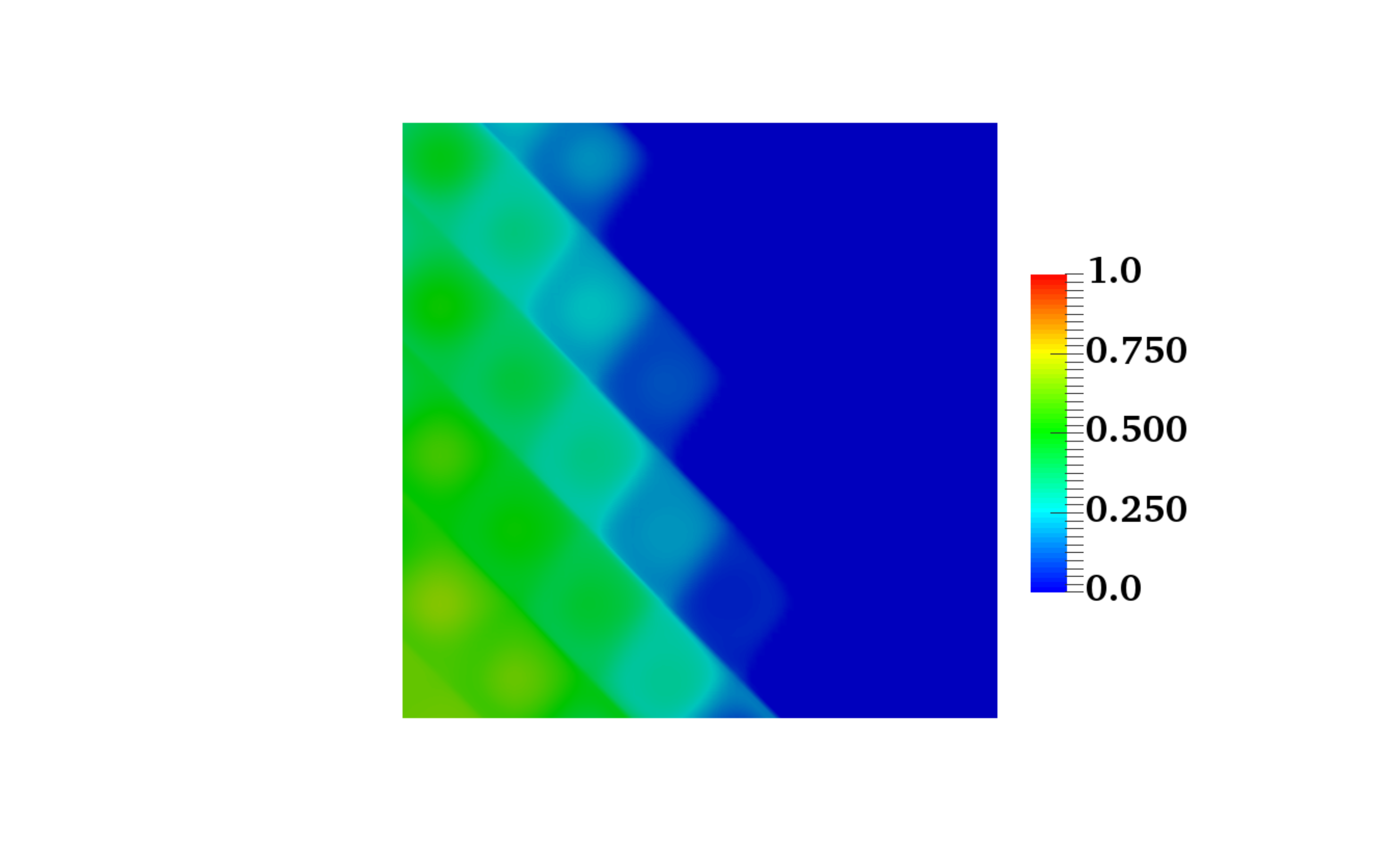}}
  \subfigure[$\kappa_fL = 5$ and $t = 0.1$]
    {\includegraphics[clip=true,width = 0.37\textwidth]
    {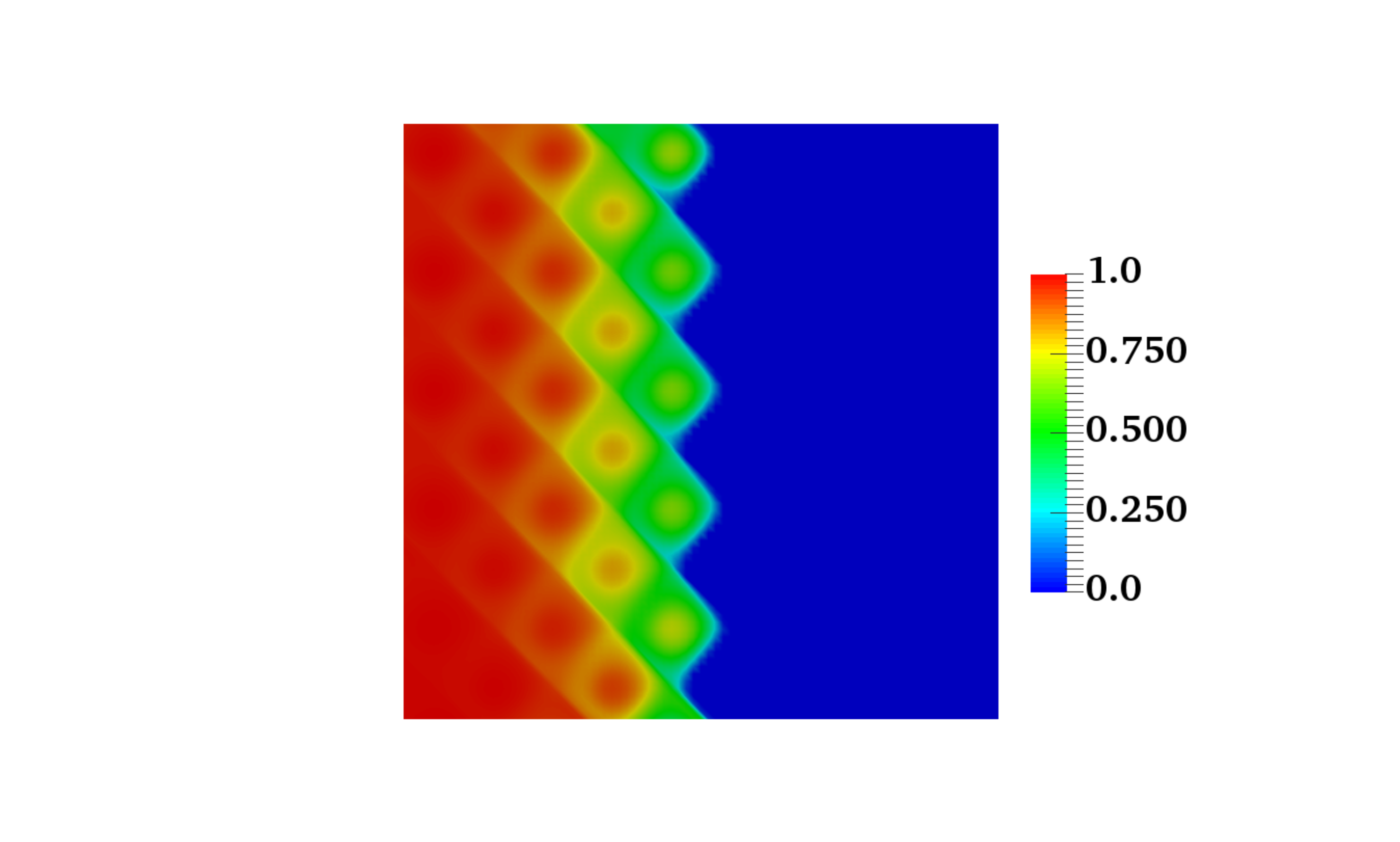}}
  \hspace{-0.5in}
  \subfigure[$\kappa_fL = 5$ and $t = 0.5$]
    {\includegraphics[clip=true,width = 0.37\textwidth]
    {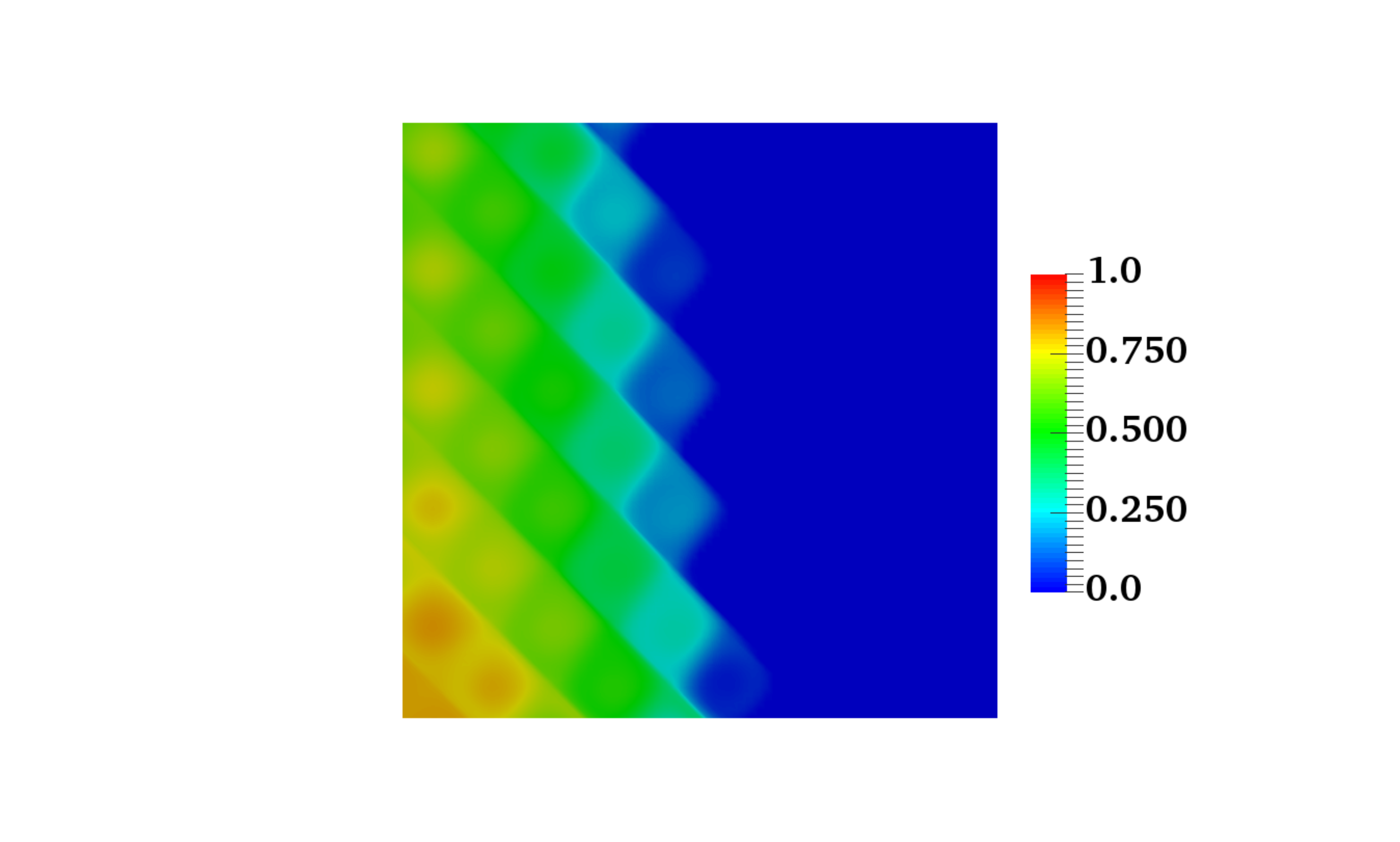}}
  \hspace{-0.5in}
  \subfigure[$\kappa_fL = 5$ and $t = 1.0$]
    {\includegraphics[clip=true,width = 0.37\textwidth]
    {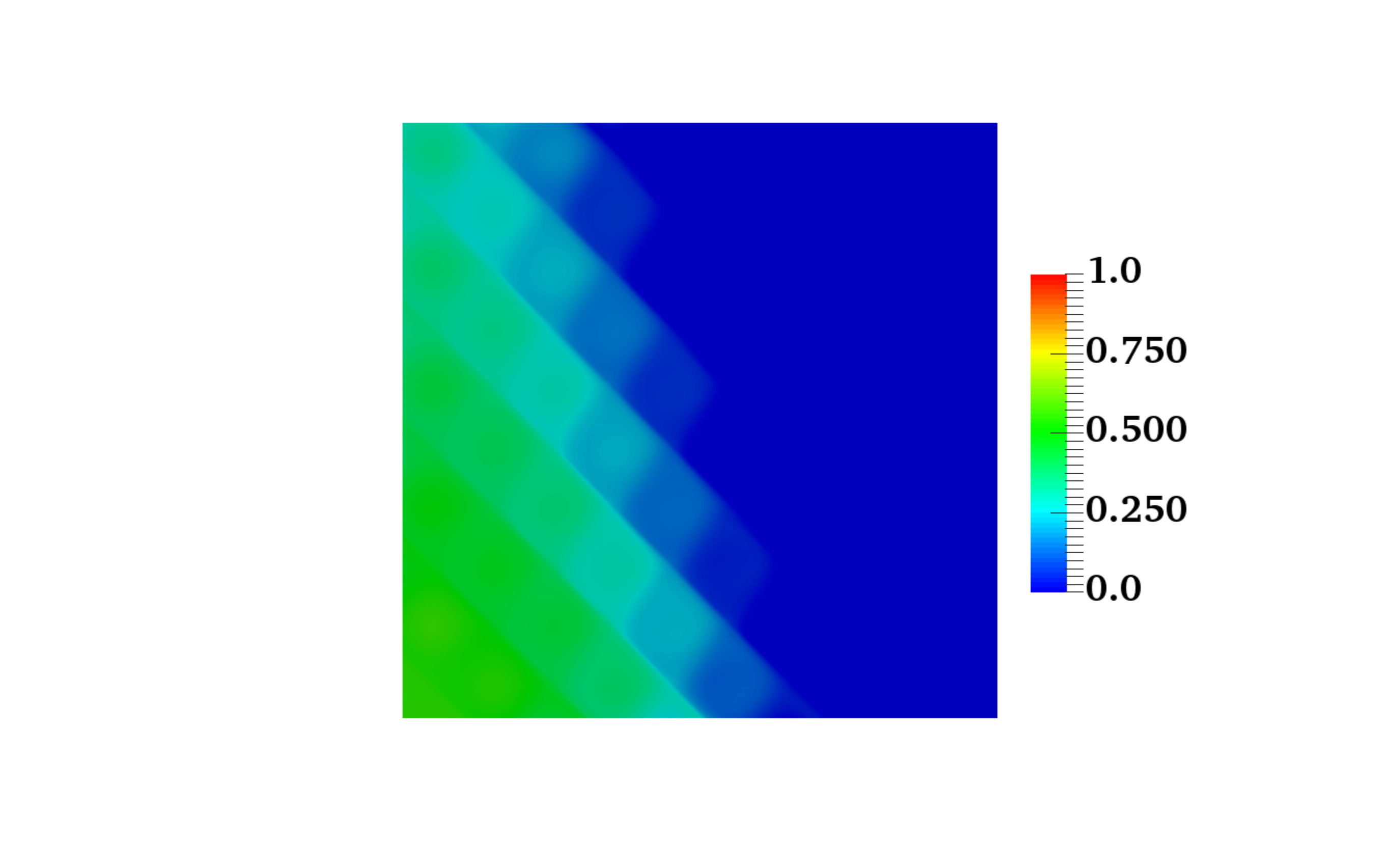}}
  \caption{\textsf{\textbf{Concentration contours of species $A$:}}~These 
    figures show the concentration of species $A$ at times $t = 0.1, \, 0.5,$ 
    and $1.0$. Other input parameters are $\frac{\alpha_L}{\alpha_T} = 10^{4}$, 
    $v_o = 10^{-1}$, $T = 0.1$, and $D_m = 10^{-3}$. From the above figures, 
    it is clear that species $A$ is not consumed in its entirety for $t \in [0,1]$. 
    Moreover, at lower values of $\kappa_fL$ considerable amount of species $A$ 
    remains in the left half of the domain as compared to higher values of 
    $\kappa_fL$.
  \label{Fig:Contours_A_Difftimes}}
\end{figure}

%-------------------------------------;
%  Figure-7: Contours of Species-$B$  ;
%-------------------------------------;
\begin{figure}
  \centering
  \subfigure[$\kappa_fL = 2$ and $t = 0.1$]
    {\includegraphics[clip=true,width = 0.37\textwidth]
    {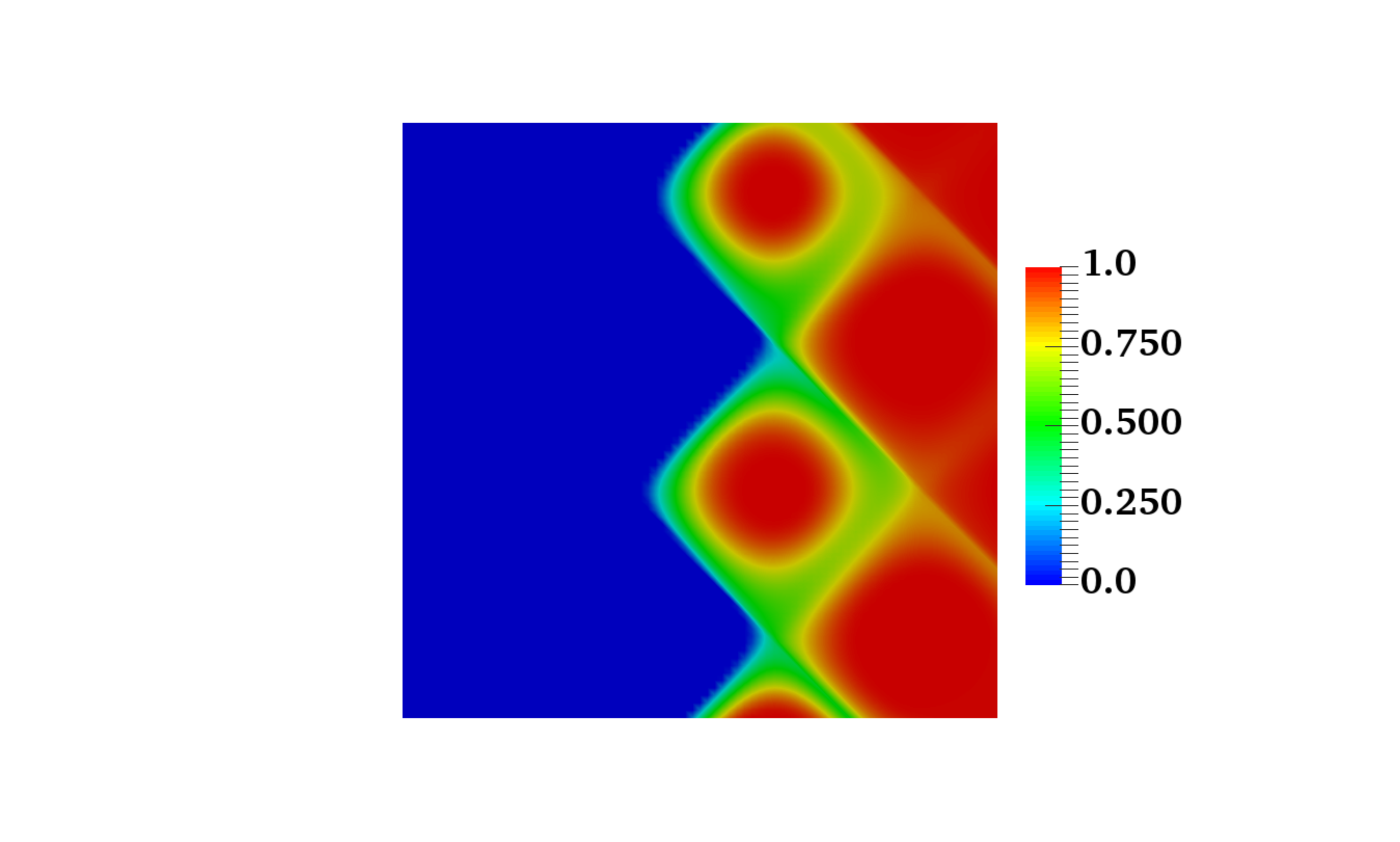}}
  \hspace{-0.5in}
  \subfigure[$\kappa_fL = 2$ and $t = 0.5$]
    {\includegraphics[clip=true,width = 0.37\textwidth]
    {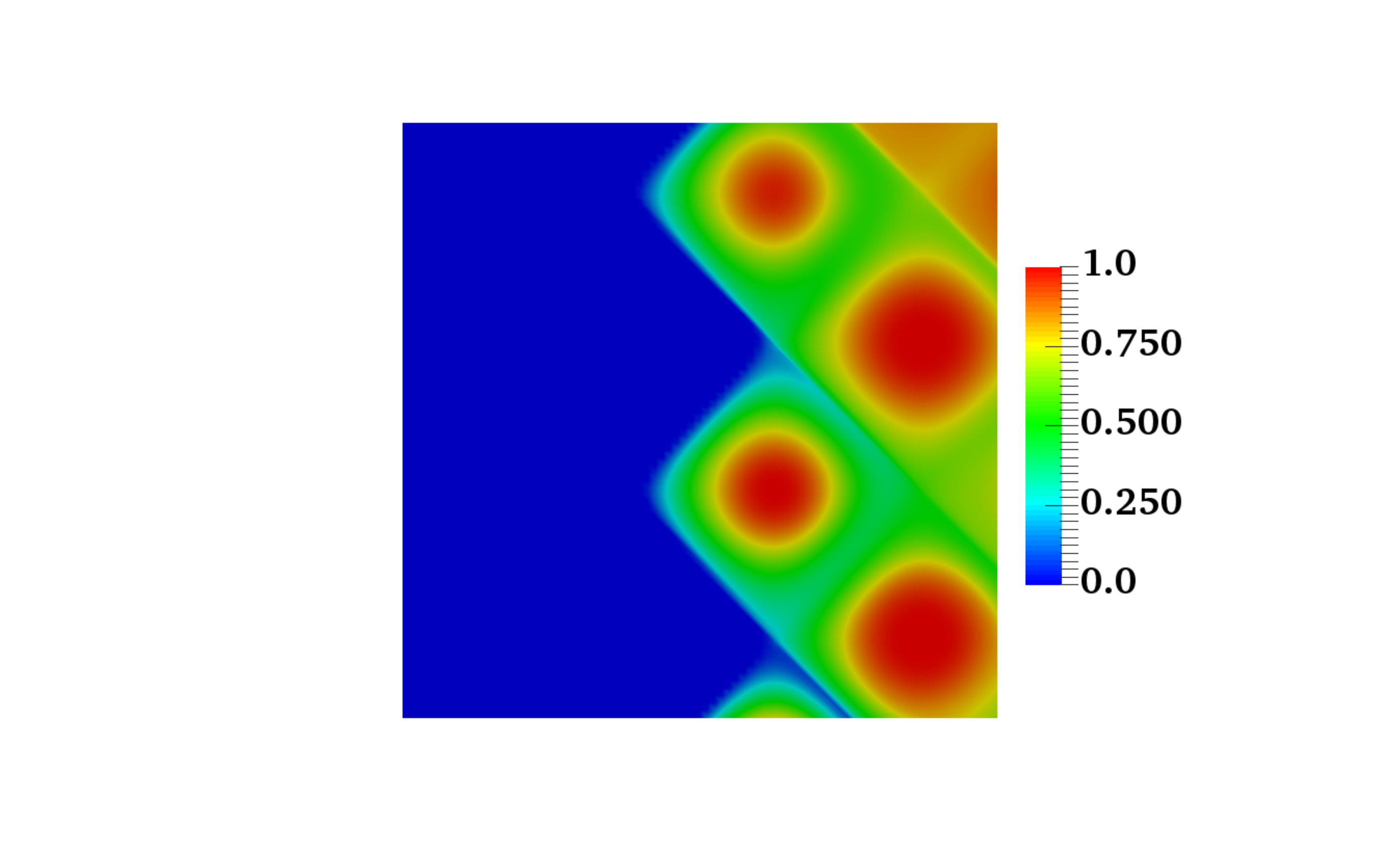}}
  \hspace{-0.5in}
  \subfigure[$\kappa_fL = 2$ and $t = 1.0$]
    {\includegraphics[clip=true,width = 0.37\textwidth]
    {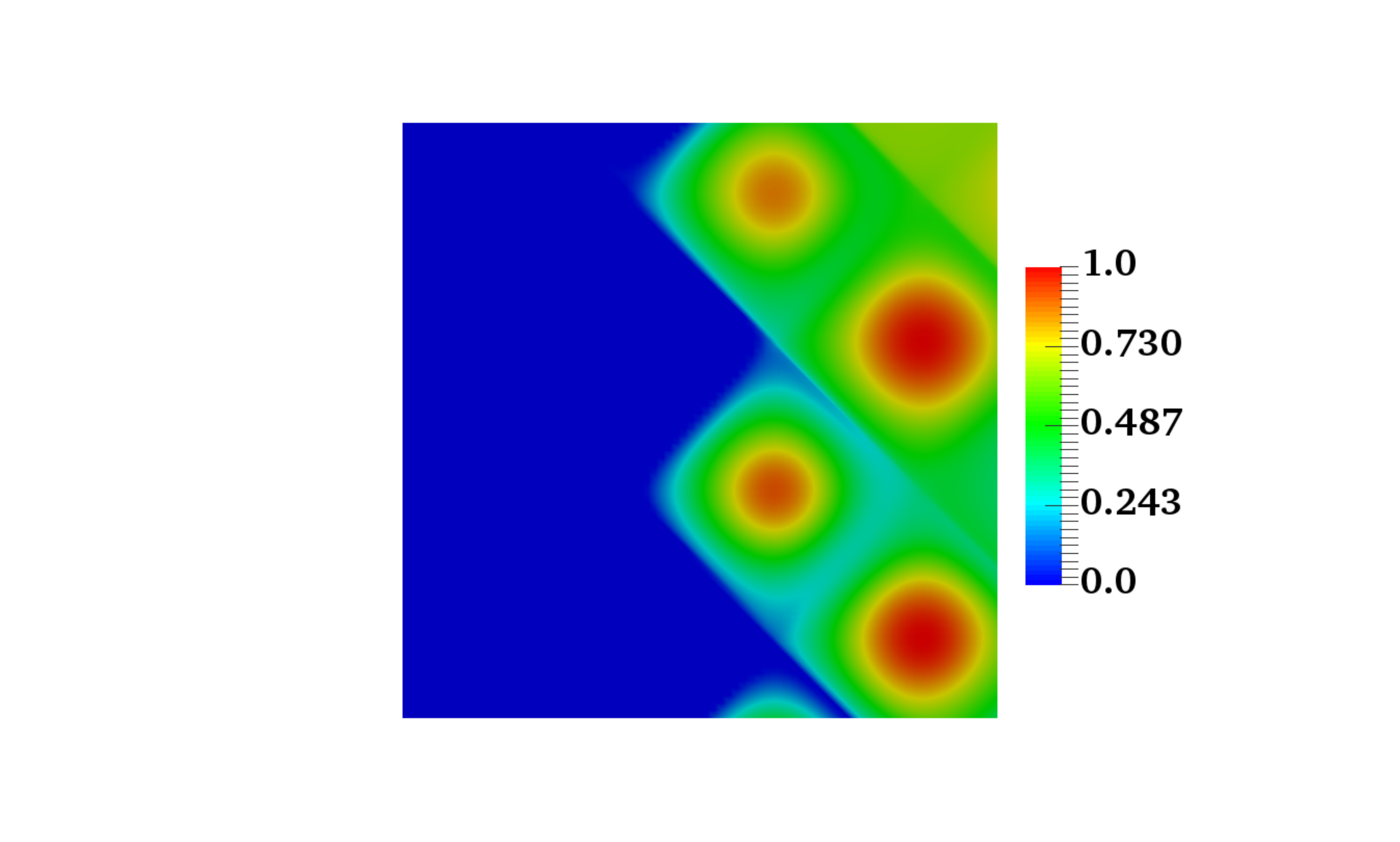}}
  \subfigure[$\kappa_fL = 3$ and $t = 0.1$]
    {\includegraphics[clip=true,width = 0.37\textwidth]
    {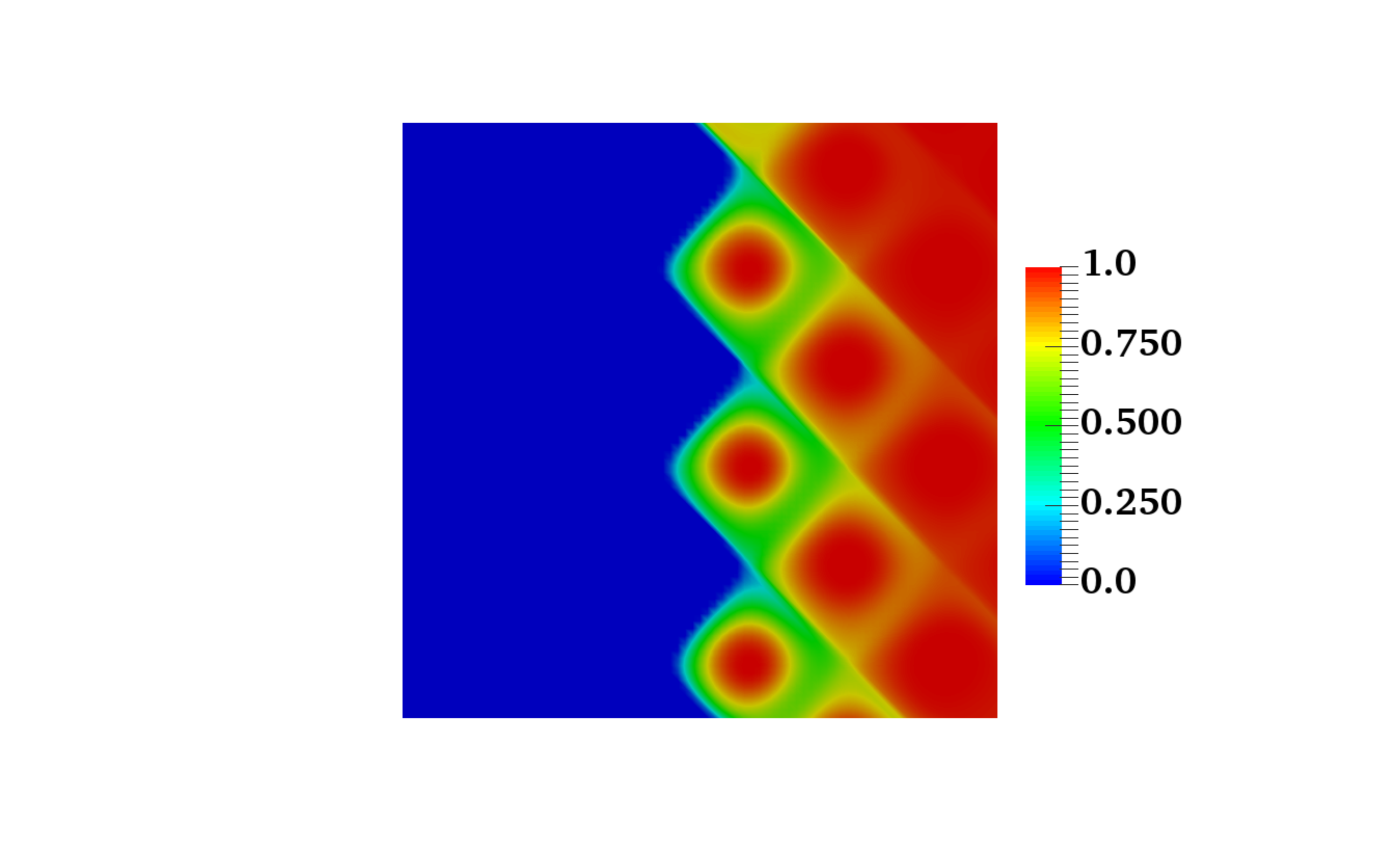}}
  \hspace{-0.5in}
  \subfigure[$\kappa_fL = 3$ and $t = 0.5$]
    {\includegraphics[clip=true,width = 0.37\textwidth]
    {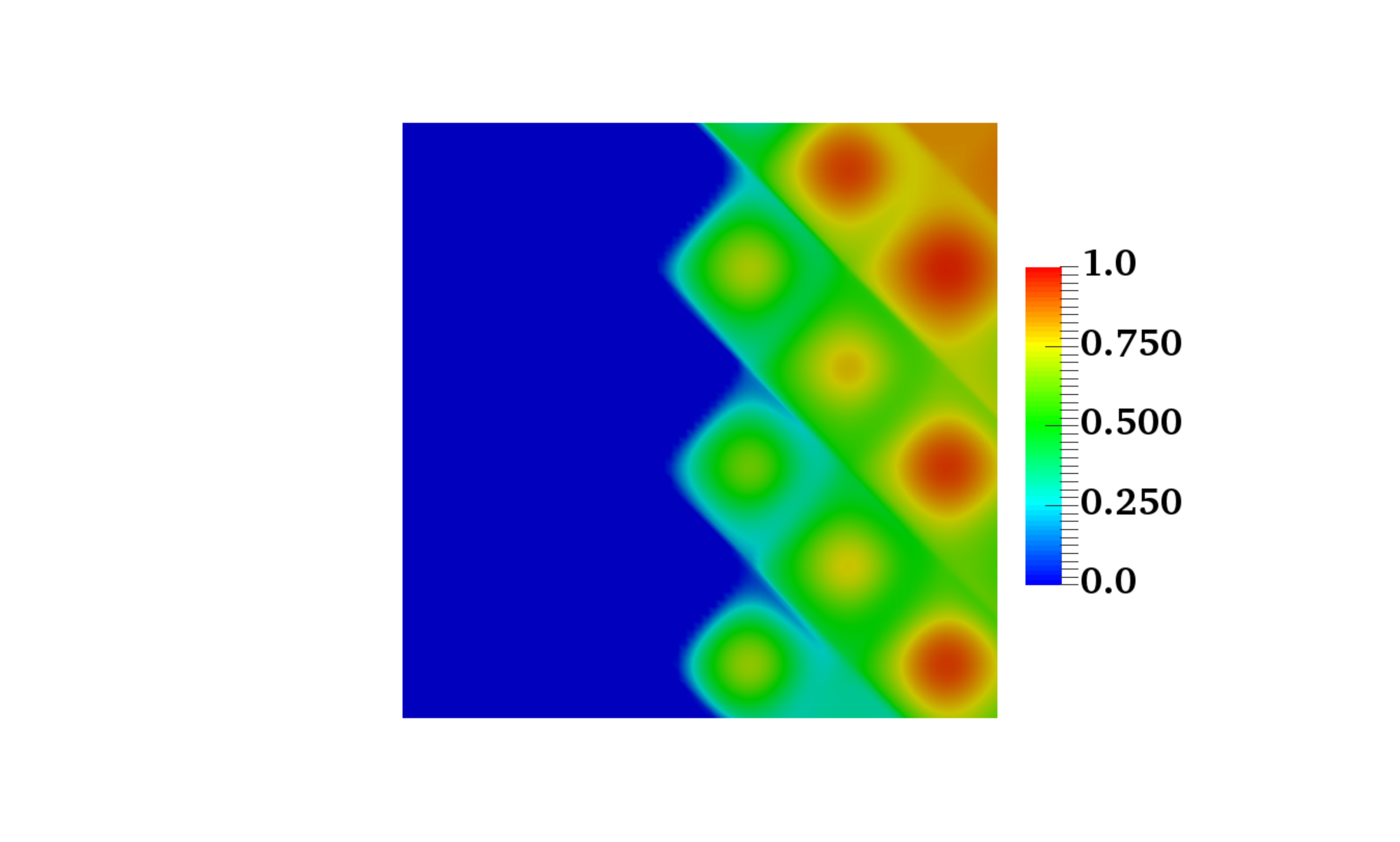}}
  \hspace{-0.5in}
  \subfigure[$\kappa_fL = 3$ and $t = 1.0$]
    {\includegraphics[clip=true,width = 0.37\textwidth]
    {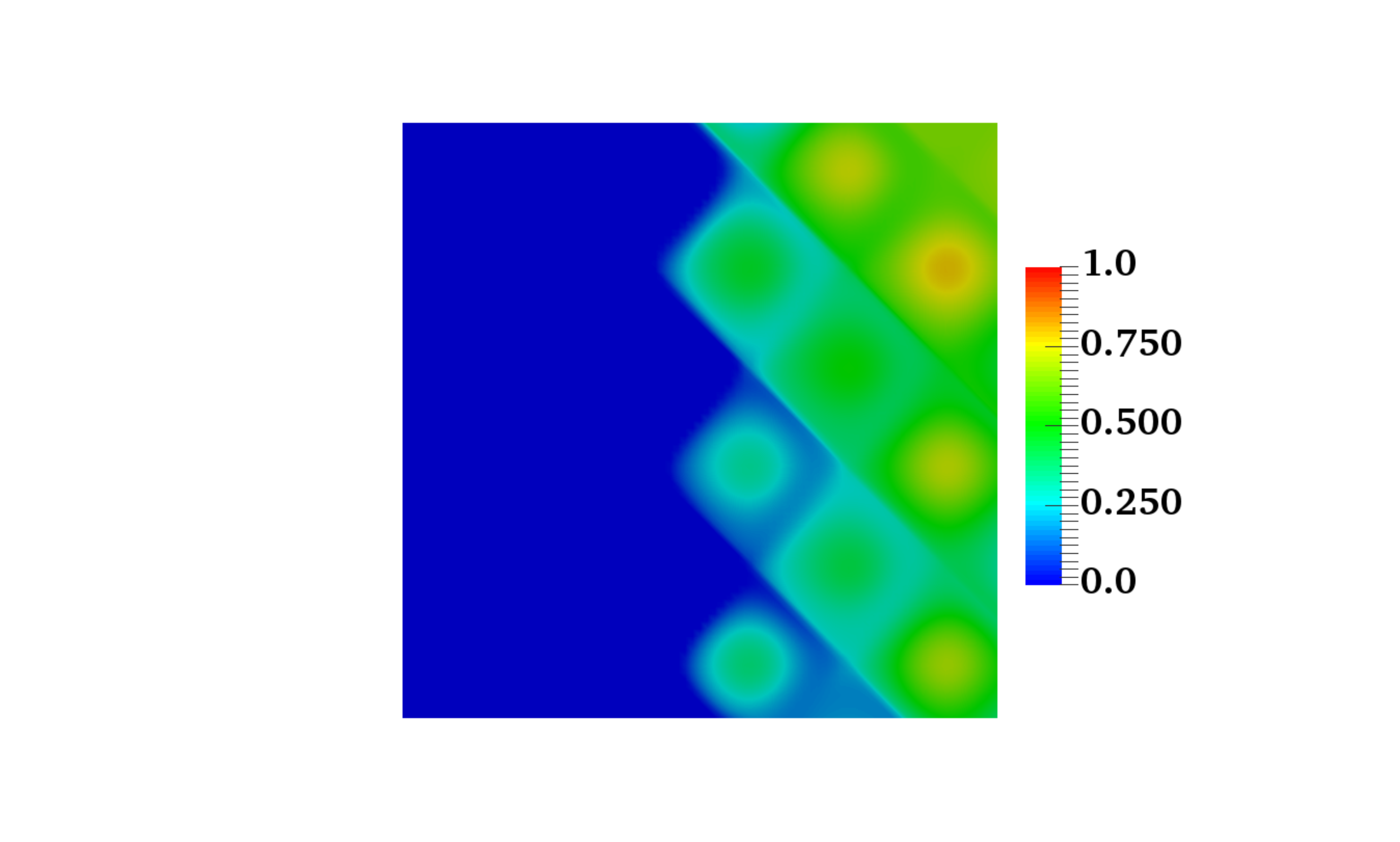}}
  \subfigure[$\kappa_fL = 4$ and $t = 0.1$]
    {\includegraphics[clip=true,width = 0.37\textwidth]
    {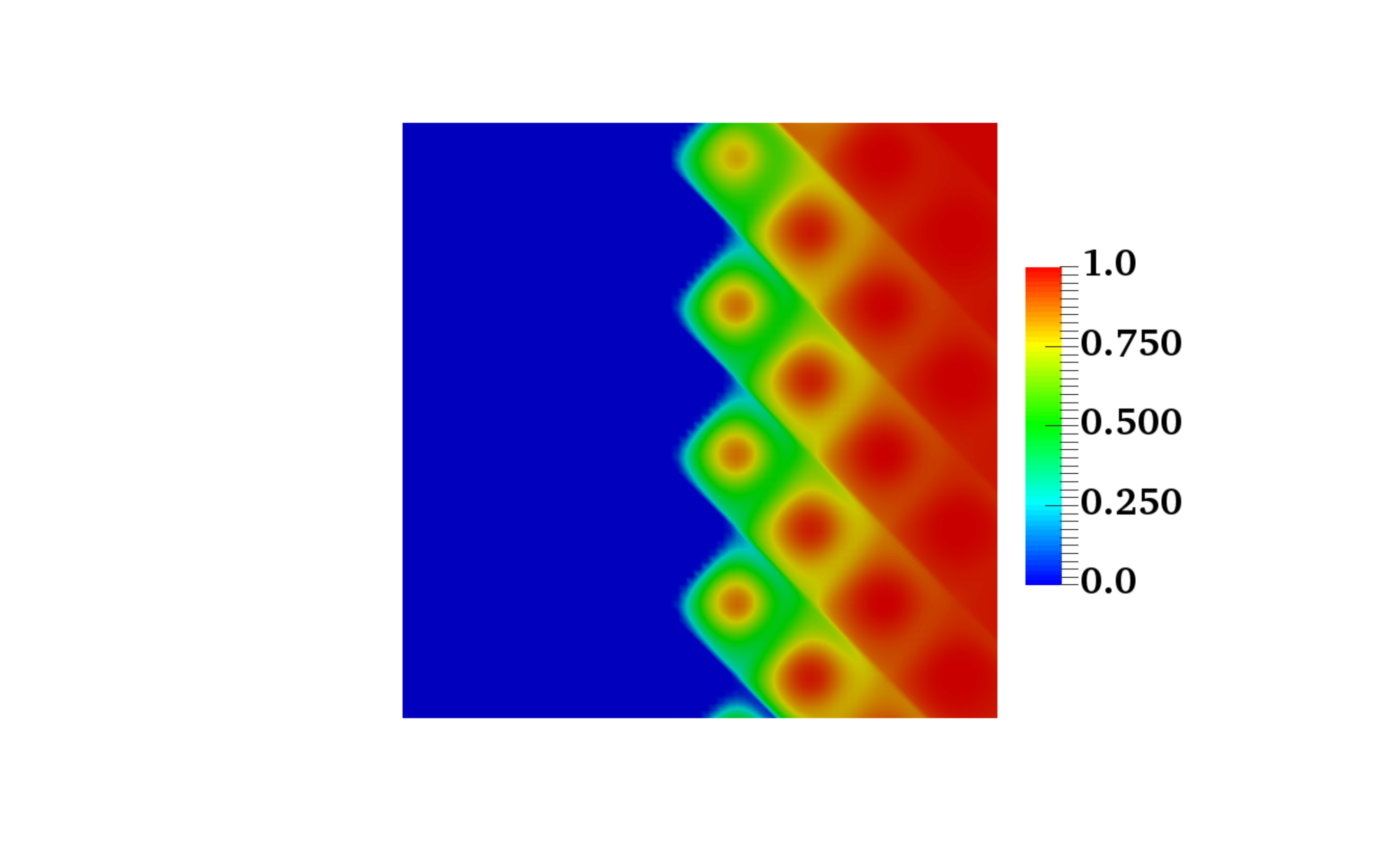}}
  \hspace{-0.5in}
  \subfigure[$\kappa_fL = 4$ and $t = 0.5$]
    {\includegraphics[clip=true,width = 0.37\textwidth]
    {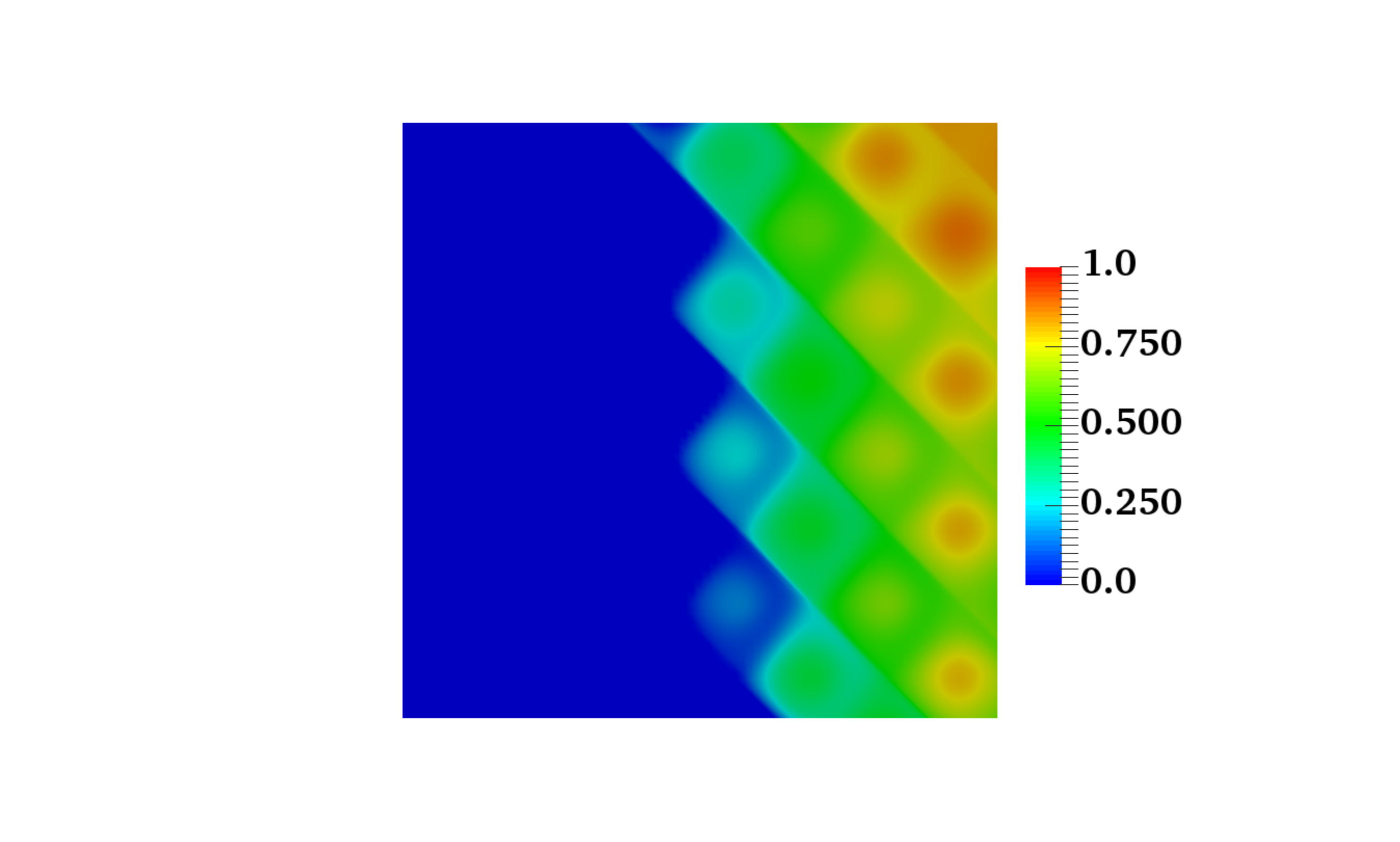}}
  \hspace{-0.5in}
  \subfigure[$\kappa_fL = 4$ and $t = 1.0$]
    {\includegraphics[clip=true,width = 0.37\textwidth]
    {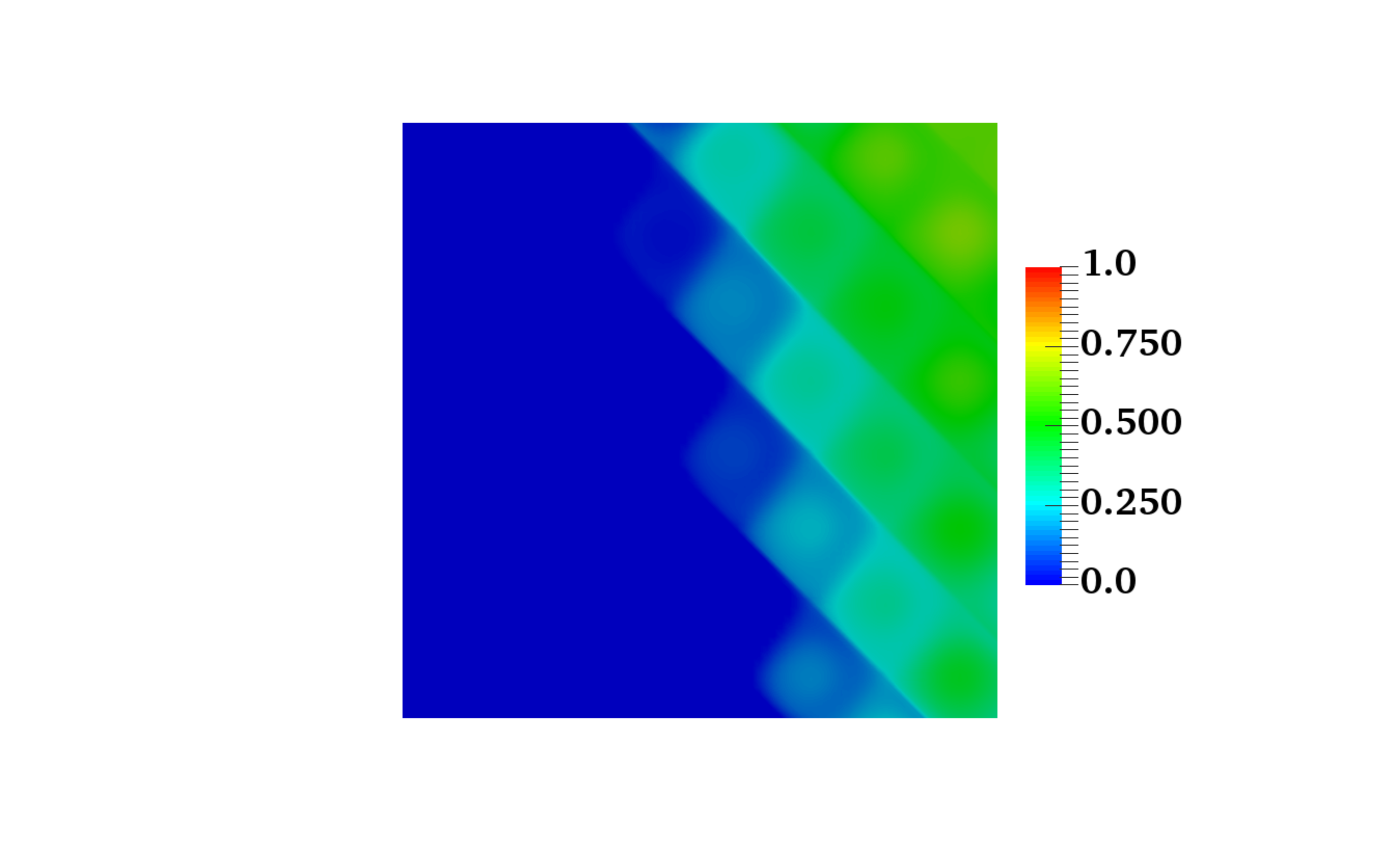}}
  \subfigure[$\kappa_fL = 5$ and $t = 0.1$]
    {\includegraphics[clip=true,width = 0.37\textwidth]
    {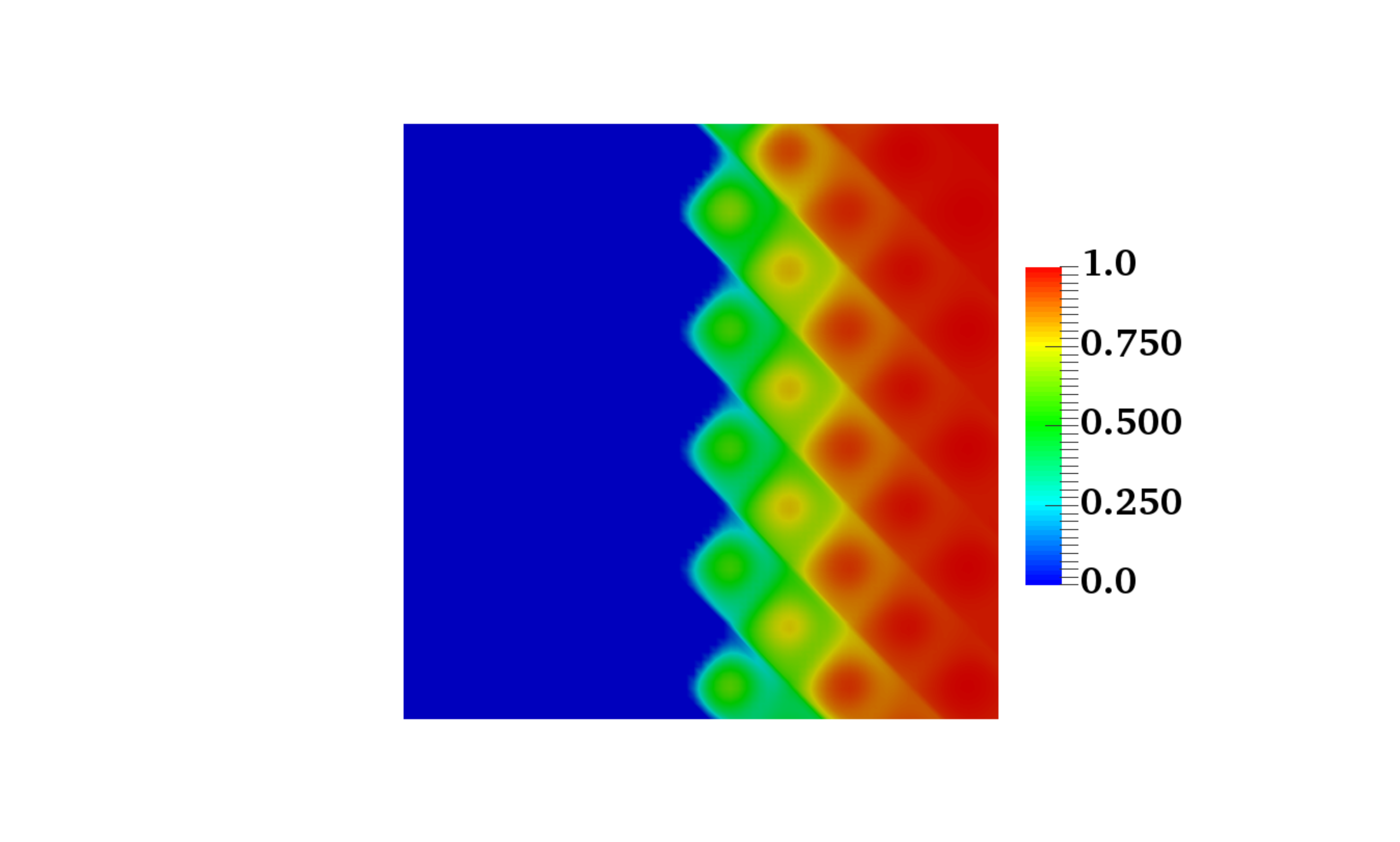}}
  \hspace{-0.5in}
  \subfigure[$\kappa_fL = 5$ and $t = 0.5$]
    {\includegraphics[clip=true,width = 0.37\textwidth]
    {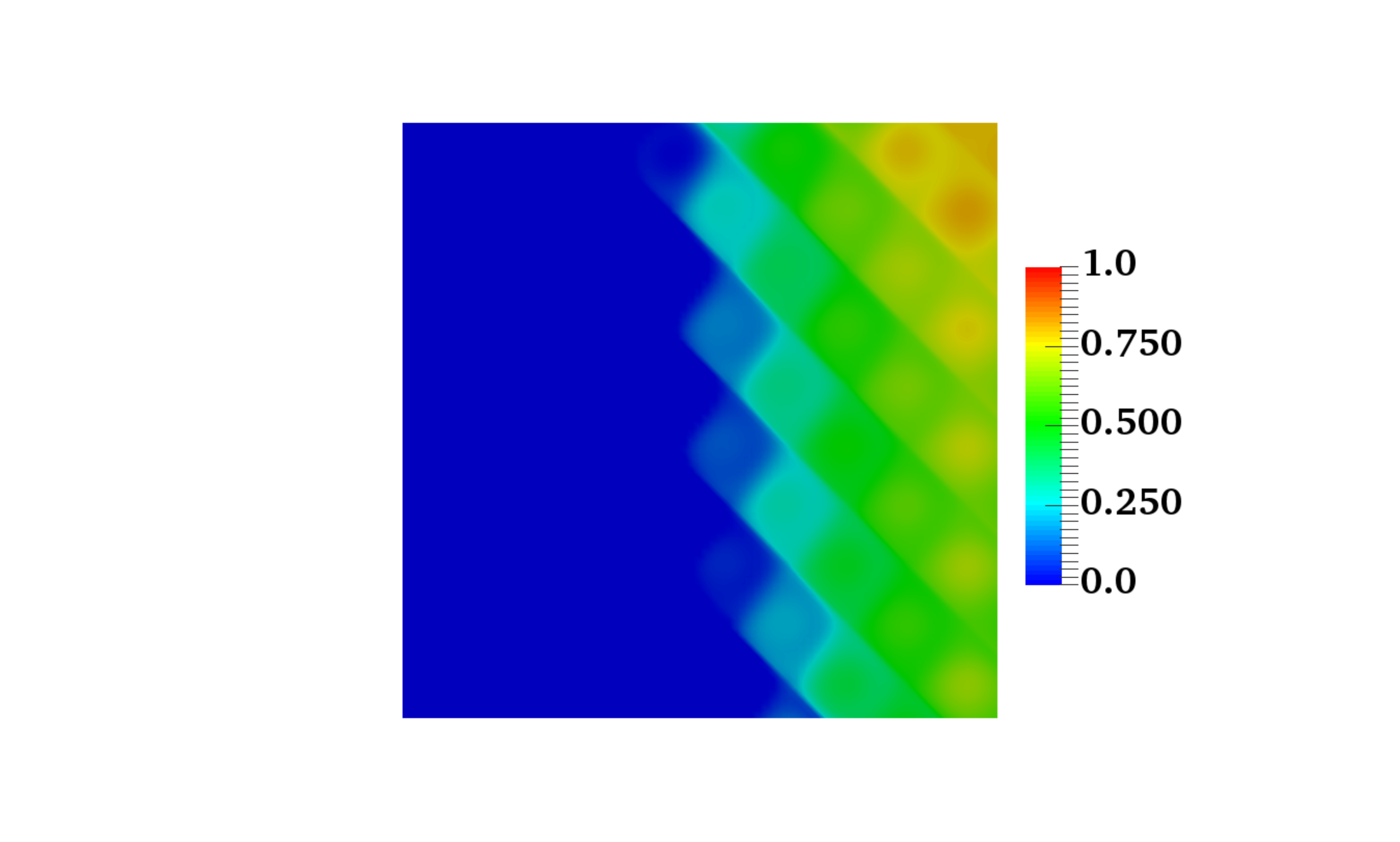}}
  \hspace{-0.5in}
  \subfigure[$\kappa_fL = 5$ and $t = 1.0$]
    {\includegraphics[clip=true,width = 0.37\textwidth]
    {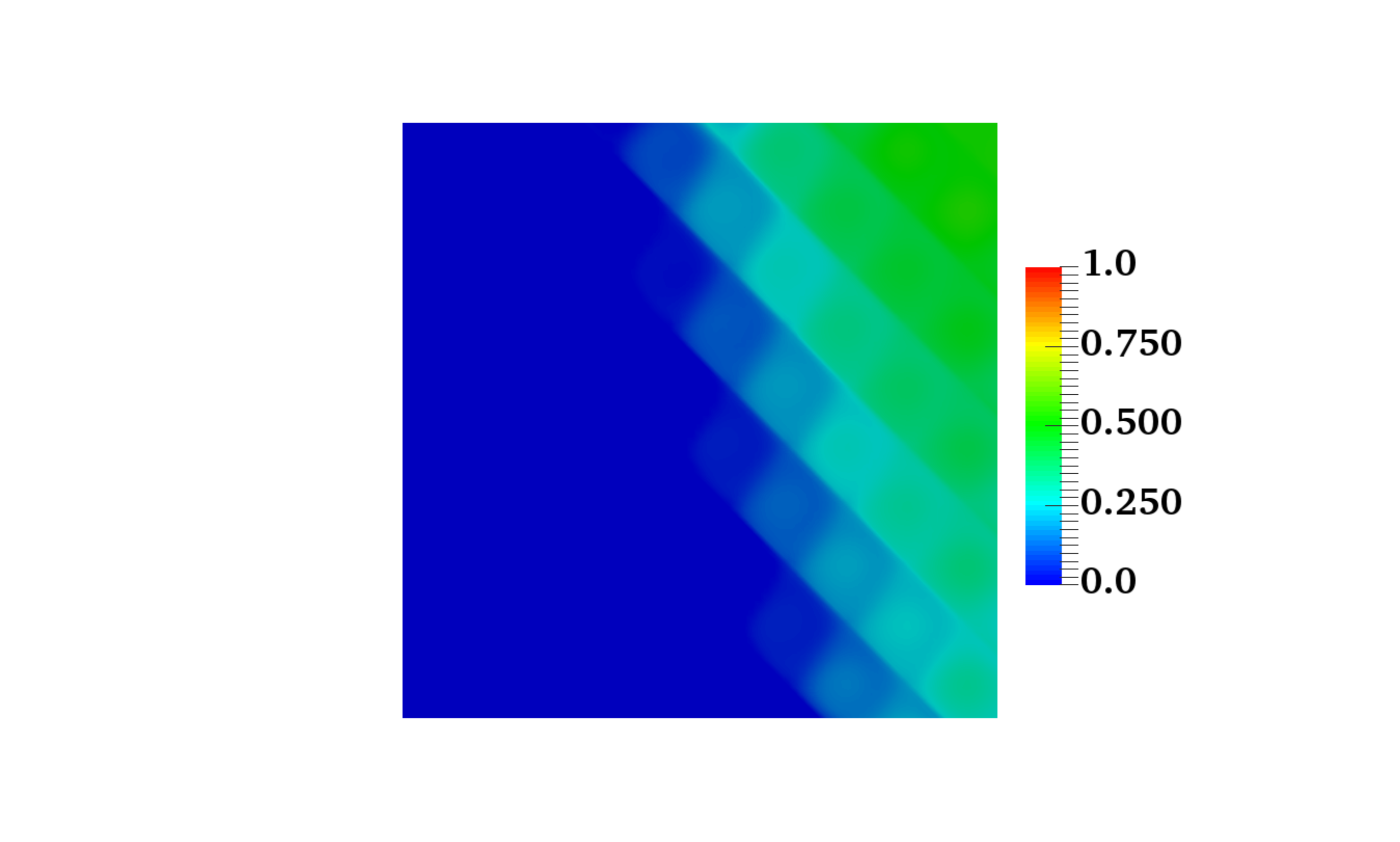}}
  \caption{\textsf{\textbf{Concentration contours of species $B$:}}~These 
    figures show the concentration of species $B$ at times $t = 0.1, \, 0.5,$ 
    and $1.0$. Other input parameters are $\frac{\alpha_L}{\alpha_T} = 10^{4}$, 
    $v_o = 10^{-1}$, $T = 0.1$, and $D_m = 10^{-3}$. Similar to species $A$,
    species $B$ is not consumed in its entirety for $t \in [0,1]$. For higher 
    $\kappa_fL$ values, we see more of species $B$ being consumed in right half 
    of the domain. From this figure, it is clear that for enhanced mixing, we need 
    small-scale structures in the velocity field (meaning higher values of $\kappa_fL$).
  \label{Fig:Contours_B_Difftimes}}
\end{figure}

%-------------------------------------;
%  Figure-8: Contours of Species-$C$  ;
%-------------------------------------;
\begin{figure}
  \centering
  \subfigure[$\kappa_fL = 2$ and $t = 0.1$]
    {\includegraphics[clip=true,width = 0.37\textwidth]
    {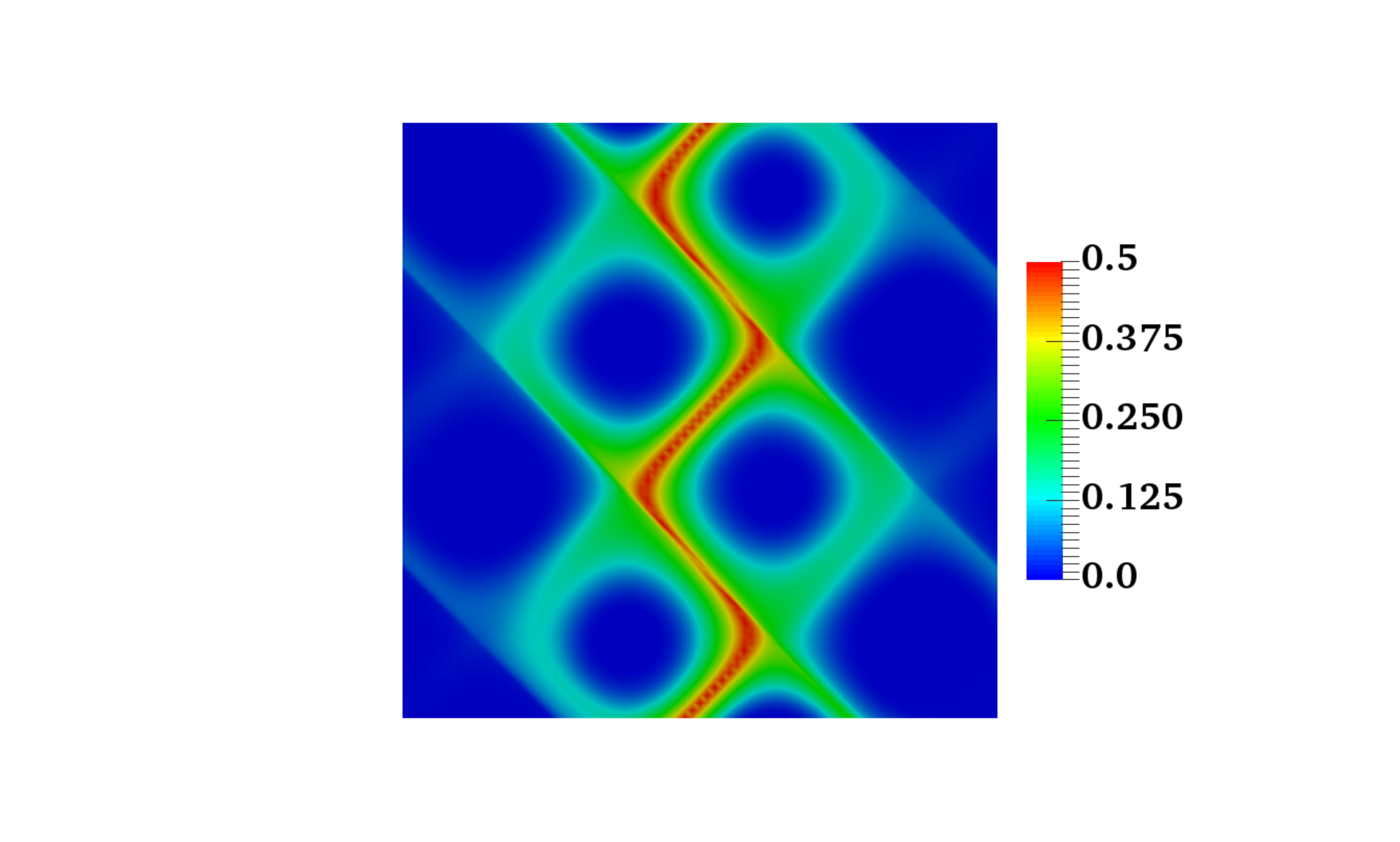}}
  \hspace{-0.5in}
  \subfigure[$\kappa_fL = 2$ and $t = 0.5$]
    {\includegraphics[clip=true,width = 0.37\textwidth]
    {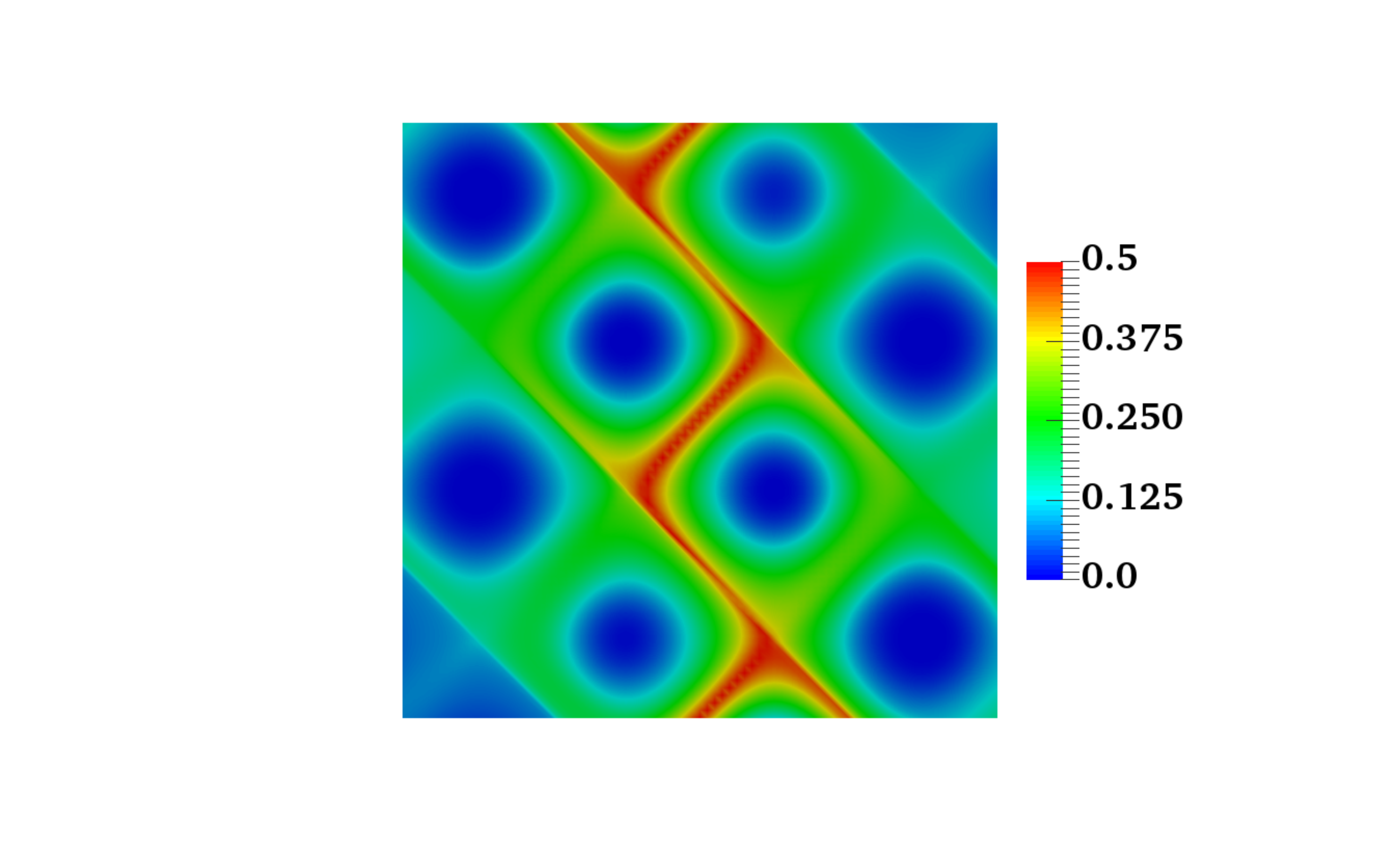}}
  \hspace{-0.5in}
  \subfigure[$\kappa_fL = 2$ and $t = 1.0$]
    {\includegraphics[clip=true,width = 0.37\textwidth]
    {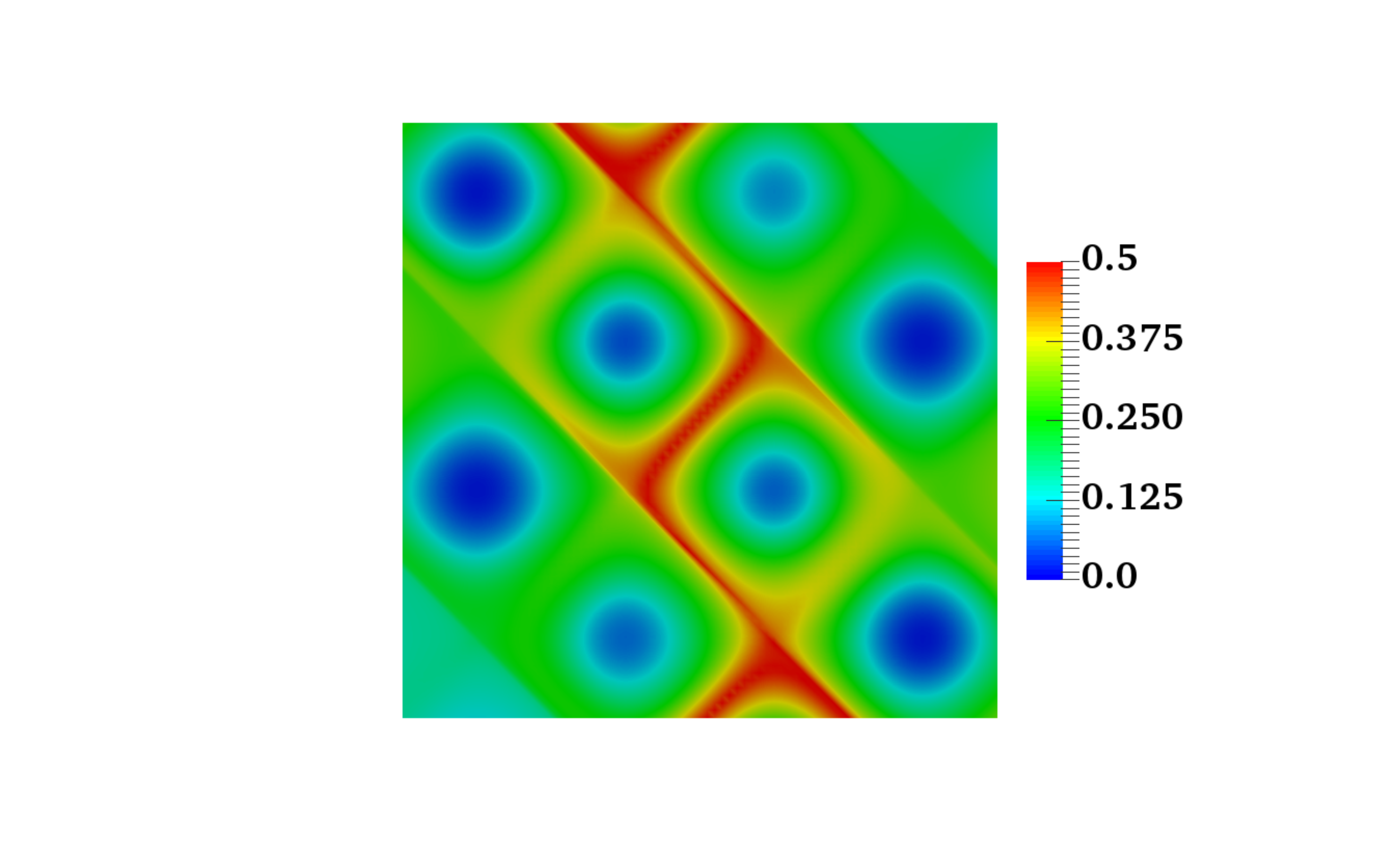}}
  \subfigure[$\kappa_fL = 3$ and $t = 0.1$]
    {\includegraphics[clip=true,width = 0.37\textwidth]
    {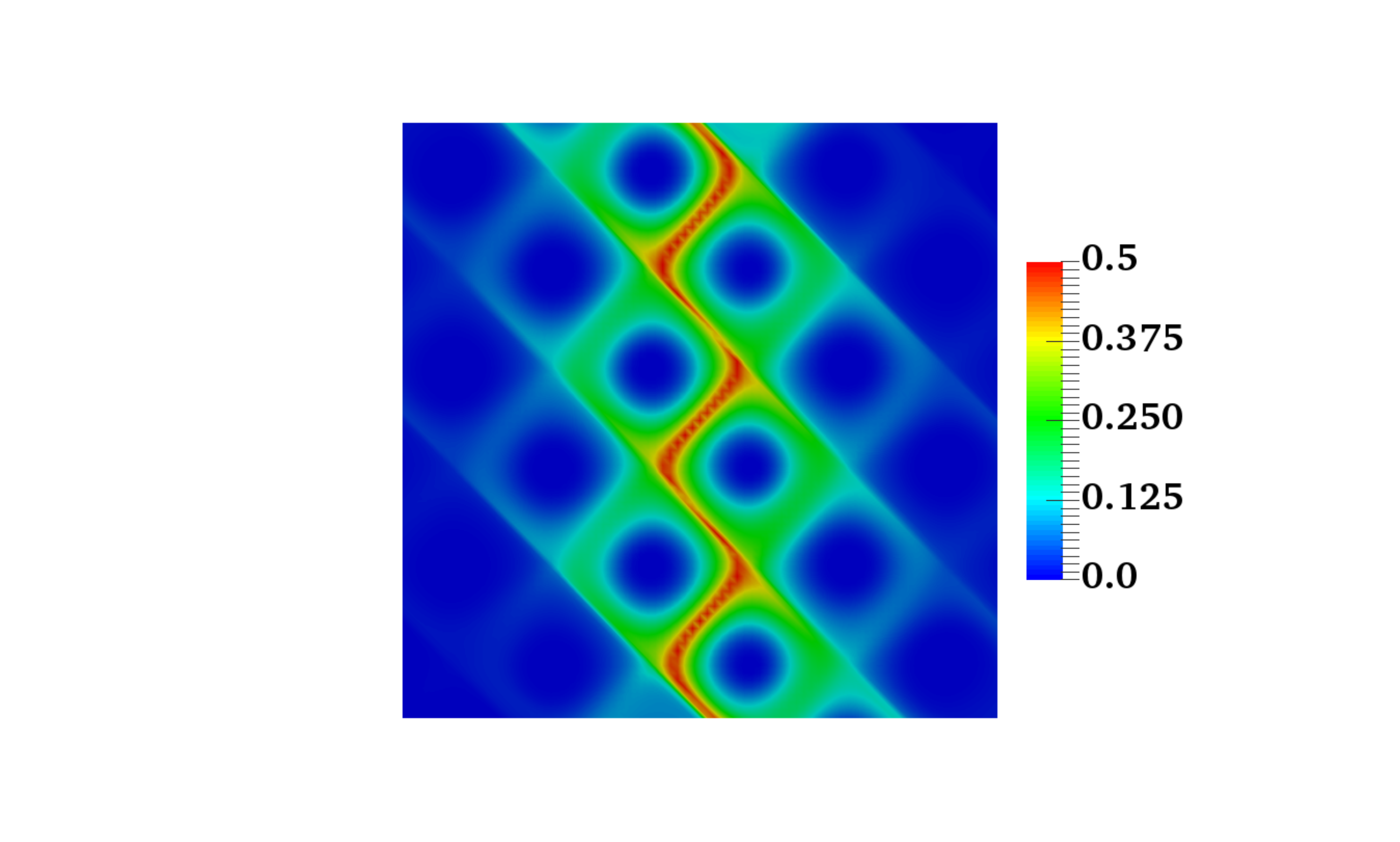}}
  \hspace{-0.5in}
  \subfigure[$\kappa_fL = 3$ and $t = 0.5$]
    {\includegraphics[clip=true,width = 0.37\textwidth]
    {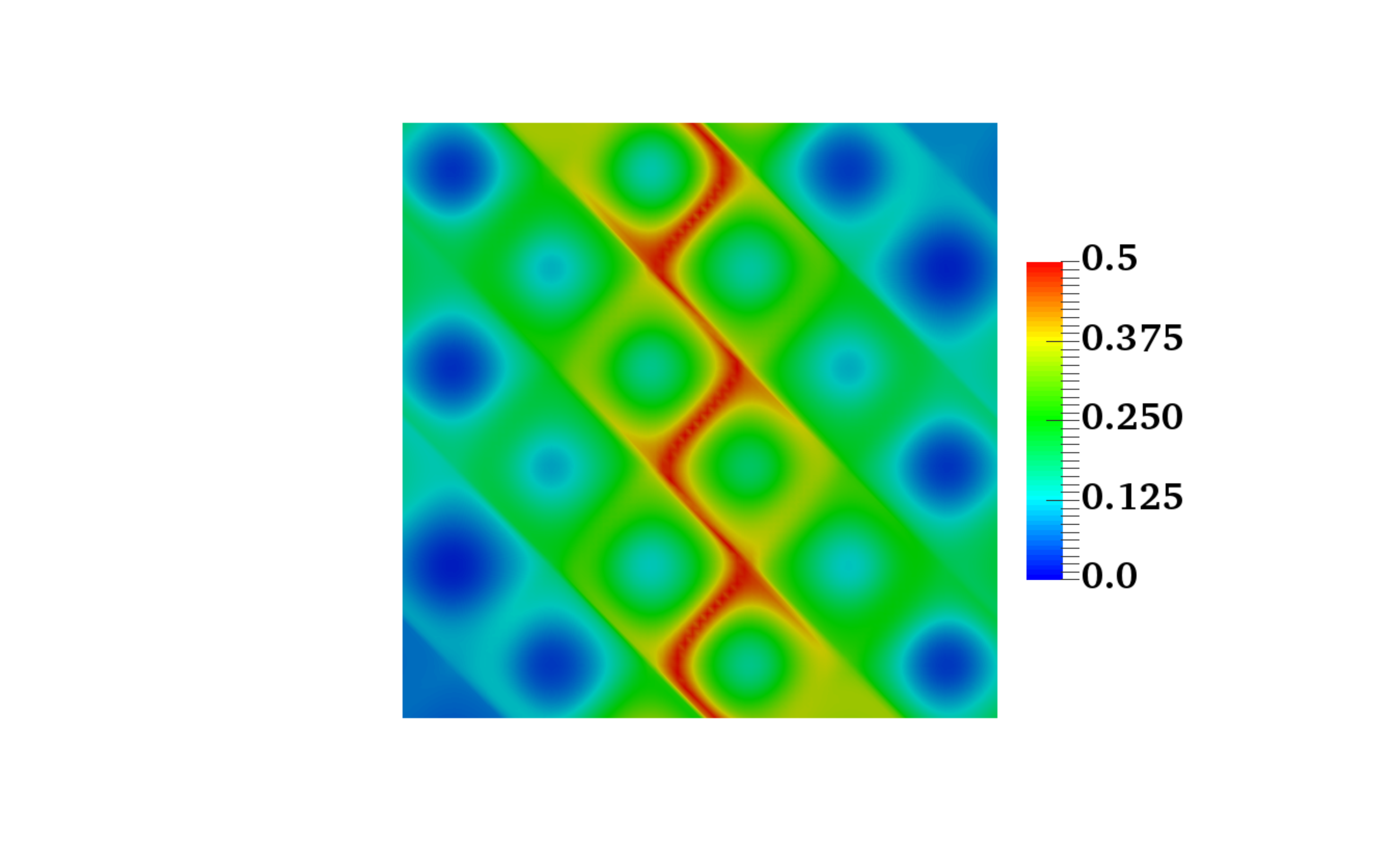}}
  \hspace{-0.5in}
  \subfigure[$\kappa_fL = 3$ and $t = 1.0$]
    {\includegraphics[clip=true,width = 0.37\textwidth]
    {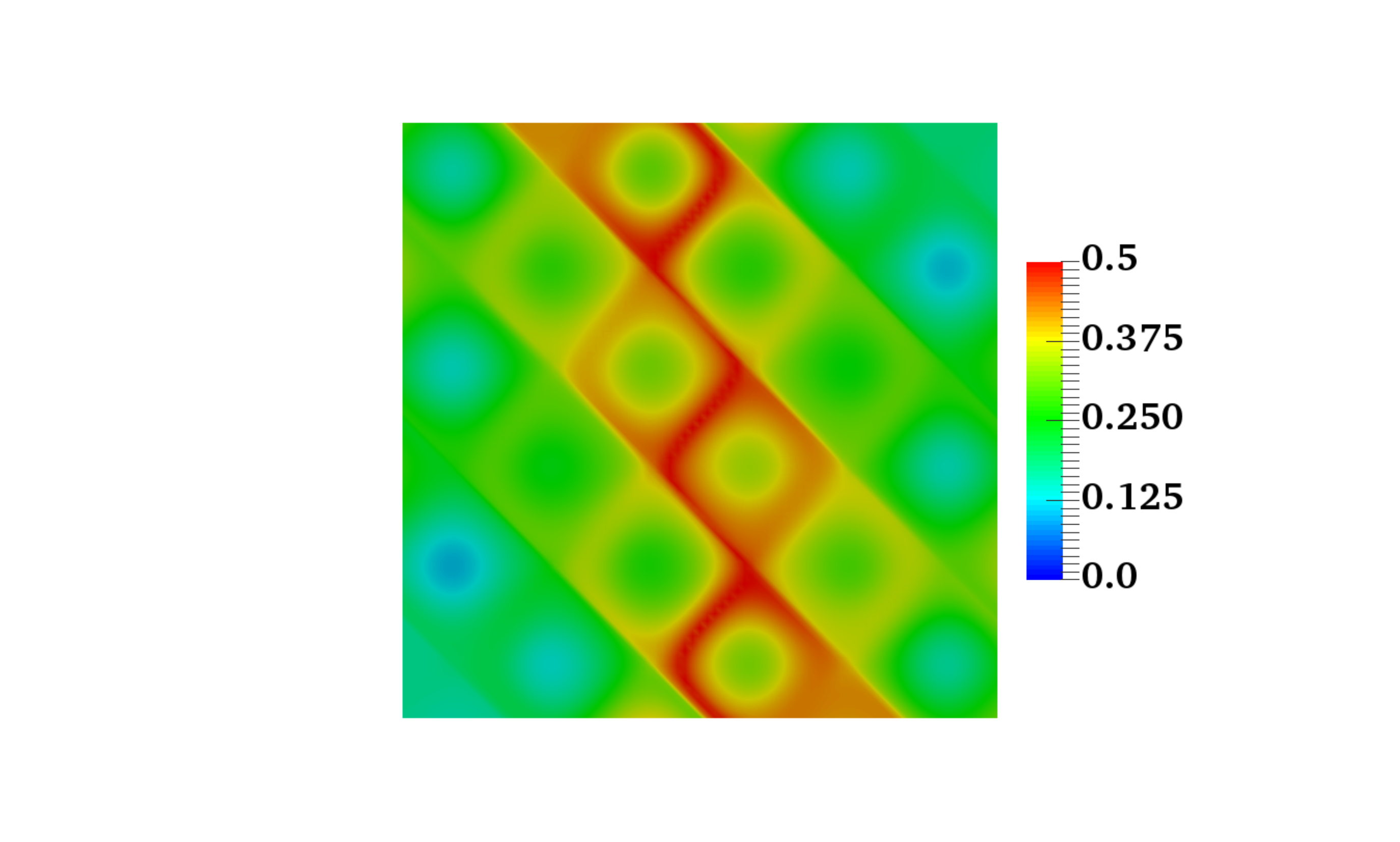}}
  \subfigure[$\kappa_fL = 4$ and $t = 0.1$]
    {\includegraphics[clip=true,width = 0.37\textwidth]
    {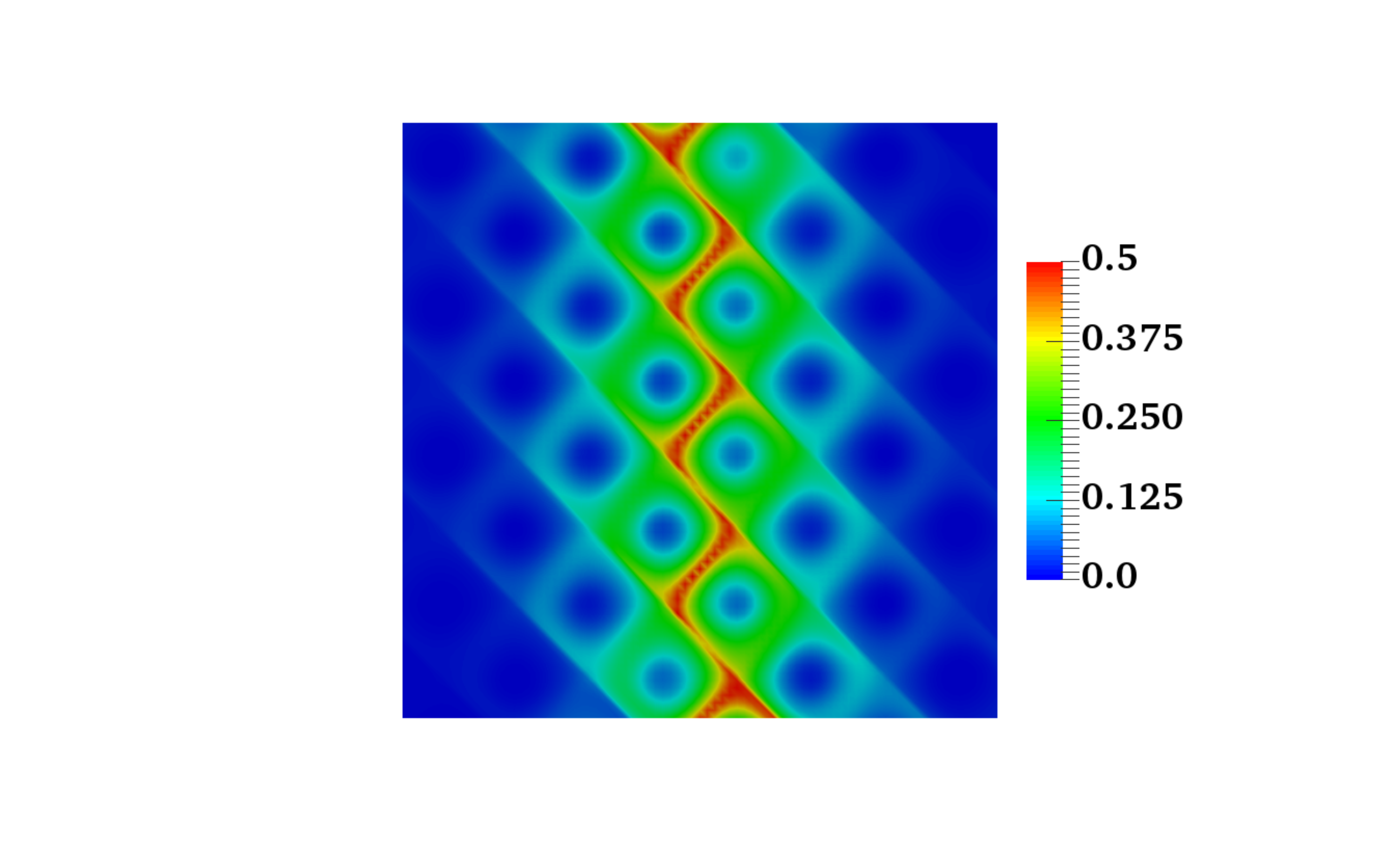}}
  \hspace{-0.5in}
  \subfigure[$\kappa_fL = 4$ and $t = 0.5$]
    {\includegraphics[clip=true,width = 0.37\textwidth]
    {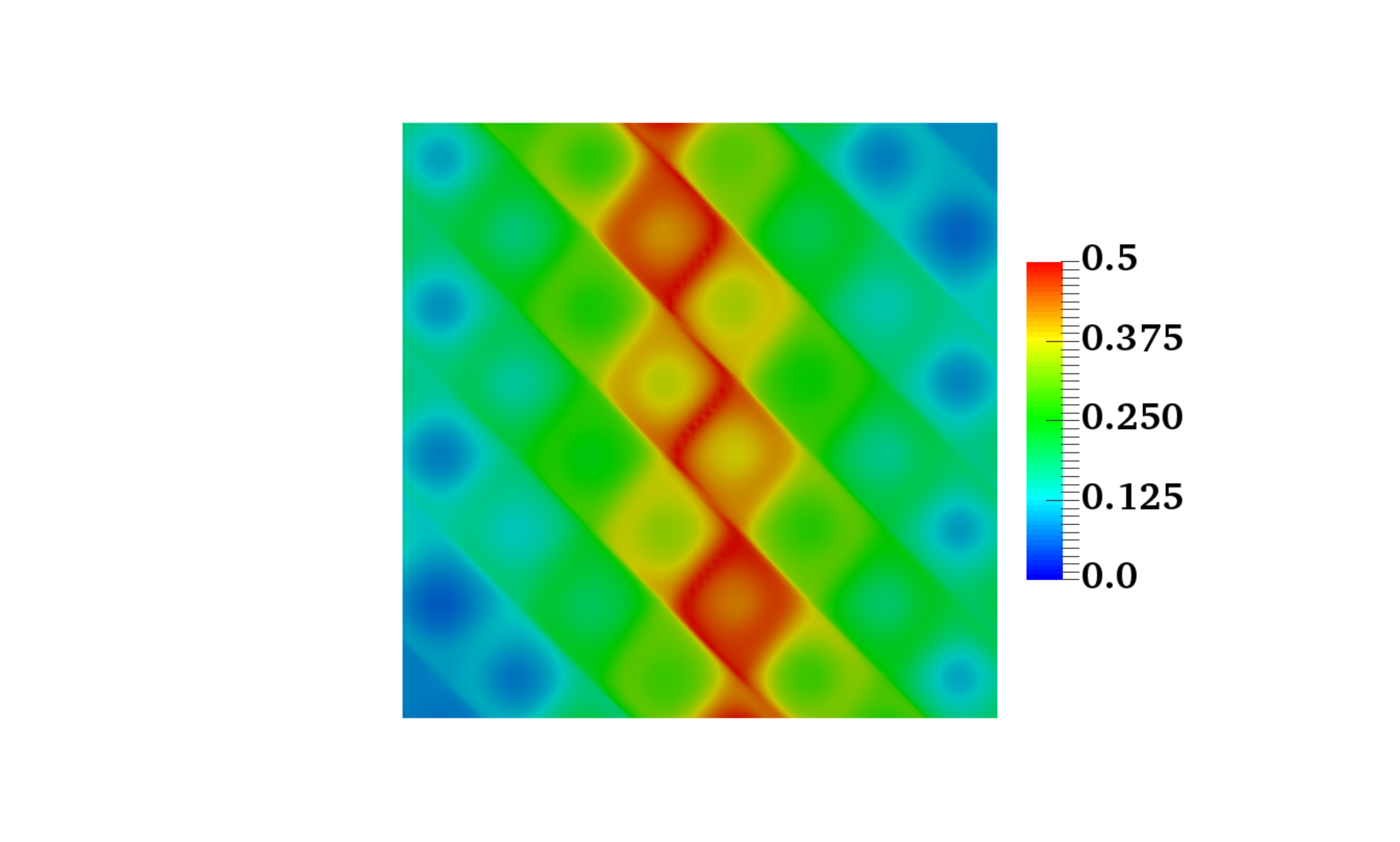}}
  \hspace{-0.5in}
  \subfigure[$\kappa_fL = 4$ and $t = 1.0$]
    {\includegraphics[clip=true,width = 0.37\textwidth]
    {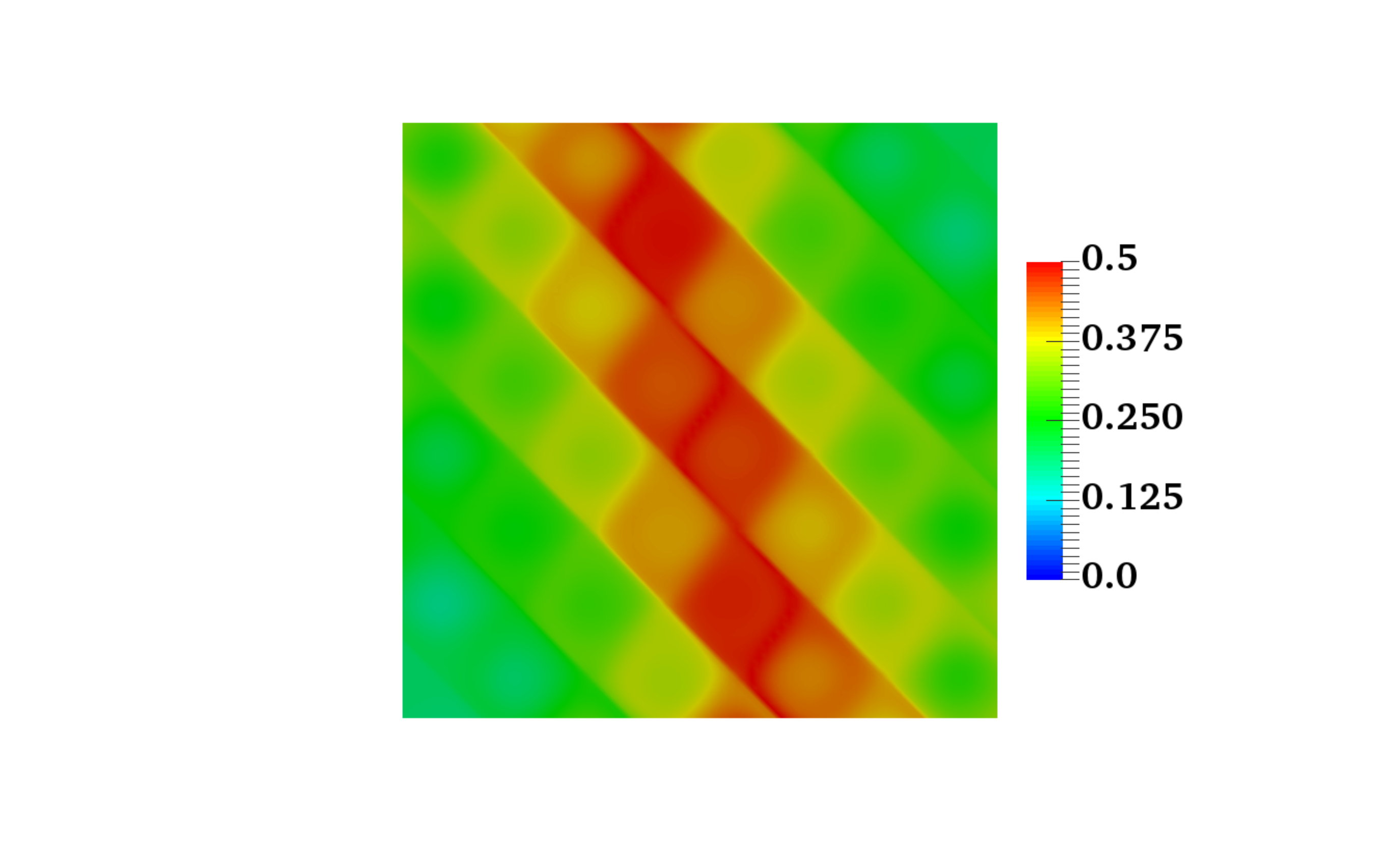}}
  \subfigure[$\kappa_fL = 5$ and $t = 0.1$]
    {\includegraphics[clip=true,width = 0.37\textwidth]
    {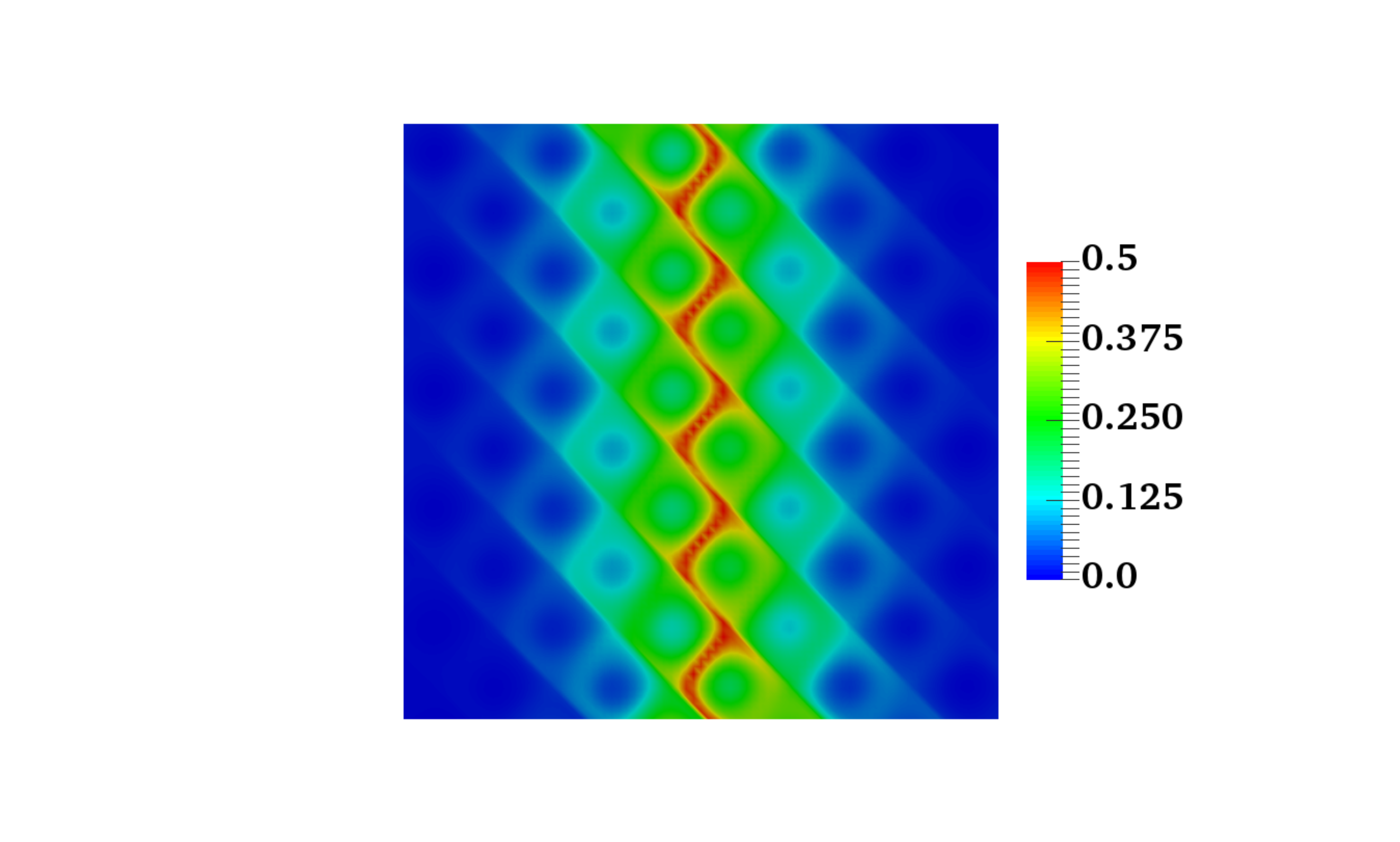}}
  \hspace{-0.5in}
  \subfigure[$\kappa_fL = 5$ and $t = 0.5$]
    {\includegraphics[clip=true,width = 0.37\textwidth]
    {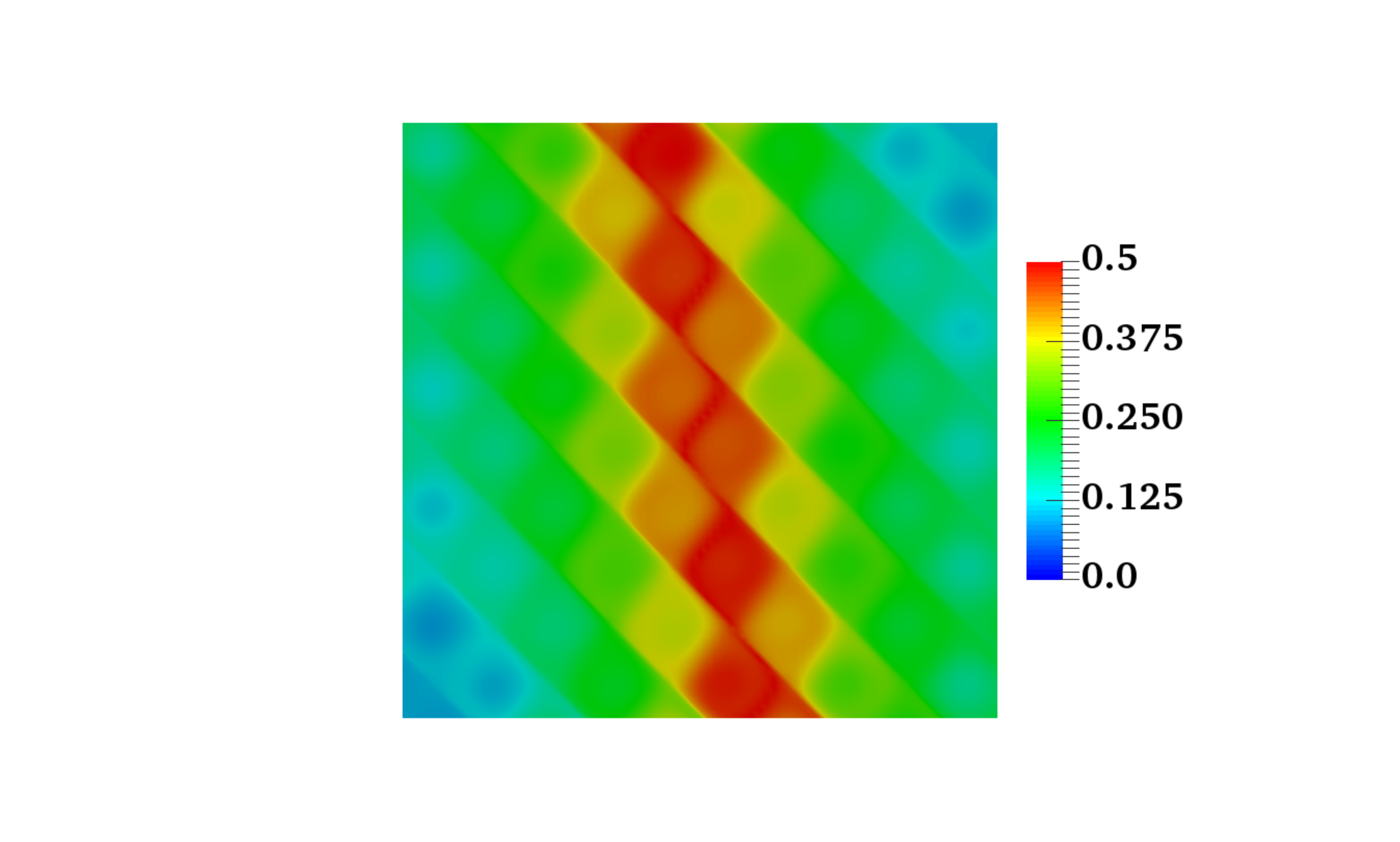}}
  \hspace{-0.5in}
  \subfigure[$\kappa_fL = 5$ and $t = 1.0$]
    {\includegraphics[clip=true,width = 0.37\textwidth]
    {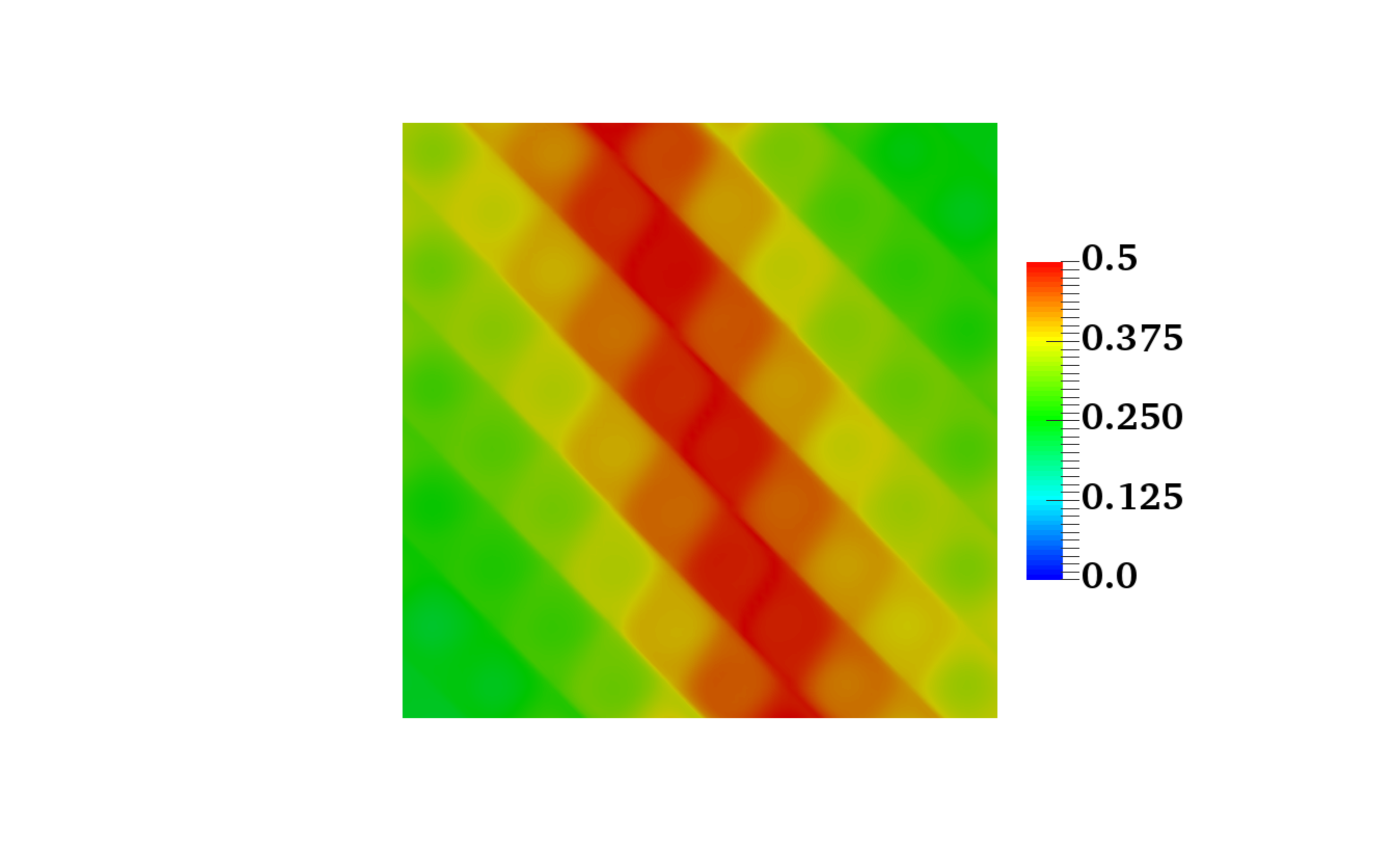}}
  \caption{\textsf{\textbf{Concentration contours of species $C$:}}~These 
    figures show the concentration of product $C$ at times $t = 0.1, \, 0.5,$ 
    and $1.0$. Other input parameters are $\frac{\alpha_L}{\alpha_T} = 10^{4}$, 
    $v_o = 10^{-1}$, $T = 0.1$, and $D_m = 10^{-3}$. From the above figures 
    it is clear that for smaller values of $\kappa_fL$, product $C$ concentration 
    is zero in certain regions. As $\kappa_fL$ increases, we can see that 
    number of regions with zero product $C$ concentration decrease. 
    This is because the velocity field contains a lot of small-scale 
    vortices that are spread across the domain. These small-scale 
    vortex structures present in the velocity field enhance mixing 
    even when the values of molecular diffusivity is small and anisotropic 
    dispersivity is large.
  \label{Fig:Contours_C_Difftimes}}
\end{figure}

%----------------------------------;
%  Figure-9: Species-A, B, C QoIs  ;
%----------------------------------;
\begin{figure}
  \centering
  \subfigure[$\mathfrak{c}_A:= \frac{\langle c_A \rangle}{
    \mathrm{max}\left[\langle c_A \rangle \right]}$]
    {\includegraphics[width = 0.325\textwidth]
    {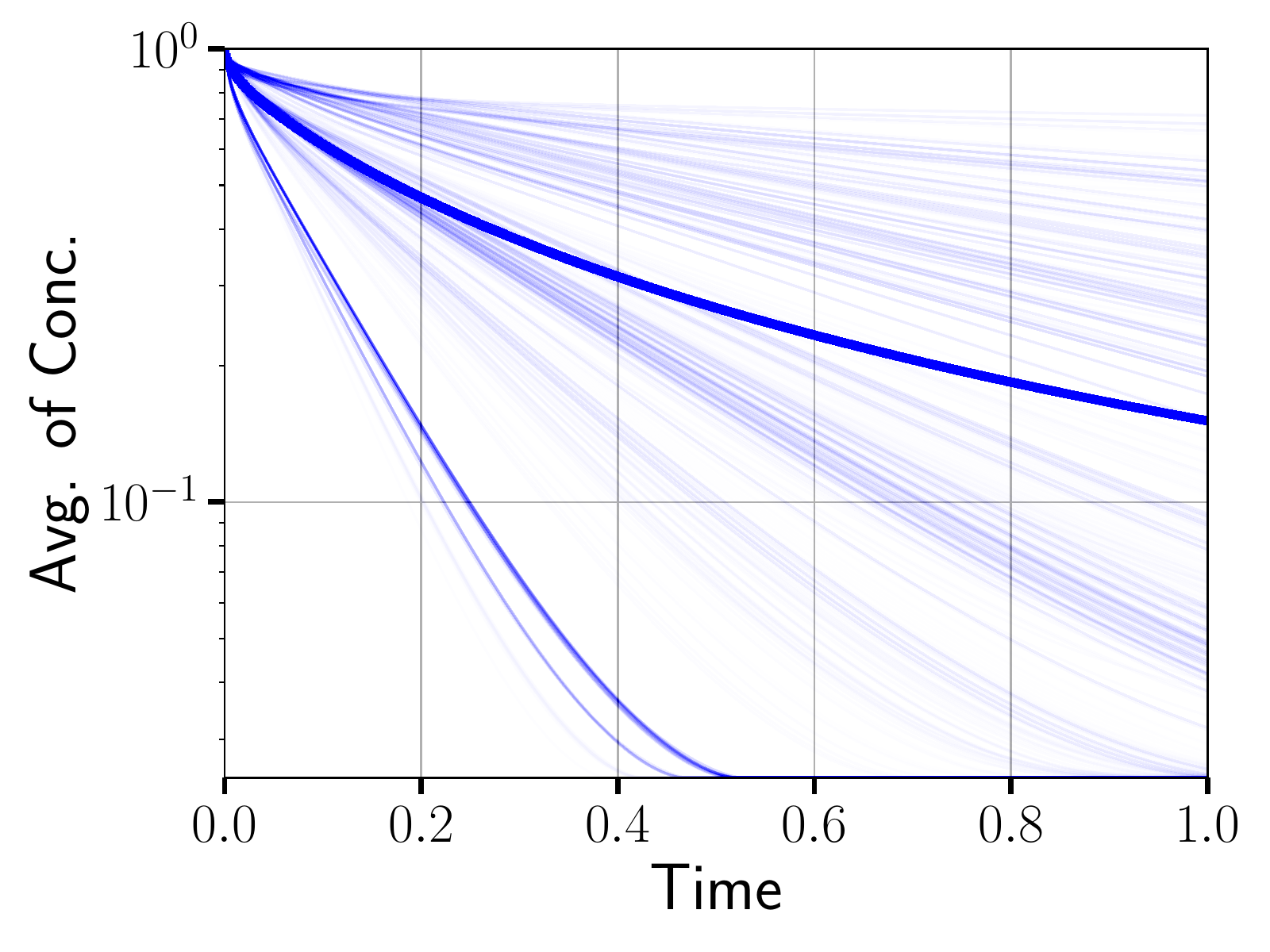}}
  \subfigure[$\mathfrak{c}_B:= \frac{\langle c_B \rangle}{
    \mathrm{max}\left[\langle c_B \rangle \right]}$]
    {\includegraphics[width = 0.325\textwidth]
    {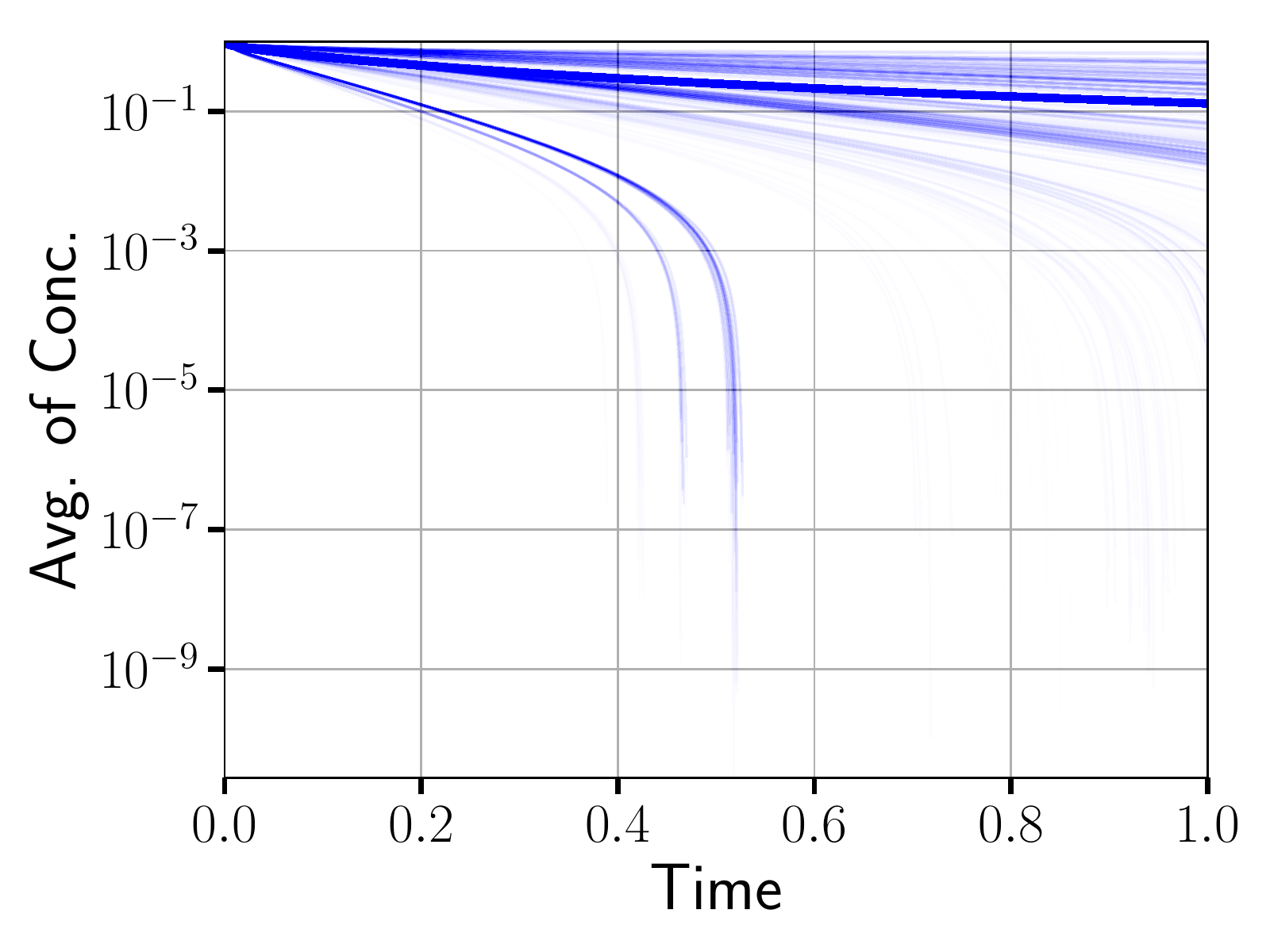}}
  \subfigure[$\mathfrak{c}_C:= \frac{\langle c_C \rangle}{
    \mathrm{max}\left[\langle c_C \rangle \right]}$]
    {\includegraphics[width = 0.325\textwidth]
    {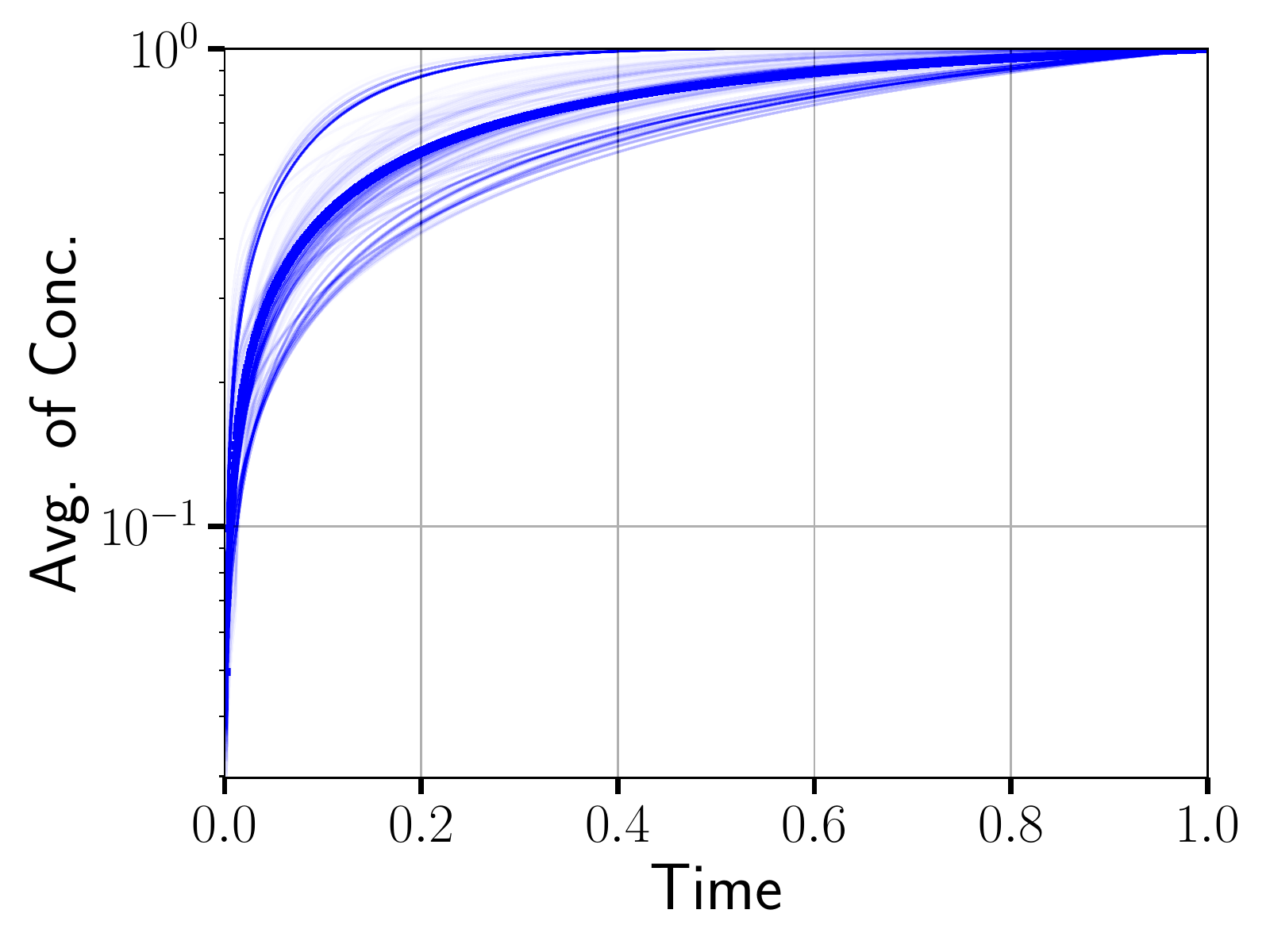}}
  \subfigure[$\mathbb{c}_A:= \frac{\langle c^2_A \rangle}{
    \mathrm{max}\left[\langle c^2_A \rangle \right]}$]
    {\includegraphics[width = 0.325\textwidth]
    {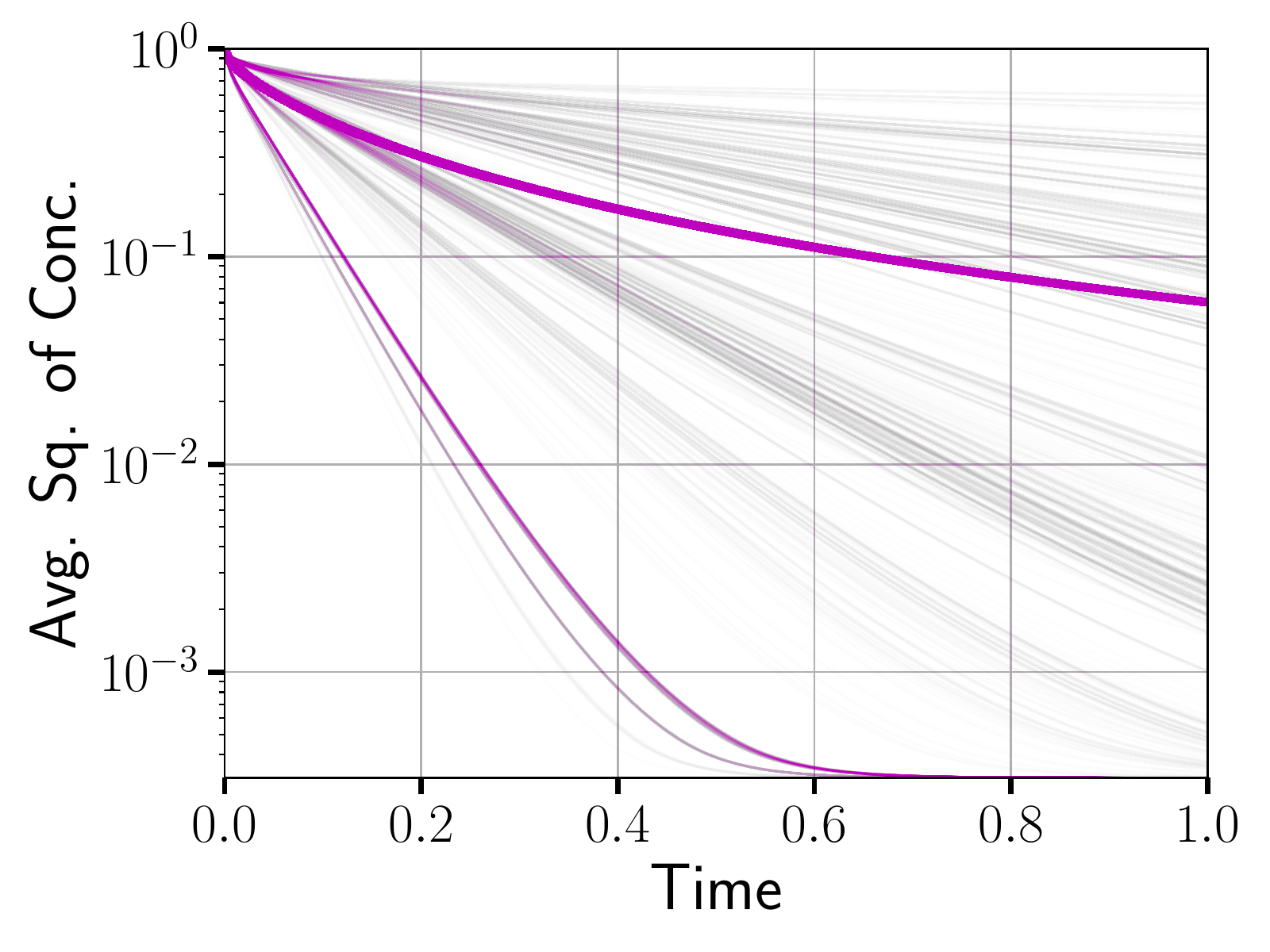}}
  \subfigure[$\mathbb{c}_B:= \frac{\langle c^2_B \rangle}{
    \mathrm{max}\left[\langle c^2_B \rangle \right]}$]
    {\includegraphics[width = 0.325\textwidth]
    {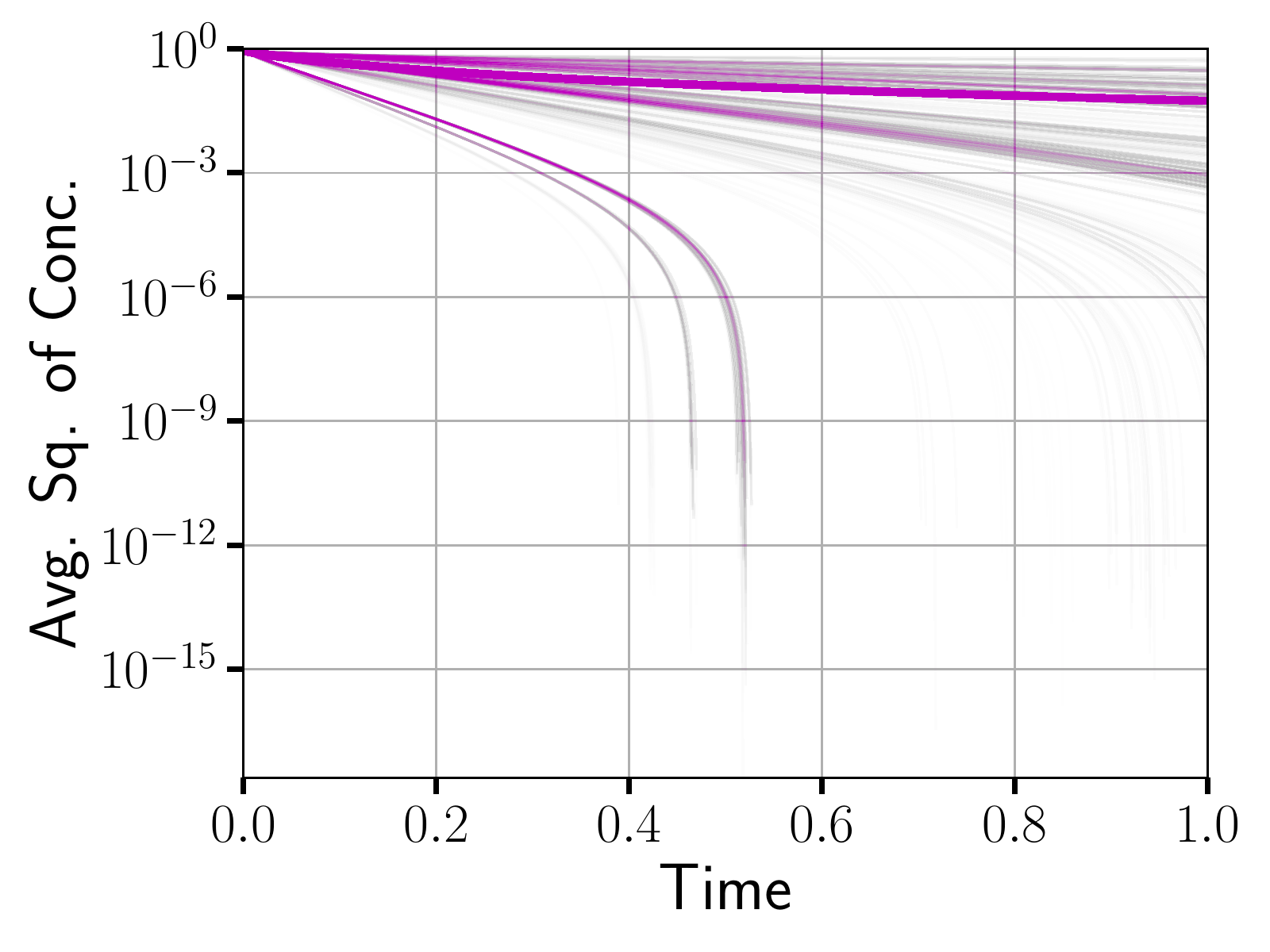}}
  \subfigure[$\mathbb{c}_C:= \frac{\langle c^2_C \rangle}{
    \mathrm{max}\left[\langle c^2_C \rangle \right]}$]
    {\includegraphics[width = 0.325\textwidth]
    {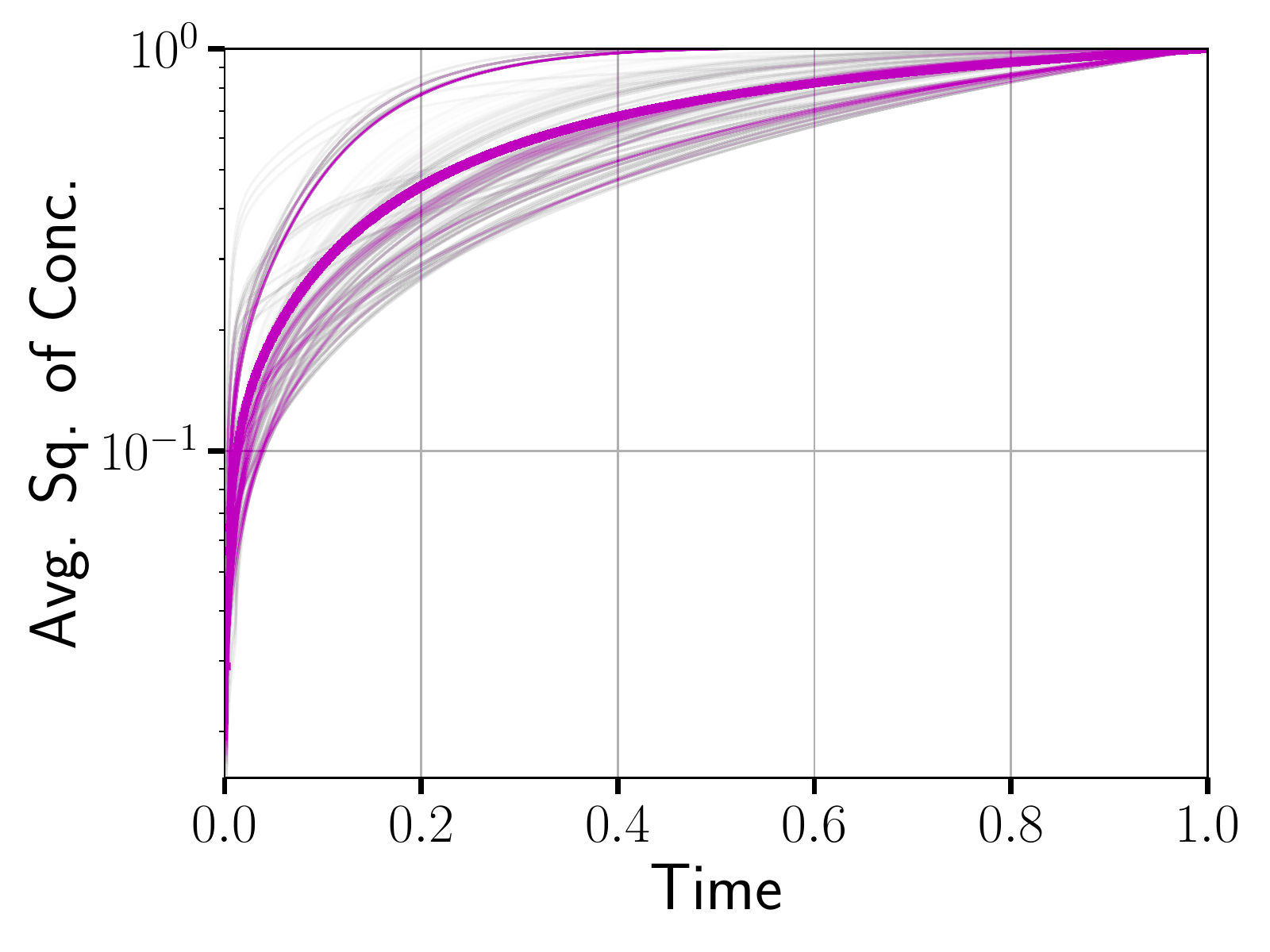}}
  \subfigure[$\sigma^2_A = \frac{\langle c^2_A \rangle 
    - \langle c_A \rangle^2}{\mathrm{max} \left[\langle 
    c^2_A \rangle - \langle c_A \rangle^2 \right]}$]
    {\includegraphics[width = 0.325\textwidth]
    {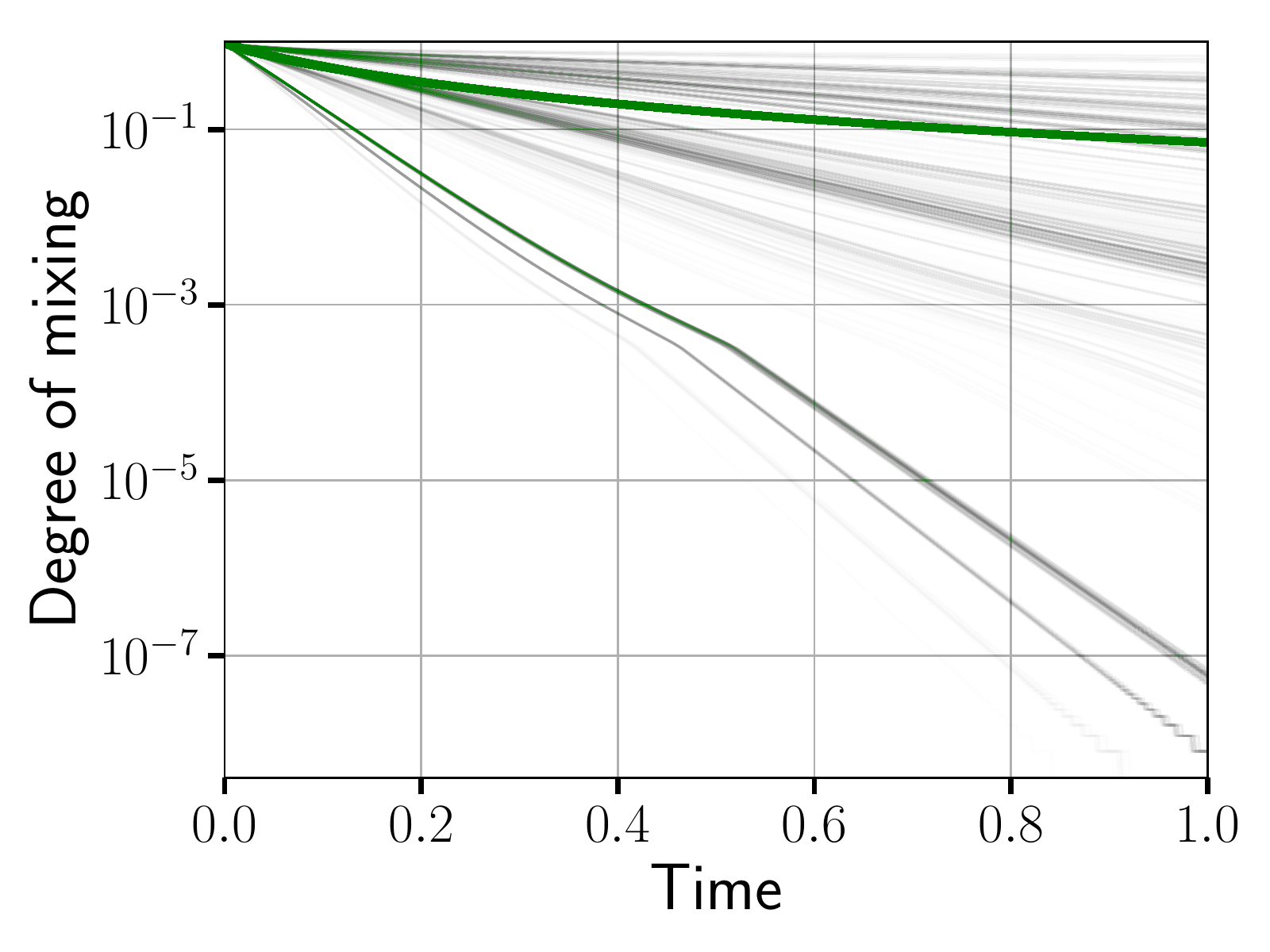}}
  \subfigure[$\sigma^2_B = \frac{\langle c^2_B \rangle 
    - \langle c_B \rangle^2}{\mathrm{max} \left[\langle 
    c^2_B \rangle - \langle c_B \rangle^2 \right]}$]
    {\includegraphics[width = 0.325\textwidth]
    {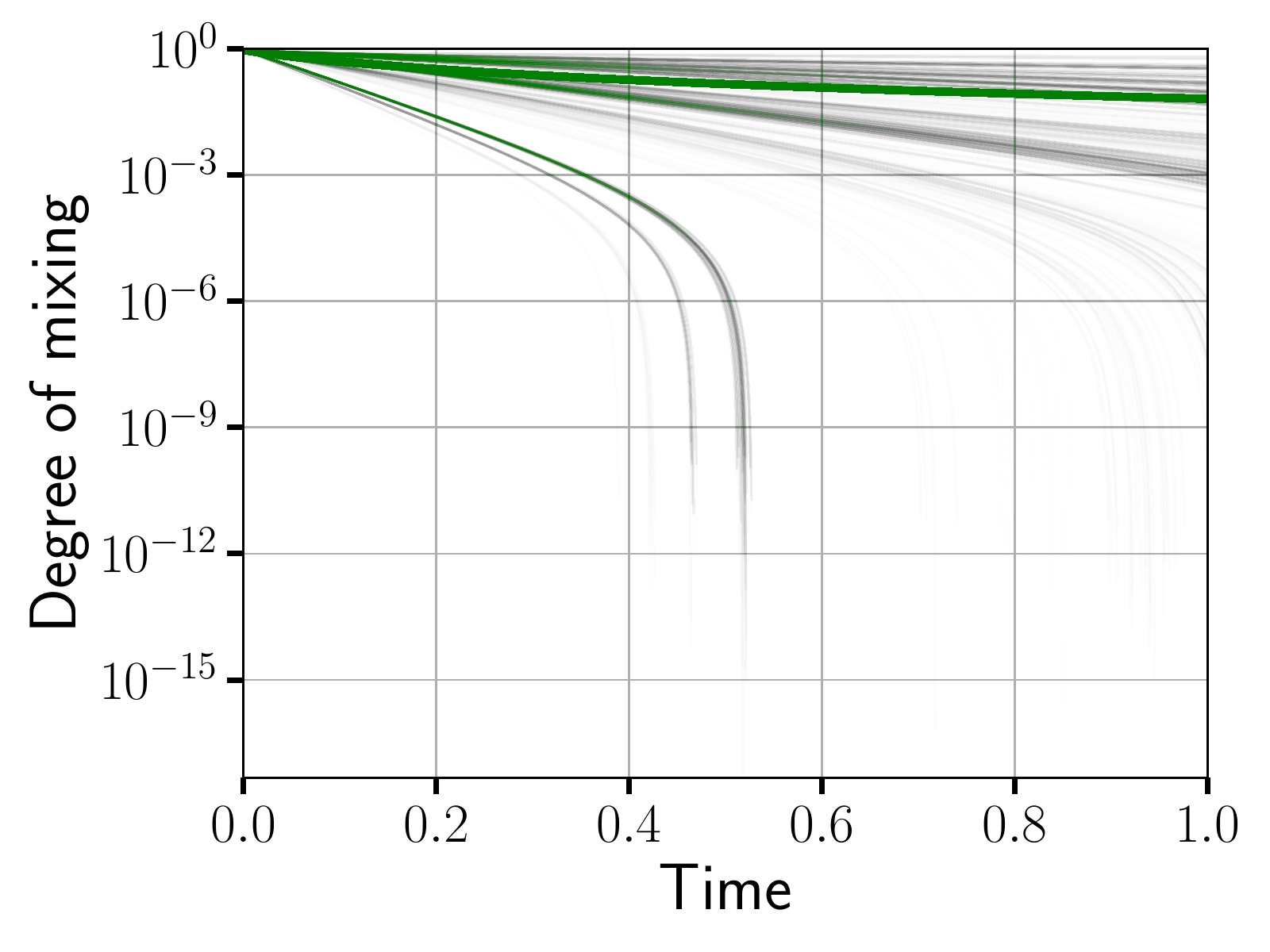}}
  \subfigure[$\sigma^2_C = \frac{\langle c^2_C \rangle 
    - \langle c_C \rangle^2}{\mathrm{max} \left[\langle 
    c^2_C \rangle - \langle c_C \rangle^2 \right]}$]
    {\includegraphics[width = 0.325\textwidth]
    {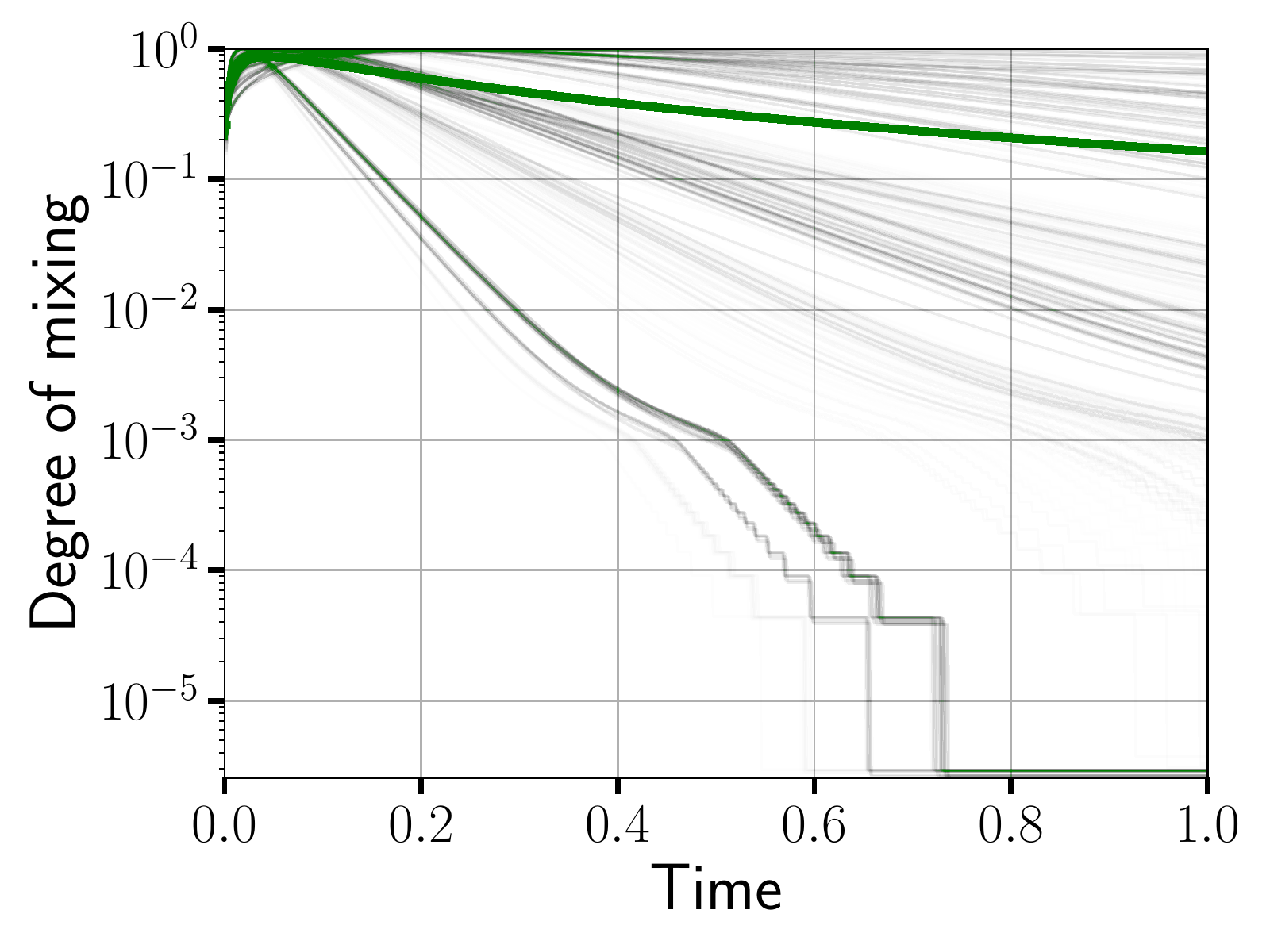}}
  \caption{\textsf{\textbf{Quantities of Interest for species $A$, $B$, and $C$:}}~These 
    figures show average of concentration $\mathfrak{c}_i$, average of square 
    of concentration $\mathbb{c}_i$, and degree of mixing $\sigma^2_i$ as a 
    function of time for all three species $A, B, C$. Plots are shown for 
    all the realizations, which are depicted in light blue, light magenta, 
    and light green colors. The solid blue, magenta, and green lines are the 
    average of all the realization at each time level. From these figures, 
    it is clear that $\ln[\mathfrak{c}_i] \propto t$, $\ln[\mathbb{c}_i] 
    \propto t$, and $\ln[\sigma^2_i] \propto t$ for $t \in [0.2, 1.0]$. 
    This means that the QoIs decrease with time in an exponential manner, 
    approximately. For $t \in [0.2, 1.0]$, in general, the average of 
    concentration and average of square of concentration decreases at 
    a much faster rate, resulting in high product yield. To see which 
    input parameters contribute for significant decrease in degree of 
    mixing of species $A$, $B$, and $C$, we perform feature importance 
    using random forests, F-test, mutual information criteria, and $k$-means 
    clustering on scaling exponent with respect to input parameters. 
  \label{Fig:SpecA_Data}}
\end{figure}

%------------------------------------------------------------------------;
%  Figure-10: Cluster analysis to identify high and low mixing features  ;
%             (Degree of mixing)                                         ;
%------------------------------------------------------------------------;
\begin{figure}
  \centering
  \subfigure[$\sigma^2_A$:~Exponent vs. $T$]
    {\includegraphics[width = 0.25\textwidth]
    {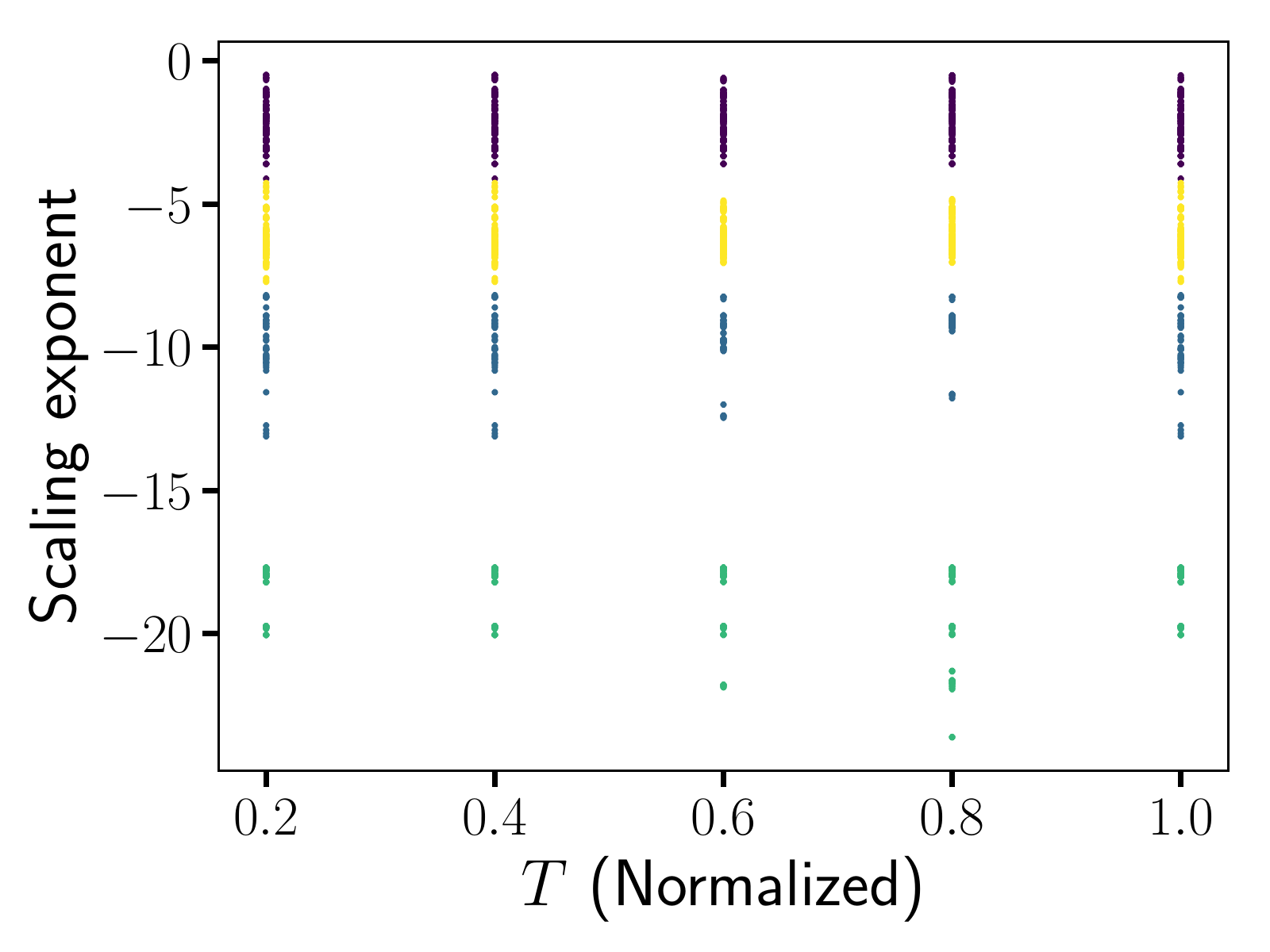}}
  \subfigure[$\sigma^2_B$:~Exponent vs. $T$]
    {\includegraphics[width = 0.25\textwidth]
    {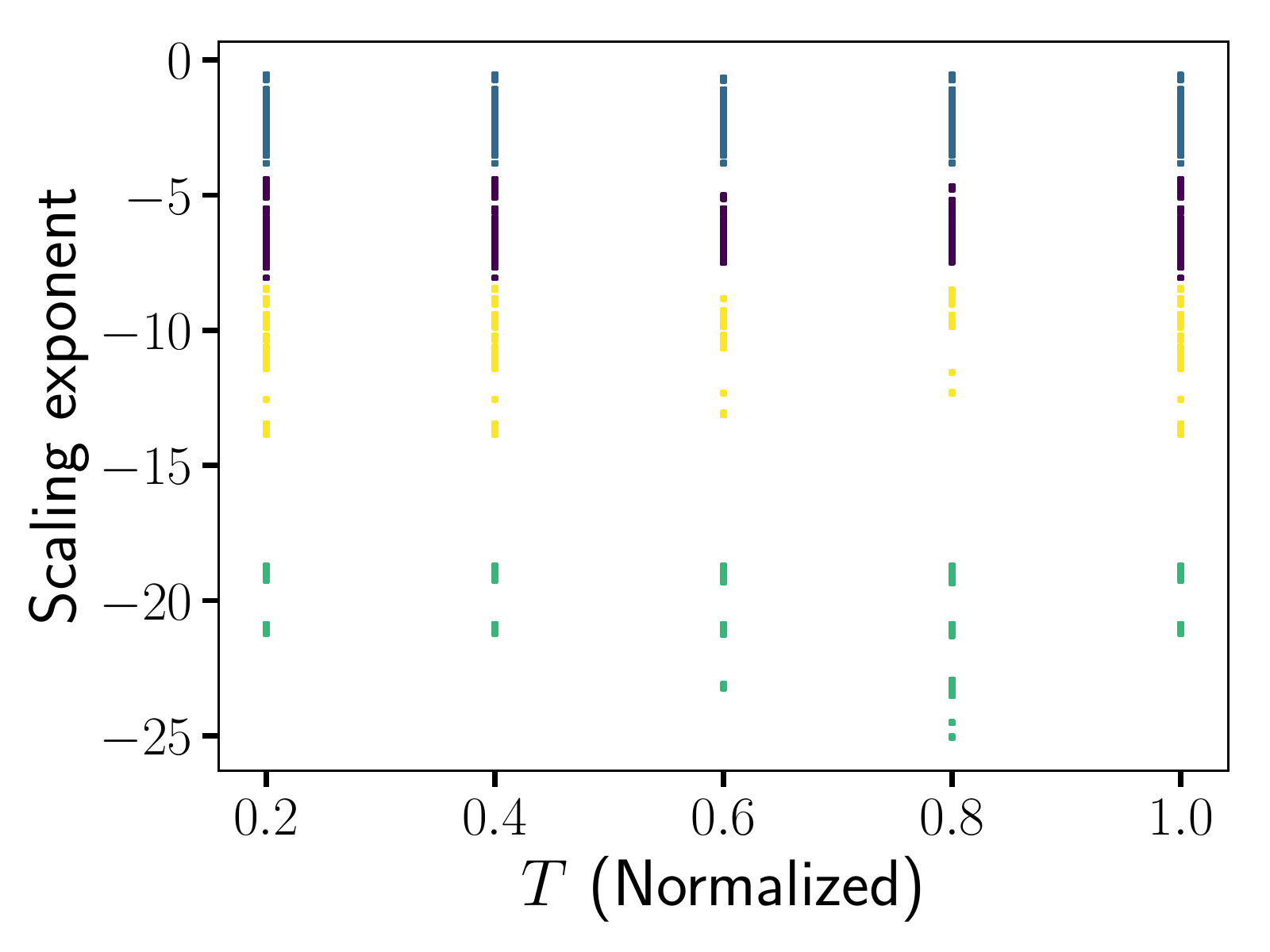}}
  \subfigure[$\sigma^2_C$:~Exponent vs. $T$]
    {\includegraphics[width = 0.25\textwidth]
    {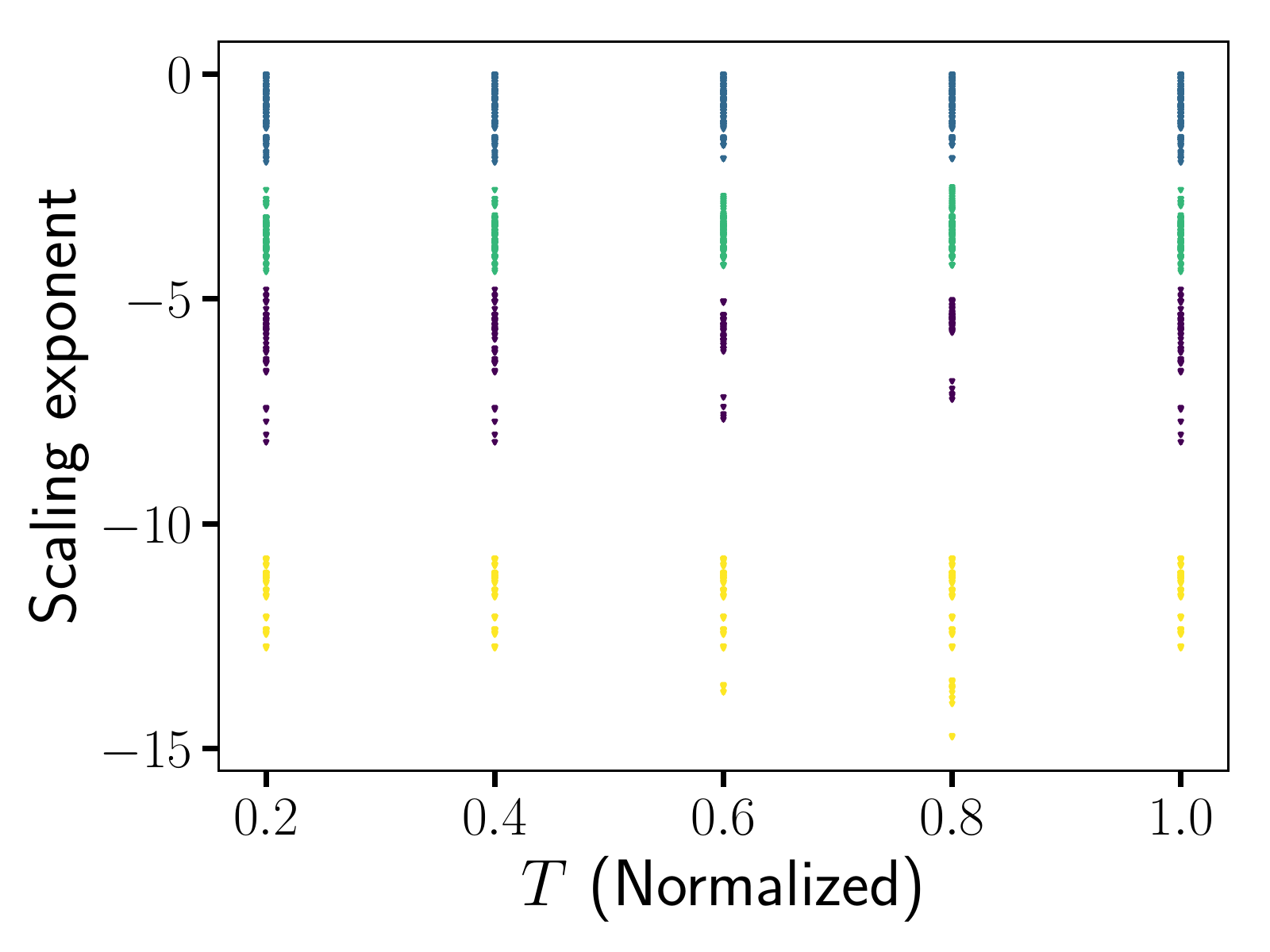}}
  \subfigure[$\sigma^2_A$:~Exponent vs. $\log{[\frac{\alpha_L}{\alpha_T}]}$]
    {\includegraphics[width = 0.25\textwidth]
    {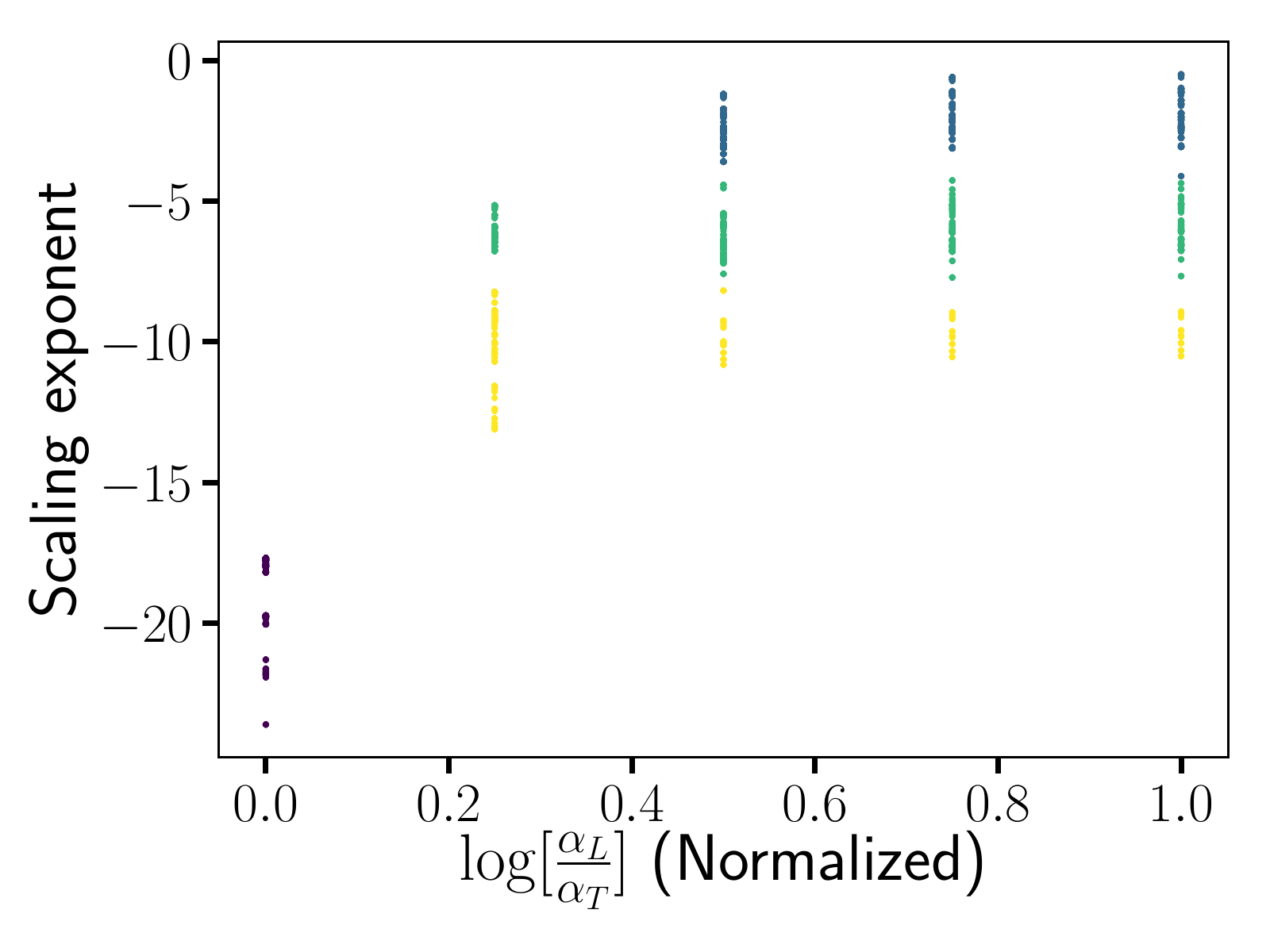}}
  \subfigure[$\sigma^2_B$:~Exponent vs. $\log{[\frac{\alpha_L}{\alpha_T}]}$]
    {\includegraphics[width = 0.25\textwidth]
    {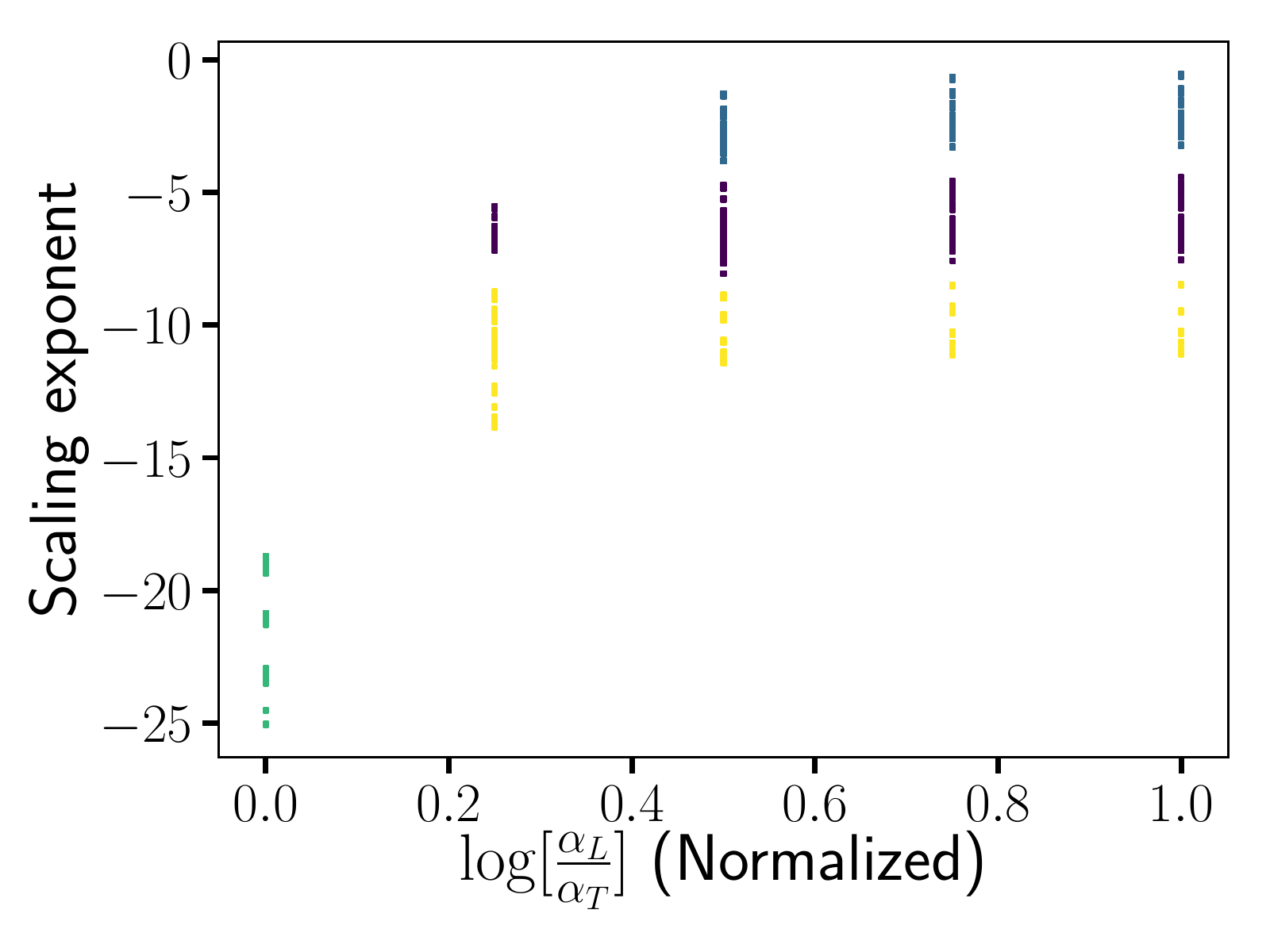}}
  \subfigure[$\sigma^2_C$:~Exponent vs. $\log{[\frac{\alpha_L}{\alpha_T}]}$]
    {\includegraphics[width = 0.25\textwidth]
    {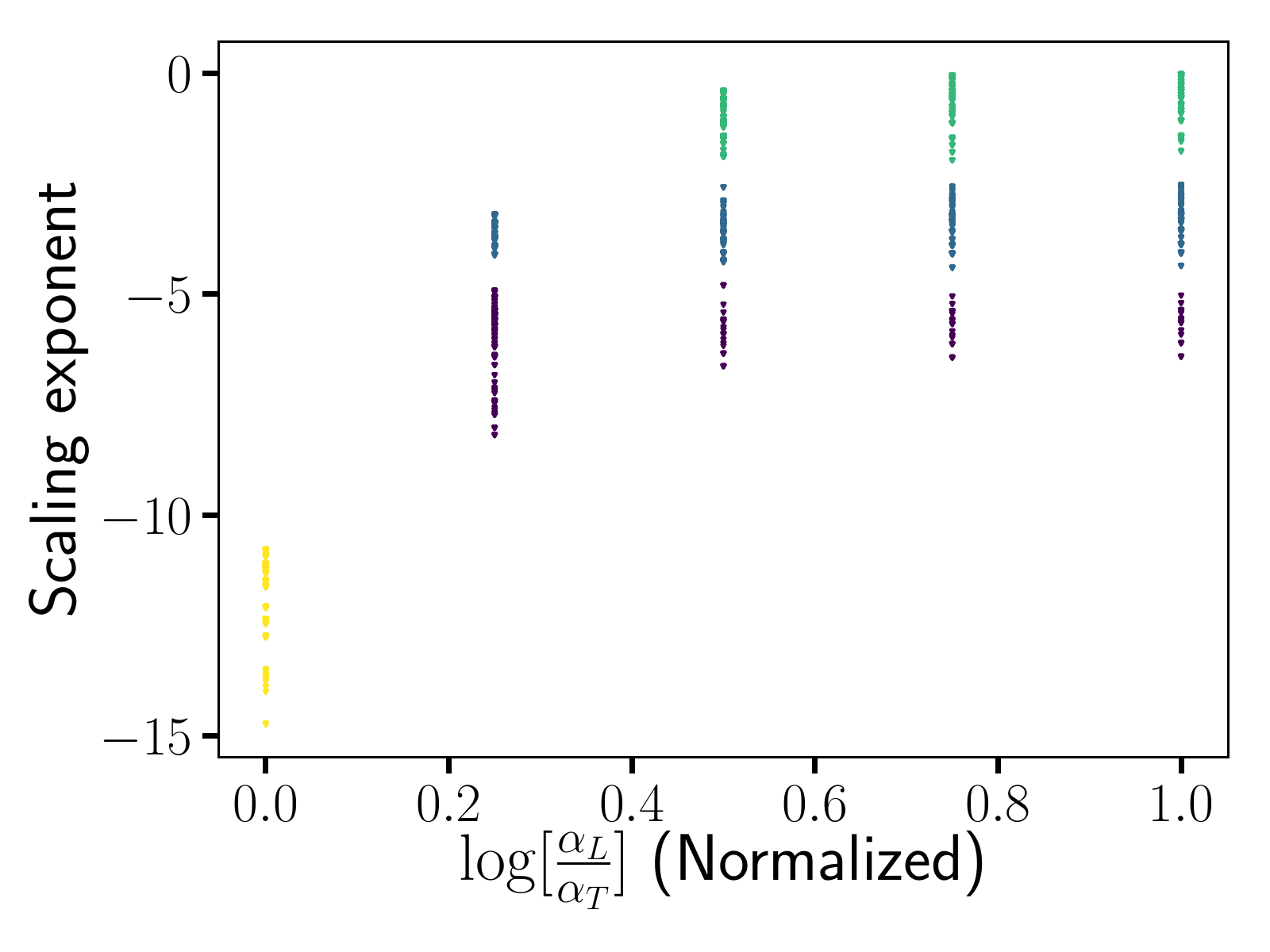}}
  \subfigure[$\sigma^2_A$:~Exponent vs. $\kappa_fL$]
    {\includegraphics[width = 0.25\textwidth]
    {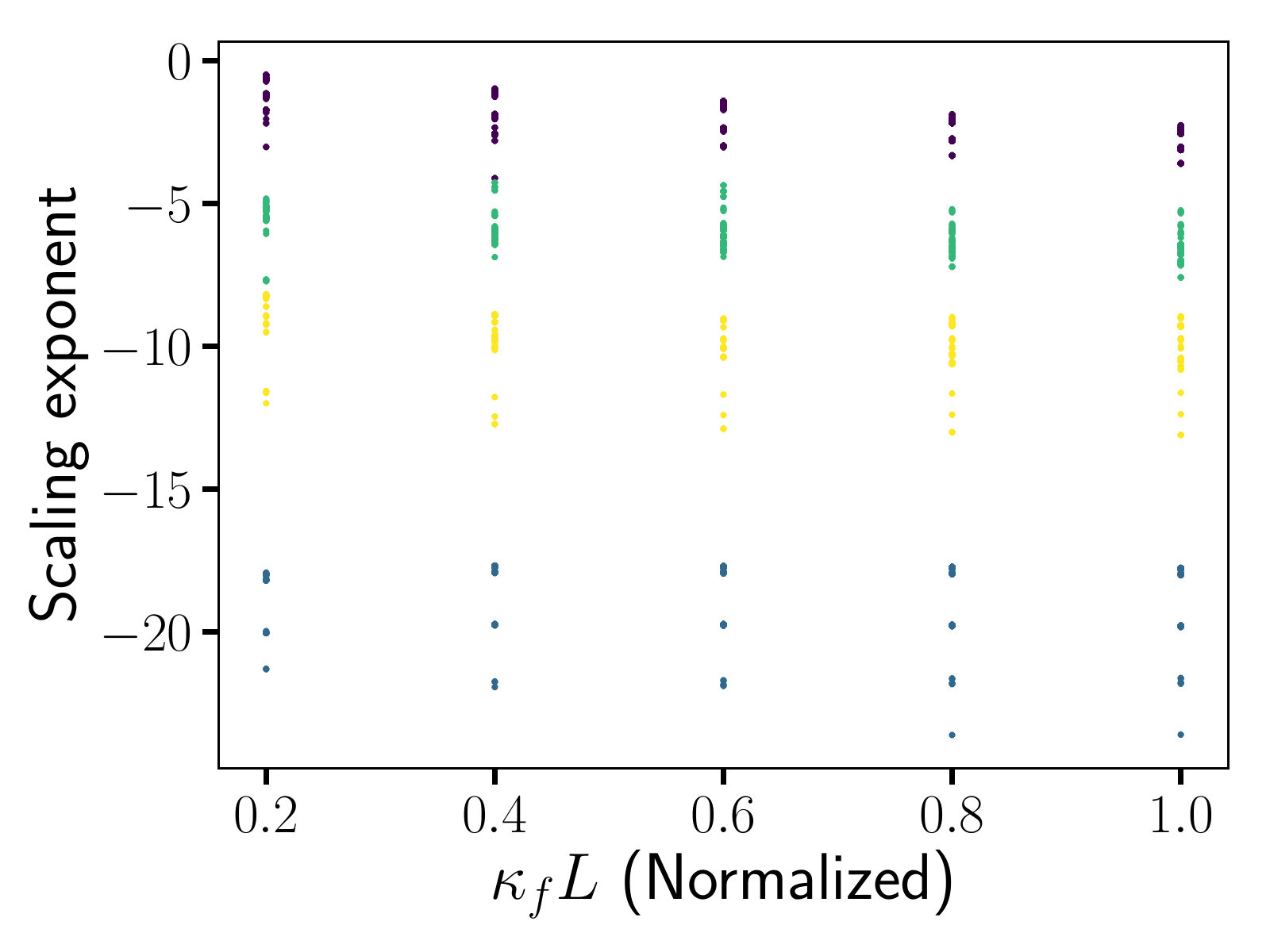}}
  \subfigure[$\sigma^2_B$:~Exponent vs. $\kappa_fL$]
    {\includegraphics[width = 0.25\textwidth]
    {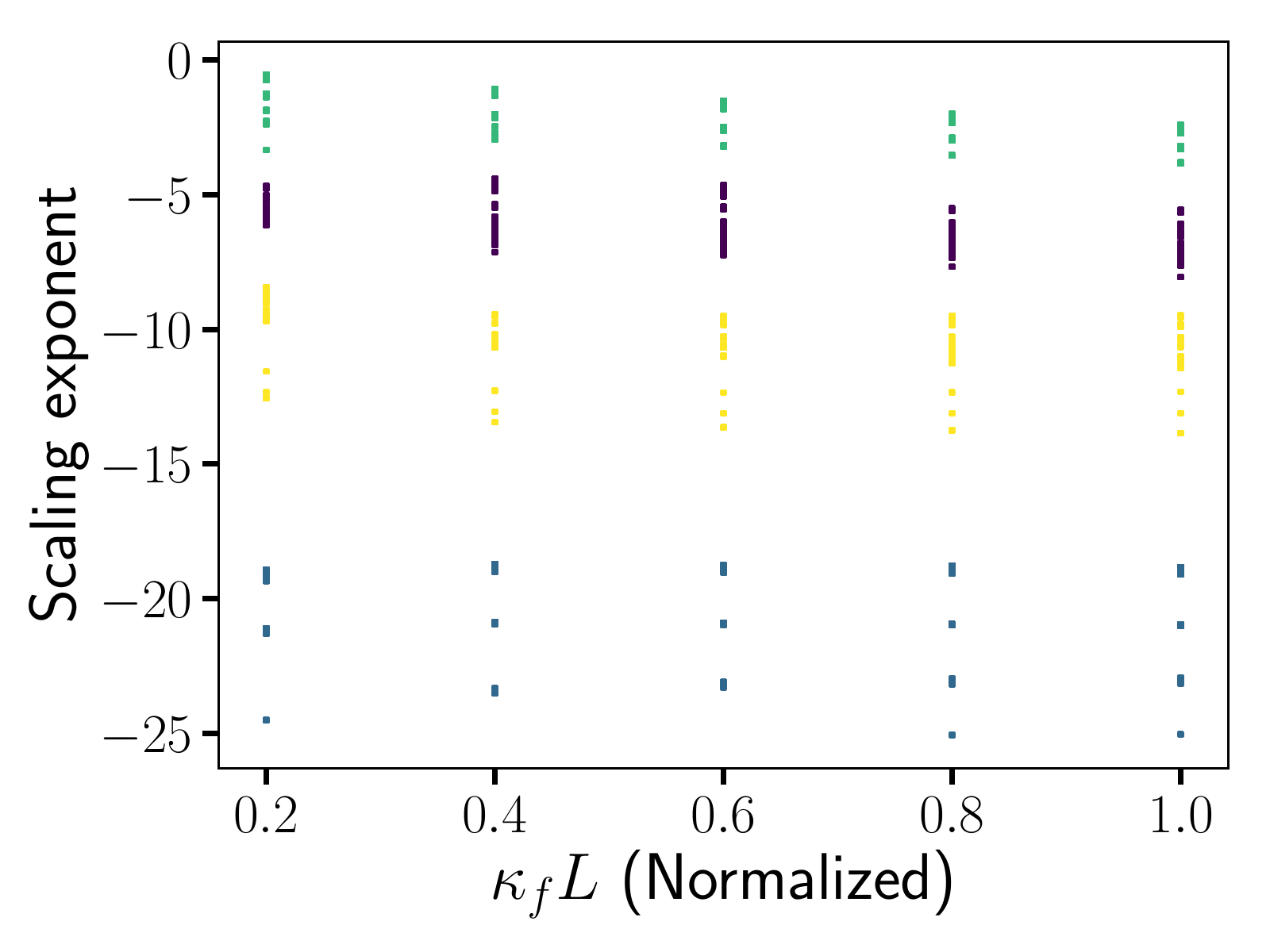}}
  \subfigure[$\sigma^2_C$:~Exponent vs. $\kappa_fL$]
    {\includegraphics[width = 0.25\textwidth]
    {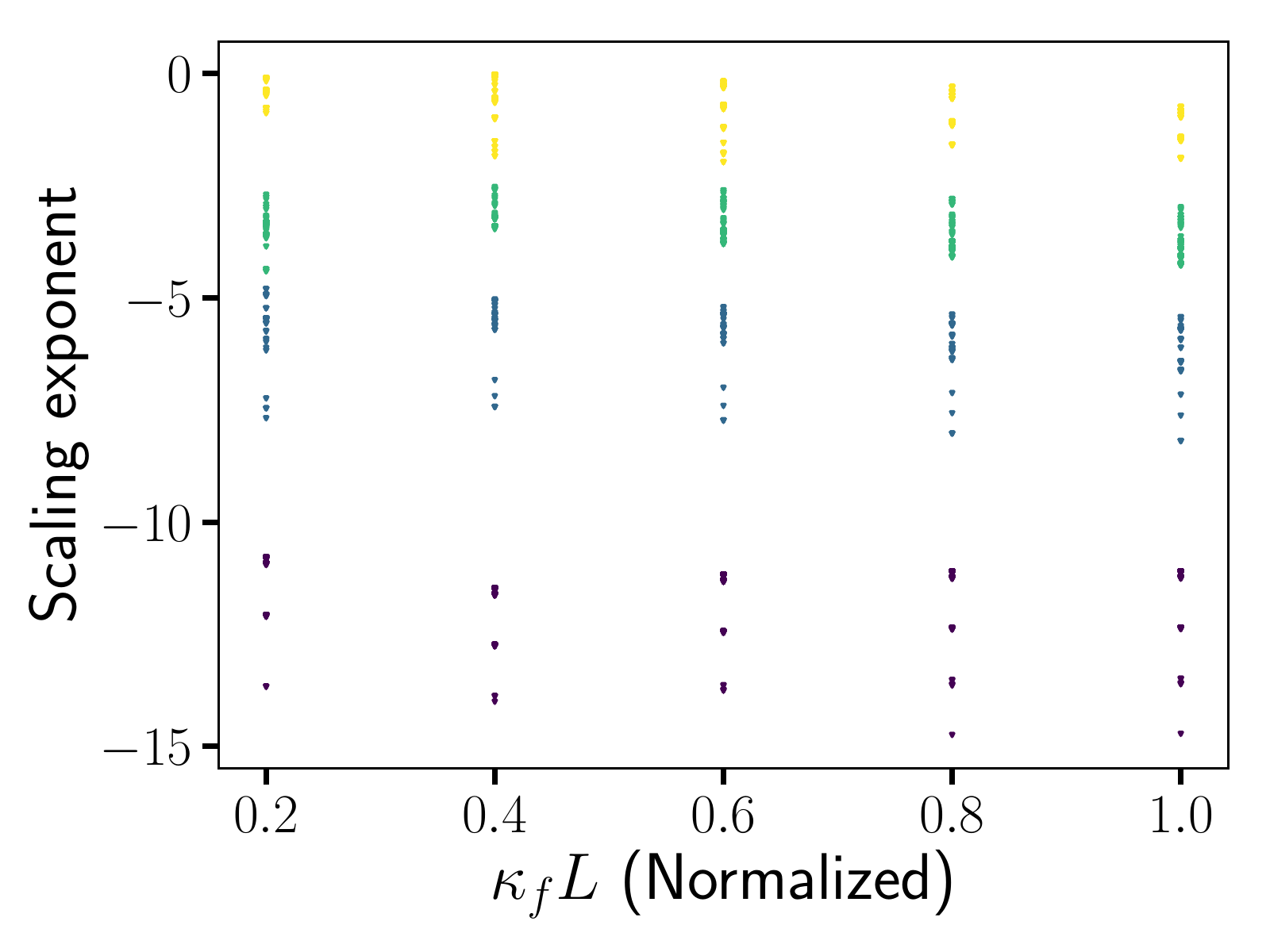}}
  \subfigure[$\sigma^2_A$:~Exponent vs. $\log{\left[v_o \right]}$]
    {\includegraphics[width = 0.25\textwidth]
    {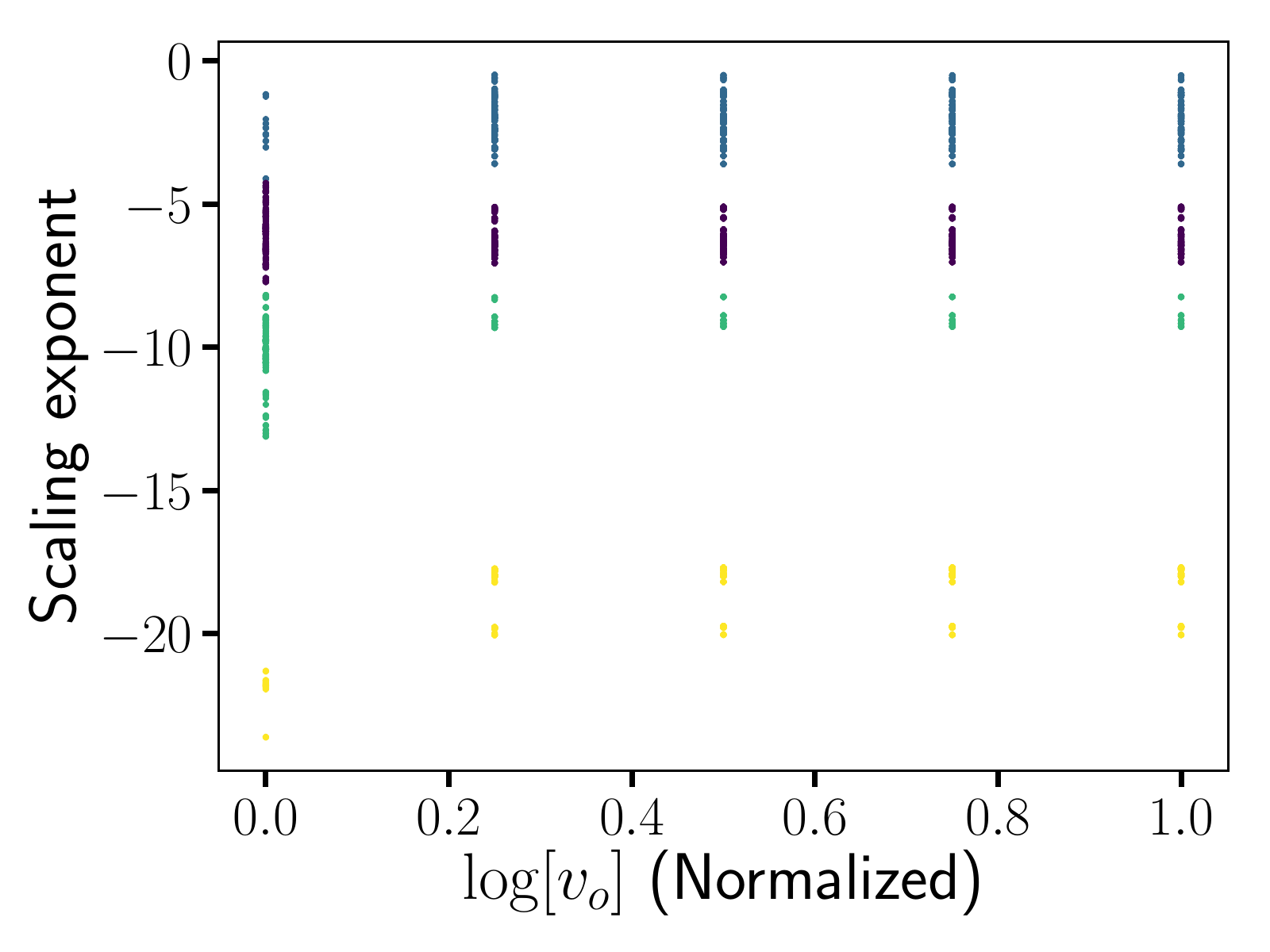}}
  \subfigure[$\sigma^2_B$:~Exponent vs. $\log{\left[v_o \right]}$]
    {\includegraphics[width = 0.25\textwidth]
    {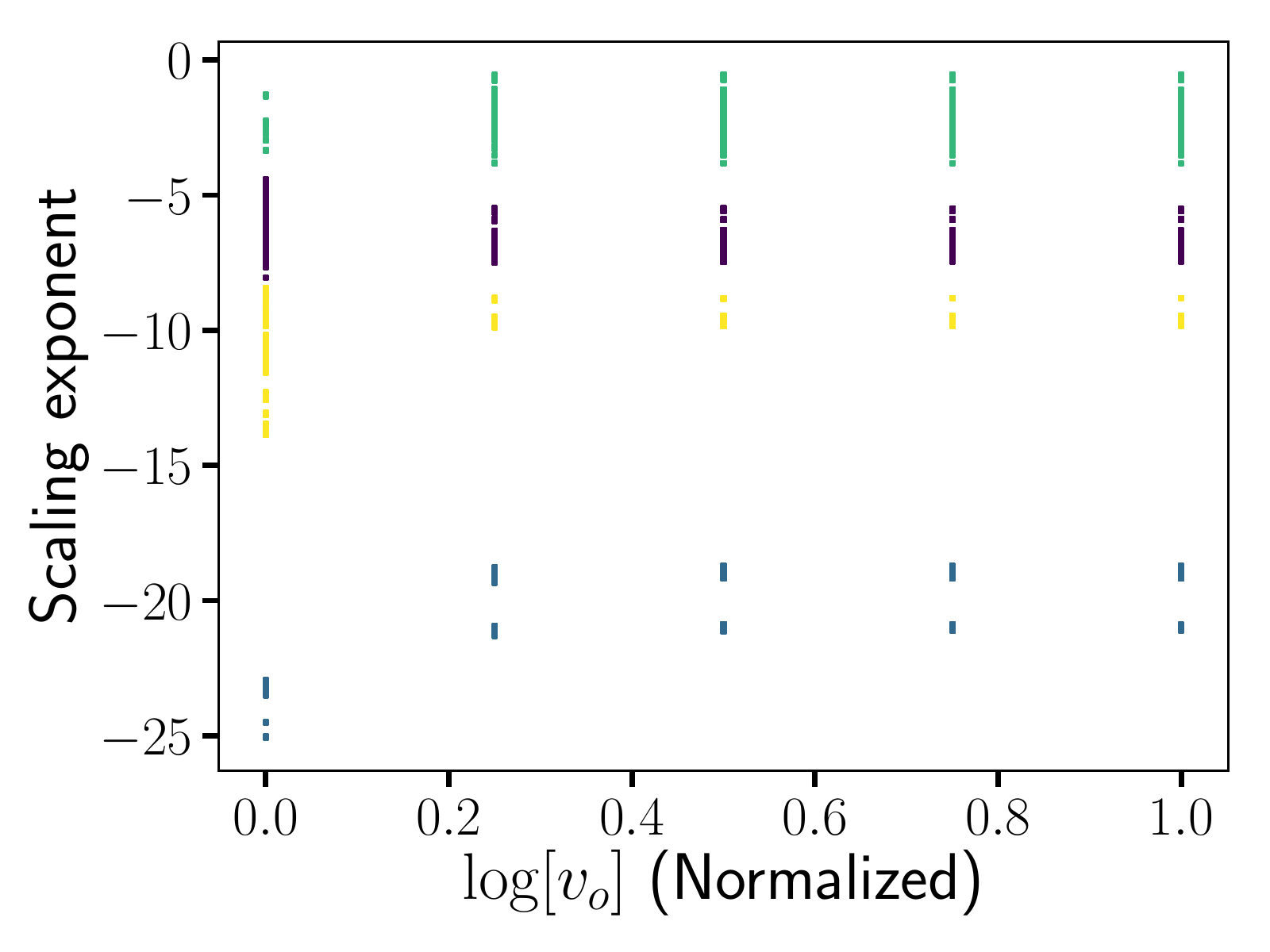}}
  \subfigure[$\sigma^2_C$:~Exponent vs. $\log{\left[v_o \right]}$]
    {\includegraphics[width = 0.25\textwidth]
    {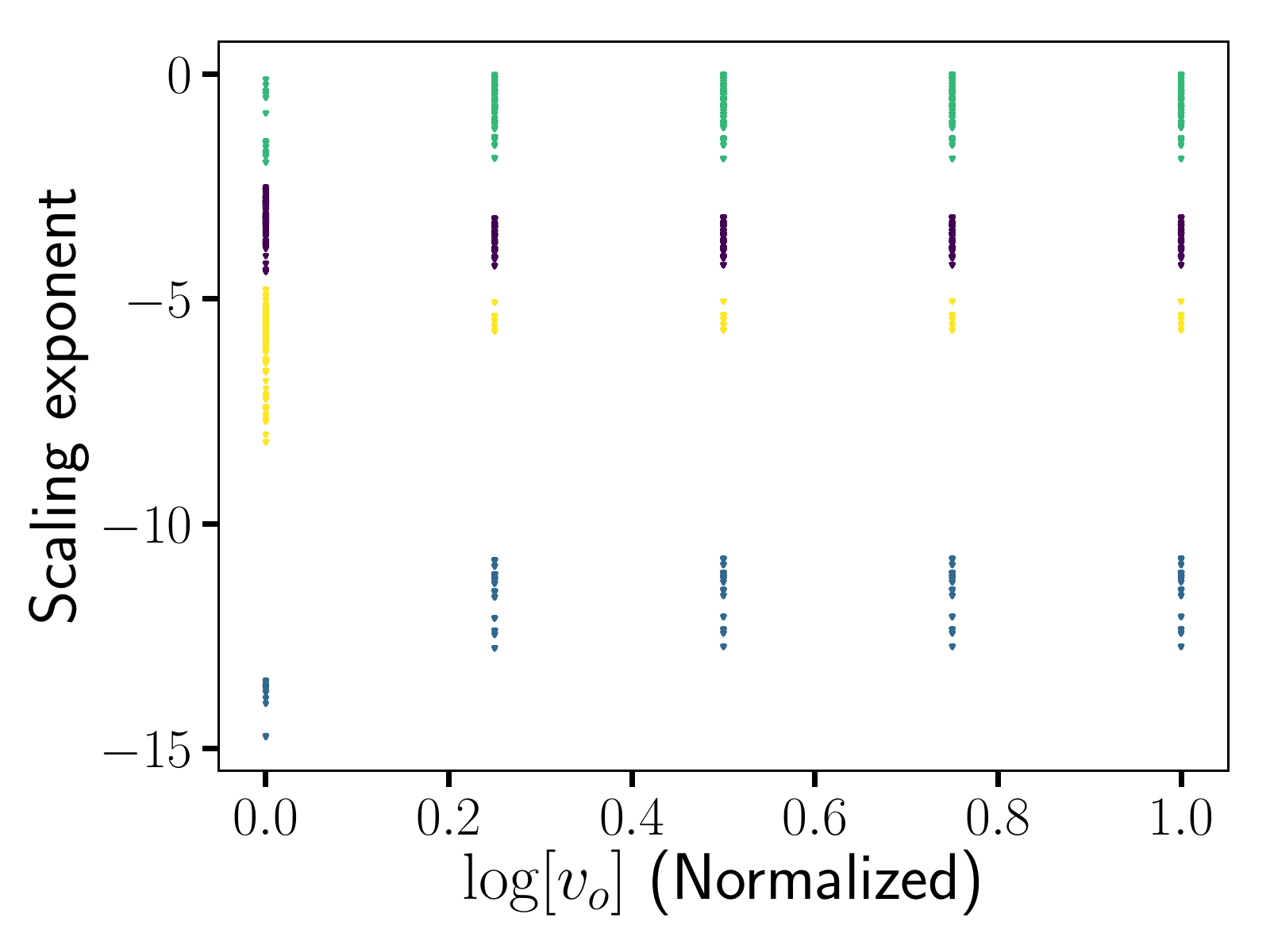}}
  \subfigure[$\sigma^2_A$:~Exponent vs. $D_m$]
    {\includegraphics[width = 0.25\textwidth]
    {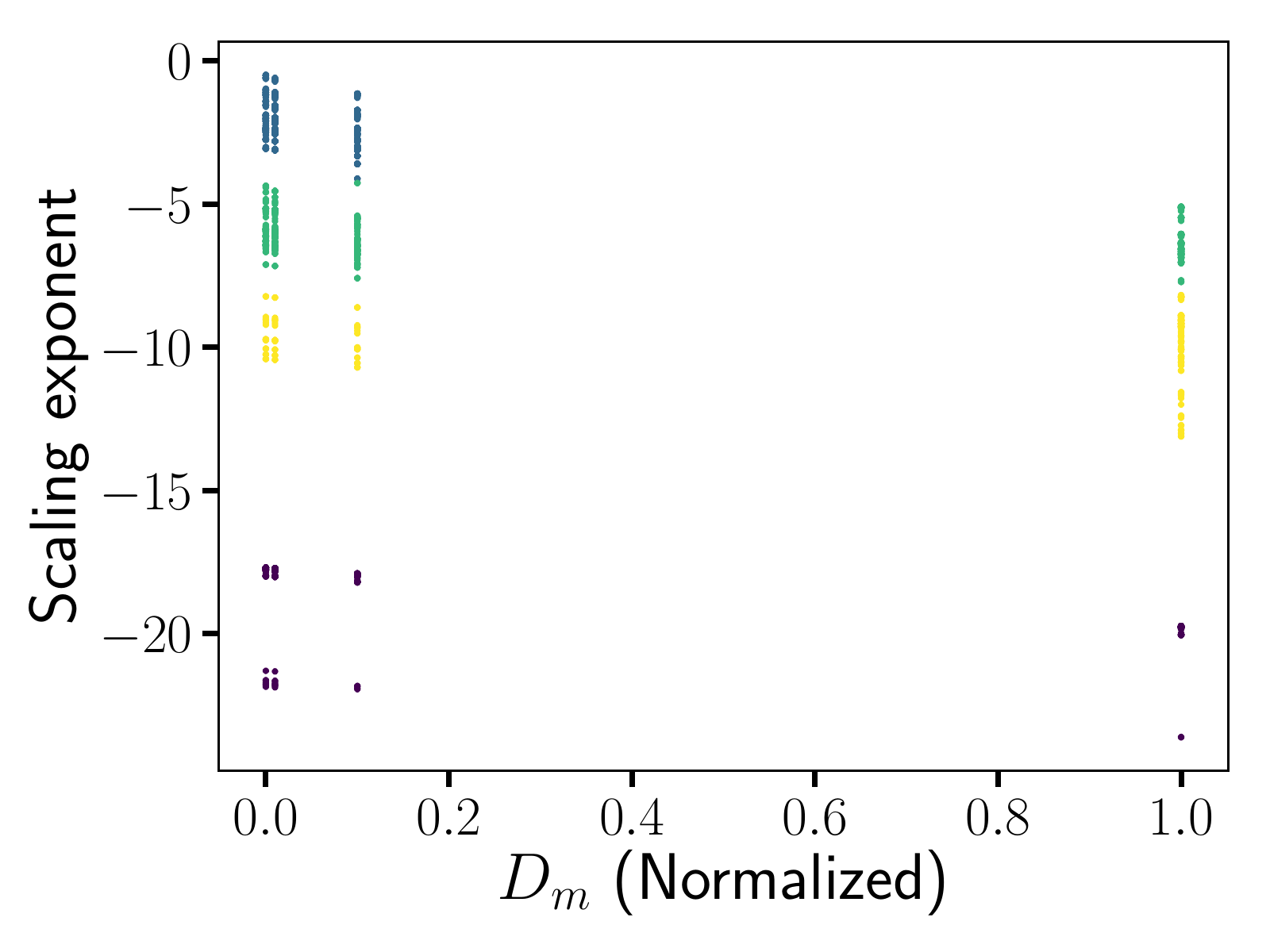}}
  \subfigure[$\sigma^2_B$:~Exponent vs. $D_m$]
    {\includegraphics[width = 0.25\textwidth]
    {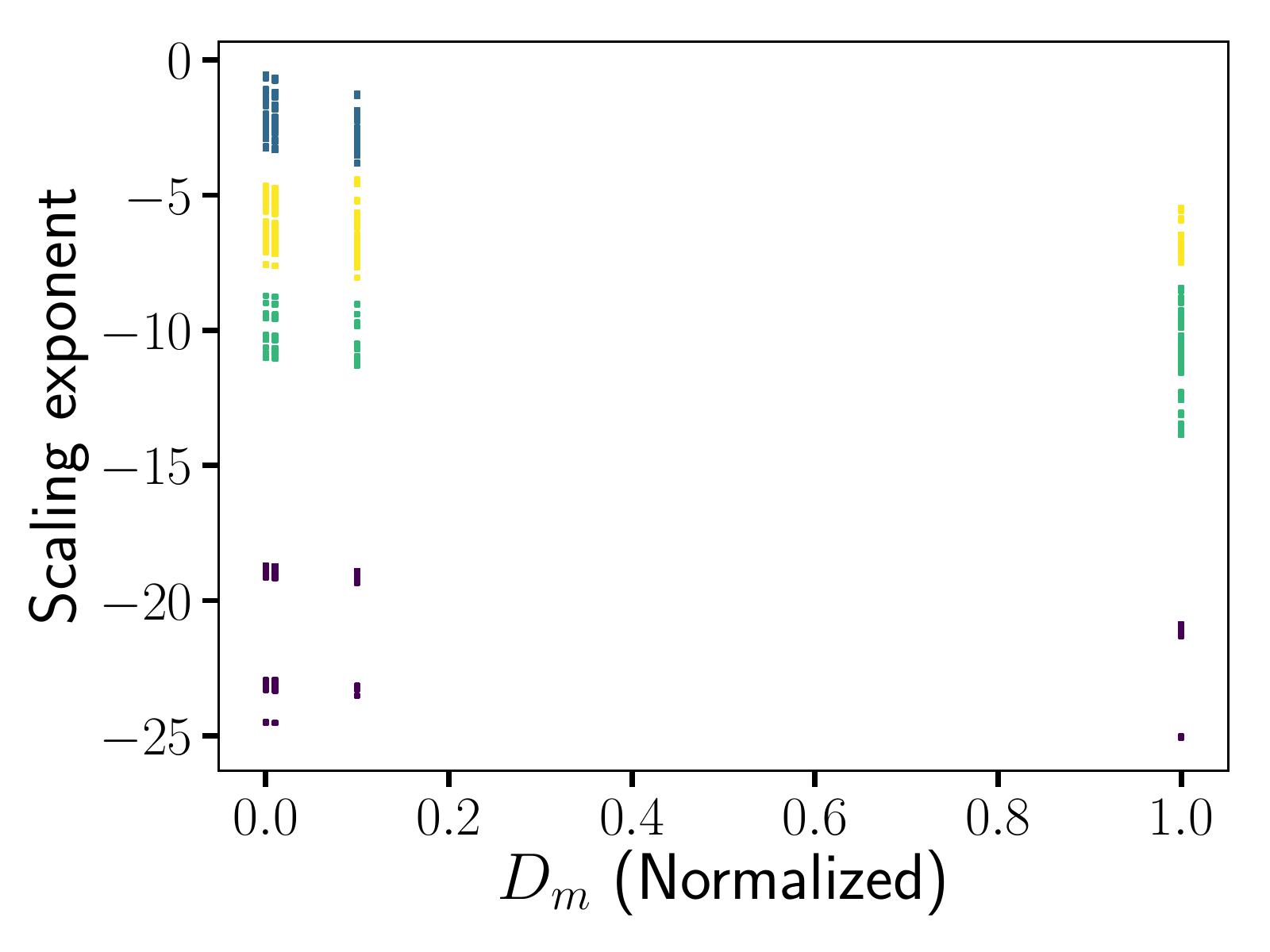}}
  \subfigure[$\sigma^2_C$:~Exponent vs. $D_m$]
    {\includegraphics[width = 0.25\textwidth]
    {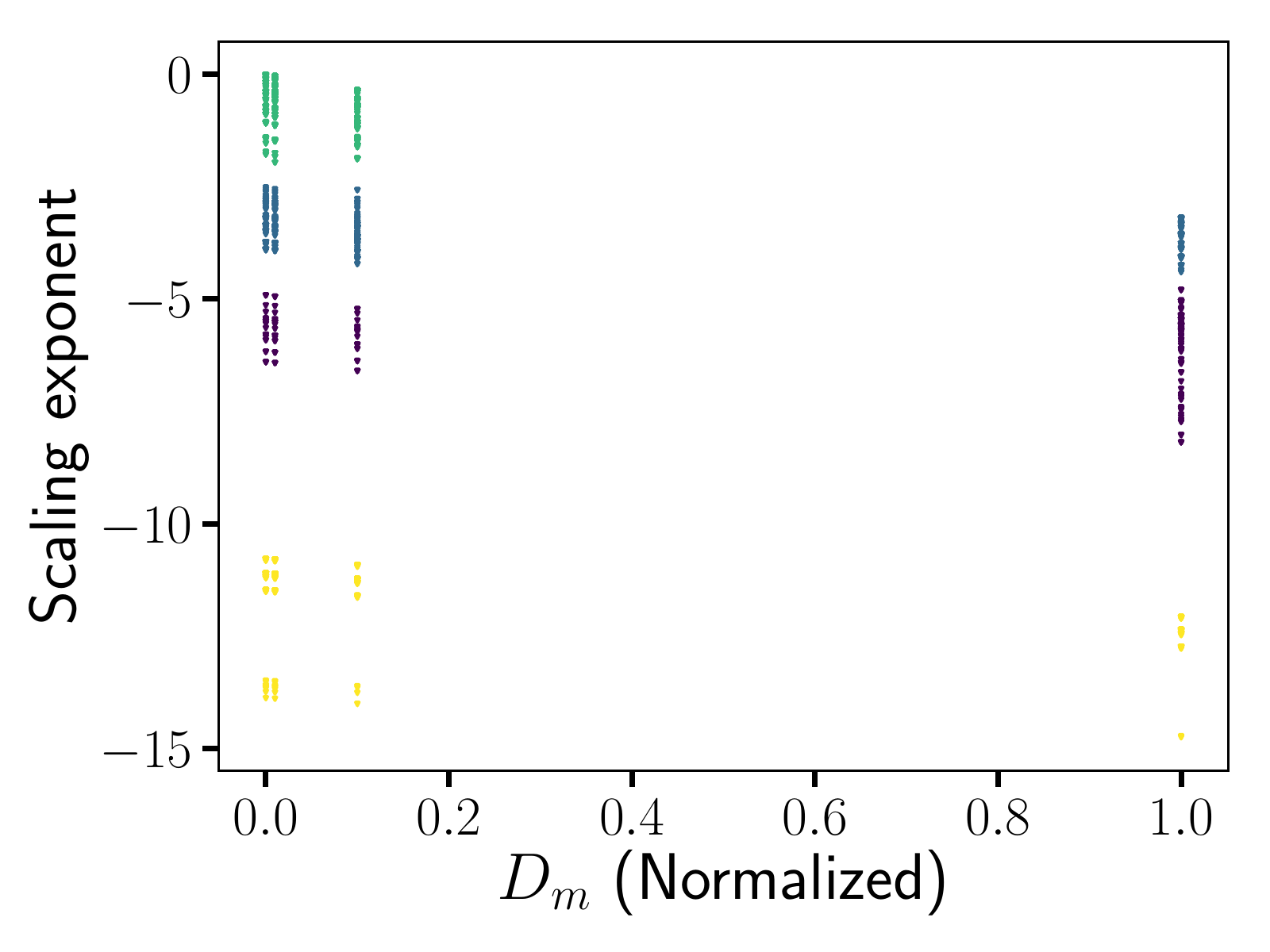}}
  \caption{\textsf{\textbf{\textit{k}-means clustering to identify salient 
    (high and low) mixing features:}}~These figures show the relationship 
    between scaling exponent and input parameters. A main inference from 
    these figures is that for lower values of $\log{[\frac{\alpha_L}{
    \alpha_T}]}$, the scaling exponent is low. Meaning that, enhanced 
    mixing occurs when $\alpha_L \approx \alpha_T$. As the ratio of 
    $\frac{\alpha_L}{\alpha_T}$ increases, we see incomplete mixing due 
    to high anisotropy.
  \label{Fig:Clustering_Analysis}}
\end{figure}

%-------------------------------------------------------------------------;
%  Figure-11: Feature importance using random forests (Degree of Mixing)  ;
%-------------------------------------------------------------------------;
\begin{figure}
  \centering
  \subfigure[Species $A$:~No. of trees = 5]
    {\includegraphics[width = 0.325\textwidth]
    {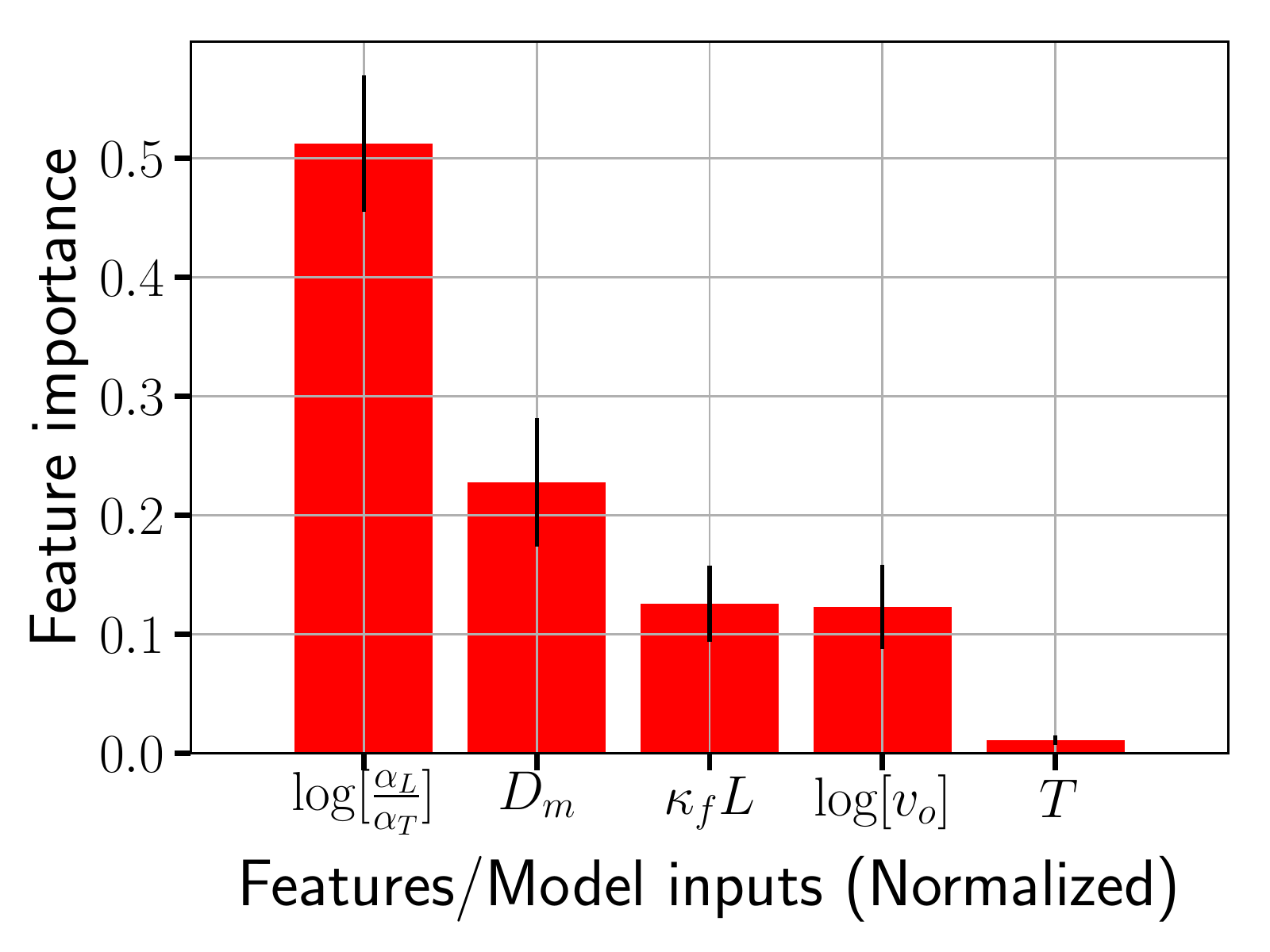}}
  \subfigure[Species $A$:~No. of trees = 100]
    {\includegraphics[width = 0.325\textwidth]
    {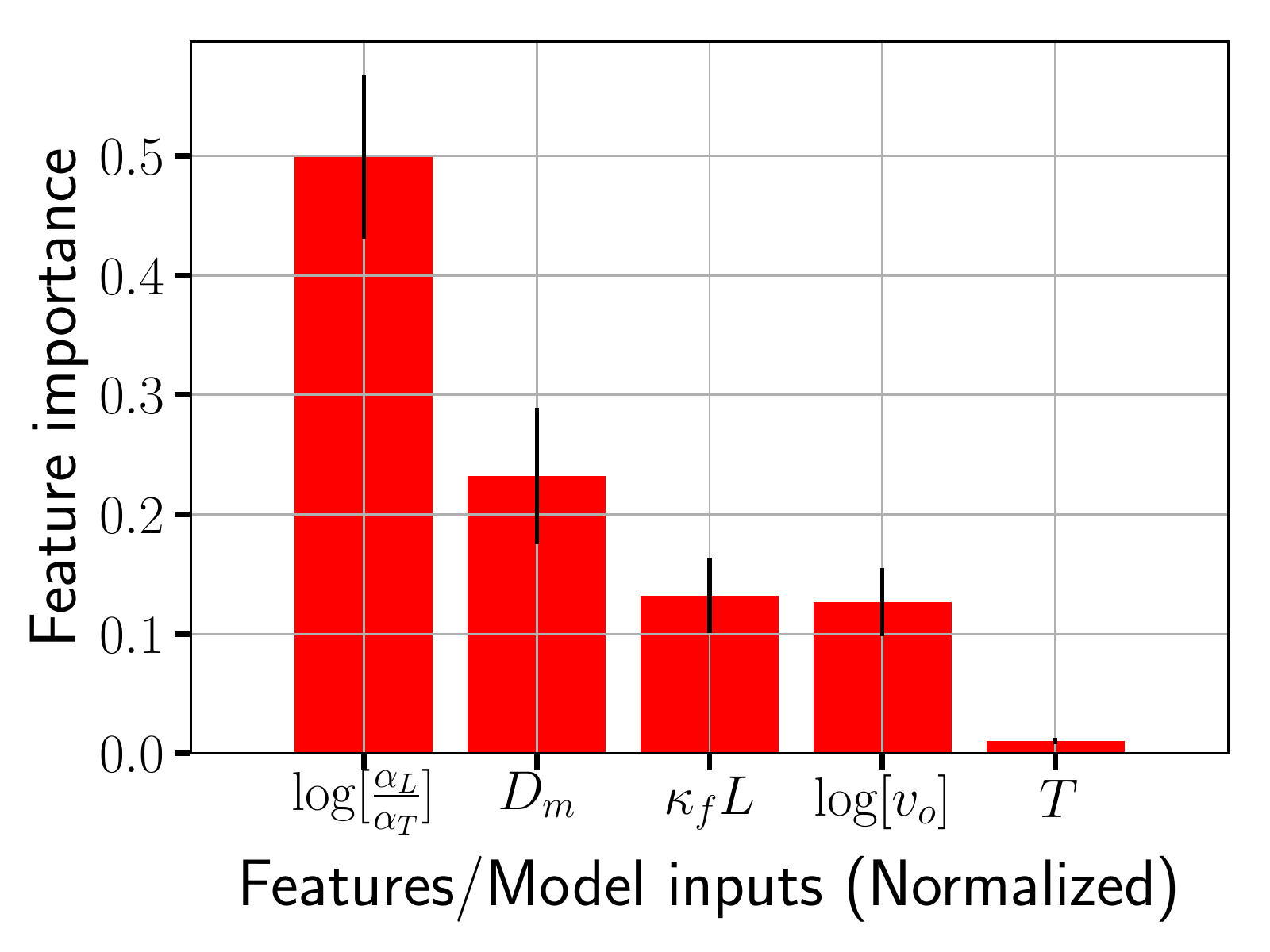}}
  \subfigure[Species $A$:~No. of trees = 250]
    {\includegraphics[width = 0.325\textwidth]
    {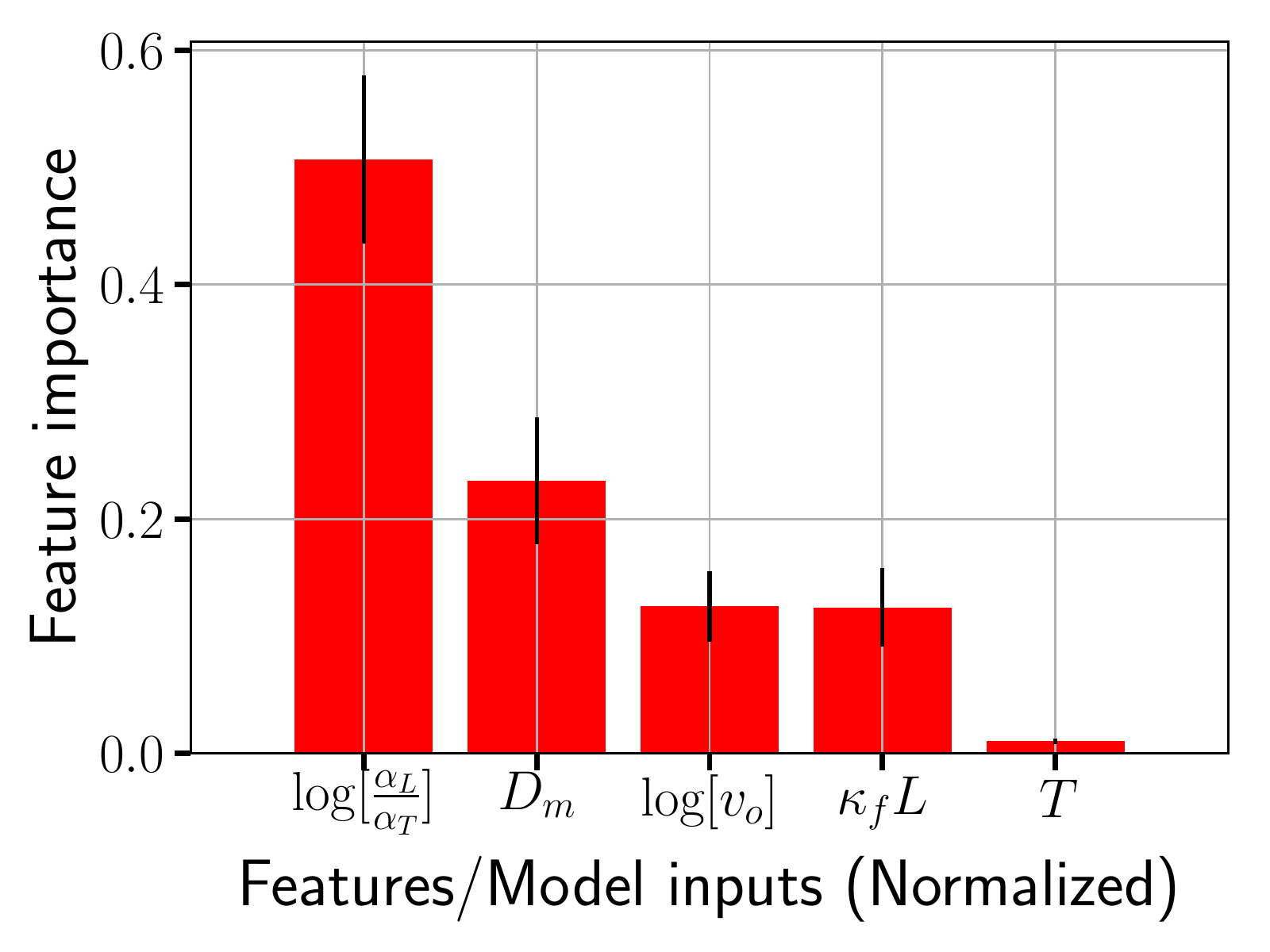}}
  \subfigure[Species $B$:~No. of trees = 5]
    {\includegraphics[width = 0.325\textwidth]
    {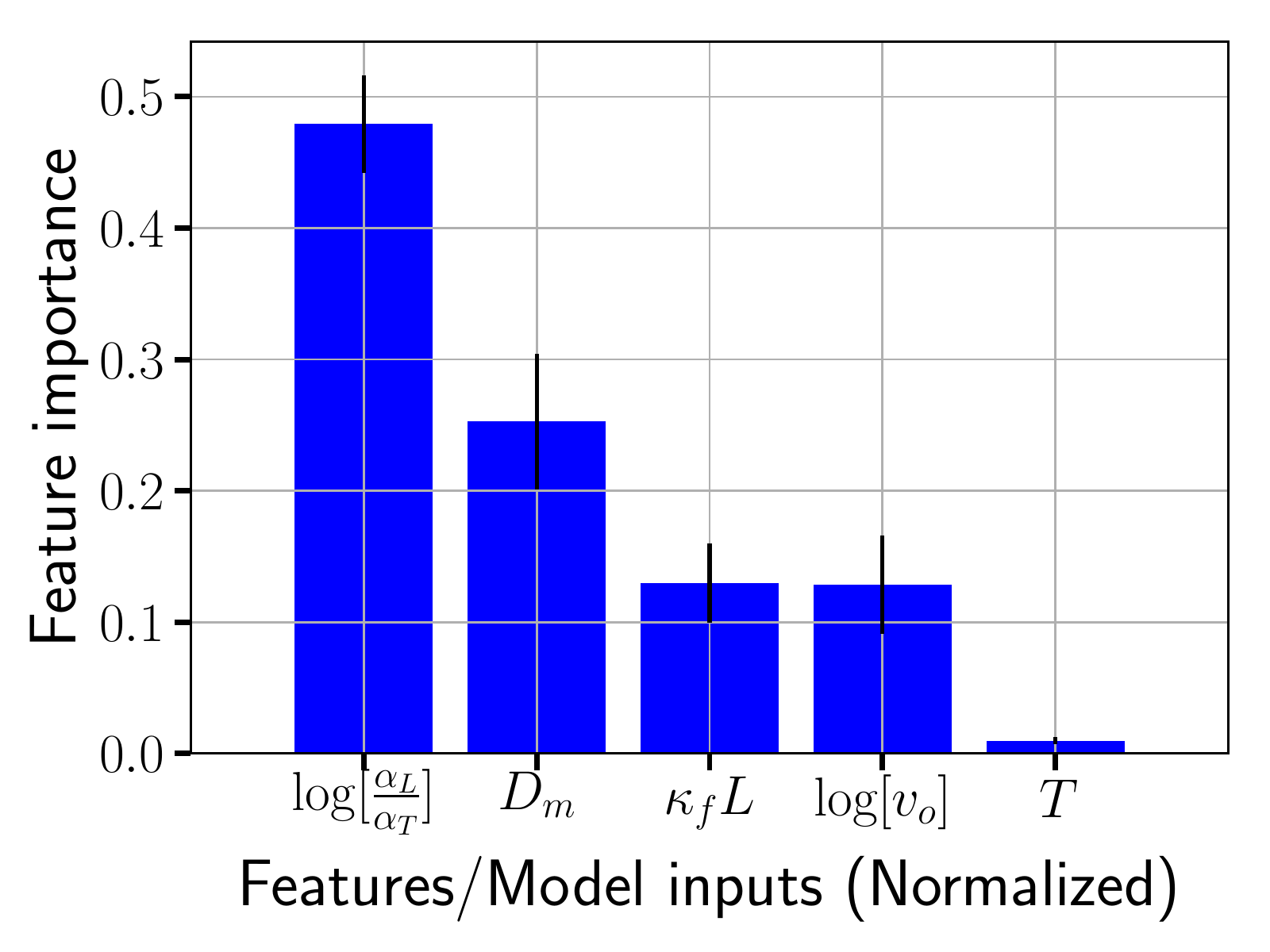}}
  \subfigure[Species $B$:~No. of trees = 100]
    {\includegraphics[width = 0.325\textwidth]
    {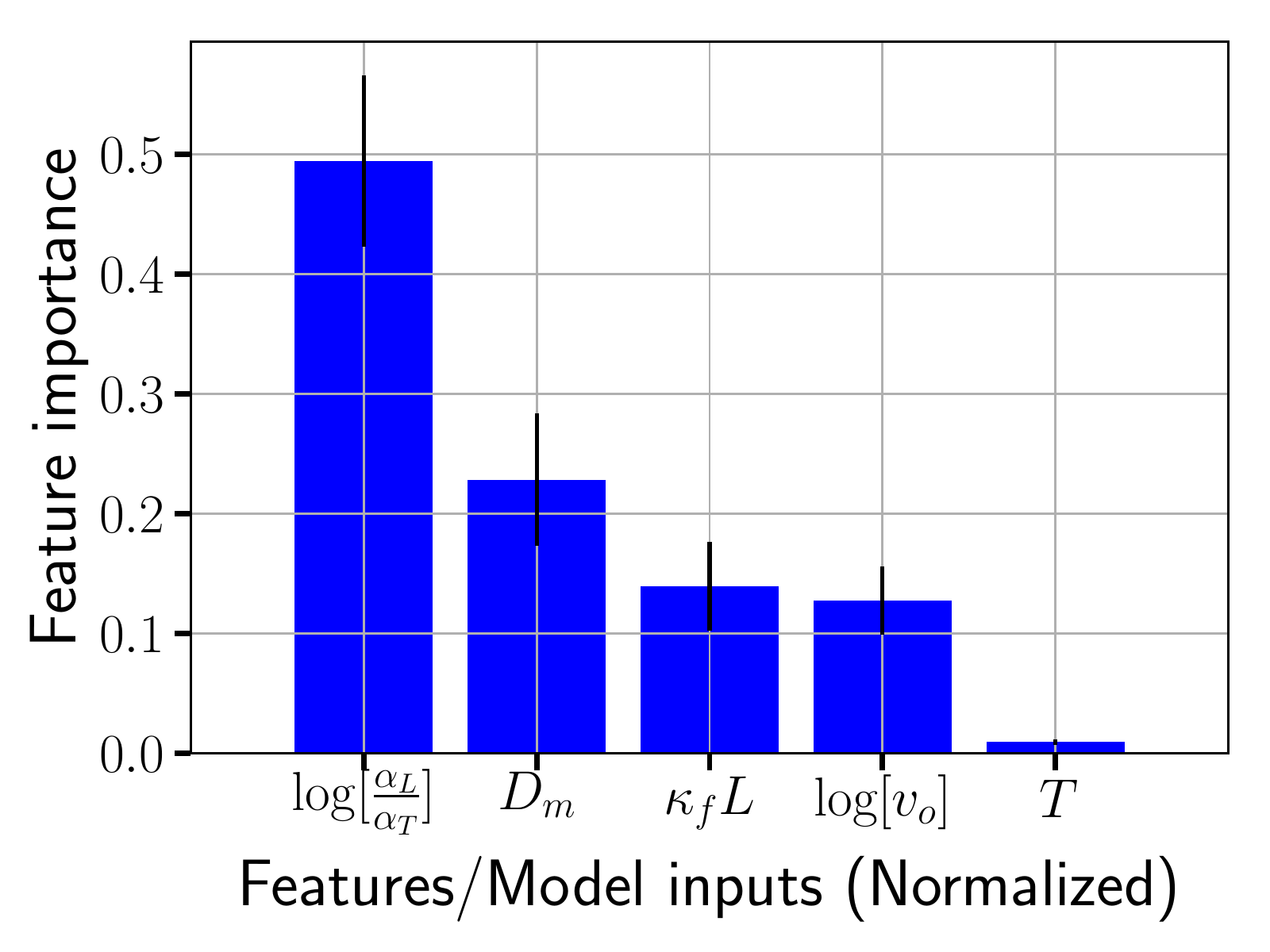}}
  \subfigure[Species $B$:~No. of trees = 250]
    {\includegraphics[width = 0.325\textwidth]
    {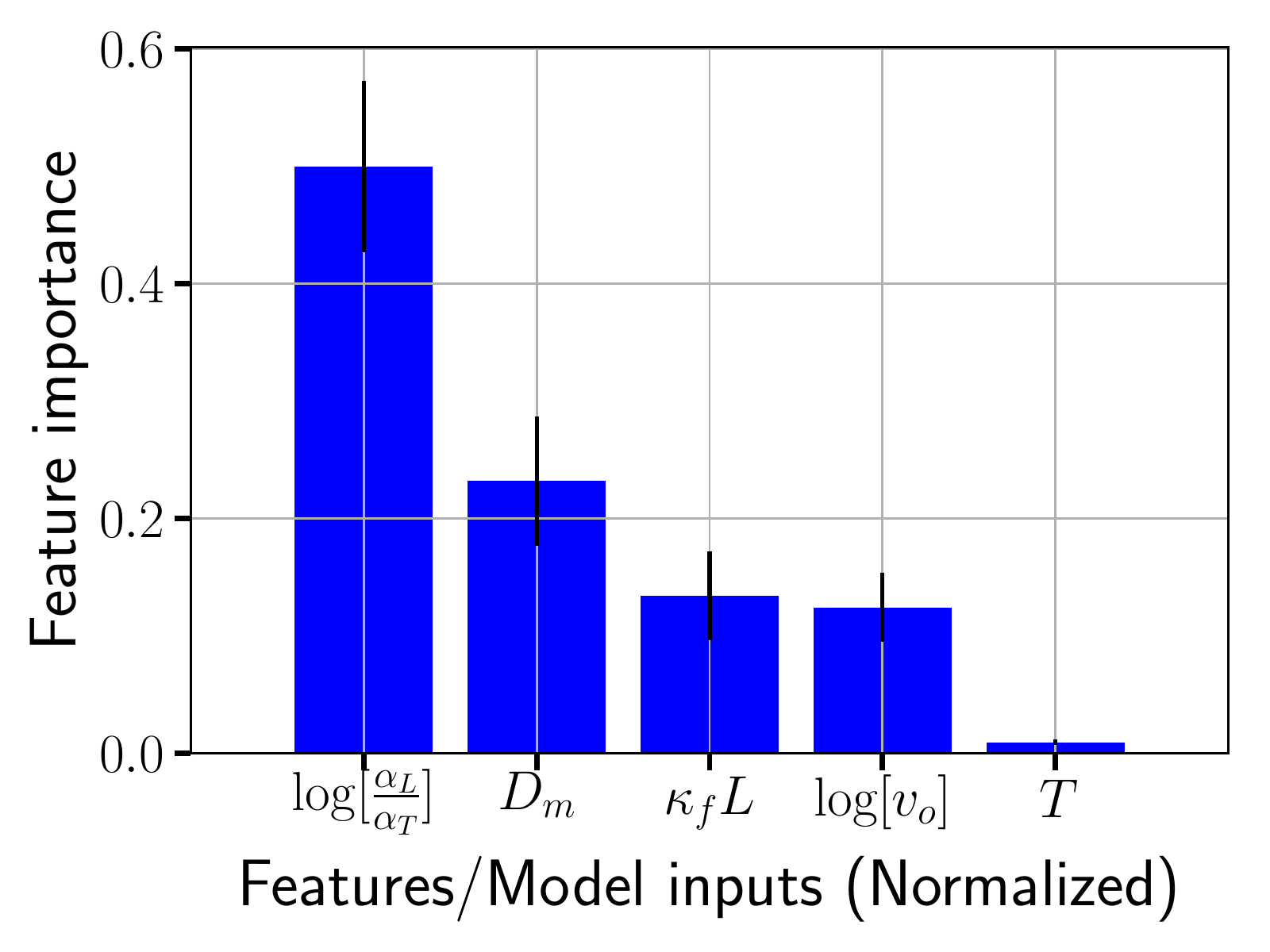}}
  \subfigure[Species $C$:~No. of trees = 5]
    {\includegraphics[width = 0.325\textwidth]
    {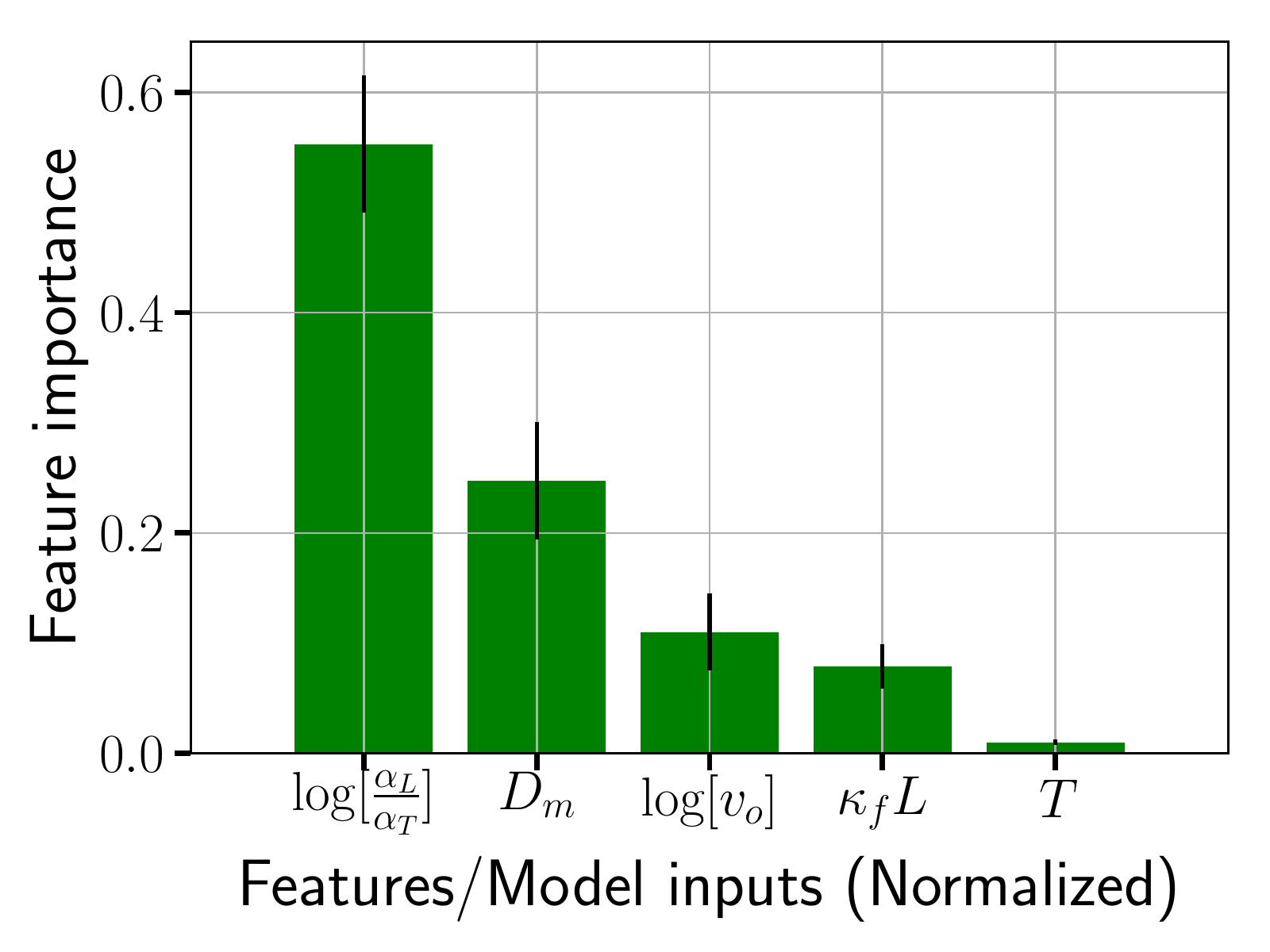}}
  \subfigure[Species $C$:~No. of trees = 100]
    {\includegraphics[width = 0.325\textwidth]
    {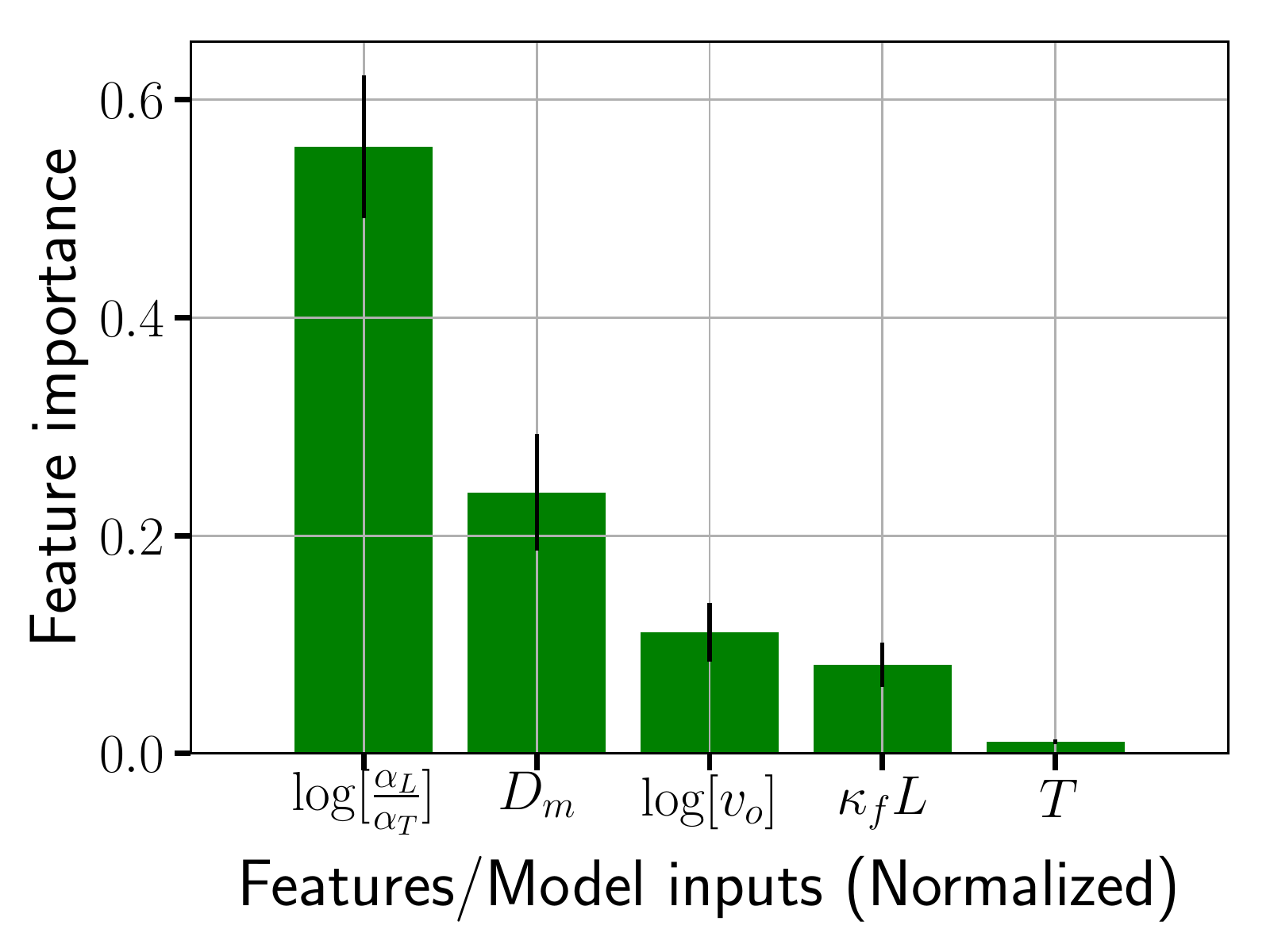}}
  \subfigure[Species $C$:~No. of trees = 250]
    {\includegraphics[width = 0.325\textwidth]
    {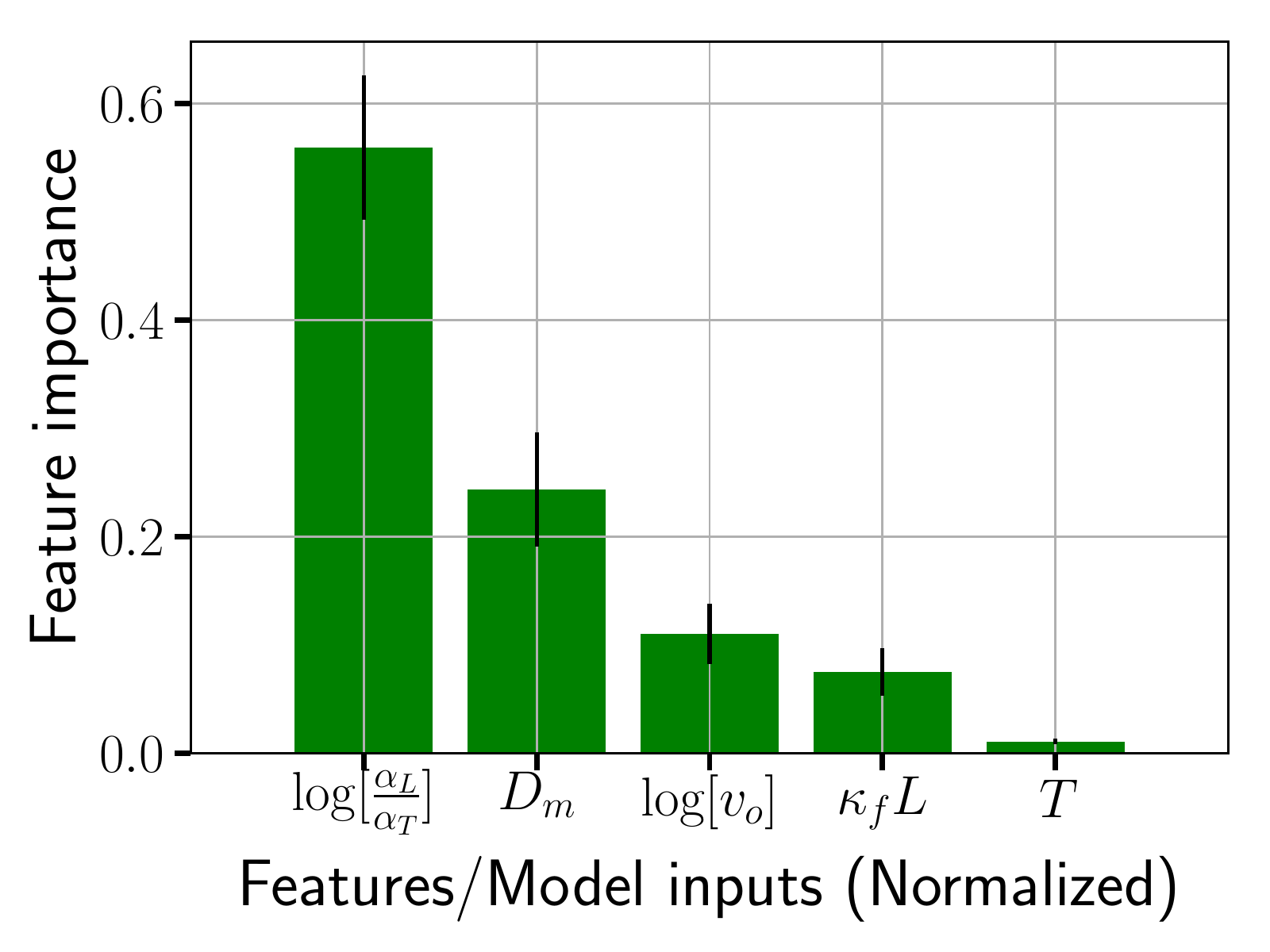}}
  \caption{\textsf{\textbf{Feature importance using Random Forests:}}~These 
    figures show the use of forests of trees, which fit a number of randomized 
    decision trees on different sub-samples of the data to evaluate the relative 
    importance of model inputs. The color bars indicated the feature importances 
    of the random forests, along with their inter-tree variability. It is clear 
    that $\log{[\frac{\alpha_L}{\alpha_T}]}$ is the most important feature and 
    $T$ being the least important. After $\log{[\frac{\alpha_L}{\alpha_T}]}$, 
    $D_m$ is the next informative feature. The parameters $\kappa_fL$ and 
    $\log{\left[v_o \right]}$ are important features after $D_m$.
  \label{Fig:RF_DOM_Feature_Importance}}
\end{figure}

%--------------------------------------------------------------------------;
%  Figure-12: Feature importance using F-test (Avg.Conc, Aq.Sq.Conc, DoM)  ;
%--------------------------------------------------------------------------;
\begin{figure}
  \centering
  \subfigure[Species $A$:~$\mathfrak{c}_A$]
    {\includegraphics[width = 0.325\textwidth]
    {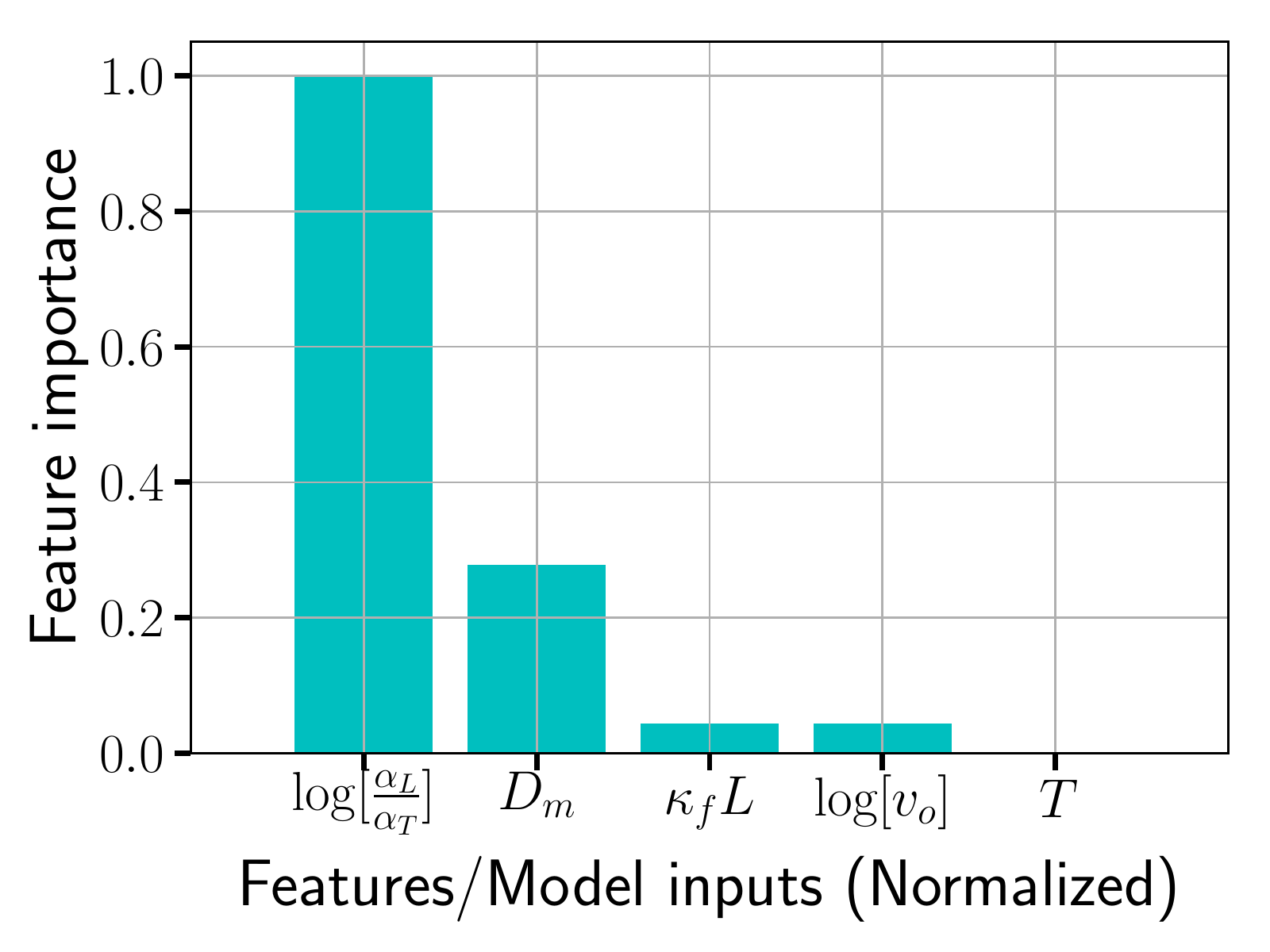}}
  \subfigure[Species $A$:~$\mathbb{c}_A$]
    {\includegraphics[width = 0.325\textwidth]
    {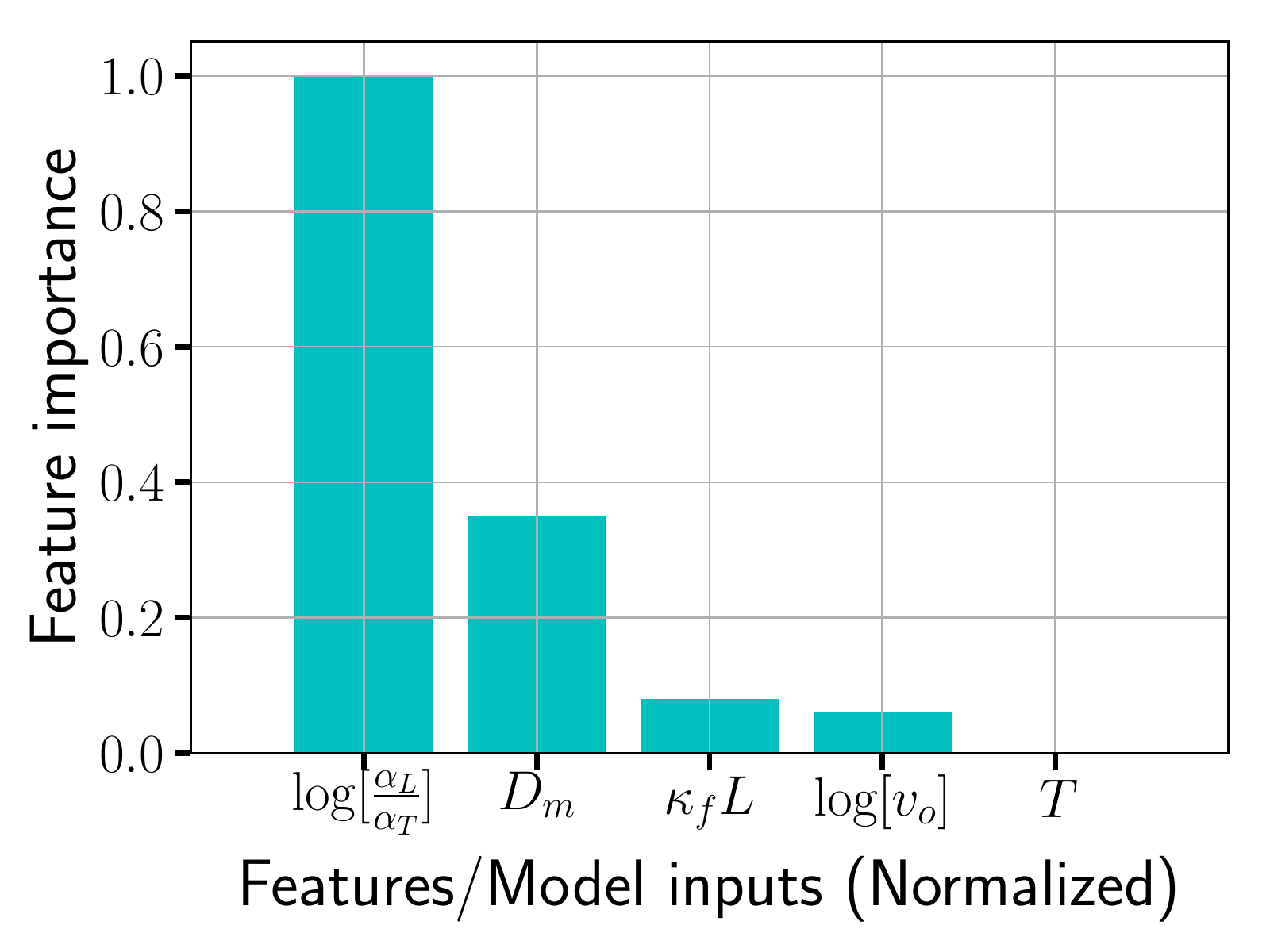}}
  \subfigure[Species $A$:~$\sigma^2_A$]
    {\includegraphics[width = 0.325\textwidth]
    {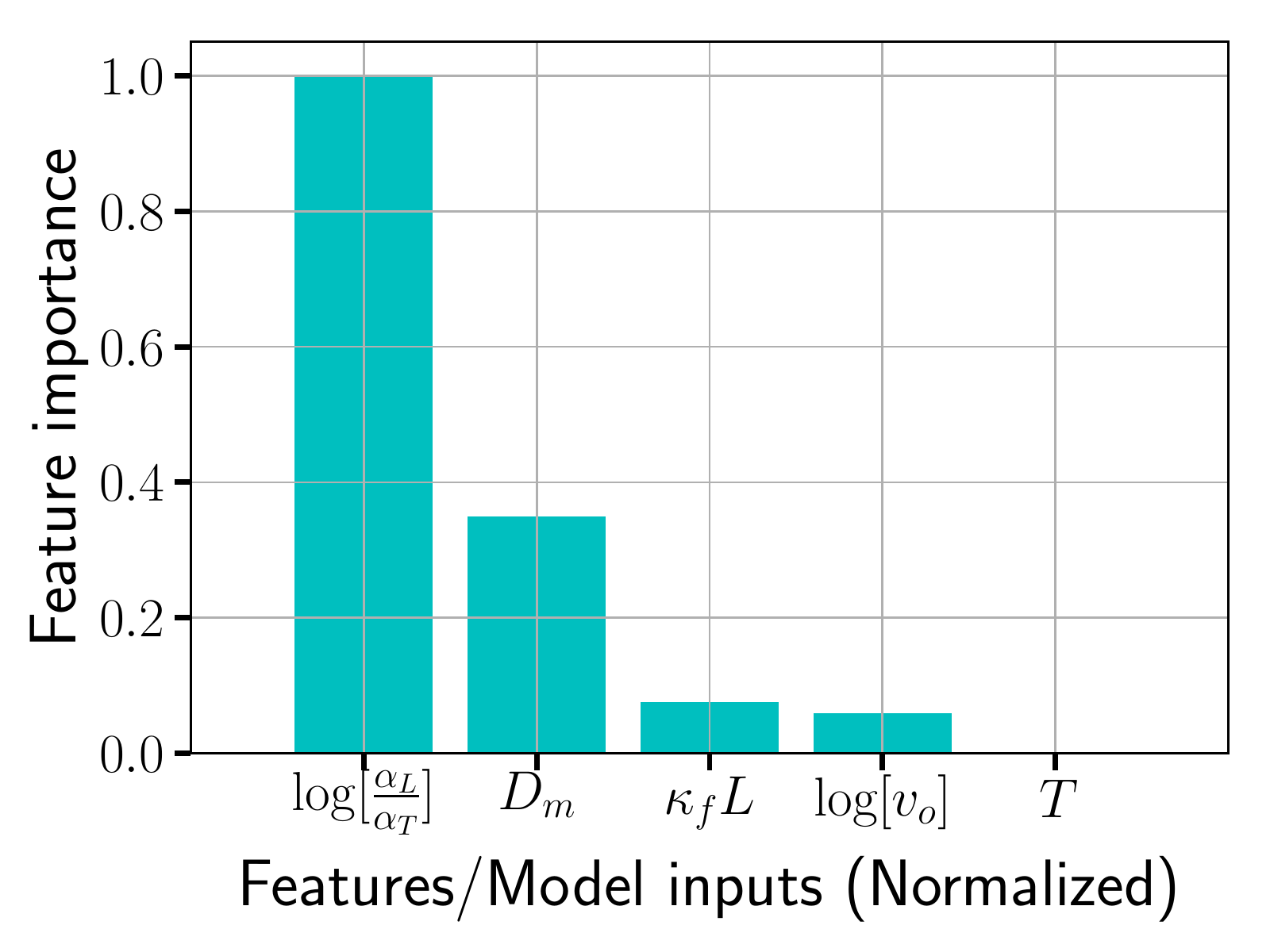}}
  \subfigure[Species $B$:~$\mathfrak{c}_B$]
    {\includegraphics[width = 0.325\textwidth]
    {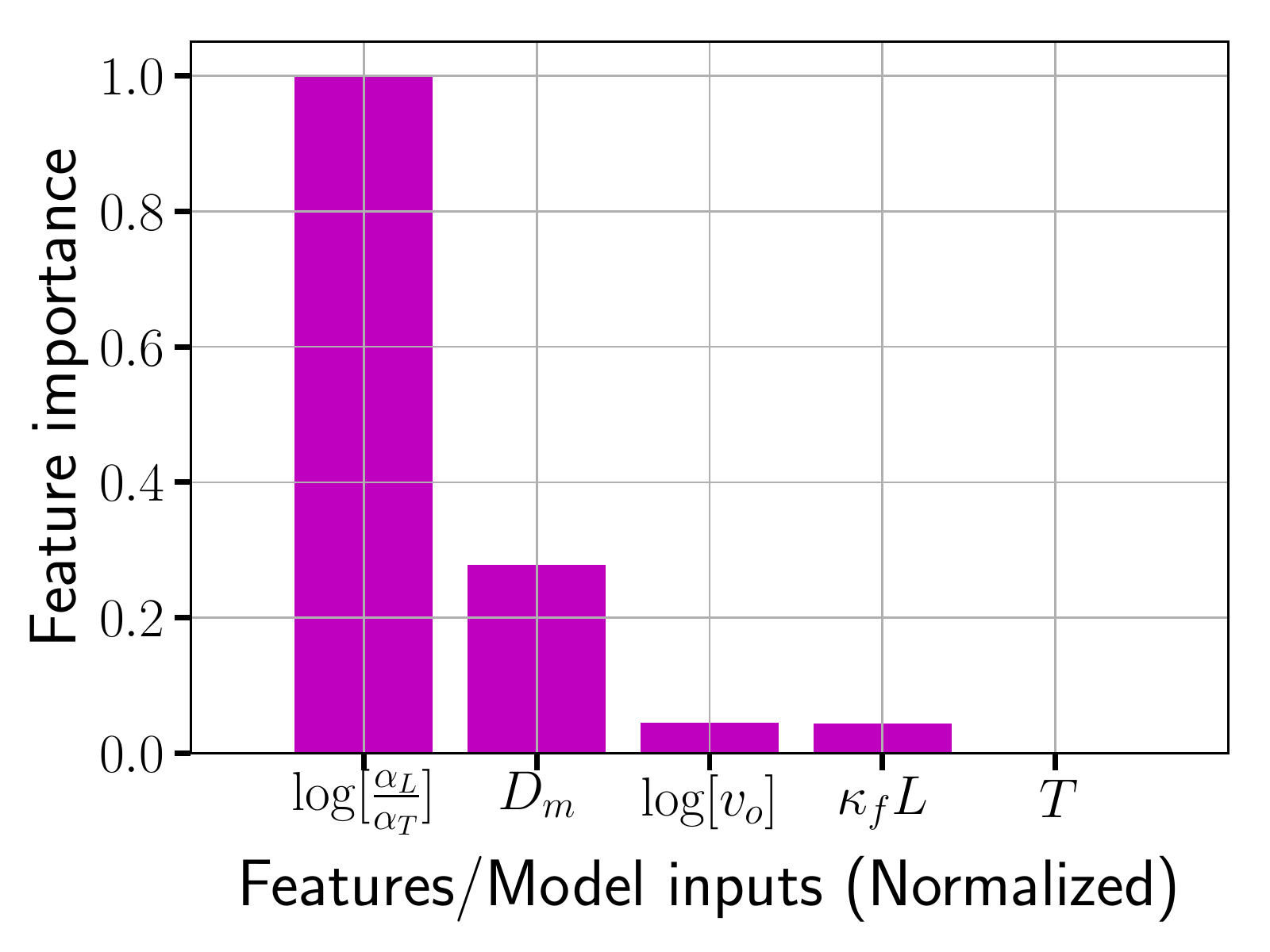}}
  \subfigure[Species $B$:~$\mathbb{c}_B$]
    {\includegraphics[width = 0.325\textwidth]
    {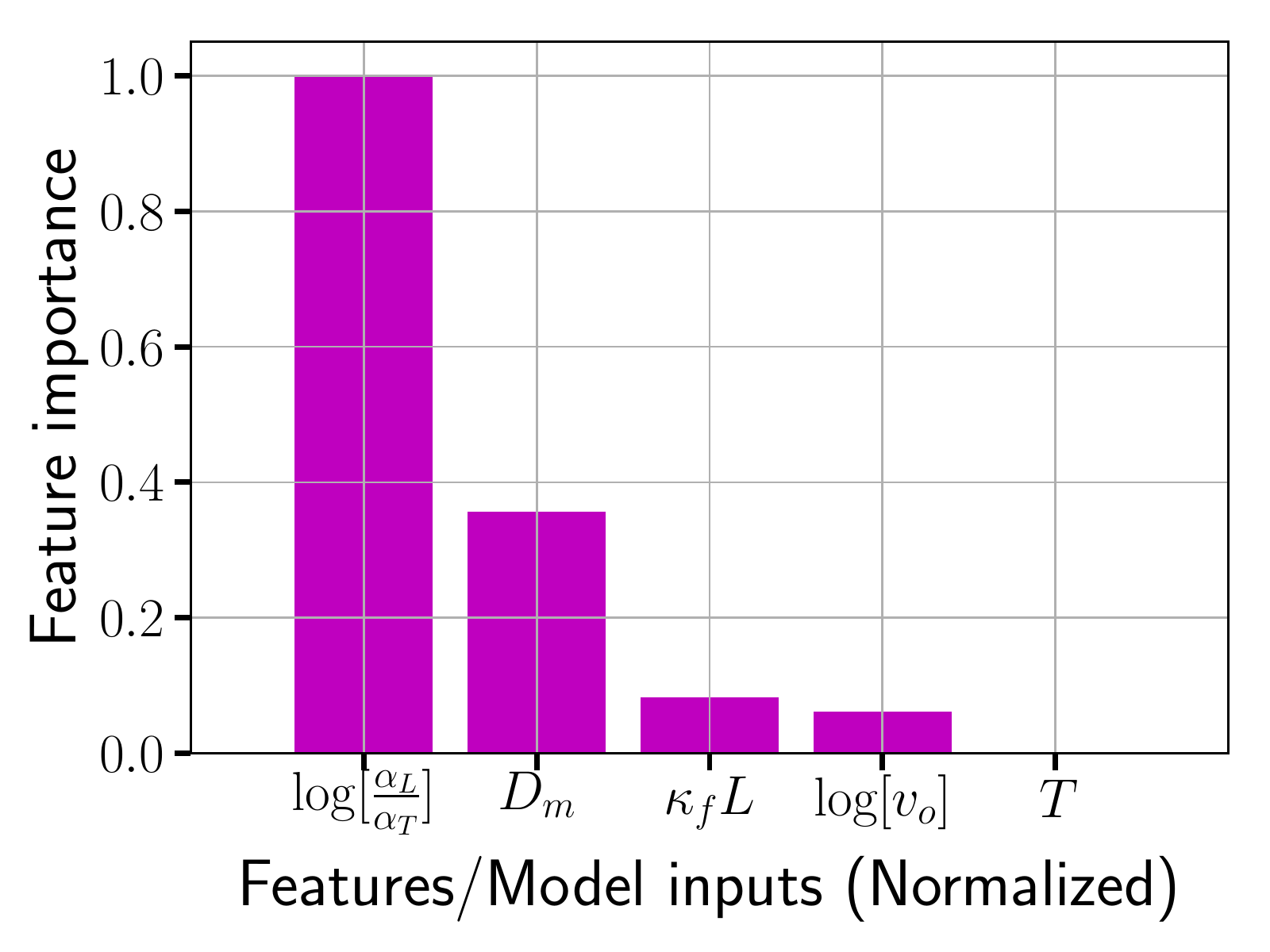}}
  \subfigure[Species $B$:~$\sigma^2_B$]
    {\includegraphics[width = 0.325\textwidth]
    {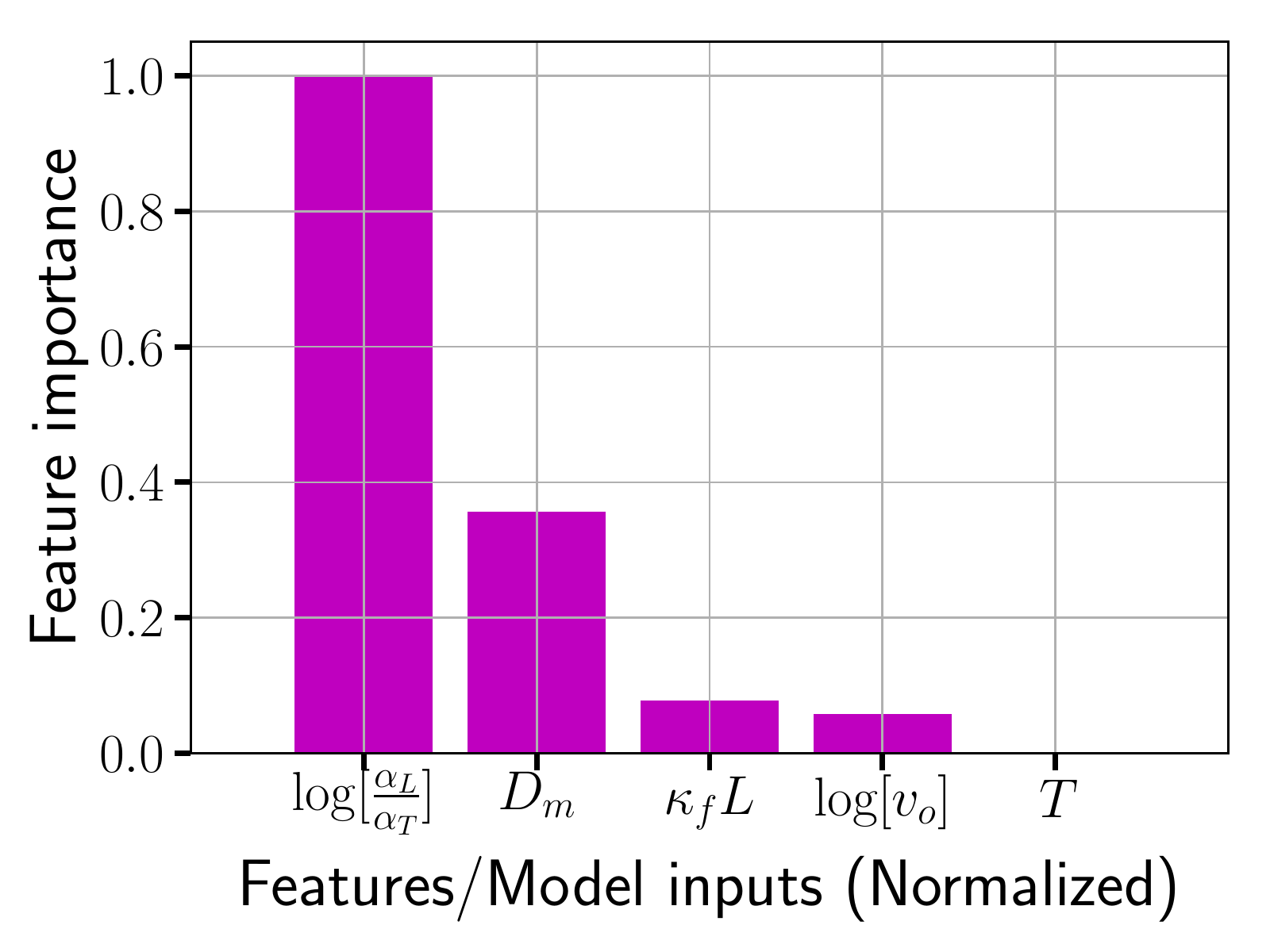}}
  \subfigure[Species $C$:~$\mathfrak{c}_C$]
    {\includegraphics[width = 0.325\textwidth]
    {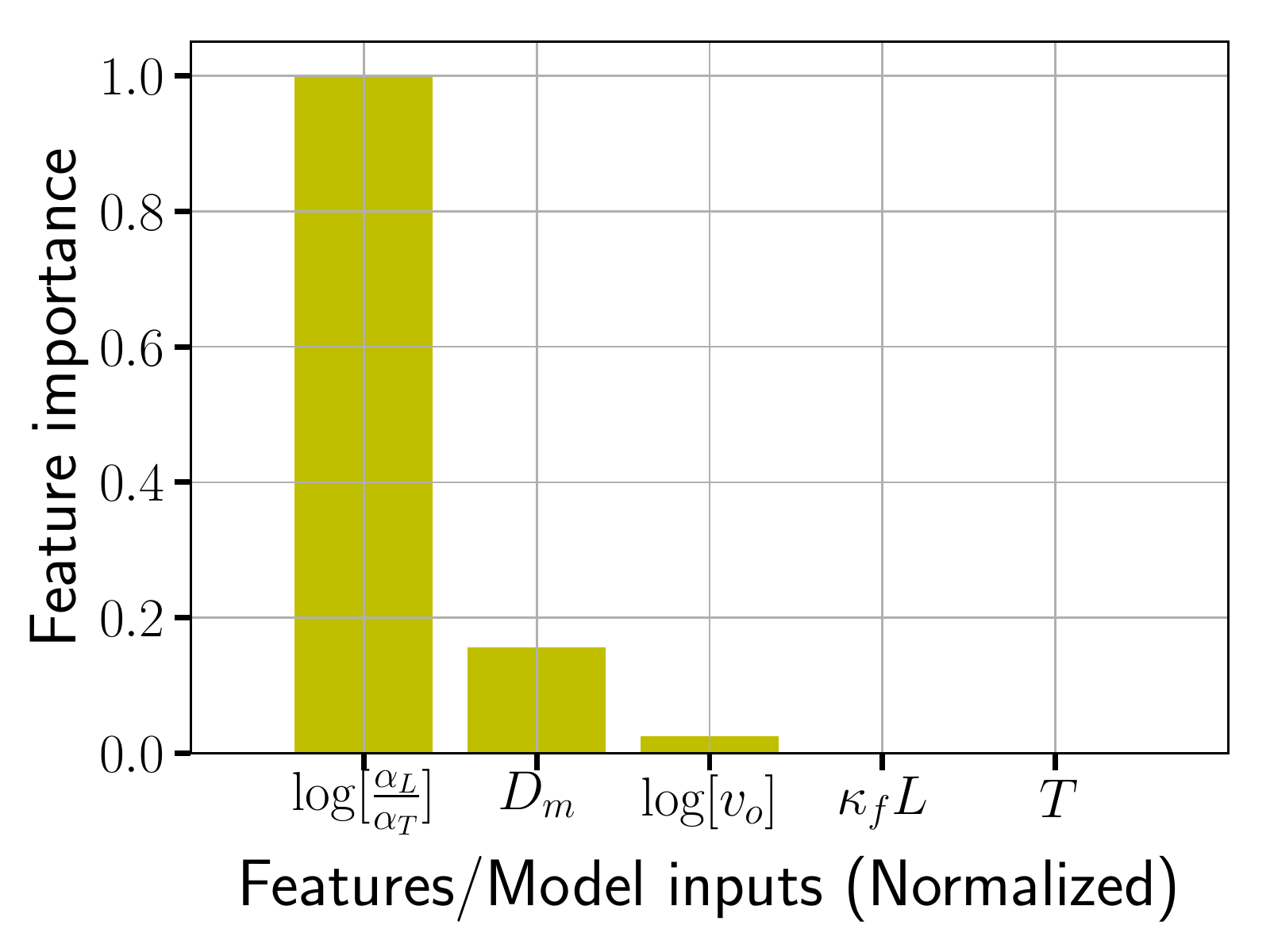}}
  \subfigure[Species $C$:~$\mathbb{c}_C$]
    {\includegraphics[width = 0.325\textwidth]
    {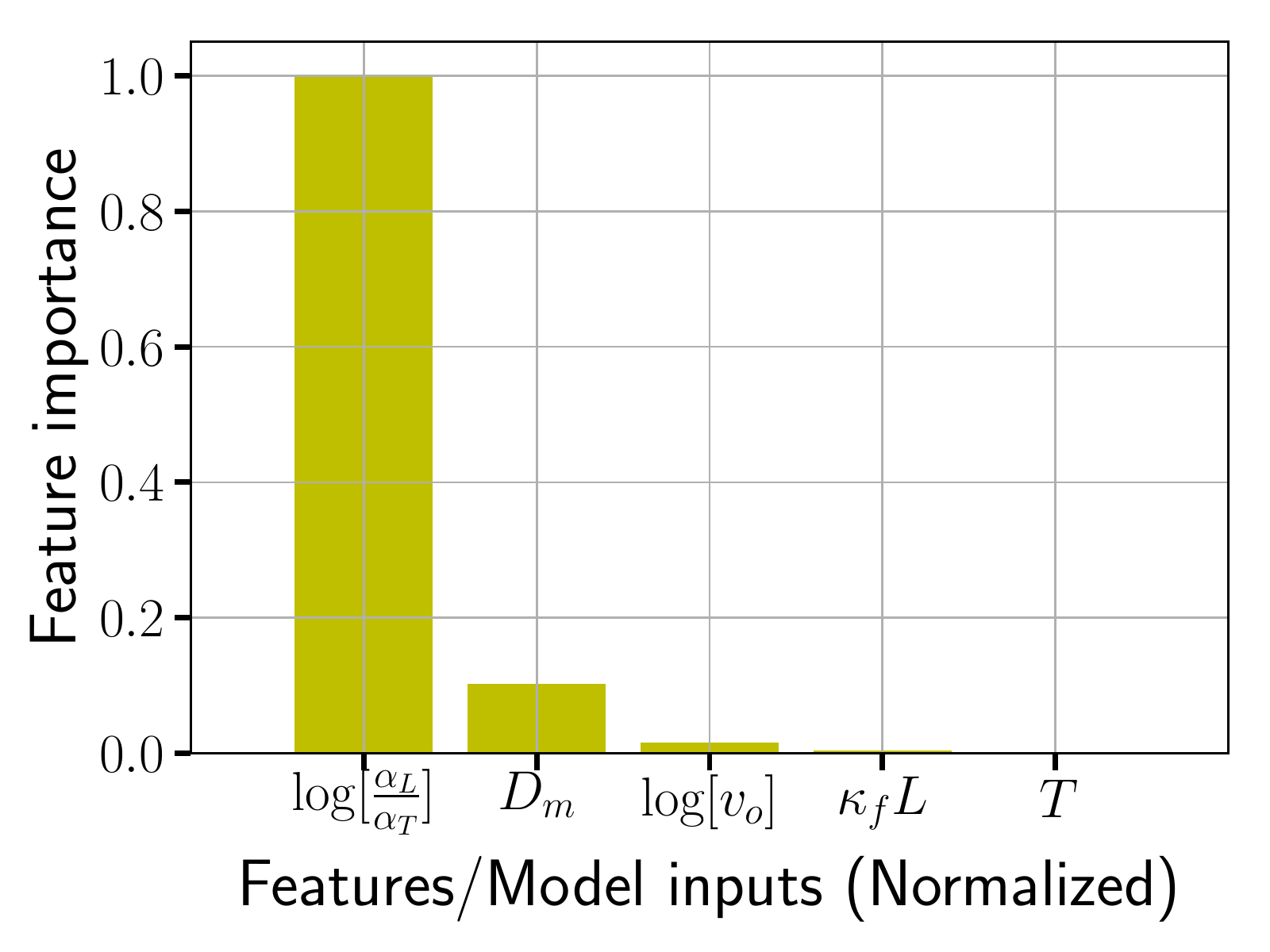}}
  \subfigure[Species $C$:~$\sigma^2_C$]
    {\includegraphics[width = 0.325\textwidth]
    {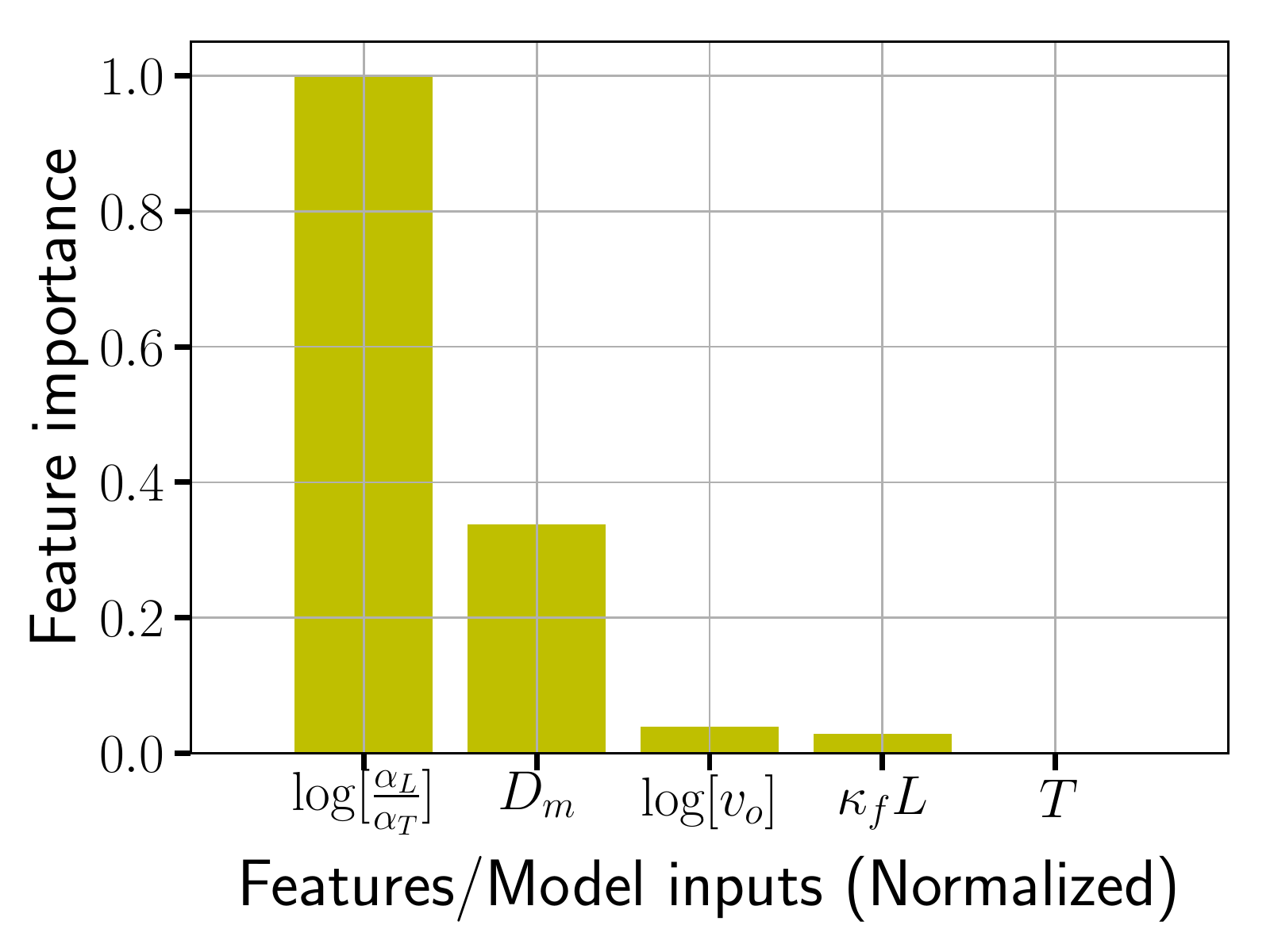}}
  \caption{\textsf{\textbf{Feature importance using F-test:}}~These 
    figures show relative importance of five different input parameters 
    using F-test. Similar to random forests, F-test indicates that $\log{[\frac{\alpha
    _L}{\alpha_T}]}$ and $T$ are the most important and least important 
    features. On the other hand, compared to random forests, F-test shows that 
    $\kappa_fL$ and $\log{\left[v_o \right]}$ are also not relevant.
  \label{Fig:Ftest_Feature_Importance}}
\end{figure}

%----------------------------------------------------------------------;
%  Figure-13: Feature importance using MI-Criteria (Degree of Mixing)  ;
%----------------------------------------------------------------------;
\begin{figure}
  \centering
  \subfigure[Species $A$:~$\sigma^2_A$]
    {\includegraphics[width = 0.325\textwidth]
    {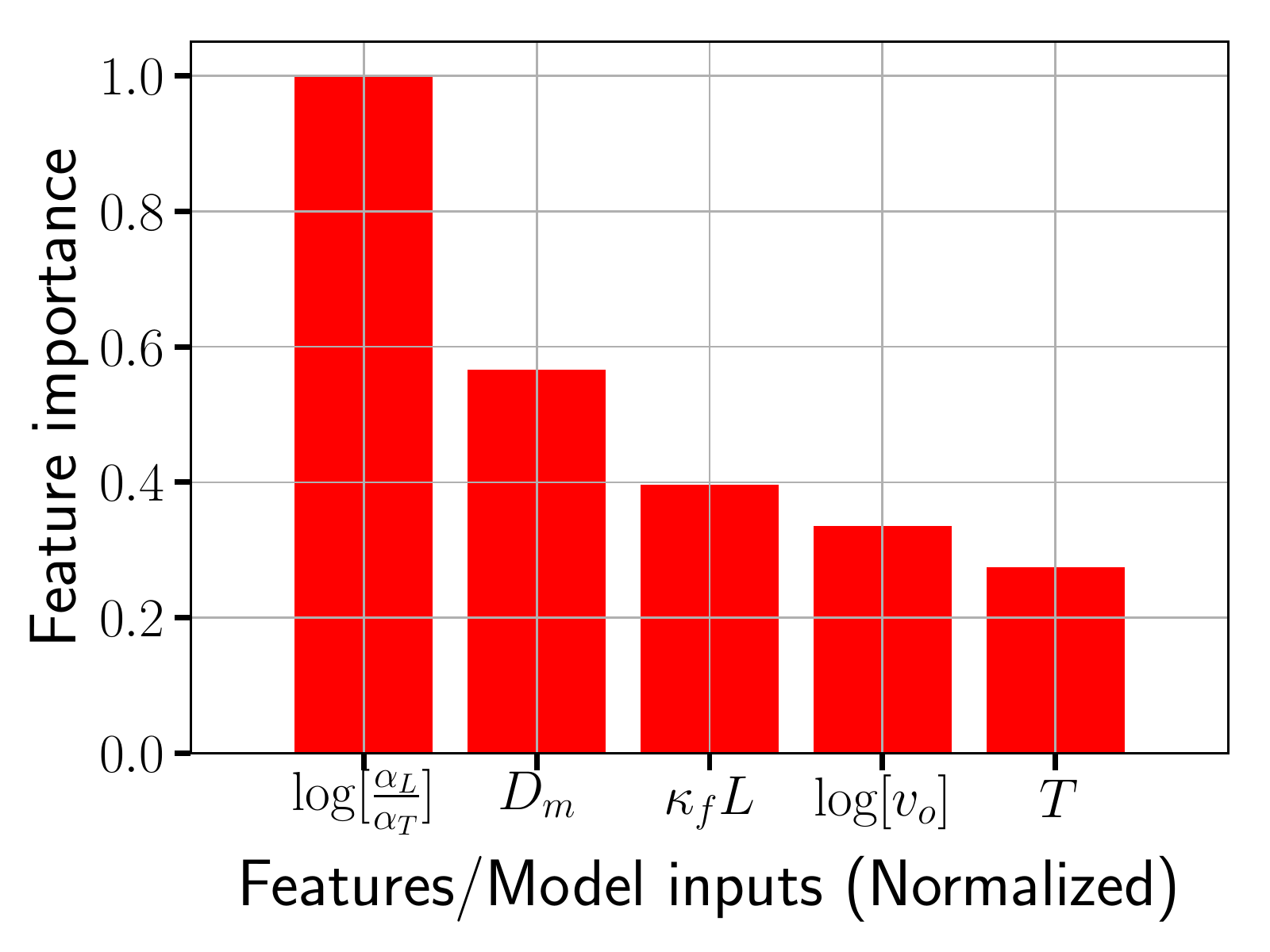}}
  \subfigure[Species $B$:~$\sigma^2_B$]
    {\includegraphics[width = 0.325\textwidth]
    {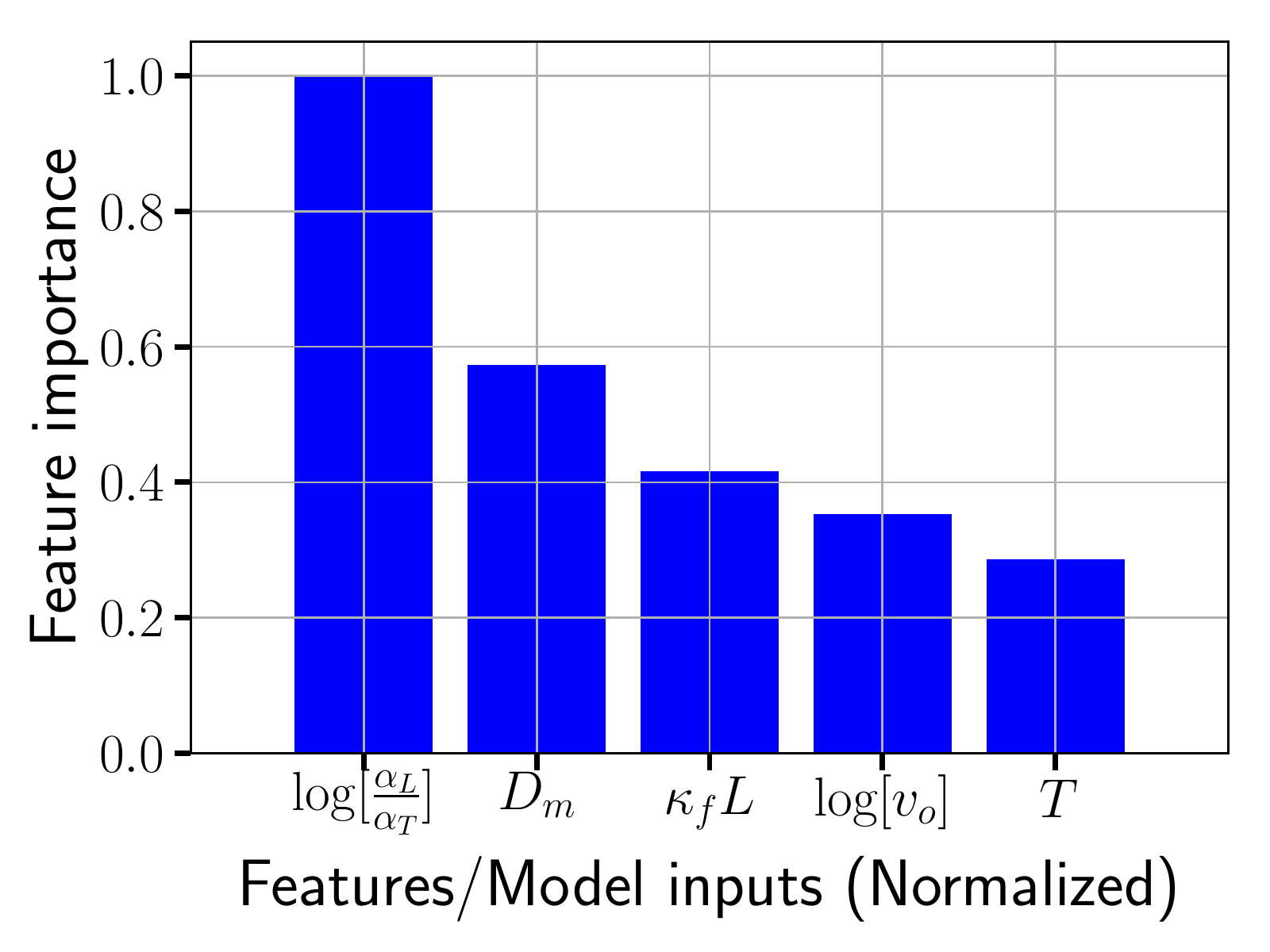}}
  \subfigure[Species $C$:~$\sigma^2_C$]
    {\includegraphics[width = 0.325\textwidth]
    {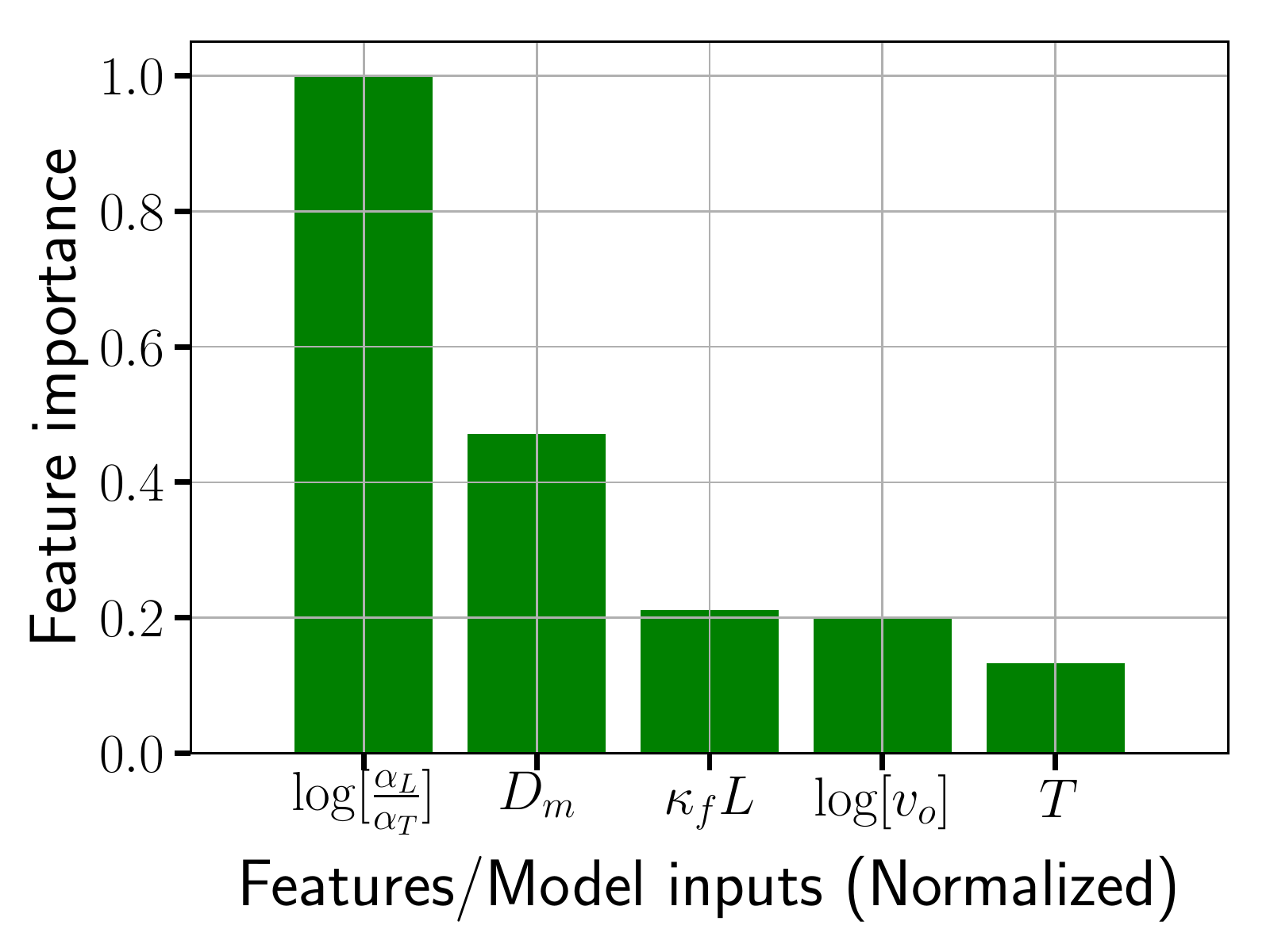}}
  \caption{\textsf{\textbf{Feature importance using MI-criteria:}}~These 
    figures show the relative importance of model inputs using mutual 
    information criteria for degree of mixing of species $A$, $B$, and $C$.
    Similar to random forests and F-test, MI-criteria shows that $\log{[\frac{
    \alpha_L}{\alpha_T}]}$ is the most important feature among all input parameters. 
    Next important features are in the following order $D_m$, $\kappa_fL$, $\log{\left[v_o 
    \right]}$, and $T$. One can observe a steep decrease in the feature importance 
    from $\log{[\frac{\alpha_L}{\alpha_T}]}$ to $D_m$, which is accordance with 
    random forests and F-test. However, there is a gradual decrease in feature 
    importance from $D_m$ to $T$, which is in contrast with feature importances 
    from random forests and F-test.
  \label{Fig:MIC_Feature_Importance}}
\end{figure}

%-------------------------------------------------;
%  Figure-14: SVR-ROMs prediction on unused data  ;
%-------------------------------------------------;
\begin{figure}
  \centering
  \subfigure[Species $A$:~$\mathfrak{c}_A$]
    {\includegraphics[width = 0.325\textwidth]
    {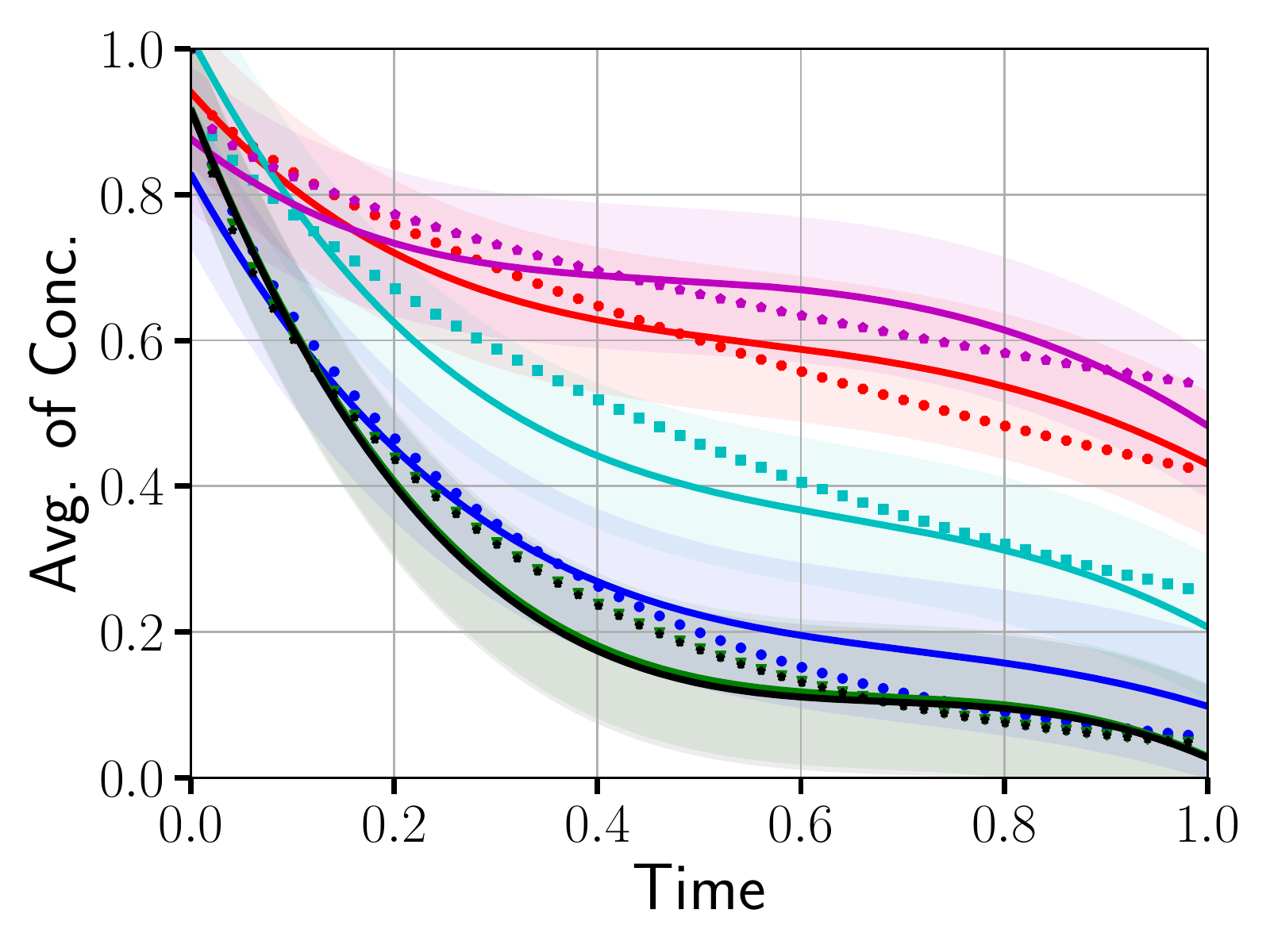}}
  \subfigure[Species $A$:~$\mathbb{c}_A$]
    {\includegraphics[width = 0.325\textwidth]
    {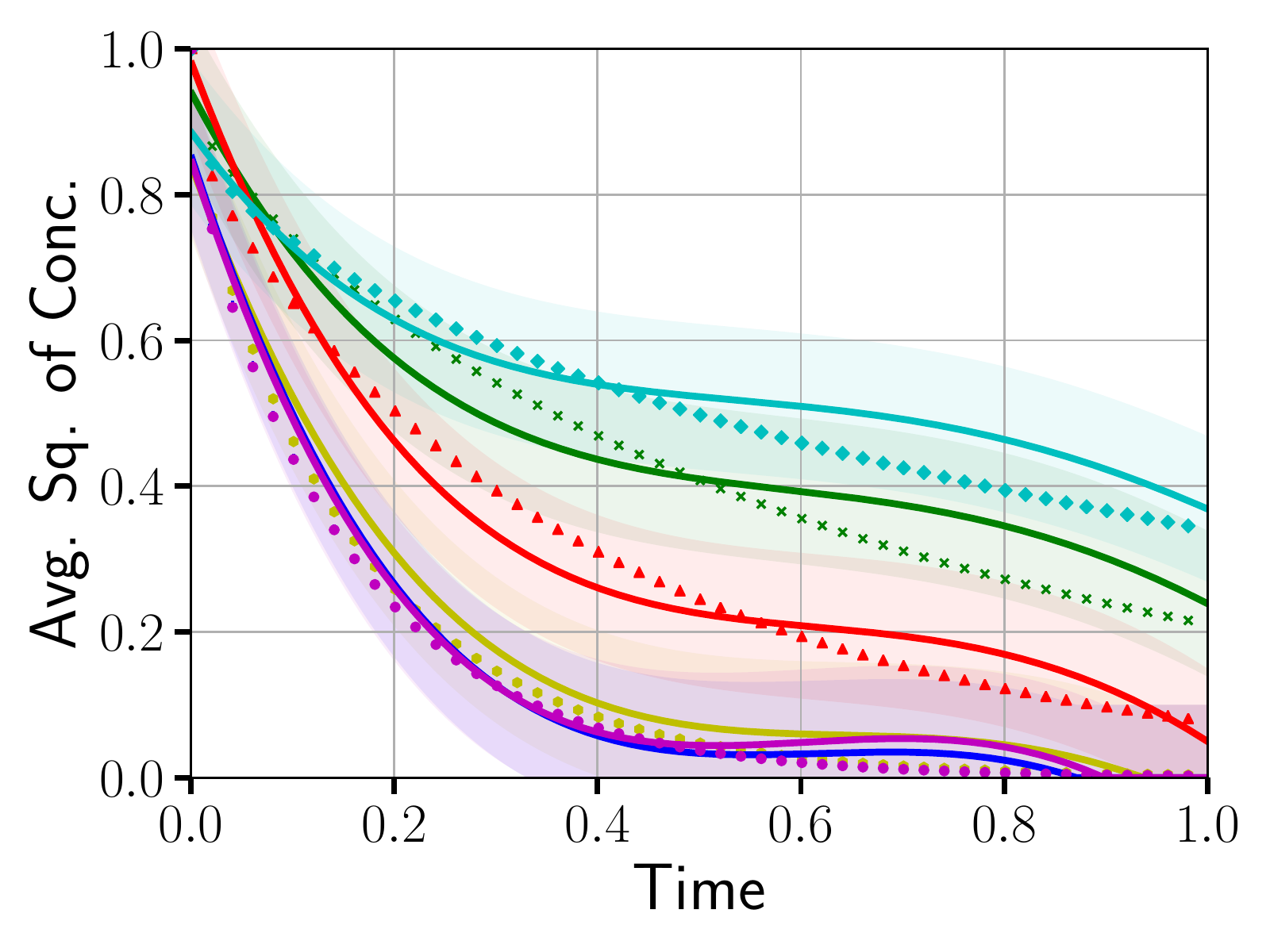}}
  \subfigure[Species $A$:~$\sigma^2_A$]
    {\includegraphics[width = 0.325\textwidth]
    {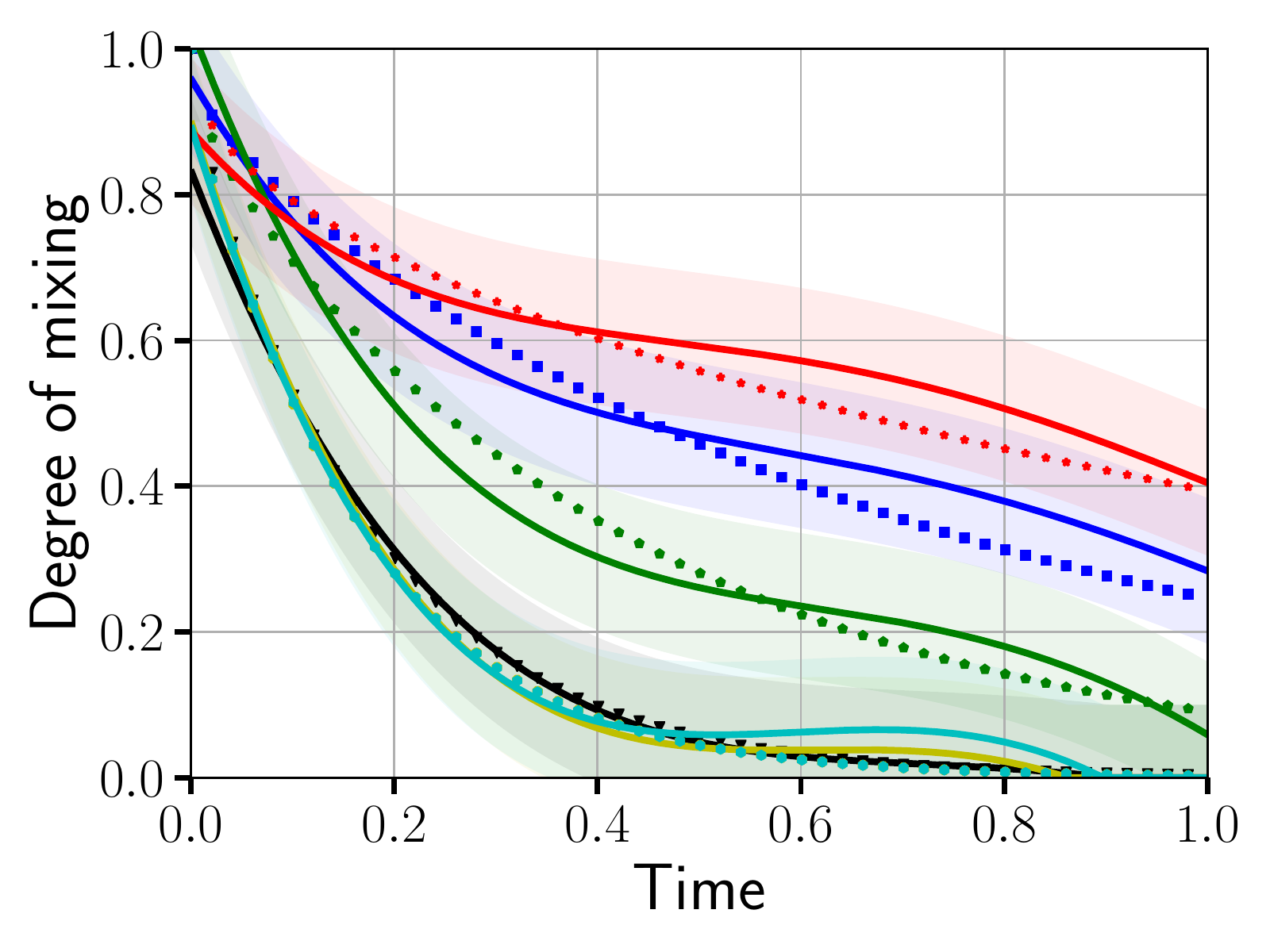}}
  \subfigure[Species $B$:~$\mathfrak{c}_B$]
    {\includegraphics[width = 0.325\textwidth]
    {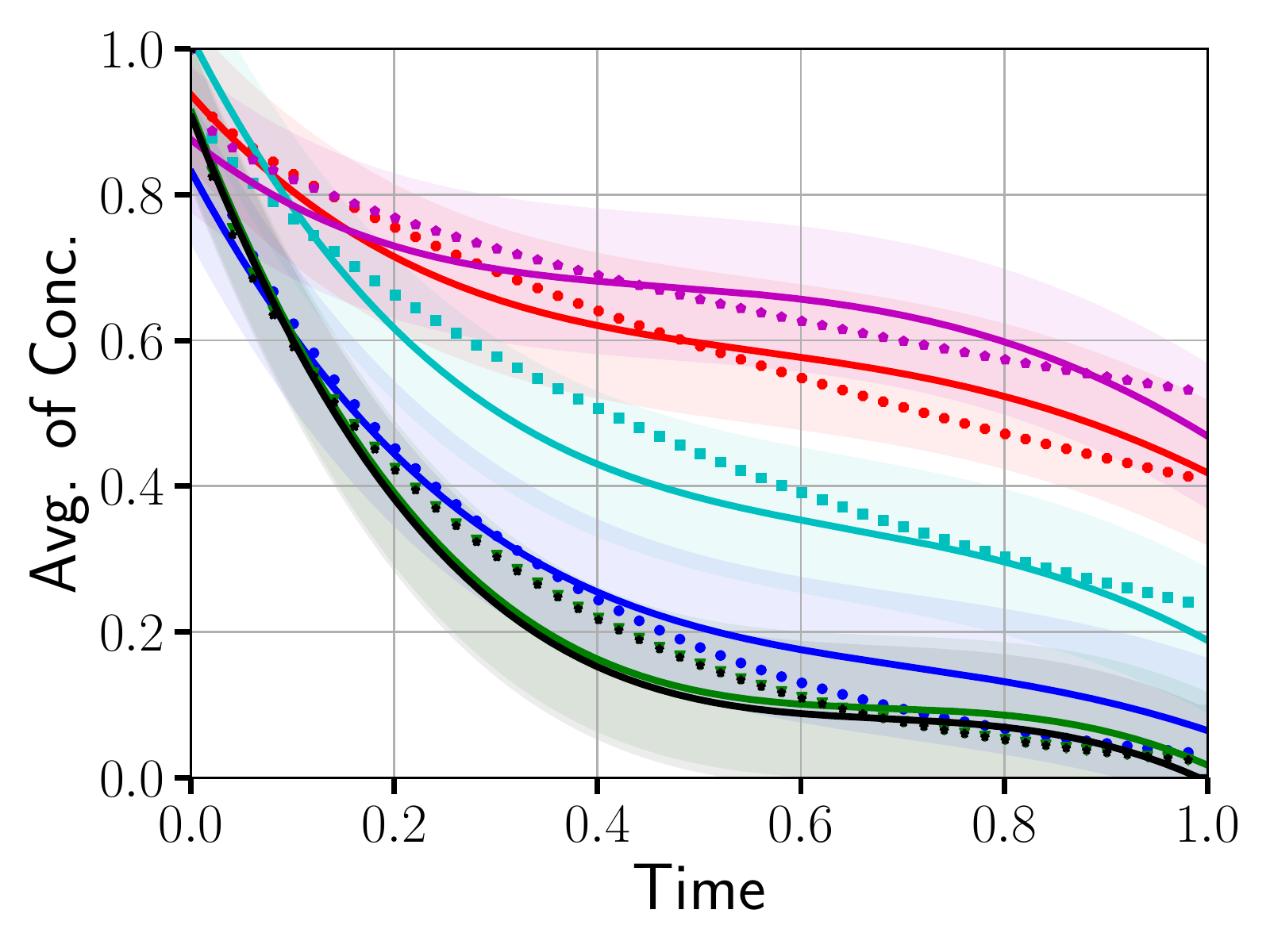}}
  \subfigure[Species $B$:~$\mathbb{c}_B$]
    {\includegraphics[width = 0.325\textwidth]
    {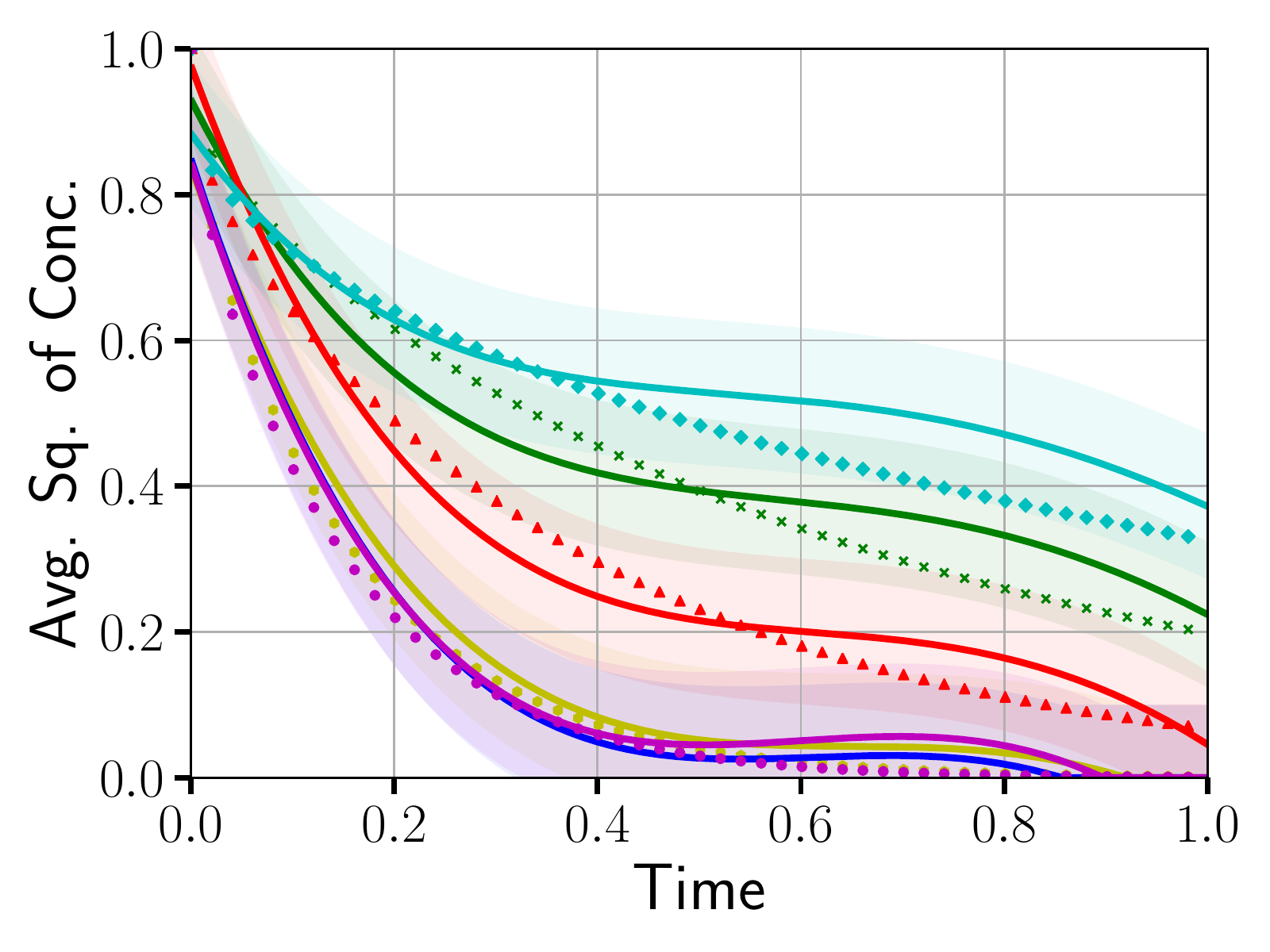}}
  \subfigure[Species $B$:~$\sigma^2_B$]
    {\includegraphics[width = 0.325\textwidth]
    {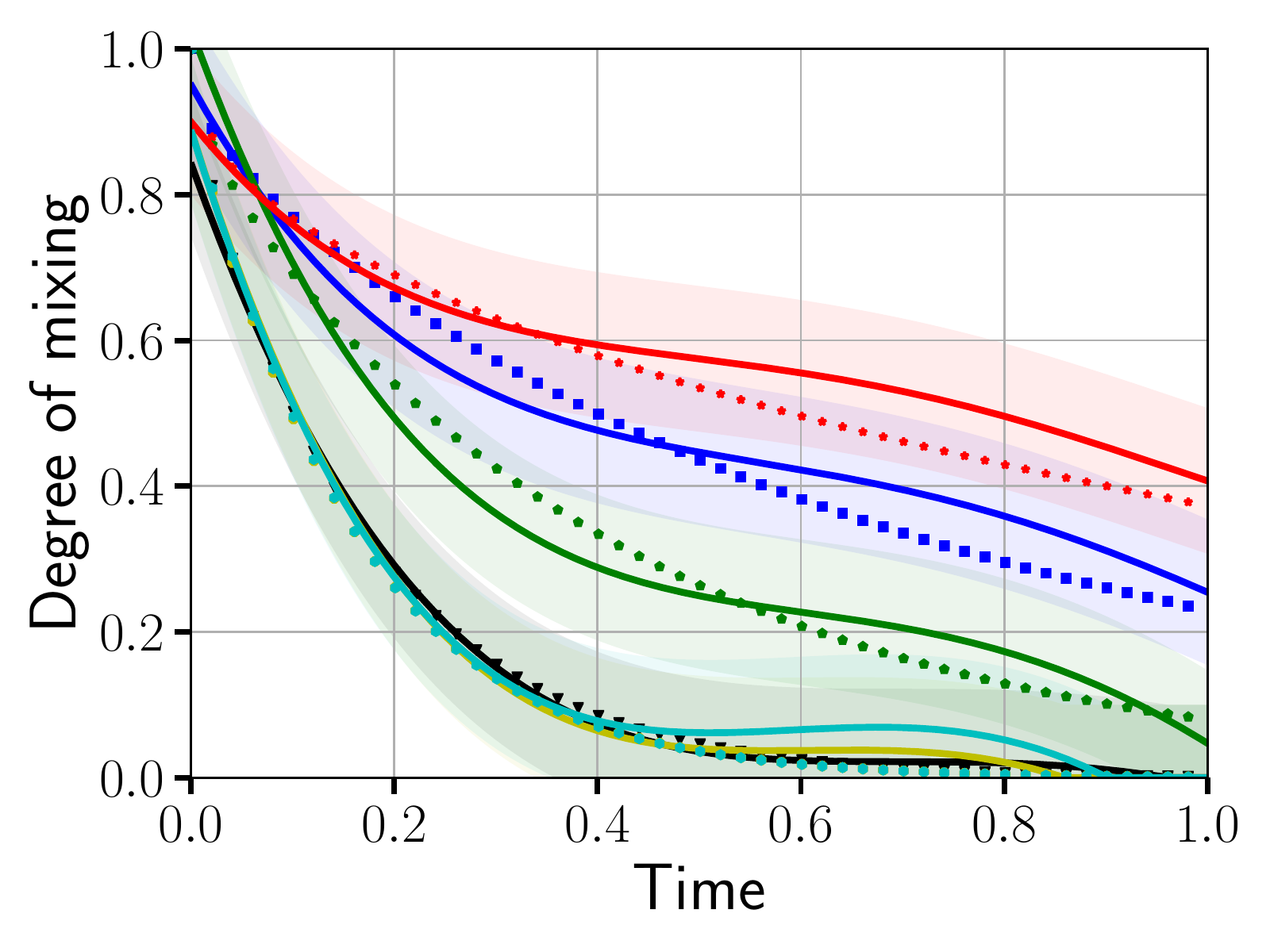}}
  \subfigure[Species $C$:~$\mathfrak{c}_C$]
    {\includegraphics[width = 0.325\textwidth]
    {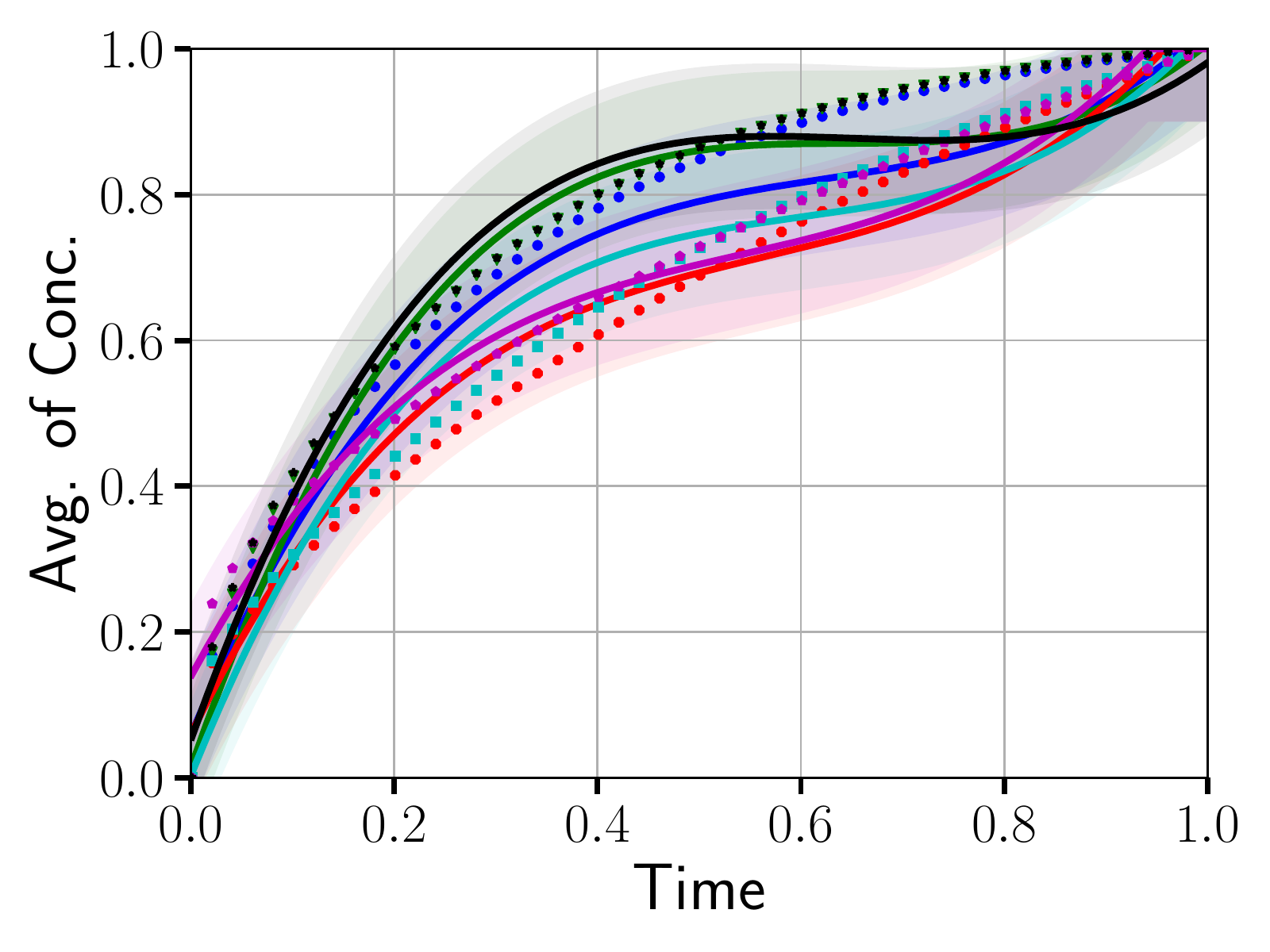}}
  \subfigure[Species $C$:~$\mathbb{c}_C$]
    {\includegraphics[width = 0.325\textwidth]
    {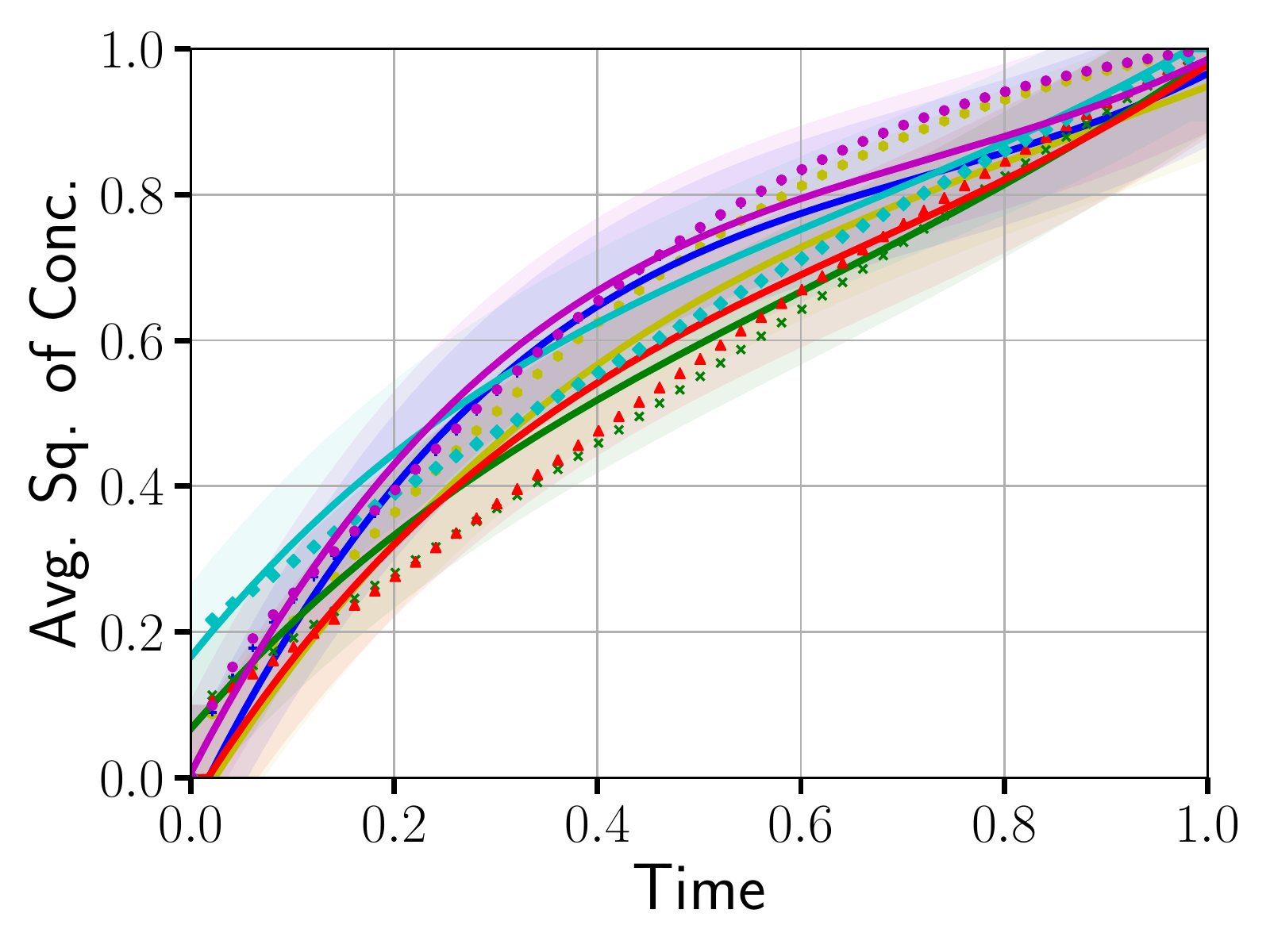}}
  \subfigure[Species $C$:~$\sigma^2_C$]
    {\includegraphics[width = 0.325\textwidth]
    {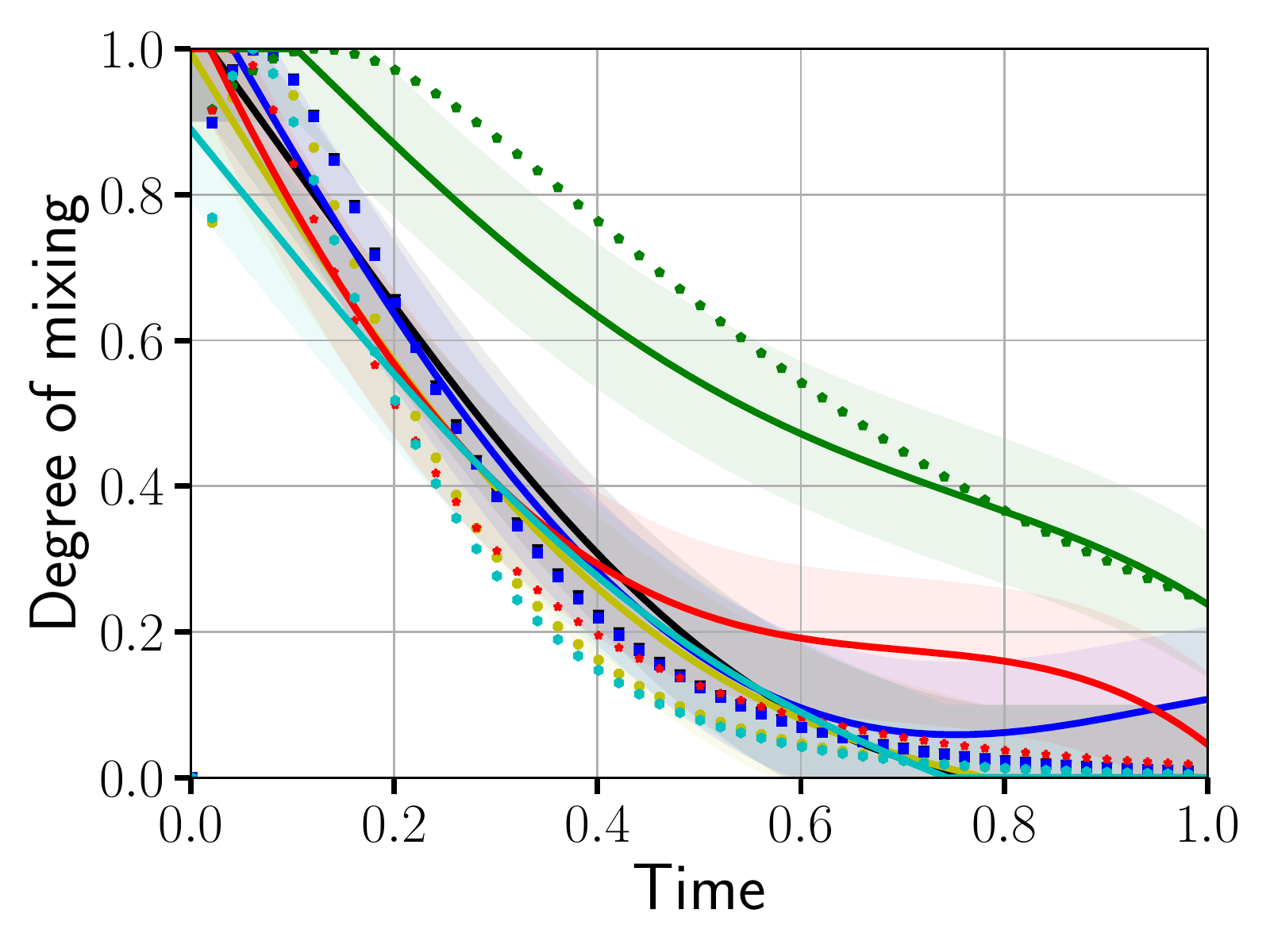}}
  \caption{\textsf{\textbf{Prediction of SVR-ROMs for 
    various QoIs:}}~These figures shows the prediction of SVR-ROMs on randomly 
    sampled unused realizations.
    The $R^2$-score for ROMs prediction on unseen data 
    is greater than 0.9, which instills confidence in 
    our SVR-ROMs capability for QoIs prediction.
  \label{Fig:SVR_vs_Unseen_data}}
\end{figure}

%----------------------------------------------------;
%  Figure-15: Scaling law prediction using SVR-ROMs  ;
%----------------------------------------------------;
\begin{figure}
  \centering
  \subfigure[Species $A$:~$\mathfrak{c}_A$]
    {\includegraphics[width = 0.325\textwidth]
    {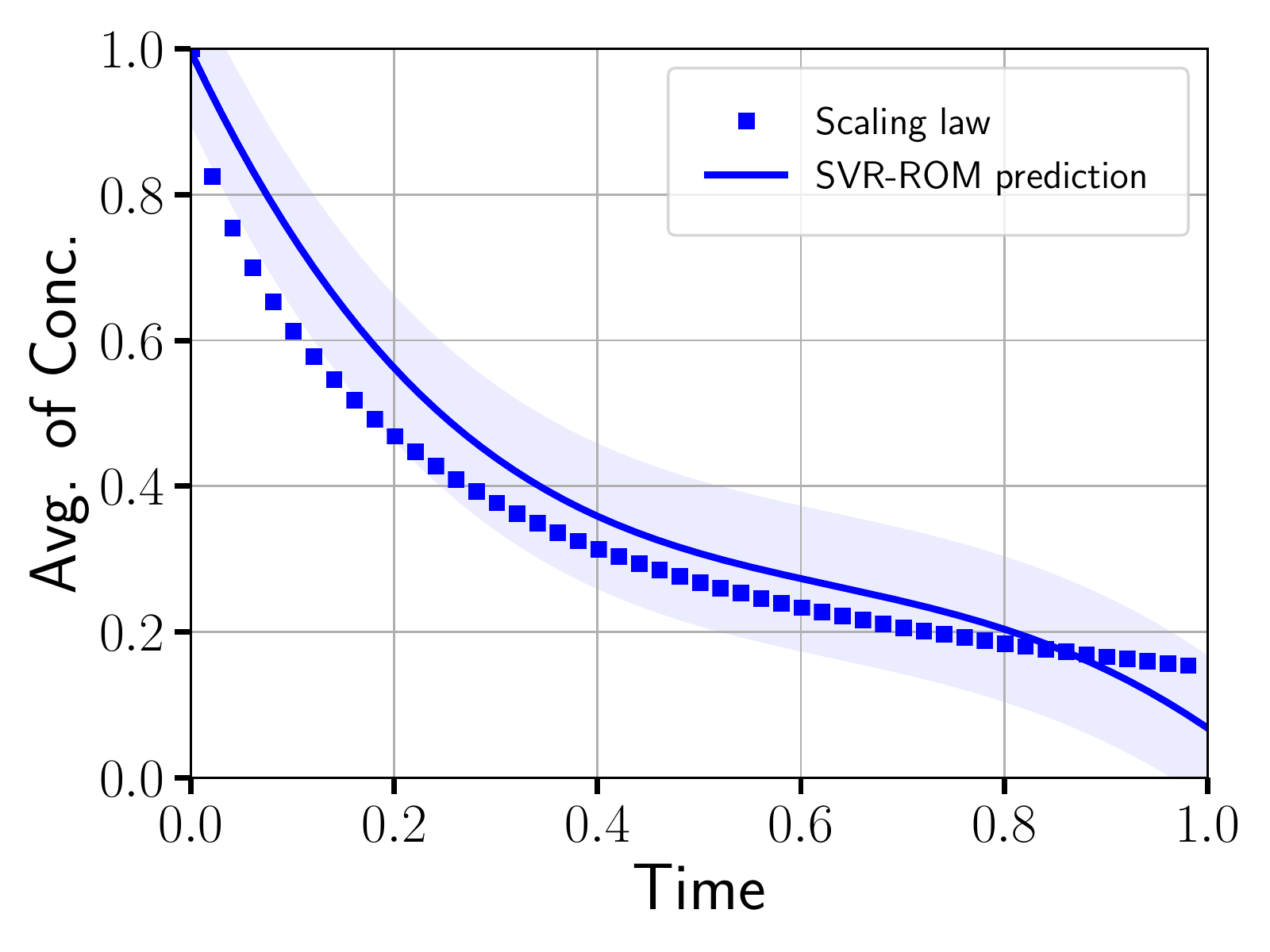}}
  \subfigure[Species $A$:~$\mathbb{c}_A$]
    {\includegraphics[width = 0.325\textwidth]
    {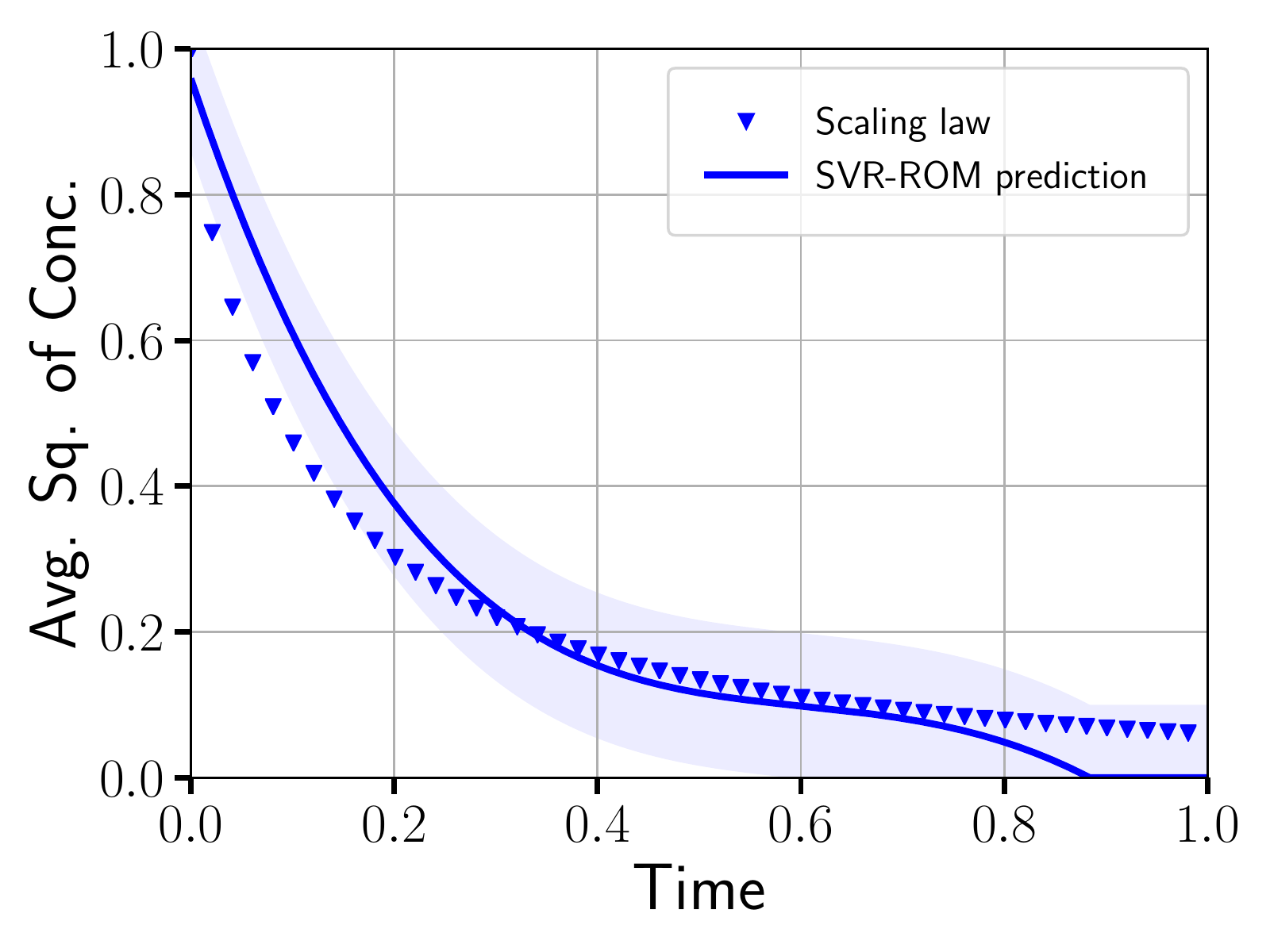}}
  \subfigure[Species $A$:~$\sigma^2_A$]
    {\includegraphics[width = 0.325\textwidth]
    {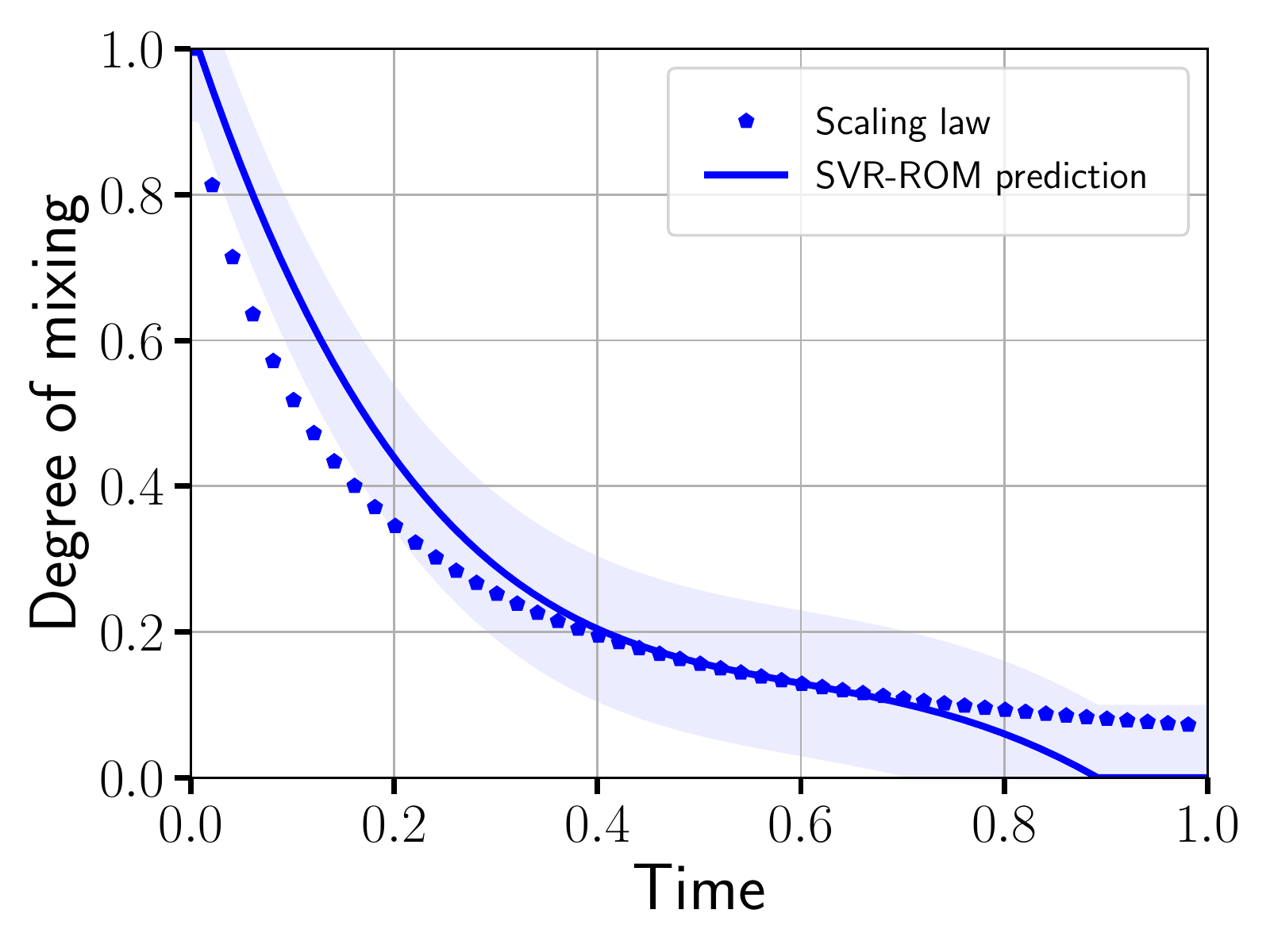}}
  \subfigure[Species $B$:~$\mathfrak{c}_B$]
    {\includegraphics[width = 0.325\textwidth]
    {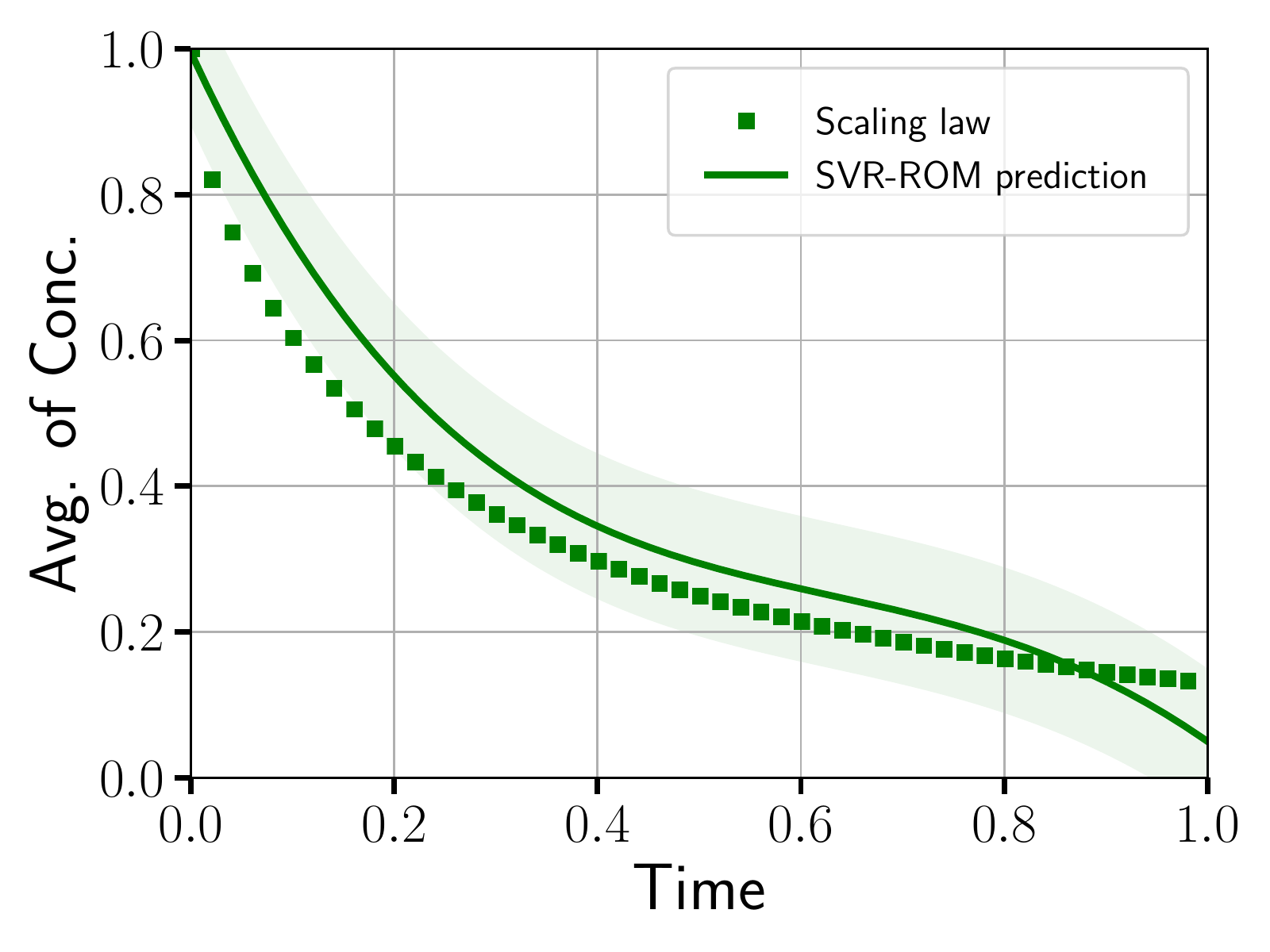}}
  \subfigure[Species $B$:~$\mathbb{c}_B$]
    {\includegraphics[width = 0.325\textwidth]
    {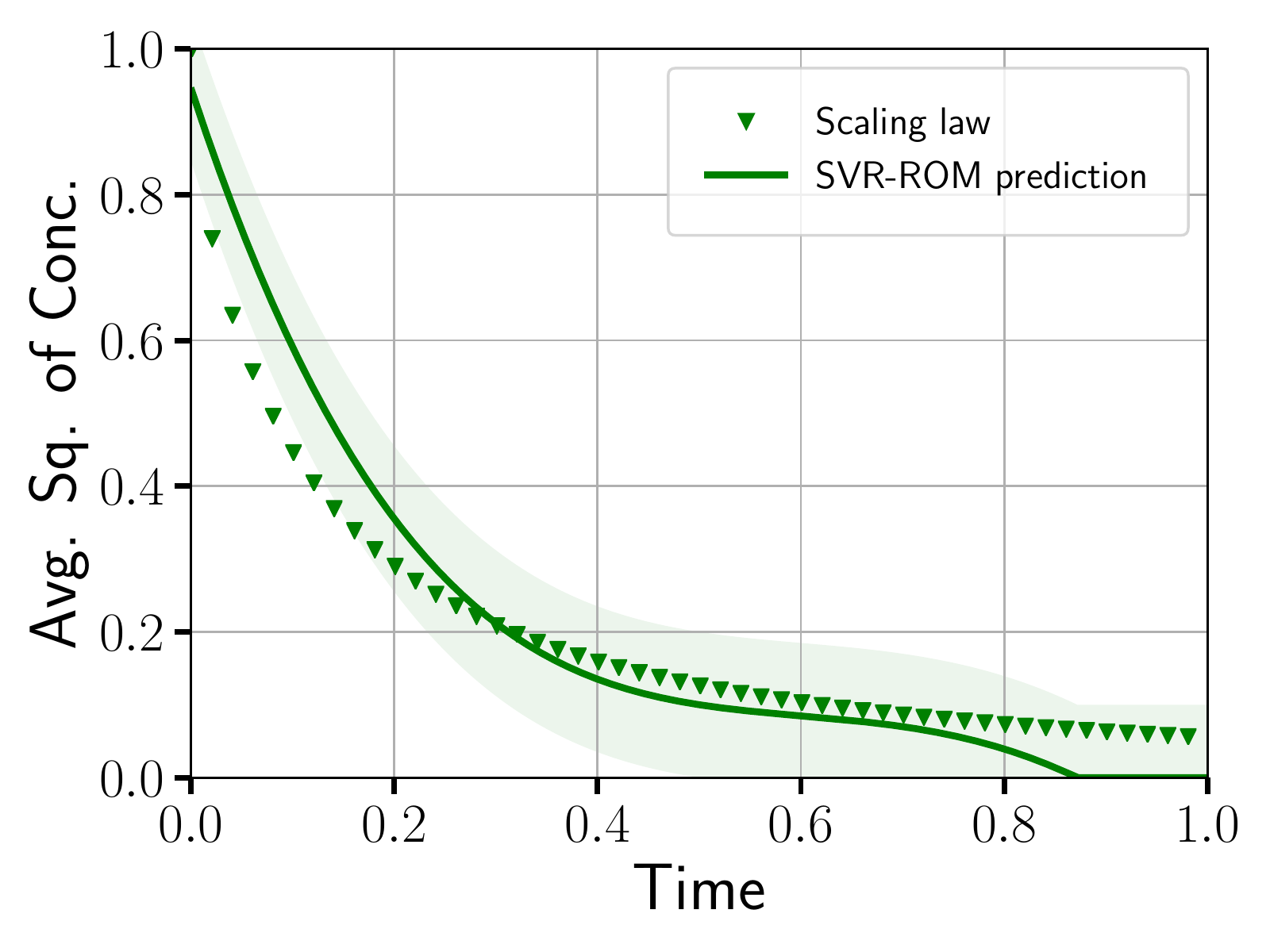}}
  \subfigure[Species $B$:~$\sigma^2_B$]
    {\includegraphics[width = 0.325\textwidth]
    {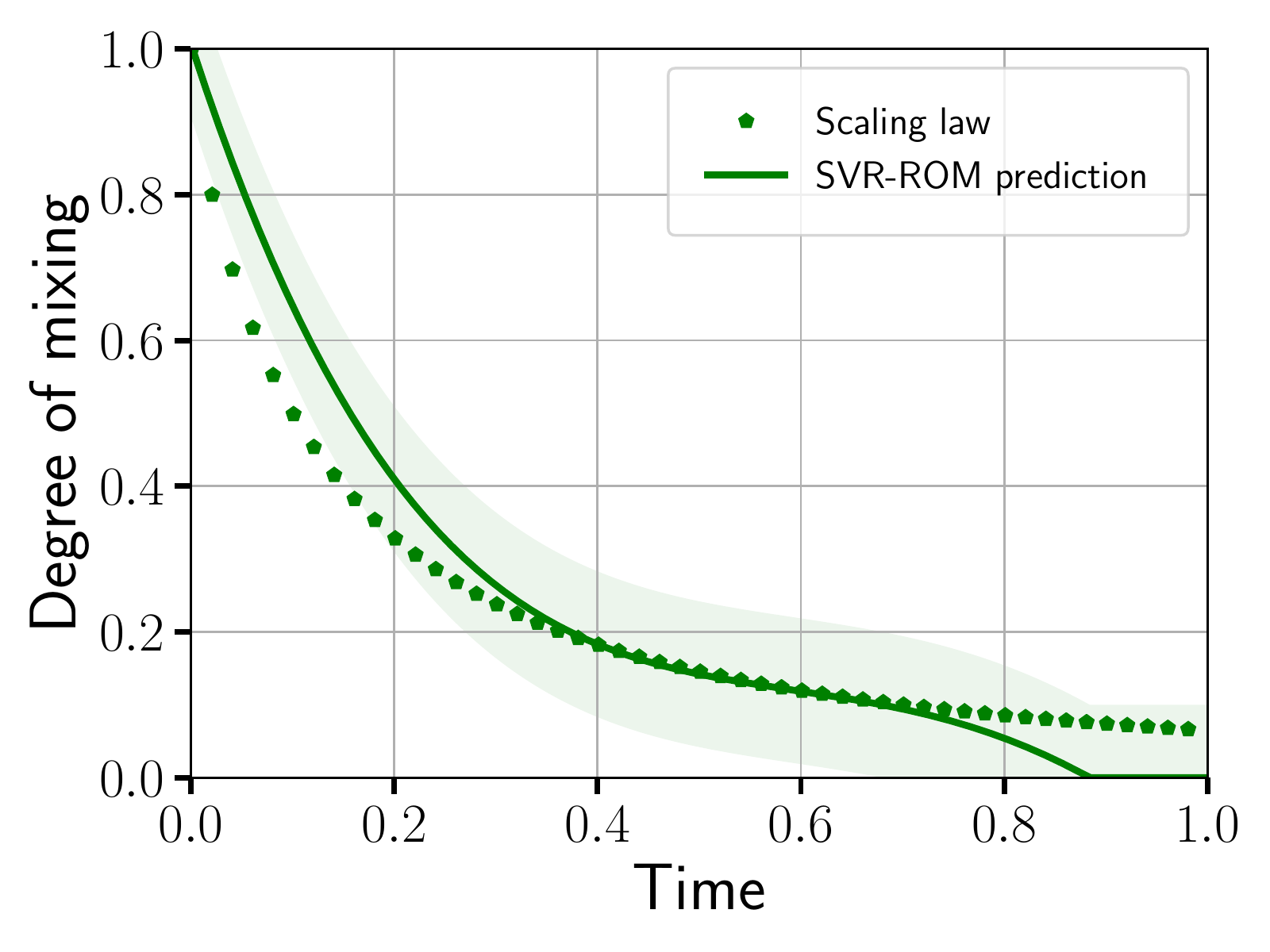}}
  \subfigure[Species $C$:~$\mathfrak{c}_C$]
    {\includegraphics[width = 0.325\textwidth]
    {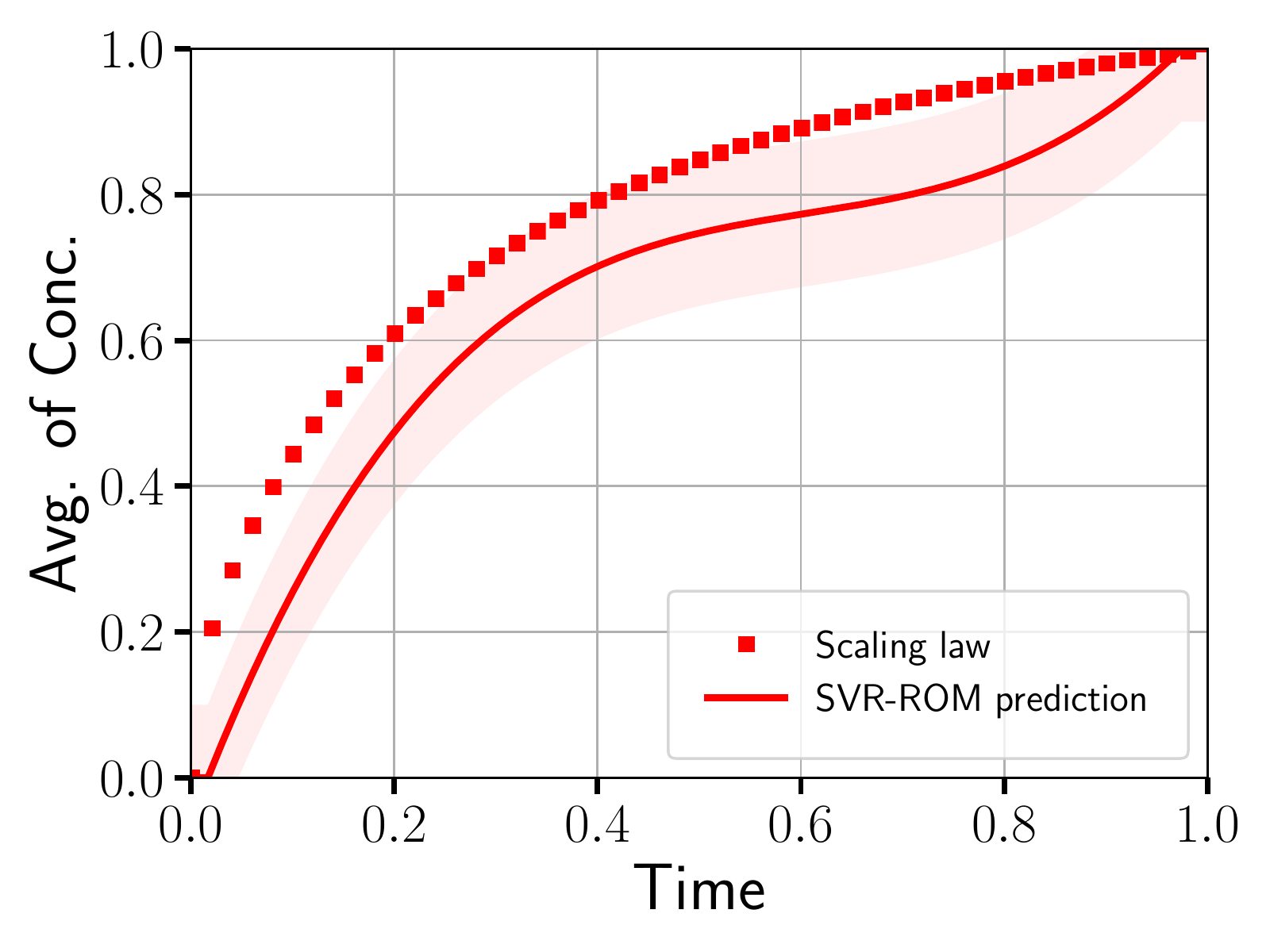}}
  \subfigure[Species $C$:~$\mathbb{c}_C$]
    {\includegraphics[width = 0.325\textwidth]
    {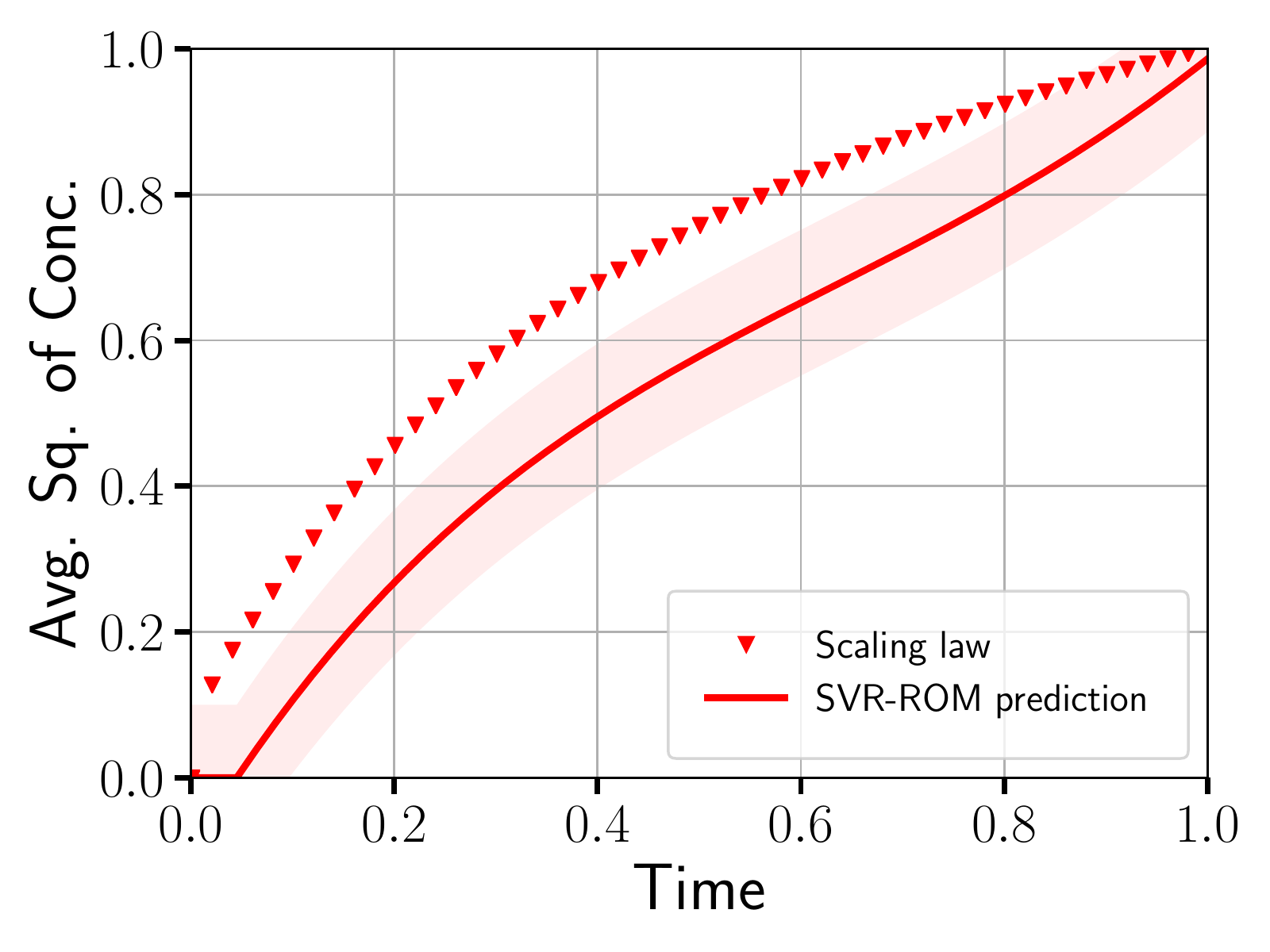}}
  \subfigure[Species $C$:~$\sigma^2_C$]
    {\includegraphics[width = 0.325\textwidth]
    {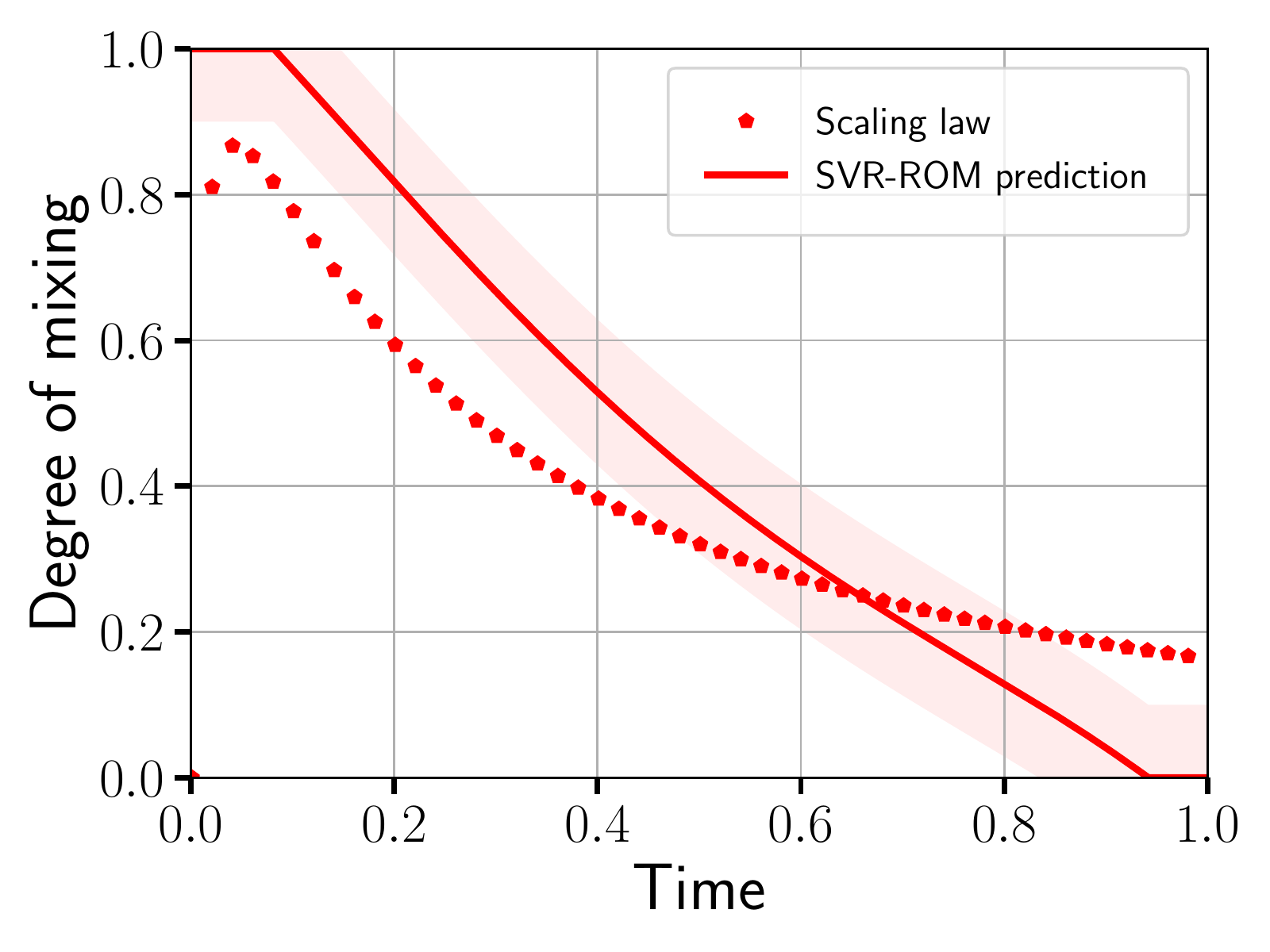}}
  \caption{\textsf{\textbf{Scaling law prediction using 
    SVR-ROMs:}}~These figures show the prediction of 
    SVR-ROMs for scaling quantities of interest, which 
    are average of concentration, average of square of 
    concentration, and degree of mixing of species $A$, 
    $B$, and $C$. From the above figures, qualitatively 
    and quantitatively, it clear that SVR-ROMs are able 
    to predict the scaling law trends of QoIs for species 
    $A$ and $B$, accurately. Quantitatively, SVR-ROMs 
    deviate when predicting scaling QoIs for species $C$. 
    However, qualitatively, SVR-ROMs are able to predict 
    the trends observed in scaling quantities of interest 
    for species $C$.
  \label{Fig:SVR_Scaling_Law}}
\end{figure}
\end{document}